\newcommand\followUpInFuture[1][ ]{{}}
\newcommand\introsection[1]{\subsection{#1}}
\newcommand\Independence{Irreducibility}
\newcommand\independence{irreducibility}
\newcommand\independent{irreducible}
\newcommand\epsEuc{\varepsilon}
\newcommand\epsLor{\epsilon}
\newcommand\epsShort{\epsLor}
\newtheorem{theorem}{Theorem}[section]
\theoremstyle{definition}
\newtheorem{corollary}[theorem]{Corollary} 
\newtheorem{lemma}[theorem]{Lemma} 
\theoremstyle{definition}
\newtheorem{definition}[theorem]{Definition}
\theoremstyle{definition}
\theoremstyle{definition}
\theoremstyle{remark}
\newtheorem*{remark}{Remark}
\newenvironment{subproof}[1][\proofname]{%
  \begin{proof}[#1]%
}{%
  \end{proof}%
}
\DeclareMathOperator{\Tr}{Tr}
\DeclareMathOperator{\sgn}{sign}
\newcommand\hide[1]{{}}
\newcommand\comCGL[1]{{{\color{red} [CGL: #1]}}}
 \newcommand\fourvec[4]{{\left(\begin{array}{c} #1\\ #2\\ #3\\ #4\end{array}\right)}}
 \newcommand\fourvecT[4]{{\left(#1,\left( #2, #3, #4\right)\right)}}
 \newcommand\fourvecMassForm[4]{{
  \left[#1;(#2, #3, #4)\right]
 }}
\DeclareMathOperator{\Aut}{Aut}
 \newcommand\equivdef{ \underset {\text{def}} \equiv}
  \DeclareRobustCommand{\gyzpi}{%
  \left(
  {\overset{\pi}{\underset {yz}{\curvearrowleft}} }
  \right)
  }
  \DeclareRobustCommand{\gaxrot}[1]{%
  \left(
  {\overset{{#1}}{\underset {xy}{\curvearrowleft}} }
  \right)
  }
\DeclareMathOperator{\TransversePlaneOp}{TP}
\newcommand\nicety[3]{\left[\begin{array}{cc|ccc}
     m_p & m_q & m_{#1} & m_{#2} & m_{#3} \\
     0 & 0 & \mathbf{#1} e^{i\theta} & \mathbf{#2} e^{i\theta} & \mathbf{#3} e^{i\theta} \\
     p & -p & {#1}_z & {#2}_z & {#3}_z
\end{array}\right]}
\newcommand\oddity[3]{\left[\begin{array}{cc|ccc}
     m_q & m_p & m_{#1} & m_{#2} & m_{#3} \\
     0 & 0 & \mathbf{#1}^* e^{i\theta} & \mathbf{#2}^* e^{i\theta} & \mathbf{#3}^* e^{i\theta} \\
     p & -p & -{#1}_z & -{#2}_z & -{#3}_z
\end{array}\right]}
\newcommand\polarnicety[6]{\left[\begin{array}{cc|ccc}
     m_p & m_q & m_{#1} & m_{#2} & m_{#3} \\
     0 & 0 & {|\mathbf{#1}|} e^{i(\theta+{#4})} & {|\mathbf{#2}|} e^{i(\theta+{#5})} & {|\mathbf{#3}|} e^{i(\theta+{#6})} \\
     p & -p & {#1}_z & {#2}_z & {#3}_z
\end{array}\right]}
\newcommand\polaroddity[6]{\left[\begin{array}{cc|ccc}
     m_q & m_p & m_{#1} & m_{#2} & m_{#3} \\
     0 & 0 & {|\mathbf{#1}|} e^{i(\theta-{#4})} & {|\mathbf{#2}|} e^{i(\theta-{#5})} & {|\mathbf{#3}|} e^{i(\theta-{#6})} \\
     p & -p & -{#1}_z & -{#2}_z & -{#3}_z
\end{array}\right]}
\newcommand\GRAMTWO[4]{
G\!\left(\!\!\!\!
\begin{array}{c@{,\ \!}c}
     {#1} & {#2} \\
     {#3} & {#4}
\end{array}\!\!\!\!
\right)
}
\newcommand\GRAMTHR[6]{
G\!\left(\!\!\!\!
\begin{array}{c@{,\ }c@{,\ }c}
     {#1} & {#2} & {#3} \\
     {#4} & {#5} & {#6}
\end{array}\!\!\!\!
\right)
}
\newcommand\PerpABFE[4]{
{\GRAMTWO {#1} {#3}
      {#2} {#3}} #4 0
}
\newcommand\PerpABF[3]{
\PerpABFE {#1} {#2} {#3} =
}
\newcommand\SYMGRAMTWO[2]{
\Delta_2\left(
{#1}, {#2}
\right)
}
\newcommand\SYMGRAMTHR[3]{
\Delta_3\left(
{#1}, {#2}, {#3}
\right)
}
\newcommand{\tmark}{\text{\ding{51}}}%
\newcommand{\xmark}{\text{\ding{55}}}%
\newcommand\numberthis{\addtocounter{equation}{1}\tag{\theequation}}
\newcommand\False{\text{FALSE}}
\newcommand\True{\text{TRUE}}
\newcommand\isTrue[1]{\overbrace{{#1}}^{\True}}
\newcommand\isFalse[1]{\overbrace{{#1}}^{\False}}
\newcommand\PCom{\mathscr{P}^{{abc}}_{pq}(\mathbb{C})}
\newcommand\POne{\mathscr{P}^{abc}_{pq}(1)}
\newcommand\PTwo{\mathscr{P}^{abc}_{pq}(2)}
\newcommand\PThr{\mathscr{P}_{pq}(3)}
\newcommand\FZerFIXEDWIDTH{F^{\phantom{abc}}_{pq}(0)}
\newcommand\FOneFIXEDWIDTH{F^{\phantom{abc}}_{pq}(1)}
\newcommand\FTwoFIXEDWIDTH{F^{\phantom{abc}}_{pq}(2)}
\newcommand\FThrFIXEDWIDTH{F^{{abc}}_{pq}(3)}
\newcommand\FZer{F_{pq}(0)}
\newcommand\FOne{F_{pq}(1)}
\newcommand\FTwo{F_{pq}(2)}
\newcommand\FThr{F^{abc}_{pq}(3)}
\newcommand\FZerOneTwo{
F^{\cdot\cdot\cdot}_{pq}(\!\!\!\!\!
\text{ \tiny{
 $\begin{array}{c}0\\1\\2\end{array}$
 }
 }\!\!\!\!)}
 \newcommand\PComU[1]{{\left(\POne\right)^3 + i {#1} \PTwo }}
 \newcommand\FComU[1]{{\FOne + i {#1} \FTwo }}
 \newcommand\PComUS[1]{{
 \mathscr{P}^{abc}_{pq}(\mathbb{C}(#1))
 }}
 \newcommand\FComUS[1]{{
 F_{pq}(\mathbb{C}(#1))
 }}
\newcommand\cFingerConstraintABCPreFinal[3]{
(p^2=q^2) 
\land
({#1}^2={#2}^2)
\land
\left(\PerpABF {#3} {p-q} {p+q}\right)
\land
\left(\PerpABF {{#1}+{#2}} {p-q} {p+q}\right)
\land
\left(\SYMGRAMTHR {{#1}-{#2}} p q =0\right)
}
\newcommand\smallcFingerConstraintABCAOEN[7]{
(p^2 #6 q^2) 
#4
({#1}^2  #6 {#2}^2)
#4
\left(\PerpABFE {{#1}+{#2}} {p-q} {p+q} #6 \right)
#4
\left(\SYMGRAMTHR {{#1}-{#2}} p q  #6  0\right)
}
\newcommand\cFingerConstraintABCAOEN[7]{
(p^2 #6 q^2) 
#4
({#1}^2  #6 {#2}^2)
#4
\left(\PerpABFE {a+b+c} {p-q} {p+q} #6\right)
#4
\left(\PerpABFE {{#1}+{#2}} {p-q} {p+q} #6 \right)
#4
\left(\SYMGRAMTHR {{#1}-{#2}} p q  #6  0\right)
}
\newcommand\cFingerConstraintABC[3]{
\cFingerConstraintABCAOEN #1 #2 #3 \land \lor = \ne
}
\newcommand\newaddonEN[4]{
{#1} #4 {#2}
}
\newcommand\theaddonEN[4]{
\newaddonEN{#1}{#2} #3 #4
}
\newcommand\theaddon[2]{
\newaddonEN{#1}{#2} = \ne
}
\newcommand\easyPartOfAxialCatapultConstraintABCVersionA[3]{
({#1}^2={#2}^2)
\land
\left( \PerpABF{{#1}-{#2}}{p-q}{p+q} \right)
\land
\left(
\SYMGRAMTWO {#1} {p+q}
=
\SYMGRAMTWO {#2} {p+q}
\right)
\land
\left(
\SYMGRAMTHR {#1} p q
{\color{blue}>0}
\right)
}
\newcommand\easyPartOfAxialCatapultConstraintABCVersionB[3]{
({#1}^2={#2}^2)
\land
\left( \PerpABF{{#1}-{#2}}{p-q}{p+q} \right)
\land
\left(
\GRAMTWO {{#1}-{#2}} {p+q}
         {{#1}+{#2}} {p+q} =0
\right)
\land
\left(
\SYMGRAMTHR {#1} p q
{\color{blue}>0}
\right)
}
\newcommand\easyPartOfAxialCatapultConstraintABCWithoutOffAxisAOEN[7]{
({#1}^2 #6 {#2}^2)
#4
\left( \PerpABFE{{#1}-{#2}}{p-q}{p+q} #6 \right)
#4
\left(
\GRAMTWO {{#1}-{#2}} {p+q}
         {{#1}+{#2}} {p+q} #6 0
\right)
}
\newcommand\axialCatapultConstraintABCPreFinal[3]{
\easyPartOfAxialCatapultConstraintABCVersionB{#1}{#2}{#3}
\qquad
\\
\land
\left(
\PerpABF{{#1}-{#2}}{a+b+c}{p+q}
\right)
\land
\left(
\theaddon{#1}{#2}
\right)
}
\newcommand\smallaxialCatapultConstraintABCAOEN[7]{
\easyPartOfAxialCatapultConstraintABCWithoutOffAxisAOEN{#1}{#2}{#3} #4 #5 #6 #7
#4
\left(
\theaddonEN{#1}{#2} #6 #7 
\right)
}
\newcommand\axialCatapultConstraintABCAOEN[7]{
\easyPartOfAxialCatapultConstraintABCWithoutOffAxisAOEN{#1}{#2}{#3} #4 #5 #6 #7
#4
\left(
\PerpABFE{{#1}-{#2}}{a+b+c}{p+q} #6
\right)
#4
\left(
\theaddonEN{#1}{#2} #6 #7 
\right)
}
\newcommand\axialCatapultConstraintABC[3]{
\axialCatapultConstraintABCAOEN {#1} {#2} {#3} \land \lor = \ne
}
\newcommand\pqabColEvent{\ensuremath{\Omega_{(pq)\rightarrow (ab)X}}}
\newcommand\pqabcColEvent{\ensuremath{\Omega_{(pq)\rightarrow (abc)X}}}
\newcommand\pqabAllEvent{\ensuremath{\Omega_{(pq)(ab)X}}}
\newcommand\pqabcAllEvent{\ensuremath{\Omega_{(pq)(abc)X}}}
\newcommand\chiral{\aleph}
\newcommand\nonchiral{\heartsuit}
\newcommand\tinyseparatorlor[1]{
\raisebox{0.8mm}{\rule{0.15\textwidth}{0.10mm}}
#1
\raisebox{0.8mm}{\rule{0.15\textwidth}{0.10mm}}
}
\newcommand\smallseparatorlor[1]{
\raisebox{0.8mm}{\rule{0.35\textwidth}{0.10mm}}
#1
\raisebox{0.8mm}{\rule{0.35\textwidth}{0.10mm}}
}
\newcommand\separatorlor[1]{
\raisebox{0.8mm}{\rule{0.45\textwidth}{0.20mm}}
#1
\raisebox{0.8mm}{\rule{0.45\textwidth}{0.20mm}}
}
\def\toperator{
\raisebox{0.8mm}{\rule{0.99\textwidth}{0.5mm}}
}
\newcommand\toparotor[1]{
\raisebox{0.8mm}{\rule{#1}{0.5mm}}
}
\newcommand\separotor[2]{
\raisebox{0.8mm}{\rule{{#2}}{0.20mm}}
#1
\raisebox{0.8mm}{\rule{{#2}}{0.20mm}}
}
\newcommand\bignonchiralcondAOEN[4]
\newcommand\smallnonchiralcondAOEN[4]
\newcommand\bignonchiralcond{
\bignonchiralcondAOEN \land \lor = \ne
}
\newcommand\ANTIbignonchiralcond{
\bignonchiralcondAOEN \lor \land \ne =
}
\newcommand\ANTIsmallnonchiralcond{
\smallnonchiralcondAOEN \lor \land \ne =
}
\newcommand\tritribox[9]{
\left\{
\begin{array}{ccc}
     {#1} & {#2} & {#3} \\
     \cline{1-3} 
     {#4} & {#5} & {#6} \\
     \cline{1-3} 
     {#7} & {#8} & {#9}
     \end{array}
     \right\}
}
\newcommand\fourthreebox[3]{
\left\{
\begin{array}{cccc}
     #1  \\
     \cline{1-4} 
     #2 \\
     \cline{1-4} 
     #3
     \end{array}
     \right\}
}
\newcommand\twothreebox[3]{
\left\{
\begin{array}{cc}
     #1  \\
     \cline{1-2} 
     #2 \\
     \cline{1-2} 
     #3
     \end{array}
     \right\}
}
\newcommand\nibbledchocolatewafer{
\left[
\begin{array}{c}
\toparotor {0.5\textwidth}
\\
        (\POne=0)
        \land
        (\PTwo=0)
        \land
        (\PThr\ne0)
\\
\separotor \land {0.22\textwidth}
\\
        (\FTwo=0)
        \land
        (\FThr\ne0)
\\
\toparotor {0.5\textwidth}
\end{array}
\right]
}
\newcommand\timesavingsandwichfilling{\left\{
\begin{array}{c}
(g^{a-b}_{a+b}\ne0) \lor (g^{a-b}_{a+b+c}\ne0)
\lor
(f^{ab}\ne0) \lor (g^{a-b}_{p-q}\ne0) \\
\separotor \land {0.22\textwidth}\\
(g^{b-c}_{b+c}\ne0) \lor (g^{b-c}_{a+b+c}\ne0)
\lor
(f^{bc}\ne0) \lor (g^{b-c}_{p-q}\ne0) \\
\separotor \land {0.22\textwidth}\\
(g^{c-a}_{c+a}\ne0) \lor (g^{c-a}_{a+b+c}\ne0)
\lor
(f^{ca}\ne0) \lor (g^{c-a}_{p-q}\ne0) 
\end{array}
\right\}}
\newcommand\mirrorcondright[3]{
\phantom{#3}  #2  {#1}   #2  {#3}
}
\newcommand\mirrorcondleft[3]{
{#3}  #2  {#1}   #2  \phantom{#3}
}
\title{Lorentz and permutation invariants of particles III:\\
constraining non-standard sources of 
parity violation}
\author[1]{Christopher G.~Lester
}
\author[1]{Ward Haddadin
}
\author[1]{Ben Gripaios
}
\affil[1]{Cavendish Laboratory, University of Cambridge}
\begin{document}
\maketitle

 
\begin{abstract}
Comparisons of the positive and negative halves of the distributions of parity-odd event variables in particle-physics experimental data can provide sensitivity to sources of non-standard parity violation.  Such techniques benefit from lacking first-order dependence on simulations or theoretical models, but have hitherto lacked systematic means of enumerating all discoverable signals.  To address  that issue this paper seeks to construct sets of parity-odd event variables which may be proved to be able to reveal the existence of \textit{any} Lorentz-invariant source of  non-standard parity violation which could be visible in data consisting of groups of real non space-like four-momenta exhibiting certain permutation symmetries. 
\end{abstract}

\clearpage
\tableofcontents


\clearpage

\clearpage

\section{Preamble concerning the scope and aims of this paper}

\begin{itemize}
\item
This paper's main purpose is to derive two sets of `event variables' satisfying particular requirements which, the authors believe, could make then useful to collider experiments seeking to look for non-standard sources of parity violation.
\item
The particular properties we demand for these sets of event variables are set out in Section~\ref{sec:ourobjectives}.
\item
Although the paper's introduction provides arguments explaining  \textbf{why} the demanded particular properties might expected to be useful, the primary purpose of the paper is simply to derive a set of event variables \textbf{satisfying} the stated properties.  No evidence of the utility of these event variables (or the lack thereof) is presented here.\footnote{Such evidence will be for later papers.}  \textit{The purpose of this paper is only to provide falsifiable arguments in support of our claim that the two  sets of event variables we derive do indeed have the properties which were demanded of them.}
\item
\textbf{One set of event variables} is designed to analyse collisions resulting in \textbf{two} main final-state particles which are indistinguishable except by their four-momenta (e.g.~these could be dijet events). This set is found to comprises three event variables ($X_1$, $X_2$ and $X_3$) which are defined in the steps leading to equations \eqref{eq:unhattedx1} to \eqref{eq:unhattedx3}.
\item
\textbf{The second set of event variables} is designed to analyse collisions resulting in \textbf{three} main final-state particles which are indistinguishable except by their four-momenta.  This set is found to comprises nineteen event variables ($V_1$ to $V_{19}$) contained within the four sub-sets shown in \eqref{eq:s19ALL}.
\item
The `take home' messages for a physicist reading this paper are as follows:
\begin{enumerate}
\item
\textit{That if he or she has a collider with unpolarised beams, and has a detector which is blind to polarisations or flavours but not four-momenta of detected particles, and has access to all the events having three or fewer particles in the final state, then every single possible source of non-standard parity-violation, no matter how bizarre, provided that it discoverable `at all' within that dataset, would generate an observable asymmetry in at least one of the event variables derived in this paper, given enough luminosity, and given an appropriate (parity-even) event selection.\footnote{The comment about the need for `an appropriate (parity-even) event selection' refers forward to remarks made in Corollary~\ref{cor:theneedforparityeveneventselectionsaswellaspoddvars}. }}
\item
More succinctly: he or she could search model independently for all potential sources of non-standard parity violation by looking for asymmetries in only the event variables described in this paper.  These variables `cover all bases'.
\item
Computer code to evaluate the event variables may be found as ancillary support files attached to \cite{Lester:2020jrg}.
\item
Not all event variables will be relevant to all colliders.   For example: some event variables collapse to zero if the two beams of the collider have particles of equal mass\footnote{For example: the two proton beams of the Large Hadron Collider would satisfy this constraint while the proton-electron beams of HERA would not.}, while others collapse to zero if final state particles are treated as massless or degenerate in mass.  In most analyses, therefore, the number of variables needing to be considered would be much smaller than the above counts of `three' and `nineteen' might suggest.
\item
If the initial or final state particles could be distinguished by more than just their four-moment (e.g.~if flavour or helicity information were available) then the variables derived herein would no longer be sufficient to catch all forms of parity violation which would then be (in principle) observable. Nonetheless, the procedures used to generate the event variables in this paper could be adapted to the new symmetries and new variables generated which would have the desired coverage.
\end{enumerate}
\item
Some results are have been generalised to situations of relevance beyond colliders\footnote{For example, \eqref{eq:s28ALL} provides a set of twenty-eight variables which could be used in dark matter detection experiments with five particles in the final state.}, but these are not of primary interest to the paper.
\end{itemize}

\clearpage
\section{Introduction}

\introsection{Non-standard parity violation, and why searches for it are important}

It is rightly beyond doubt that the laws of physics violate parity.  The elegant experiments of the 1950s \cite{Wu:1957my,Garwin:1957hc,PhysRev.109.1015} unambiguously showed that the weak interaction of the Standard Model can tell the difference between our universe and its mirror image.  Nonetheless, only a single attempt \cite{Lester:2019bso} has yet been made to look in  Large Hadron Collider (LHC) data to see whether they provide  evidence of non-standard parity violating mechanisms which operate only at high energies probed by that machine.
As one of us has already noted in the introduction of
\cite{Lester:2019bso}, the scarcity of tests of non-standard sources of parity violation at these new energies:
\begin{quote}
\ldots is largely a pragmatic response to the obstacles presented by the LHC:  its beams are not polarised,  its detectors are not sensitive to polarizations, and it is mathematically impossible to construct a parity violating spin-averaged matrix element within any $CP$-conserving Locally Lorentz-Invariant quantum field theory (LLIQFT), effective or otherwise.\footnote{To reveal the existence of parity-violation a model must possess at least one matrix element having both a parity-even and a parity-odd part.
After trace identities have removed spinor sums, the only parity-odd expressions which can remain in a Lorentz-invariant $|M|^2$ are contractions of the totally antisymmetric alternating tensor with groups of four linearly independent four-momenta: $\epsilon_{\mu\nu\sigma\tau}a^\mu b^\nu c^\sigma d^\tau$.  \label{fn:epsilon} While such terms are parity-odd, they are also time-odd. 
Assuming $CPT$-symmetry, such a matrix element therefore also violates $CP$.  $CP$-conserving local Lorentz-invariant quantum field theories therefore cannot generate  parity violating differential cross sections. 
}
Given that the only route to probing parity-violation in the Standard Model at the LHC would be from within the $C$-even part of its (very small) $CP$-violating sector,\footnote{One could, in principle, demonstrate parity-violation unambiguously by using a `genuine $CP$-odd' observable (such as one of those described in \cite{Han:2009ra})
 on a $C$-even final state.} it is not surprising that no such analyses have yet been performed.\footnote{This limitation does not prevented the LHC from making measurements of parity-violating parameters within models {\it in which a particular mechanism of parity-violation is present by assumption}.
For example, the differences between the axial and vector couplings of the $Z$-boson in the Standard Model violate parity and were  measured in \cite{Aad:2015uau,Khachatryan:2016yte}.  However, neither of these papers incontrovertibly demonstrates that nature violates parity

The reason is simple:  the angles from which forward-backward asymmetries are calculated are even under parity,  unlike primary observables from the experiments of the 1950s.   
 The very same forward backward asymmetries could therefore also be explained, at least in principle, by some alternative parity {\it conserving} theory.
}
\end{quote}
It would be wrong, however, to conclude that genuine tests of parity violation are therefore of limited value.
On the contrary:
a purely data-driven search for non-standard sources of parity violation was proposed and tested \cite{Lester:2019bso} on CMS Open Data \cite{CMS:QCDData} and demonstrated that genuine tests of parity-invariance are very straightforward to make at the LHC and can (in principle) provide very large signatures for models which are either (a) strongly $CP$-violating LLIQFTs, or (b) not even LLIQFTs at all!
While it is true that there is an overwhelming theoretical preference for LLIQFTs (the Standard Model itself is one, as are most popular extensions including those featuring supersymmetry, leptoquarks, technicolor, axions, additional gauge interactions, {\it etc}.) there is no law of nature which {\it  demands} that  new physics be describable only by such theories.
Moreover, given the lack of evidence for new physics found at LHC thus far, the need for the community to search in all possible hiding places is surely greater than ever.  In particular, it is hard to imagine any reason why every possible attempt should not be made to test and re-test the fundamental symmetries of nature every time a door opens onto a new energy range.

We argue, therefore, that it is important to have tools which allow us to systematically discover and/or exclude as-yet unconstrained source of parity violation both in non-LLIQFTs and in strongly $CP$-violating LLIQFTs.  Among those, we choose to restrict ourselves to considering only  Lorentz-invariant theories, however there is no requirement that others need make the same choice.

\introsection{The lack of frameworks for structuring such searches}
Having established the need for such searches, the next concern is whether a framework  already exist which  support their execution.   This question was faced by  \cite{Lester:2019bso} and answered in the negative.  A significant obstacle to the generality of the results of that particular search was the observation that the number of potential parity-odd variables with which it could have looked for signals was unlimited in size and lacked any coherent structure.  Consequently the event-variables which it used were poorly motivated.  A secondary consequence of this arbitrariness was that, although that search saw no evidence of non-standard parity violation, it could not make quantitative statements about the extent of areas remaining most untested after its publication, nor give any guide as to where best future searches should probe.

It is to find solutions to such problems that the present paper directs itself.

\introsection{Reducing a problem from  infinite size to finite size}

We seek to bring some order to the field of searches for non-standard sources of Lorentz-invariant parity violation so that such searches may become more tractable, and so that they may lose some of the arbitrariness that would otherwise  be present in their choices of search strategy or event variables. 
We will attempt to do so by proposing a set of soon-to-be-defined requirements on groups of parity-odd event variables 
(they shall be called 
\textbf{sufficiency},  
\textbf{necessity/\independence}, \textbf{minimality},
\textbf{reality},
\textbf{continuity} and
\textbf{invariance under certain symmetries})
using which we will prove that the number of event-variables needing to be considered (given a class of events, such as `events with three jets') may be reduced from a potentially infinite size to one of of finite size, without any loss of generality.

So far as the authors are aware there are no existing works aiming to achieve the same objective.  The closest in spirit is perhaps \cite{Han:2009ra}, which reminds its readers that `genuine' tests of $CP$-violation may be made at colliders like the LHC. Nonetheless, its primary target is $CP$ not $P$, and even to the extent to which its results apply to pure parity violation, it has other objectives in its sights. Other works (such as  \cite{Butter:2017cot} and \cite{Bogatskiy:2020tje})  share our interest in answering questions involving the coupling of Lorentz invariance and permutation invariances. However they choose (actively in the case of \cite{Butter:2017cot}\footnote{Indeed, the top-tagger of \cite{Butter:2017cot} actively inverts the parity ~50\% of its jet clusters (aiming to put any tertiary source of jet substructure in the first quadrant of the its jets' cones) in order to gain  performance from the resulting standardization of the data it inputs to its neural net. That such a regularisation step may be performed by that work is because no non-trivial parity structure is expected in lone quark or gluon jets.}, and implicitly in the case of \cite{Bogatskiy:2020tje}) to make themselves completely blind to any potential sources of parity violation. [It should be emphasised that the last remark is not intended as a criticism of either of the aforementioned works! On the contrary: given the lack of relevance of parity-violation to the goals that each work targets, each work has a good reason for avoiding sensitivity to parity as an extraneous complicating factor.]

\introsection{Our objectives}
\label{sec:ourobjectives}We set out to find sets $S=\{V_i\mid i=1,2,\ldots\}$ of event variables $V_i$ (functions on particle physics events) with the following properties:
\begin{itemize}
\item \textbf{[reality]:}
that each event variable $V_i$ shall map an event of a (suitably defined) class $\Omega$ to a real number,
\item \textbf{[parity oddness]:}
that each event variable $V_i$ shall be parity odd (that is to say, would change sign under $\vec  x\rightarrow -\vec x$),
\item \textbf{[continuity]:}
that the mapping shall be continuous (that is to say: small changes to the energy, or the mass or to any  component of the momentum of a particle in the event shall lead to only small changes in the value of the event variable  $V_i$),
\item \textbf{[invariance under specific symmetries]:}
that all event variables $V_i$ shall be invariant under a common symmetry group of choice (for example in each of the illustrative derivations in this paper we require Lorentz-invariance together with invariance under arbitrary permutation of the momenta of objects belonging to a common class such as `all jets' or or `all photons'),
\item \textbf{[sufficiency]:}
that the set $S=\{V_i\mid i=1,2,\ldots\}$ of all the variables shall have the collective property that at least one $V_i$ shall evaluate to a non-zero value for every event in $\Omega$ which is chiral (an event will be defined to be chiral if it cannot be mapped onto  itself by a parity reversal followed by the action of an arbitrary element of the chosen symmetry group), 
\item  \textbf{[necessity/\independence]:} 
that every variable in the set $S$ shall be \textbf{necessary} (or, equivalently, that the set itself shall be \textbf{\independent}), by which we mean that there shall be no variable $V\in S$ such that the set $S\setminus\{V\}$ has the same \textbf{sufficiency} property possessed by $S$, 
and
\item \textbf{[minimality]:}
that where two sets $S_1$ and $S_2$ share all the above properties but differ in cardinality, we shall favour the one with the smaller number of elements.
\end{itemize}
The \textbf{reality} requirement is present purely to ensure that we can count event variables on a well defined basis.  Any event variable could be decomposed into a parity-even and parity-odd part. The former parts cannot ever help us identify instances of parity-violation so the presence of the requirement of \textbf{parity oddness} in the variables which we seek should therefore be self explanatory. We hope that the motivation for the \textbf{sufficiency} and \textbf{necessity} criteria is also readily apparent: without the former property the set of variables would lack sensitivity to any parity-violating theory able to map all of its chiral events to zeros in the elements $V_i$, while without the latter property sets of variables could be large and wasteful.  The \textbf{minimality} constraint likewise seeks to push us toward simple solutions.

The benefits of requiring  \textbf{invariance under specific symmetries} are numerous. If the intended use of the event variables is as inputs to a machine learning algorithm, then a clear benefit is that such an algorithm will not need to use precious training data to learn to about symmetries which are already obvious from context.\footnote{For example: knowing that there is no information content in the ordering of $n$ jet momenta in some set $s$ has the effect of multiplying the value of a training set by a factor of $n$-factorial.  Knowing that there is no information in a global rotation allows one training data sample to stand in for an infinite number of copies of itself sitting at all possible rotations.} In short: such algorithms should generalise better. This could be especially important if the training set is size-constrained or costly to produce. If the event variable is destined for some other process, then it remains the case that use of invariant variables removes the opportunity for an analysis (whether intentionally or accidentally) to make selections or predictions which are unphysical in a set of ways we can choose.


\introsection{The continuity requirement in more detail}

\label{sec:continuityindetail}

The reason for including the requirement of \textbf{continuity}, however, deserves in depth discussion.  In truth we impose this constraint simply because we are prejudiced toward believing that it is `a good thing' for any two events to share near-identical descriptions when projected onto event-variables-space \textbf{if} those events,  for all practical purposes, are almost indistinguishable from each other experimentally.  Why? Because any subsequent analysis can only see data through the lens provided by the event variables, and if near-indistinguishable events were to map to separated parts of event-variable-space, then trivial perturbations in data-space map into large perturbations in the analysis which must then be un-learned or compensated for by the analyser, be he or she a human or a neural net or some other machine learning algorithm.  This seems to add at best an unnecessary complication, and at worst a  dangerous one.\footnote{The correction can fail to be done well, or may not be done at all, or may use precious resources of either the analyser or the neural net, etc.}   

The requirement of \textbf{continuity} is, however, a very strong one.  Most of the difficulty in the rest of the paper comes from our having demanded it.

\textbf{Continuity} is a requirement which is not present in the  most common descriptions of complex particle physics events, all of which record contain data structures which store jet momenta (or the momenta of other groups of similar particles) in order of decreasing transverse momentum.  While this (or any similar) sorting process renders any subsequent event-variables invariant with respect to permutations of the orders of the jets (prior to sorting) -- this being a property which we also desire -- it is a not-often-appreciated fact that this same sorting process has the negative consequence that it makes the map from events to  event variables discontinuous.  For example: the dot product between the  momenta of the first and second most energetic jets in an event (or the related variable $\cos{(\Delta\theta(j_1,j_2))}$) can change  discontinuously when the properties of those three jets are continuously varied. All that is necessary for this to happen is for the the second and third most energetic jets to approach each other's energy while each has a  different angles to the most energetic jet.  This means that  machine learning algorithms given such inputs are being asked not only to do important tasks (separating $b$-jets from light quark jets, say) but are also being asked to waste resources in learning to do similar things in widely separated part of the input space which superficially appear to  represent very different events but which, in actual fact, may actually represent events which could be  almost indistinguishable from each other!   \textbf{Continuity}, therefore, might be expected to be beneficial in allowing machine learning algorithms to focus their finite resources on the relevant rather than the irrelevant tasks in hand.

Were the requirement for \textbf{continuity} to be removed, many of the task of the paper would becomes almost trivial.  Construction of the desired sets of event-variables would begin by sorting the momenta within each class of identical particles so as to produce intermediate momenta with the requisite permutation invariance.\footnote{This step is not entirely trivial as structures are needed to cope with every possible tie-breaker situation that can arise during the sort process.  Tie-breaker situations cannot be brushed under the carpet either as they can be the norm rather than exceptions. For example, the lack of a frame-independent way of distinguishing one beam from another in a collider colliding identical particles creates many problems.}  Then all pseudo-scalar Lorentz-contractions of those intermediate momenta would be formed and would themselves constitute a \textbf{sufficient} though  not usually \textbf{\independent} set of event-variables. Further work could be done to try to pare that set down to one which was \textbf{\independent} -- if the resulting reduction were felt necessary.  Alternatively one could simply accept the additional computational cost of using a \textbf{sufficient} but not \textbf{\independent} set of event-variables.

It remains to be determined whether our prejudice motivating the requirement of \textbf{continuity} is worth the additional complication and cost it imposes on the construction of event variables.  Our hunch is that the requirement \textit{is} an important one, but that it only becomes so when the number $n$ of momenta subject to a permutation symmetry has grown sufficiently large that the cost of coping with a number of discontinuities likely to grow as $n!$ (when working with na\"ively constructed event variables) outweighs the costs of deriving a much smaller number of \textbf{continuous} but individually more powerful  event variables.\footnote{The previously mentioned \cite{Butter:2017cot} perhaps takes a different view.  It comments that ``While one could use advanced pre-processing beyond some kind of ordering of the input 4-momenta, our earlier study [\textit{by which it means }\cite{Kasieczka:2017nvn}]  suggests that this is not necessary.''  This statement is made more interesting by the fact that \cite{Kasieczka:2017nvn} appears to make no statements about the benefits or dis-benefits of  four-momentum ordering strategies.  It appears to make no statements about four-momentum ordering strategies of any kind.  Instead \cite{Kasieczka:2017nvn} processes images of jets projected onto eta-phi space.  Such objects have lost all notion of momentum ordering by an entirely different means.  If background work in support of \cite{Kasieczka:2017nvn} had indeed considered issues connected with ordering of four-momenta, but the conclusions from those studies were simply not mentioned the final paper, it is unclear whether that work considered small sets of four-momenta (in which case we would likely agree) or large sets of four-momenta (in which case there could be scope for disagreement or debate between us).}
  It is therefore possible that the most complex permutation symmetry we have used in our illustrations  ($S_2\times S_3$) is simply too simple to see any significant benefit deriving from the \textbf{continuity} requirement we have imposed.\footnote{In the $(pq)\rightarrow(abc)+X$ scenario which we consider toward the end of the paper, a na\"ive transverse-momentum-based sorting approach for the $(abc)$ momenta in the $(pq)$ rest frame would result in only one discontinuity among the sorted $(abc)$ dot products, and one `separated duplication' caused by the inability to distinguish in a frame-independent way between the concepts of positive and negative rapidity when $p$ and $q$ are presumed to represent identical particles.}
It would be interesting for future work to try to create variables for $S_2\times S_n$ symmetries so as to probe performance for all values of $n$ between 2 and $O(10)$ -- or the maximum number of jets recorded in appreciable numbers within LHC runs.\followUpInFuture[Note this proposal for future owrk on $S_2\times S_n$ to test the importance of continuity.]

\introsection{Our use of illustrative examples}

Unfortunately we are not able to present a turn-handle algorithm for generating the appropriate  set of event variables, having all the aforementioned properties, given a `general' or `arbitrary' class of events  $\Omega$. Instead, this paper illustrates the process of how one can go about constructing a set of event variables with the right properties \textit{given}   a concrete class of event, $\Omega_1$, which one is interested in analysing.  This process is then repeated for a number of other event classes $\Omega_2$, $\Omega_3$, \ldots. The event classes have (we hope) been chosen to be sufficiently interesting that they illustrate the structure and nature of the calculations which would need to be performed for more complicated event classes beyond those considered.

The simplest event classes  we consider are those one could use when looking for non-standard sources of Lorentz-invariant parity violation in two-jet or three-jet events at colliders.  The most general class of events we consider is one in which each event is assumed to contain five important momenta among which an $S_2$ interchange symmetry is present between two of them, and an $S_3$  interchange symmetry is present between the others. In this class of events there are no restrictions on whether those momenta are in the initial or final state, or are even observed at a collider.

The only high-level physics-based requirements which have been woven into to the very fabric of our framework (and which might be absent from a purely abstract mathematical attempt at solving the same problem) are: (i) that momenta which comprise event-data are assumed to be time-like or null (not space-like) and
(ii)
they are assumed to have entirely real (not complex) components.

We caution the reader against becoming  anxious upon seeing the long list of special cases presented in the mid sections of the paper. Although some of these sections  (e.g.~Section~\ref{sec:movingontononcollevtsinpqabc}) have alarming titles, the final results which they lead to are more general than they may at first appear. Although it was necessary to break many proofs down into small components, each acting on very narrow classes of events, these are only ingredients forming part of a divide-and-conquer strategy aiming at investigating a simpler whole.

\introsection{The bigger picture}

As the bulk of the paper comprises theorems and proofs tailored very much to the details of the specific 
event classes used as illustrations, it may be helpful to contrast those direct attacks on the problem with a more general description of the fundamental ideas underlying our results.  This might both help readers to see the `wood for the trees' and pave the way for future generalizations (e.g.~with additional symmetries or more complicated events).

To that end: suppose we have some physical system, whose dynamics is known to be invariant under a symmetry group $H$
and which is known to have some notion of parity, under which the dynamics may or may not be invariant.\footnote{$H$ will later be the Lorentz group without parity, together with some permutation symmetry.}  Rather than thinking of parity as some definite transformation of the physical system (say `$\vec x \rightarrow -\vec x$'), it is helpful to think of it as an extension of the group $H$ by $\mathbb Z/2$. That is to say, we have another group $G$, of which $H$ is a normal subgroup, such that $G/H$ is isomorphic to $\mathbb Z/2$, and that the non-trivial element of this $\mathbb Z/2$ subgroup is what we call parity.\footnote{$G$ will later be the Lorentz group with parity, together with some permutation symmetry.}  The phenomenologically interesting question then becomes: we know that $H$ is a symmetry of our system, but is $G$?   

To give a somewhat trivial example, suppose our system is invariant under translations in space. Then, rather than arbitrarily choosing coordinates $x$ on our space and defining parity as the transformation $\vec x \rightarrow -\vec x$ reflecting points in our arbitrarily chosen origin, we should think about the reflections through all possible origins, each of which is related to every other by a translation. 

Now suppose we collect some data, which we call events, each of which is a point in a space $X$ with the property that $G$ (and $H$) have a well-defined action on $X$ (for example, we might choose some space of particles' energy and momenta, measured in a particular reference frame).\footnote{$X$ will later be the space of all momenta. } Because we know $H$ to be a symmetry of the system, it makes sense to collate events which lie on a common orbit of $H$, since there is nothing to be learnt by keeping them separate. Now, because $H$ is a normal subgroup of $G$, there is a well-defined action of $G/H \cong \mathbb Z/2$, i.e.~parity, on the space $X/H$ of $H$-orbits of $X$. Moreover, we can partition the orbits themselves into non-chiral orbits, which are fixed under the action of $\mathbb Z/2$ and the chiral orbits, which are not fixed (and which therefore contain precisely two distinct $H$-orbits of $X$). The chiral orbits are the ones that are of interest to us, because they allow us the possibility of testing directly whether parity is indeed a symmetry of our dynamical system. One can do this, for example, by examining whether, in a run of observations of events, we obtain a statistically-significant discrepancy in the number of events observed in each member of any such pair of orbits. 

This simple observation is hindered, in general, by the fact that both the space of possible events and the resulting space of possible $H$-orbits will be infinitely large, requiring us to somehow combine many orbits in order to have a chance of reaching a statistically significant discrepancy in a finite number of observations. We would like to do so in a way which discards as little information as possible. One way to do so is to look for observables (i.e.~nice functions on the underlying space $X$, e.g.~polynomials) which are not only $H$-invariant (so induce well-defined functions on $X/H$ which are oblivious to useless information), but which are also parity odd (meaning that they return values of opposite sign on each of an orbit pair), giving one the hope of establishing whether parity is a symmetry or not. Since any function can be decomposed into a piece which is parity odd and parity even, there is again no loss of information here; rather, we are simply throwing away the useless, parity even part. 

Ideally, one would like to have a set of functions of this nature that is large enough to be able to separate all of the $H$-orbits from one another, so that again no physical information is lost in passing from events to observables. This is, in essence, the goal of this paper. In order to achieve it, we make full use of the freedom to restrict the possible events to those which are physically observable at colliders and elsewhere, enabling us to reduce the size of the set of functions as much as possible.

\introsection{Alternative data-only constraints on non-standard sources of parity violation} 

If it were desired to retain the ability to make statements about non-standard parity-violation which are primarily based on data alone (not using comparisons of data and Monte Carlo  at first order), and if it were acceptable to lose the requirements connected with \textit{sufficiency}, an interesting approach would be to randomly flip the parity of 50\% of LHC events in software, and then attempt to train a machine learning algorithm to distinguish the flipped from the non-flipped events.  Such methods would be many in number, limited only by the creativity of the algorithms used. They would frequently come with neither the qualified coverage guarantees  provided by the sets of variables proposed in the present paper, nor with assurances that algorithms is blind to (perhaps permutation) symmetries known to be uninteresting, nor with assurances that the algorithms respect symmetries known or presumed by choice to be fundamental (perhaps Lorentz-invariance). Such methods would, however, have set up costs for new event classes which are orders of magnitude below those associated with the construction of sets of variables in the manner suggested in the current paper.  While the present paper is the third in a series beginning with \cite{Gripaios:2020ori} and \cite{Gripaios:2020hya}, the fourth\footnote{Likely to be `Mastandrea and Lester \textit{et.al.}'} may well be one which compares the current proposal to such alternative methods.

\introsection{The structure of the rest of the paper}

\begin{itemize}
\item
Section~\ref{sec:definitionsrelatingtocollevents} prepares the ground for many subsequent calculations, most importantly by providing  axiomatic definitions of what we wish to consider as  events at colliders, what we mean by classes of events, a nomenclature for describing the interchange symmetries on them which may be of interest, and what it means for events to be chiral or non-chiral under parity,
\textit{etc.}
\item
Section~\ref{sec:condsforvariouschiraleventclasses} then determines explicit conditions under which events in certain classes actually are chiral. The classes of events considered  include (in no particular order) the following:
\begin{enumerate}
\item
$(pq)\rightarrow (ab)+X$,
\item
$(pq)\rightarrow (abc)+X$,
\item
$X\rightarrow (pq)+(ab)+Y$, and
\item
$X\rightarrow (pq)+(abc)+Y$,
\end{enumerate}
\item
For each of the classes of events just mentioned, Section~\ref{sec:wherewearenow} then obtains sets of parity-odd event variables about which strong statements can be made concerning their \textbf{sufficiency} and \textbf{\independence} properties.  This section is, arguably, the most important of the paper, since its outputs are the main results contained herein.  In particular:
\item
For events of the form $(pq)\rightarrow (ab)+X$, $(ab)\rightarrow (pq)+X$ or $X\rightarrow (ab)+(pq)+Y$,
Section~\ref{sec:exploratorysectionforpqabeventvars} identifies a set $S$ containing three Lorentz-invariant event variables ($S=\{X_1, X_2, X_3\}$) which are insensitive to $(pq)$- and $(ab)$-permutations and which
are proved to satisfy the desired \textbf{reality}, 
\textbf{parity oddness}, 
\textbf{continuity},
\textbf{invariance}, 
\textbf{necessity}, and
\textbf{sufficiency}
properties defined in the introduction. Each of the variables in $S$ is defined in a manifestly frame independent way.
\item
Section~\ref{sec:exploratorysectionforpqabeventvars} also identifies a related set $\hat S$ containing three similar  Lorentz-invariant variables ($\hat S=\{\hat X_1, \hat X_2, \hat X_3\}$). These may only be used to ascribe parities to events of the form $(pq)\rightarrow (ab)+X$.  Despite this limitation, the variables of $\hat S$ are arguably simpler and easier to interpret than those of $S$, and so may be of use to some.
\item
No claim is made that either set $S$  or $\hat S$ is globally \textbf{minimal} for the class of events with which it is concerned.  That is to say:  it has not been proved that there does not exist some other set $S_?$ containing only one or two event variables but which retains all the  stated properties.
Nonetheless, we \textit{conjecture} that there is indeed  no smaller set $S_?$ and that $S$ and $\hat S$ are therefore examples of \textbf{minimal} sets.
\followUpInFuture[Ben: Can you see a way of proving this conjecture?]
\item
Section~\ref{sec:exploratorysectionforpqabceventvars} directs itself toward the generation of a set,  $S_{19}$, containing nineteen Lorentz-invariant event variables ($S_{19}=\{V_1, \ldots, V_{19}\}$) which are insensitive to $(pq)$- and $(abc)$-permutations and which, for events of the form $(pq)\rightarrow (abc)+X$, are proved to satisfy the \textbf{reality}, 
\textbf{parity-oddness}, 
\textbf{continuity}, 
\textbf{invariance}, 
\textbf{necessity} and
\textbf{sufficiency}
properties already described. The event variables in $S_{19}$ are defined in a manifestly frame independent way. 
\item
No claim or conjecture is made that $S_{19}$  is globally \textbf{minimal}.  That is to say it has not been proved that there does not exist some other set $S_\#$ containing  eighteen or fewer event variables which retains all the stated properties including sufficiency.  Nonetheless, Section~\ref{sec:discOfMinimialityOfS19} discusses the circumstantial evidence which suggests that $S_{19}$ might not be a \textbf{minimal} set.
\item
Sections~\ref{sec:bringbothnoncolltypestogeterinpqabc} and \ref{sec:pqabcbothcollandnoncollvars} work toward finding a superset of $S_{19}$ named  $S_{28}$ containing twenty-eight Lorentz-invariant event variables  ($S_{28}=\{V_1, \ldots, V_{28}\}$) which are insensitive to $(pq)$- and $(abc)$-permutations.  For events having an interchangeable pair of particles $(pq)$ and an interchangeable triplet of particles $(abc)$ (neither of which need be in the initial state) the event variables in $S_{28}$ are proved to satisfy the \textbf{reality}, 
\textbf{parity-oddness}, 
\textbf{continuity}, 
\textbf{invariance} and
\textbf{sufficiency}
properties already described. The event variables in $S_{28}$ are defined in a manifestly frame independent way.  The set $S_{28}$ is \textit{not} demonstrated to be \textbf{\independent}.
\item
Section~\ref{sec:supportmaterials} describes downloadable support materials which may assist readers wishing to validate calculations using the event variables $V_1$ to $V_{28}$.
\item
Section~\ref{sec:discussion} and beyond then discuss the potential uses of the event variables previously defined, their relationship to related ideas, and scope for future development.
\item
Core notation may be found in Appendix~\ref{appendix:notation} and a small number of non-standard mathematical identities may be found in Appendix~\ref{app:mathematicalidentitissdf}.
\end{itemize}

\section{Preliminary definitions and calculations.}

This Section~\ref{sec:definitionsrelatingtocollevents} contains important definitions and notation for key physical concepts which are needed to support later proofs.  Inevitably, the definitions of these higher-level concepts depend on deeper notation which some (but perhaps not all) readers may regard as `standard'. An example might include the four-vector notation `$p=(E,\vec p)$' seen in Definition~\ref{def:firstdeftousealorentsvec}.  Such  `underlying' or `semi-standard' notation is important to define, but it could be a distraction to define it \textit{here}. We therefore provide it instead in a dedicated Appendix~\ref{appendix:notation}.

\label{sec:definitionsrelatingtocollevents} 

\hide{
\begin{definition}
$\mathbb{O}$ is defined to be the set containing the `zero' four-vector:  $\mathbb{O}=\{(0,\vec 0)\}$. 
\end{definition}
\begin{definition}
$\mathbb{N}$ is defined to be the set containing all forward-time-directed null four-vectors:  $\mathbb{N}=\left\{\left(|\vec p|,\vec p\right) \mid \vec p\in\mathbb{R}^3\right\}$. 
\end{definition}
\begin{definition}
$\mathbb{M}$ is defined to be the set containing all forward-time-directed massive four-vectors:  $\mathbb{M}=\left\{\left(\sqrt{m^2+|\vec p|^2},\vec p\right) \mid \vec p\in\mathbb{R}^3,m>0\right\}$. 
\end{definition}
}

\begin{definition}
$\mathbb{A}$ is defined to be the set containing all four-vectors having real energies and real momentum components:  $\mathbb{A}=\left\{p  \mid p=\left(E,\vec p\right), \vec p\in\mathbb{R}^3,E\in \mathbb{R}\right\}.$
\label{def:firstdeftousealorentsvec}
\end{definition}
\begin{definition}
$\mathbb{V}$ is defined to be the proper subset of $\mathbb{A}$ containing all forward-time-directed non-spacelike four-vectors:  $\mathbb{V}=\left\{p  \mid p=\left(\sqrt{m^2+|\vec p|^2},\vec p\right), \vec p\in\mathbb{R}^3,m\ge0\right\}\subset\mathbb{A}$. 
\end{definition}
\begin{definition}
$\mathbb{M}$ is defined to be the proper subset of $\mathbb{V}$ containing all forward-time-directed timelike non-massless four-vectors:  $\mathbb{M}=\left\{p  \mid p=\left(\sqrt{m^2+|\vec p|^2},\vec p\right), \vec p\in\mathbb{R}^3,m>0\right\}\subset\mathbb{V}$. 
\end{definition}
\begin{definition}
$\mathbb{N}$ is defined to be the proper subset of $\mathbb{V}$ containing all forward-time-directed null (massless)  four-vectors:  $
\mathbb{N}=
\left\{
  p  
  \mid 
  p=\left(
        |\vec p|,\vec p
     \right), 
     \vec p\in\mathbb{R}^3
\right\}\subset\mathbb{V}$. 
\end{definition}
\begin{remark}
$\mathbb{M}\cup\mathbb{N}=\mathbb{V}$ and $\mathbb{M}\cap\mathbb{N}=\emptyset$. The zero-vector is contained within $\mathbb{N}$.
\end{remark}
\begin{remark}
A generic element of $\mathbf{x}\in\mathbb{A}^n$ (or
$\mathbb{V}^n$ or $\mathbb{M}^n$ or $\mathbb{N}^n$)
will frequently use \textbf{bold-face} to emphasise that it has $n$ components.  Where these components themselves need to be indexed, the most common notation will take the form  $\mathbf{x}=(p_1,\ldots,p_n)$ for $p_i\in \mathbb{A}$ (or
$\mathbb{V}^n$ or $\mathbb{M}^n$ or $\mathbb{N}^n$).
\end{remark}
\begin{remark}
The set $\mathbb{A}$ can easily be extended to a vector space over the field $\mathbb R$ by the addition of the usual operators allowing vector addition and scalar multiplication.  \textbf{The same is not true for $\mathbb V$ or $\mathbb{M}$ or $\mathbb{N}$} since these do not contain inverse elements for the vector addition operator.  Nonetheless, Lemmas~\ref{lem:vsumsareinv} and \ref{lem:msummareinm} are related and still hold.
\end{remark}

\begin{lemma}
If $p$ and $q$ are both in $\mathbb{V}$ then $p.q\ge 0$.\label{lem:pqpos}
\begin{proof}
$
    p.q
    =
    \sqrt{m_p^2+|\vec p|^2}
    \sqrt{m_q^2+|\vec q|^2}
    -\vec p \cdot \vec q
\ge 
  |\vec p|
  |\vec q|
    -|\vec p|
  |\vec q|\cos{\theta_{pq}}
    \ge
  |\vec p|
  |\vec q|
    -|\vec p|
  |\vec q|
    =0.
$
\end{proof}
\end{lemma}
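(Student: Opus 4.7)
The plan is to unpack the definition of $\mathbb V$, write $p.q$ explicitly in terms of the energy and three-momentum components, and then bound the energy piece from below and the spatial piece from above so that the difference is non-negative.

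First I would write $p=(E_p,\vec p)$ and $q=(E_q,\vec q)$ with $E_p=\sqrt{m_p^2+|\vec p|^2}$ and $E_q=\sqrt{m_q^2+|\vec q|^2}$ for some $m_p,m_q\ge 0$, as guaranteed by membership in $\mathbb V$. By the (mostly-minus) Minkowski contraction used throughout the paper, $p.q=E_pE_q-\vec p\cdot\vec q$.

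Next I would observe the two elementary inequalities on which the whole argument hinges. On the energy side, $\sqrt{m_p^2+|\vec p|^2}\ge|\vec p|$ and analogously for $q$, so $E_pE_q\ge|\vec p|\,|\vec q|$; this is exactly where the forward-time-directed, non-spacelike hypothesis enters. On the spatial side, the Cauchy--Schwarz inequality (or, equivalently, $\cos\theta_{pq}\le 1$) gives $\vec p\cdot\vec q\le|\vec p|\,|\vec q|$. Chaining the two yields $p.q=E_pE_q-\vec p\cdot\vec q\ge|\vec p|\,|\vec q|-|\vec p|\,|\vec q|=0$, as required.

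There is essentially no obstacle here: the only mild subtlety is making sure the chain of inequalities is valid in the degenerate cases $\vec p=\vec 0$ or $\vec q=\vec 0$ (in which both sides collapse to $E_pE_q\ge 0$, trivially true because energies are non-negative), and in the null cases $m_p=0$ or $m_q=0$, where the first inequality becomes an equality but the argument is unchanged. Because the two bounds used are individually tight, the proof simultaneously reveals when equality $p.q=0$ holds, namely when both vectors are null and parallel in their spatial parts -- a remark I could include but which is not needed for the statement.
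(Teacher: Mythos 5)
Your proposal is correct and follows essentially the same route as the paper: write $p.q=\sqrt{m_p^2+|\vec p|^2}\,\sqrt{m_q^2+|\vec q|^2}-\vec p\cdot\vec q$, drop the masses to get $E_pE_q\ge|\vec p|\,|\vec q|$, and bound $\vec p\cdot\vec q\le|\vec p|\,|\vec q|$ via $\cos\theta_{pq}\le1$. Your extra remarks on the degenerate and null cases are harmless but not needed.
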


\begin{lemma}
\label{lem:vsumsareinv}
If $p$ and $q$ are both in $\mathbb{V}$, and $\lambda$ and $\mu$ are non-negative real numbers, then $\lambda p + \mu q \in \mathbb{V}$. \begin{proof}
It is evident from the condition on $\lambda$ and $\mu$ that the time component of $\lambda p + \mu q$ will be non-negative (as required). The only non-trivial check required, therefore, is that $\lambda p + \mu q$ has a non-negative squared-mass. This may be shown by considering $(\lambda p + \mu q)^2 = \lambda^2 m_p^2 + \mu^2 m_q^2 + 2 \lambda \mu (p\cdot q) \ge2 \lambda \mu (p\cdot q) \ge 0$. (The last inequality uses Lemma~\ref{lem:pqpos}.)
\end{proof}
\end{lemma}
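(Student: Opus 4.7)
The plan is to verify the two defining conditions of membership in $\mathbb{V}$ separately: (i) that the time component of $\lambda p + \mu q$ is non-negative, and (ii) that its squared mass is non-negative. Both conditions together are exactly what it means for a four-vector to be a forward-time-directed non-spacelike element of $\mathbb A$.

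For condition (i), since $p$ and $q$ lie in $\mathbb{V}$ they have non-negative time components by definition, and since $\lambda,\mu\ge 0$, the time component of $\lambda p+\mu q$ is a non-negative combination of non-negative quantities, hence non-negative. This step is immediate and requires no prior results.

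For condition (ii), I would expand $(\lambda p+\mu q)^2 = \lambda^2 p^2 + \mu^2 q^2 + 2\lambda\mu\,(p\cdot q)$. Because $p,q\in\mathbb V$, both $p^2=m_p^2\ge 0$ and $q^2=m_q^2\ge 0$, so the first two terms are non-negative (as $\lambda^2,\mu^2\ge 0$). The cross term is also non-negative: the prefactor $2\lambda\mu$ is non-negative by hypothesis, and the substantive input is that $p\cdot q\ge 0$ whenever $p,q\in\mathbb V$, which is precisely Lemma~\ref{lem:pqpos}. Summing the three non-negative contributions gives $(\lambda p+\mu q)^2\ge 0$, completing condition (ii).

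There is no real obstacle here; the lemma essentially reduces, once one observes that $\mathbb V$ is characterised by the two inequalities above, to the already-established Lemma~\ref{lem:pqpos}. The only subtlety worth flagging is that the analogous statement fails for arbitrary real coefficients (multiplying any forward-directed timelike vector by $-1$ ejects it from $\mathbb V$), which is exactly why the hypothesis $\lambda,\mu\ge 0$ is essential and why $\mathbb V$ is a convex cone rather than a vector subspace of $\mathbb A$, consistent with the preceding remark that $\mathbb V$ lacks additive inverses.
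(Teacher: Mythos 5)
Your proof is correct and follows the same route as the paper: verify the non-negativity of the time component directly from $\lambda,\mu\ge 0$, then expand $(\lambda p+\mu q)^2$ and use Lemma~\ref{lem:pqpos} to control the cross term while $m_p^2,m_q^2\ge 0$ handle the diagonal terms. Nothing further is needed.
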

\begin{lemma}
\label{lem:msummareinm}
If $p$ and $q$ are both in $\mathbb{M}$, and $\lambda$ and $\mu$ are positive real numbers, then $\lambda p + \mu q \in \mathbb{M}$. \begin{proof}
The proof for this lemma is a trival extension of the proof of Lemma~\ref{lem:vsumsareinv}. The only difference is that this time use may be made of the fact that $\lambda^2 m_p^2$ and $\mu^2 m_q^2$ are both strictly greater than zero.
\end{proof}
\end{lemma}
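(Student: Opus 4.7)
The plan is to mimic the proof of Lemma~\ref{lem:vsumsareinv} almost verbatim, but upgrade each non-negativity statement to strict positivity by exploiting the strengthened hypotheses that $m_p>0$, $m_q>0$, $\lambda>0$, and $\mu>0$. Membership in $\mathbb{M}$ (as opposed to $\mathbb{V}$) requires checking two things: (i) that the time component of $\lambda p + \mu q$ is non-negative, and (ii) that its squared mass is \emph{strictly} positive.

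First I would dispose of the time-component check. The time component equals $\lambda\sqrt{m_p^2+|\vec p|^2}+\mu\sqrt{m_q^2+|\vec q|^2}$; under the positivity assumptions on $\lambda$, $\mu$, $m_p$, $m_q$ each summand is strictly positive, so the sum is certainly non-negative and $\lambda p + \mu q$ is forward-time-directed.

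Second, I would bound the squared mass from below exactly as before, writing
\[
  (\lambda p + \mu q)^2 = \lambda^2 m_p^2 + \mu^2 m_q^2 + 2\lambda\mu\,(p\cdot q).
\]
Because $\mathbb{M}\subset\mathbb{V}$, Lemma~\ref{lem:pqpos} still applies and gives $p\cdot q\ge0$, so the cross term is non-negative. The two remaining terms $\lambda^2 m_p^2$ and $\mu^2 m_q^2$ are each strictly positive by hypothesis, which forces $(\lambda p + \mu q)^2 > 0$ and hence $\lambda p + \mu q\in\mathbb{M}$.

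There is no real obstacle here: the only subtlety is to note that the dot-product bound of Lemma~\ref{lem:pqpos} carries over because $\mathbb{M}\subset\mathbb{V}$, and that the previously tolerated boundary cases (any of $\lambda$, $\mu$, $m_p$, $m_q$ vanishing) are now excluded by the strengthened hypotheses, so the inequality sharpens to a strict one without any further work.
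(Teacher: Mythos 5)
Your proof is correct and is essentially the paper's own argument: the paper likewise treats this as the proof of Lemma~\ref{lem:vsumsareinv} with the cross term bounded by Lemma~\ref{lem:pqpos} and the strict positivity of $\lambda^2 m_p^2$ and $\mu^2 m_q^2$ upgrading the mass inequality to a strict one. No differences worth noting.
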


\begin{lemma}
\label{lem:eppos}
If $p\in\mathbb{V}$ and $p\ne0$ then $E_p>0$ in every frame.\footnote{We exclude from the concept of `frame' anything reached only as a limit (i.e.~with an infinite Lorentz `$\gamma$').} 
\begin{proof}
If $p$ is in $\mathbb{V}$ and $p\ne 0$ the either $\vec p\ne0$ or $m\ne 0$.  Since $E_p=\sqrt{m^2+|\vec  p|^2}$ then $E_p>0$.
\end{proof}
\end{lemma}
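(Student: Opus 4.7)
The plan is to bootstrap off the author's in-frame calculation by showing that the defining conditions of $\mathbb{V}$ and of being non-zero are preserved under frame changes, and then to reapply the one-frame argument in each new frame. A ``frame change'', as clarified by the footnote, is a proper orthochronous Lorentz transformation (boosts have $|\vec v|<1$). Any such transformation decomposes as a rotation followed by a boost; rotations affect neither $E_p$ nor the non-spacelike/future-directed nature of $p$, so the whole question reduces to boosts.

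For a boost with velocity $\vec v$, $|\vec v|<1$, the transformed energy is
\[
E_p' \;=\; \gamma\bigl(E_p - \vec v\cdot\vec p\bigr).
\]
In the starting frame Lemma~\ref{lem:pqpos}-style reasoning, combined with $p\in\mathbb{V}$, gives $|\vec p|^2 = E_p^2 - m^2 \le E_p^2$, so $|\vec p|\le E_p$. The footnote's restriction $|\vec v|<1$ then yields
\[
\vec v\cdot\vec p \;\le\; |\vec v|\,|\vec p| \;<\; |\vec p| \;\le\; E_p,
\]
provided $|\vec p|\ne 0$. The edge case $|\vec p|=0$ is handled separately: then $E_p=m$ and $\vec v\cdot\vec p=0$, so $E_p'=\gamma\,m>0$ by the author's given argument (since $p\ne 0$ forces $m>0$ here). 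In either sub-case $E_p'>0$, which also re-establishes future-directedness, while $(p')^2=p^2=m^2\ge 0$ preserves the non-spacelike condition, and invertibility preserves $p\ne 0$. Hence $p'\in\mathbb{V}\setminus\{0\}$ in the new frame and the author's in-frame proof applies verbatim.

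The main obstacle is the null case $m=0$, $\vec p\ne0$, where $|\vec p|=E_p$ and so the estimate $|\vec v\cdot\vec p|<E_p$ survives only because of the strict bound $|\vec v|<1$ (which is precisely why the footnote excludes frames reached ``only as a limit''). With that strictness in hand the chain of inequalities never becomes an equality, and the result holds in every admissible frame.
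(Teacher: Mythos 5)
Your proof is correct, and it is in essence a fleshed-out version of the paper's one-line argument: the paper simply observes that any nonzero $p\in\mathbb{V}$ has $\vec p\ne 0$ or $m\ne0$, so $E_p=\sqrt{m^2+|\vec p|^2}>0$, tacitly using that membership in $\mathbb{V}$ (and hence this parameterisation of $E_p$) holds in \emph{every} frame. What you do differently is justify that tacit step explicitly: you reduce a general frame change to a rotation plus a boost with $|\vec v|<1$, compute $E_p'=\gamma(E_p-\vec v\cdot\vec p)$, and use $|\vec p|\le E_p$ together with the strict bound $|\vec v|<1$ to get $E_p'>0$, treating $\vec p=0$ separately. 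This buys two things the paper's proof leaves implicit: a verification that $\mathbb{V}\setminus\{0\}$ is preserved under admissible frame changes, and a clear identification of where the footnote's exclusion of limiting frames is actually needed (the null case $m=0$, where $|\vec v\cdot\vec p|<E_p$ survives only because the inequality $|\vec v|<1$ is strict). The paper's route is shorter; yours is more self-contained and pinpoints the role of the footnote.
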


\begin{lemma}
\label{lem:pqbiggerthanmpmq}
If $p$ and $q$ are both in $\mathbb{V}$ then $p.q\ge m_p m_q$.
\begin{proof}
If $m_p=0$ or $m_q=0$ then the result is already proved by Lemma~\ref{lem:pqpos}.  It is only necessary to prove the result, therefore, in the case that $m_p\ne 0 \ne m_q$.  In this case:
\begin{align*}
    p.q
    &=
    \sqrt{m_p^2+|\vec p|^2}
    \sqrt{m_q^2+|\vec q|^2}
    -\vec p \cdot \vec q
   \\
&= 
   \sqrt{m_p^2+|\vec p|^2}
    \sqrt{m_q^2+|\vec q|^2}
    -|\vec p|
  |\vec q|\cos{\theta_{pq}}
  %
   \\
&\ge
   \sqrt{m_p^2+|\vec p|^2}
    \sqrt{m_q^2+|\vec q|^2}
    -|\vec p|
  |\vec q|
  %
   \\
&= m_p m_q \left(
   \sqrt{1+x^2}
    \sqrt{1+y^2}
    -x
 y
  \right)
  %
\end{align*}
where $x=|\vec p|/m_p$ and $y=|\vec q|/m_q$.  The lemma shall therfore be proved if it can be demonstrated that the function $f(x,y)=\sqrt{1+x^2}\sqrt{1+y^2}-xy$ is greater than or equal to one for all non-negative $x$ and $y$. Trivially $f(x,y)\ge 1$ if  either $x=0$ or $y=0$.  We will only need to show that this is also true when both $x$ and $y$ are strictly positive: $x>0$ and $y>0$.
At extrema of $f$:
\begin{align*}
0=\frac{\partial f}{\partial x} &=\frac{x \sqrt{1+y^2}}{\sqrt{1+x^2}} - y 
\qquad\text{and} \\
0=\frac{\partial f}{\partial y} &=\frac{y \sqrt{1+x^2}}{\sqrt{1+y^2}} - x.
\end{align*}
These constraints are redundant with each other and may be jointly written as \begin{align}y \sqrt{1+x^2} = x \sqrt{1+y^2}
\label{eq:yghbresdfresdfv}
\end{align}
whose solutions are contained within those of $
y^2(1+x^2)=x^2(1+y^2)$ which is equivalent to $y^2 = x^2$ or even $x=y$ given, as we have already noted, that
it suffices to consider only those  solutions having $x\ge 0$ and $y\ge 0$.  We may limit ourselves therefore to checking for extrema among the cases $(x,y)=(\lambda,\lambda)$  for   $\lambda\ge0$ -- and moreover note that all such potential solutions do indeed solve $\frac{\partial f}{\partial x}=\frac{\partial f}{\partial y}=0$  (meaning that they are not artefects introduced when \eqref{eq:yghbresdfresdfv} was squared). We therefore try such solutions in $f(x,y)$:
\begin{align*}
f(\lambda, \lambda ) 
&=
\sqrt{1+\lambda^2 }\sqrt{1+\lambda^2 }-\lambda^2 
=1
\end{align*}
independently of $\lambda$.  We therefore see that $f(x,y)$ has a degenerate extremum along the line $(x,y)=(\lambda,\lambda)$ and has no other extrema, except perhaps at infinity. The lemma will therefore be proved if we can demonstrate that the degenerate extremum just found is a minimum.  To do this it suffices to find one point above the line and one point below it\footnote{Finding two points excludes the possibility of a degenerate inflection.} having values of $f(x,y)$ which are greater than 1.  The proof is concluded, therefore, by noting that $f(1,0)=f(0,1)=\sqrt 2>1$. 
\end{proof}
\end{lemma}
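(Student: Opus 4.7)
The plan is to exploit the Lorentz invariance of the Minkowski inner product by reducing to a conveniently chosen frame, which avoids the two-variable calculus minimisation the author undertakes.

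First I would dispose of the massless edge cases. If either $m_p = 0$ or $m_q = 0$, then $m_p m_q = 0$, and Lemma~\ref{lem:pqpos} already yields $p\cdot q \ge 0 = m_p m_q$. It therefore suffices to treat the case $m_p, m_q > 0$, i.e.\ $p, q \in \mathbb{M}$.

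In this main case, the conceptual route is to boost to the rest frame of $p$. Since $p$ is forward-timelike with strictly positive mass, such a frame exists; in it $p = (m_p, \vec 0)$, while $q$ takes some form $(\tilde E, \tilde{\vec q})$ with $\tilde E = \sqrt{m_q^2 + |\tilde{\vec q}|^2} \ge m_q$ (since $m_q$ is Lorentz invariant). The Minkowski inner product is Lorentz invariant, so
\[
p\cdot q \;=\; m_p \tilde E \;-\; \vec 0\cdot \tilde{\vec q} \;=\; m_p \tilde E \;\ge\; m_p m_q.
\]

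If one prefers to avoid invoking an explicit boost (earlier lemmas have stayed within the $(E,\vec p)$ representation), the same content can be extracted by rapidity parameterization directly in the original frame. Writing $(E_p, |\vec p|) = m_p(\cosh\alpha, \sinh\alpha)$ and $(E_q, |\vec q|) = m_q(\cosh\beta, \sinh\beta)$ with $\alpha, \beta \ge 0$, and letting $\theta$ denote the angle between $\vec p$ and $\vec q$, one finds
\[
p\cdot q \;=\; m_p m_q\bigl(\cosh\alpha\cosh\beta - \sinh\alpha\sinh\beta\cos\theta\bigr).
\]
Since $\cos\theta \le 1$ and both sinh factors are non-negative, this is bounded below by $m_p m_q(\cosh\alpha\cosh\beta - \sinh\alpha\sinh\beta) = m_p m_q \cosh(\alpha-\beta) \ge m_p m_q$. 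The main obstacle is really just a presentational choice between these two routes; neither involves a hard step, and both sidestep the calculus minimisation of $f(x,y) = \sqrt{1+x^2}\sqrt{1+y^2} - xy$ performed in the author's proof.
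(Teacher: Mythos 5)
Your proof is correct, and it takes a genuinely different route from the paper. The paper stays in the original frame, writes $p.q \ge m_p m_q\left(\sqrt{1+x^2}\sqrt{1+y^2}-xy\right)$ with $x=|\vec p|/m_p$, $y=|\vec q|/m_q$, and then establishes $f(x,y)=\sqrt{1+x^2}\sqrt{1+y^2}-xy\ge1$ by a two-variable calculus argument (locating a degenerate extremum along $x=y$ and checking two off-diagonal points to confirm it is a minimum). You instead dispose of the massless cases via Lemma~\ref{lem:pqpos} exactly as the paper does, and then either boost to the rest frame of $p$ and use Lorentz invariance of $p.q$ together with $E_q\ge m_q$, or equivalently use the rapidity substitution $x=\sinh\alpha$, $y=\sinh\beta$, under which the paper's function $f$ collapses to $\cosh\alpha\cosh\beta-\sinh\alpha\sinh\beta\cos\theta\ge\cosh(\alpha-\beta)\ge1$. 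What your routes buy is brevity and robustness: the hyperbolic identity (or the rest-frame evaluation) gives the bound in one line, with no need to worry about behaviour at infinity or whether the degenerate critical line is genuinely a minimum, which is the most delicate part of the paper's argument. What the paper's version buys is that it never invokes a change of frame and uses only elementary real calculus on the explicit $(E,\vec p)$ components, consistent with the purely algebraic style of the surrounding lemmas; your rapidity variant achieves the same frame-independence, so it is arguably the cleaner of the available options. One small point of care in your rest-frame route: you should note (as you implicitly do) that the frame exists precisely because $m_p>0$ in the case under consideration, and in the rapidity route that when $\vec p$ or $\vec q$ vanishes the undefined angle $\theta$ is harmless since the corresponding $\sinh$ factor is zero.
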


\begin{definition}
A partition of $n$ things into $m$ classes having $n_i>0$ things in each class $i$ will be denoted $P(n,m,\mathbf{n})$ where $\mathbf{n}=(n_1, \ldots, n_m)$ and $n=\sum_{i=1}^m n_i.$
\end{definition}
\begin{remark}
When analysing an event of the form $$pp\rightarrow jjj+\mu^++\mu^-+X$$ then the partition of interest would be $P(7,4,(2,3,1,1))$. In this, the $n=7$ shows us we understand the momenta of seven objects (two incoming protons and five outgoing identified particles), $m=4$ tells us that there are four identifiable classes of particles in the event (namely: (i) incoming protons; (ii) jets, (iii) muons; and (iv) anti-muons) while the $\mathbf{n}=(2,3,1,1)$ shows us how many of those particles there are in each successive category.
\end{remark}
\begin{definition}
We define the group $S(\mathbb{V},P(n,m,\mathbf{n}))\in \Aut(\mathbb{V}^n)$ to be 
$$S(\mathbb{V},P(n,m,\mathbf{n}))=SO^+(1,3)\times S_{n_1}\times S_{n_2} \times \cdots \times S_{n_m}$$ where
$SO^+(1,3)$ is the part of the Lorentz Group which is connected to the identity\footnote{$SO^+(1,3)$ is sometimes called the group of proper (i.e.~not parity-altering) orthochronous (i.e.~not time-reversing) Lorentz transformations.},  $S_{n_1}$ is the symmetric group of order $n_1$ which permutes the first $n_1$ momenta within any  $\mathbf{x}\in\mathbb{V}^n$, and $S_{n_2}$ is the symmetric group of order $n_2$ which permutes the next $n_2$ momenta within any  $\mathbf{x}\in\mathbb{V}^n$, and so on. The Lorentz group acts on every element of $\mathbf{x}\in\mathbb{V}^n$ coherently.
\end{definition}
\begin{definition}
For every integer $n\ge 1$, an \textbf{event} $e$ in an \textbf{event class} $\mathscr{Ec}(\mathbb{V}^n,G)$ is an element $\mathbf{x}\in\mathbb{V}^n$ together with a symmetry group $G \in\Aut(\mathbb{V}^n)$:
\begin{align}
\mathscr{Ec}(\mathbb{V}^n,G)&=\mathbb{V}^n\times \Aut(\mathbb{V}^n) \\&= \left\{
e
\mid
e=(\mathbf{x},G),
\mathbf{x}\in \mathbb{V}^n, 
G\in \Aut(\mathbb{V}^n)
\right\}.
\end{align}
\end{definition}

\begin{remark}
We are primarily interested in \textbf{event classes} $\mathscr{Ec}(\mathbb{V}^n,G)$ for which the symmetry group $G$ is of the form $G=S(\mathbb{V},P(n,m,\mathbf{n}))$.  In a small abuse of notation, we therefore extend the notation of an \textbf{event class} $\mathscr{Ec}(\mathbb{V}^n,G)$ by making the following short-hand definition:
\end{remark}
\begin{definition}
$$
\mathscr{Ec}(P(n,m,\mathbf{n}))
\equivdef \mathscr{Ec}(\mathbb{V}^n,
S(\mathbb{V},P(n,m,\mathbf{n}))
).$$
\end{definition}
\begin{definition}
We define two events $e$ and $f$ to be \textbf{equivalent} \label{def:eventequivalence} (written `$e\sim f$') if and only if (i) they are in the same event class, and (ii) they can be transformed into each other by an element of the group each contains.  More formally, if $e=(\mathbf{x},G_e)\in\mathscr{E}(\mathbb{V}^{n(e)},G_e)$
and
$f=(\mathbf{y},G_f)\in\mathscr{E}(\mathbb{V}^{n(f)},G_f)$
then
$$(e\sim f) \iff (\text{$n(e)=n(f)$ and $G_e=G_f$ and $\exists$ $g\in G_e$ s.t.~$\mathbf{x} = g\mathbf{y}$}).$$
\end{definition}
\begin{remark}
The relation $\sim$ just given is an equivalence relation.
\end{remark}
\begin{definition}
\label{def:paroponthings}
The parity operator $\mathscr{P}$ is defined to act on  $p\in\mathbb{V}$, $\mathbf{x}\in\mathbb{V}^n$  and $e\in\mathscr{E}(V^n,G)$ as follows:
\begin{align}
\mathscr{P}\cdot p&\equiv \mathscr{P} \cdot \left(E,\vec p\right) \equivdef \left(E,-\vec p\right),
\\
\mathscr{P}\cdot \mathbf{x} &\equiv \mathscr{P} \cdot \left(p_1,\ldots,p_n\right) \equivdef \left(\mathscr{P}\cdot p_1,\ldots,\mathscr{P}\cdot p_n\right),\qquad\text{and}
\\
\mathscr{P}\cdot e &\equiv \mathscr{P} \cdot \left(\mathbf{x},G\right) \equivdef
\left(\mathscr{P} \cdot\mathbf{x},G\right).
\end{align}
\end{definition}
\begin{definition}[\textbf{event chirality}]
An \textbf{event} $e\in\mathscr{E}(\mathbb{V}^n,G)$ is defined to be \textbf{non-chiral} \label{def:unhanded}if and only if $e\sim \mathscr{P} \cdot e,$
 where $\sim$ is the equivalence relation given in Definition~\ref{def:eventequivalence}.  Similarly, an \textbf{event} for which this is not true may be termed \textbf{chiral}.
\end{definition}
\begin{definition}
If $\mathscr{E}$ is a set of events, we define $\chiral(\mathscr{E})$ to be the set of \textbf{chiral} events in $\mathscr{E}$, and $\nonchiral(\mathscr{E})$ to be the set of \textbf{non-chiral} events in $\mathscr{E}$:
\begin{align}
    \chiral(\mathscr{E}) 
    &=
    \left\{ e \mid e\in\mathscr{E}, e \not\sim \mathscr{P}\cdot e\right\},\text{\qquad and}
    \\
    \nonchiral(\mathscr{E}) 
    &=
    \left\{ e \mid e\in\mathscr{E}, e \sim \mathscr{P}\cdot e\right\}.
\end{align}
\begin{remark}
$\nonchiral(\mathscr{E}) \cup \chiral(\mathscr{E})=\mathscr{E}$ and $\nonchiral(\mathscr{E}) \cap \chiral(\mathscr{E})=\emptyset$.
\end{remark}
\end{definition}
\begin{definition}
Since every event $e=(\mathbf{x},G)$ contains a group $G\in\Aut(\mathbb{V}^n)$, we can define an action of $G$ on the events \label{def:actionGonE} in $\mathscr{E}(\mathbb{V}^n,G)$ in a natural way. Concretely, for any $G\in\Aut(\mathbb{V}^n)$,  we define the action of $g\in G$ on $e\in\mathscr{E}(\mathbb{V}^n,G)$ as follows:
$$
g\cdot e \equiv g\cdot (\mathbf{x},G) \equivdef (g\cdot\mathbf{x},G).
$$
\end{definition}
\begin{corollary}
\label{cor:simplehandedness}
It follows from Definitions~\ref{def:unhanded} and \ref{def:actionGonE} that an \textbf{event} $e\in\mathscr{E}(\mathbb{V}^n,G)$ will be \textbf{non-chiral} if and only if there exists a $g\in G$ such that $g\cdot e = \mathscr{P} \cdot e.$
\end{corollary}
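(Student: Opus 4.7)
The plan is to do a direct unfolding of definitions, showing that the forward and reverse implications both collapse to the same group-membership condition once the coordinate parts of the events are extracted.

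First I would note that the hypotheses of Definition~\ref{def:eventequivalence} for writing $e \sim \mathscr{P}\cdot e$ are trivially met: writing $e=(\mathbf{x},G)$, by Definition~\ref{def:paroponthings} we have $\mathscr{P}\cdot e = (\mathscr{P}\cdot\mathbf{x},G)$, so both events live in the same $\mathbb{V}^n$ and share the same symmetry group $G$. Hence the content of $e \sim \mathscr{P}\cdot e$ reduces to the existence of some $g\in G$ such that $\mathbf{x} = g\cdot(\mathscr{P}\cdot\mathbf{x})$.

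Next I would compare this to the condition in the Corollary. By Definition~\ref{def:actionGonE}, the statement $g\cdot e = \mathscr{P}\cdot e$ is, after extracting coordinate parts, simply $g\cdot \mathbf{x} = \mathscr{P}\cdot \mathbf{x}$. So the Corollary asserts the existence of some $g\in G$ such that $g\cdot\mathbf{x} = \mathscr{P}\cdot\mathbf{x}$.

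The equivalence between the two existence statements is then an application of the group axioms: if $g$ satisfies $\mathbf{x} = g\cdot(\mathscr{P}\cdot\mathbf{x})$, then acting by $g^{-1}\in G$ gives $g^{-1}\cdot\mathbf{x} = \mathscr{P}\cdot\mathbf{x}$, so we may use $g^{-1}$ as the witness for the Corollary; conversely, if $g\cdot\mathbf{x}=\mathscr{P}\cdot\mathbf{x}$ then $\mathbf{x} = g^{-1}\cdot(\mathscr{P}\cdot\mathbf{x})$ witnesses the equivalence $e\sim \mathscr{P}\cdot e$. I do not anticipate any substantive obstacle: the whole argument is bookkeeping, and the only non-trivial ingredient is that $G\in\Aut(\mathbb{V}^n)$ is a group (so in particular closed under inversion), which is implicit in the use of $\Aut$.
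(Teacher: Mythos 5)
Your proof is correct and is exactly the definition-unfolding argument the paper treats as immediate (the corollary is stated without a separate proof): equivalence $e\sim\mathscr{P}\cdot e$ reduces to $\exists g\in G$ with $\mathbf{x}=g\cdot(\mathscr{P}\cdot\mathbf{x})$, and passing to $g^{-1}$ gives the stated form $g\cdot e=\mathscr{P}\cdot e$. No gaps; the only ingredient beyond bookkeeping is closure of $G$ under inversion, which you correctly identify.
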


\begin{lemma}
\label{lem:D2neverpositive}
If $p\in\mathbb{V}$ and $q\in\mathbb{V}$ then $p^2 q^2  \le (p.q)^2$ (or equivalently $\Delta_2(p,q) \le0$).
\begin{proof}
The result may be obtained by squaring Lemma~\ref{lem:pqbiggerthanmpmq}.
\end{proof}
\end{lemma}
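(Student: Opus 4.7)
The plan is to deduce this as an immediate consequence of Lemma~\ref{lem:pqbiggerthanmpmq}, which already gives us the sharper statement $p.q \ge m_p m_q$ for any $p,q\in\mathbb{V}$. The key observation is that for $p\in\mathbb{V}$ we have $p^2 = m_p^2$ (and similarly $q^2 = m_q^2$), so the desired inequality $p^2 q^2 \le (p.q)^2$ is literally $m_p^2 m_q^2 \le (p.q)^2$, which is what one obtains by squaring $p.q \ge m_p m_q$.

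First I would note that both sides of the inequality $p.q \ge m_p m_q$ are non-negative: the right-hand side because $m_p, m_q \ge 0$ by the definition of $\mathbb{V}$, and the left-hand side by Lemma~\ref{lem:pqpos}. This non-negativity is what justifies squaring the inequality while preserving its direction. Then I would simply write $(p.q)^2 \ge (m_p m_q)^2 = m_p^2 m_q^2 = p^2 q^2$, which is exactly the claim. The equivalent formulation $\Delta_2(p,q) \le 0$ follows immediately from the definition of $\Delta_2$ (the Gram-determinant-like symbol $\SYMGRAMTWO{p}{q}$ unfolding to $p^2 q^2 - (p.q)^2$).

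There is essentially no obstacle here: the heavy lifting was done in Lemma~\ref{lem:pqbiggerthanmpmq}. The only thing one needs to be slightly careful about is confirming non-negativity of both sides before squaring, which is immediate from the earlier lemmas. The proof is therefore a one-liner invoking the squaring operation on an already-established inequality.
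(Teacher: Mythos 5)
Your proof is correct and follows exactly the route the paper takes: squaring the inequality $p.q \ge m_p m_q$ of Lemma~\ref{lem:pqbiggerthanmpmq}, with the (welcome) extra remark that non-negativity of both sides via Lemma~\ref{lem:pqpos} justifies the squaring. No differences to report.
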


\begin{lemma}
\label{lem:collisioncond}
If $p\in\mathbb{V}$ and $q\in\mathbb{V}$ then $p^2 q^2  < (p.q)^2$ (or equivalently $\Delta_2(p,q) < 0$) is a necessary and sufficient condition for $p$ and $q$: (a) to have a centre-of-mass frame; and (b) to have equal and opposite non-zero three-momenta in that frame.
\begin{proof}
To prove necessity, we first assume that  $p$ and $q$ have a centre of mass frame in which $\vec p$ and $\vec q$ are not only equal and opposite (which would be required by definition of such a frame) but that they also have $|\vec p|>0$ in that frame.  Concretely, we assume  $p^\mu=\left(\sqrt{m_p^2+|\vec p|^2},\vec p\right)$ and  $q^\mu=\left(\sqrt{m_q^2+|\vec p|^2},-\vec p\right)$ for some $\vec p \ne \vec 0$.  Hence, in such a case: \begin{align}
    (p.q)^2-p^2q^2 
    &=
    (\sqrt{m_p^2+|\vec p|^2} \sqrt{m_q^2+|\vec p|^2}+|\vec p|^2)^2 - m_p^2 m_q^2
    \\
    &=
    (m_p^2+|\vec p|^2)(m_q^2+|\vec p|^2)
+
    2 |\vec p|^2\sqrt{m_p^2+|\vec p|^2} \sqrt{m_q^2+|\vec p|^2}   
    +|\vec p|^4 - m_p^2 m_q^2
    \\
    &=
     (m_p^2+m_q^2) |\vec p|^2+
     2|\vec p|^4
+
    2 |\vec p|^2\sqrt{m_p^2+|\vec p|^2} \sqrt{m_q^2+|\vec p|^2}   
    \\
    &=
     |\vec p|^2\left(
     (m_p^2+m_q^2)+
     2|\vec p|^2
+
    2 \sqrt{m_p^2+|\vec p|^2} \sqrt{m_q^2+|\vec p|^2}   
     \right)\label{eq:uhndsghsf}
    \\
    &\ge 4 |\vec p|^4,
\end{align}
which is greater than zero (by assumption), so concluding the proof of necessity.

To prove sufficiency we will need to show that $(p^2 q^2< (p.q)^2)$ implies  (($p$ and $q$ have a centre of mass frame) and ($p$ and $q$ are not stationary in that frame)).  This may be done by proving the contrapositive statement:  ``If (($p$ and $q$ do not have a centre of mass frame) or ($p$ and $q$ have a centre of mass frame but are stationary in it)) then $(p^2 q^2\ge (p.q)^2)$''.  If $p$ and $q$ have a centre of mass frame but are stationary in it, then equation (\ref{eq:uhndsghsf}) shows that $p^2q^2=(p.q)^2$ agreeing with half of the contrapositive statement. It therefore only remains to show what happens when $p$ and $q$ fail to have a rest frame.  We need only consider the case in which $p+q$ is non-zero, since if $p+q=0$ then $p^2 q^2\ge (p.q)^2$ is trivially satisfied.  We therefore assume in what follows that $p+q\ne 0$.

Failure to have a rest frame means that the velocity of $(p+q)$ in the any frame must be equal to the velocity of light, or else it would be possible to catch up with $(p+q)$ from some frame.  Let us therefore evaluate the velocity of $(p+q)$ in an arbitrary frame defined (up to rotation) by  unit-mass four-momentum   $\Lambda$ which is taken to be at rest in that frame.  In that frame, we know from Lemma~\ref{lem:eppos} that $E_{p+q}>0$ since Lemma~\ref{lem:vsumsareinv} places $p+q$ in $\mathbb{V}$, and $p+q\ne0$ by assumption. Therefore we may safely say that:
\begin{align}
  \left.  \beta_{p+q} \right|_\Lambda
    &= 
    \left. \frac {
    |{\vec p}|_{p+q} 
    }{
     E_{p+q} 
    }\right|_\Lambda
    = 
    \left. \frac {\sqrt{
    E_{p+q}^2-m_{p+q}^2
    }
    }{
   E_{p+q} 
    }\right|_\Lambda
     = 
    \frac {
    \sqrt{
    (\Lambda.(p+q))^2-(p+q)^2
    }
    }{
     \Lambda. (p+q)    }
\end{align}
and so a failure of $(p+q)$ to have a rest frame must imply that 
\begin{align}
  ( \left. \beta_{p+q}^2 \right|_\Lambda
    =1)
    &\implies
   \frac {
    {
    (\Lambda.(p+q))^2-(p+q)^2
    }
    }{
     (\Lambda. (p+q))^2    }
     =1
   \\
    &\implies
  {
    {
    (\Lambda.(p+q))^2-(p+q)^2
    }
    }={
     (\Lambda. (p+q))^2    }
    \\
    &\implies
    (p+q)^2=0
   \\
    &\implies
   p^2+q^2+2 (p.q)=0
   \\
    &\implies
   p^2=q^2=(p.q)=0
\end{align}
using Lemma~\ref{lem:pqpos} three times in the last step.  Accordingly, we see that it is indeed the case that $p^2q^2\ge (p.q)^2$ since both left and right hand sides are identically zero.
\end{proof}
\end{lemma}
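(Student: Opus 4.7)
The plan is to establish the two implications separately. For \textbf{necessity}, I would work directly in the centre-of-mass frame. Writing $p = (\sqrt{m_p^2 + |\vec k|^2}, \vec k)$ and $q = (\sqrt{m_q^2 + |\vec k|^2}, -\vec k)$ with $|\vec k| > 0$, I compute $(p\cdot q)^2 - p^2 q^2$ by direct expansion. The $m_p^2 m_q^2$ pieces cancel and what remains admits an overall factor of $|\vec k|^2$ multiplying a sum of manifestly non-negative terms (a square-root cross term, $m_p^2 + m_q^2$, and further $|\vec k|^2$ contributions). Since $|\vec k| > 0$ by hypothesis, the bracketed sum is strictly positive, yielding the strict inequality.

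For \textbf{sufficiency}, I would argue the contrapositive: if it is not the case that $p$ and $q$ admit a CoM frame with non-zero opposite three-momenta, then $p^2 q^2 \ge (p\cdot q)^2$. Two subcases arise. (i) If a CoM frame exists but $p, q$ are stationary in it, then in that frame $p = (m_p, \vec 0)$ and $q = (m_q, \vec 0)$, so $p\cdot q = m_p m_q$ and $p^2 q^2 = m_p^2 m_q^2 = (p\cdot q)^2$, giving equality. (ii) If no CoM frame exists at all, I claim $(p+q)^2 = 0$. To justify this I would compute, in an arbitrary frame defined by a unit-mass timelike vector $\Lambda$, the three-speed of $p+q$ as $\sqrt{(\Lambda\cdot(p+q))^2 - (p+q)^2}/(\Lambda\cdot(p+q))$; Lemma~\ref{lem:vsumsareinv} puts $p+q$ in $\mathbb{V}$ and, assuming $p+q \neq 0$, Lemma~\ref{lem:eppos} ensures the denominator is strictly positive. `No CoM frame' forces this speed to equal unity in every such $\Lambda$-frame, hence $(p+q)^2 = 0$. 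The edge case $p + q = 0$ is trivial: since both vectors are forward-time-directed, this only happens when $p = q = 0$, in which case $0 \ge 0$ holds tautologously.

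The \textbf{main obstacle}, and the step requiring most care, is then converting $(p+q)^2 = 0$ into $p^2 q^2 = (p\cdot q)^2$. Expanding, $p^2 + q^2 + 2(p\cdot q) = 0$; but each of $p^2$, $q^2$ and $p\cdot q$ is non-negative (the first two because $p, q \in \mathbb{V}$, the third by Lemma~\ref{lem:pqpos}). A sum of three non-negative reals vanishes only when each summand does, so $p^2 = q^2 = p\cdot q = 0$, giving $p^2 q^2 = 0 = (p\cdot q)^2$ and completing the contrapositive. Combining subcases (i) and (ii) closes the sufficiency direction; together with the necessity calculation the lemma is established.
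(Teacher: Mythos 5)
Your proposal is correct and follows essentially the same route as the paper: the same direct computation in the centre-of-mass frame for necessity, and the same contrapositive for sufficiency, splitting into the stationary-CoM case and the no-rest-frame case, where the velocity-of-$(p+q)$ argument in an arbitrary $\Lambda$-frame gives $(p+q)^2=0$ and non-negativity of $p^2$, $q^2$ and $p\cdot q$ (Lemma~\ref{lem:pqpos}) forces all three to vanish. Your explicit handling of the $p+q=0$ edge case and the fresh computation in the stationary subcase are minor presentational variants of what the paper does, not a different argument.
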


\begin{lemma}
\label{lem:whatisnotacolleventONE}
If $p\in\mathbb{V}$ and $q\in\mathbb{V}$ then $p^2 q^2  = (p.q)^2$ (or equivalently $\Delta_2(p,q)=0$) is a necessary and sufficient condition for there to exist non-negative real numbers $\lambda\ge0$ and $\mu\ge0$, not both zero, such that  $\lambda p=\mu q$.
\begin{proof}
To prove necessity, assume first that $\lambda\ge$, $\mu\ge0$, $\lambda p = \mu q$ and that $\lambda$ and $\mu$ are not both zero. Without loss of generality, assume that $\mu$ is non-zero.  This assumption permits $q$ to be written as $q=\frac{\lambda}{\mu}p$ and so both $p^2 q^2$ and $(p.q)^2$ are equal to $(p^2)^2 \frac{\lambda^2}{\mu^2}$.

To prove sufficiency, we begin with the assumption that $(p.q)^2=p^2q^2$ and the knowledge that $p\in\mathbb{V}$ and $q\in\mathbb{V}$.  The latter two permit us to write $p$ and $q$ as by $p^\mu=\left(\sqrt{m_p^2+|\vec p|^2},\vec p\right)$ and  $q^\mu=\left(\sqrt{m_q^2+|\vec q|^2},\vec q\right)$.  To complete the proof we will consider two cases separately: (i) both $p$ and $q$ are massless, and (ii) at least one of $p$ and $q$ (without loss of generality $p$) has mass $m>0$.

In case (i) our parameterisation changes to $p^\mu=(|\vec p|,\vec p)$ and  $q^\mu=(|\vec q|,\vec q)$ and so:
\begin{align}
0 &=
    (p.q)^2-p^2q^2 
    \\
    &=
    (|\vec p||\vec q|-\vec p.\vec q)^2 - 0^20^2
    \\
    &=
   ( |\vec p| |\vec q|(1-\cos\theta) )^2
\end{align}
which tells us that  $
\vec p=0$, $
\vec q=0$, or $((\cos\theta=0)\land(\vec p\ne0\ne\vec q))$.
In the sub-case that $\vec p=0$ then $(\lambda,\mu)=(1,0)$ satisfies the lemma.
In the sub-case that $\vec q=0$ then $(\lambda,\mu)=(0,1)$ satisfies the lemma.
In the remaining sub-case, $(\lambda,\mu)=(|\vec q|,|\vec p|)$ satisfies the lemma.

In case (ii) at least one of $p$ and $q$ has a non-zero mass. Without loss of generality, assume that it is $p$ which has a non-zero mass $m_p\ne 0$.  If the relationship $\lambda p=\mu q$ can be proved in one frame it will be true in all frames, so we choose to prove it in a convenient frame, namely the rest-frame of $p$. In that frame our parameterization becomes:
$p^\mu=(m_p,\vec 0)$ and  $q^\mu=\left(\sqrt{m_q^2+|\vec q|^2},\vec q\right)$, and so
\begin{align}
0
&=
(p.q)^2-p^2q^2 
\\
&=
\left(m_p\sqrt{m_q^2+|\vec q|^2}-0\right)^2- m_p^2 m_q^2
\\
&=
    m_p^2 ( m_q^2+|\vec q|^2 )-m_p^2 m_q^2
\\
&=
    m_p^2 |\vec q|^2
\end{align}
and so either $m_p=0$ or $\vec q=0$.  In the sub-case in which $m_p$=0, then $p=(0,\vec 0)$ and so $(\lambda,\mu)=(1,0)$ satisfies the lemma.  In the sub-case in which $\vec q=0$ we have $q=(m_q,\vec 0)$ and so $(\lambda,\nu)=(m_q,m_p)$ satisfies the lemma (recall that $m_p$ is non-zero).
\end{proof}
\end{lemma}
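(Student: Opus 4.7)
The plan is to prove the two directions separately, treating sufficiency of the existence of $\lambda,\mu$ as the easy direction and the necessity (starting from $p^2 q^2 = (p.q)^2$) as the work-intensive one that needs a case split on masses.

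For the easy direction, I would assume $\lambda, \mu \ge 0$ not both zero with $\lambda p = \mu q$. Without loss of generality $\mu > 0$ (otherwise swap the roles of $p$ and $q$). Writing $r = \lambda/\mu \ge 0$, this gives $q = r p$, and then both $(p.q)^2$ and $p^2 q^2$ evaluate to $r^2 (p^2)^2$ by direct substitution, so they agree.

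For necessity, the plan is to partition into two cases based on whether any of the two four-vectors is massive. In Case 1, where both $p$ and $q$ are null, I would parameterize $p^\mu = (|\vec p|,\vec p)$, $q^\mu = (|\vec q|,\vec q)$ and compute
\begin{align*}
(p.q)^2 - p^2 q^2 = \bigl(|\vec p|\,|\vec q|(1-\cos\theta_{pq})\bigr)^2.
\end{align*}
Setting this to zero forces one of three sub-cases: $\vec p = \vec 0$ (take $(\lambda,\mu)=(1,0)$); $\vec q = \vec 0$ (take $(\lambda,\mu)=(0,1)$); or $\cos\theta_{pq}=1$ with both $\vec p$ and $\vec q$ nonzero, in which case $\vec p$ and $\vec q$ are parallel with positive proportionality, and the choice $(\lambda,\mu) = (|\vec q|,|\vec p|)$ witnesses $\lambda p = \mu q$.

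In Case 2, at least one of $p, q$ is massive; without loss of generality $m_p > 0$. Since the equation $\lambda p = \mu q$ is frame-covariant, I would pass to the rest frame of $p$, writing $p^\mu = (m_p, \vec 0)$ and $q^\mu = \bigl(\sqrt{m_q^2 + |\vec q|^2}, \vec q\bigr)$. A direct calculation yields $(p.q)^2 - p^2 q^2 = m_p^2 |\vec q|^2$, and since $m_p \ne 0$ this forces $\vec q = \vec 0$, so both momenta sit at rest. Then $(\lambda,\mu) = (m_q, m_p)$ satisfies $\lambda p = \mu q$ (with the harmless edge case $m_q = 0$ handled by $(\lambda,\mu)=(0,1)$, since $q = 0$). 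The main obstacle I anticipate is the null-null sub-case bookkeeping: one needs to verify that a non-trivial non-negative pair $(\lambda,\mu)$ can always be produced no matter which of the three factors of $(p.q)^2 - p^2 q^2$ vanishes, and in particular that the parallel sub-case really gives parallel three-vectors with the same orientation (which is automatic here because $\cos\theta_{pq}=+1$, not $-1$).
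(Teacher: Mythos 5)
Your proposal is correct and follows essentially the same route as the paper's own proof: the same direction split, the same case division (both null versus at least one massive, working in the rest frame of the massive vector), and the same explicit witnesses $(\lambda,\mu)$ in each sub-case. The only difference is cosmetic: you correctly identify the parallel null sub-case as $\cos\theta_{pq}=1$ (the paper's text says $\cos\theta=0$, a typo), and you skip the paper's redundant treatment of the vacuous $m_p=0$ sub-case.
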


\begin{definition}[\textbf{collision events}]
\label{def:collevent}
Motivated by Lemma~\ref{lem:collisioncond} we define the set of all \textbf{collision events} with a symmetry group $G$ to be the set of all \textbf{events} of the form $e=((p,q,\ldots),G)\in\mathscr{E}(\mathbb{V}^{n},G)$ for which $p^2 q^2<(p.q)^2$ (equivalently $\SYMGRAMTWO p q <0$):
$$
\mathscr{E}^c(\mathbb{V}^n,G)
=
\left\{
e
\mid
e=((p,q,\ldots),G)\in\mathscr{E}(\mathbb{V}^{n},G), p^2 q^2<(p.q)^2
\right\}
\subset
\mathscr{E}(\mathbb{V}^n,G).
$$
\begin{remark}Note to have enough space to hold $p$ and $q$  every \textbf{collision event} will need to have $n\ge2$.
\end{remark}
\end{definition}

\begin{lemma}
\label{lem:noncolfinegrainedprops}
If $p\in\mathbb V$ and $q\in\mathbb V$ and there exist $\lambda\ge0$ and $\mu\ge0$, not both zero, such that $\lambda p=\mu q$ then one of the following seven non-overlapping conditions pertains:
\begin{enumerate}
\item
$p=q=0$
\item
$p=0$ while $q\in\mathbb{M}$ (and so $q\ne0$),
\item
$q=0$ while $p\in\mathbb{M}$ (and so $p\ne0$),
\item
$p=0$ while $q\ne 0$ is an otherwise unconstrained element of $\mathbb{N}$,
\item
$q=0$ while $p\ne0$ is an otherwise unconstrained element of $\mathbb{N}$,
\item
$p=\mu q\ne0$ and $q=\lambda p\ne 0$ for some $\lambda>0$, $\mu>0$, and both $p$ and $q$ are in $\mathbb{M}$,
\item
$p=\mu q\ne0$ and $q=\lambda p\ne 0$ for some $\lambda>0$, $\mu>0$, and both $p$ and $q$ are in $\mathbb{N}$.
\end{enumerate}
\begin{proof}
Left as an exercise for the reader.
\end{proof}
\end{lemma}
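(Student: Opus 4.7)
The plan is to split on how many of $\lambda,\mu$ vanish, and then classify the resulting vectors according to the trichotomy $\mathbb{V} = \{0\}\sqcup(\mathbb{M})\sqcup(\mathbb{N}\setminus\{0\})$ that follows from the definitions of $\mathbb{M}$ and $\mathbb{N}$ together with the remark that $\mathbb{M}\cup\mathbb{N}=\mathbb{V}$ and $\mathbb{M}\cap\mathbb{N}=\emptyset$.

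First I would dispose of the degenerate situation in which one of $\lambda,\mu$ is zero. If $\lambda=0$ then by hypothesis $\mu>0$, and the relation $\lambda p=\mu q$ forces $q=0$; since $p\in\mathbb{V}$, the vector $p$ must be $0$ (case~1), lie in $\mathbb{M}$ (case~3), or lie in $\mathbb{N}\setminus\{0\}$ (case~5). The symmetric argument with $\mu=0$ (hence $\lambda>0$) forces $p=0$ and yields case~1, case~2, or case~4 according as $q$ is $0$, in $\mathbb{M}$, or in $\mathbb{N}\setminus\{0\}$.

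Next I would handle the remaining situation in which both $\lambda>0$ and $\mu>0$. Here the relation $\lambda p=\mu q$ can be inverted, giving $p=(\mu/\lambda)q$ and $q=(\lambda/\mu)p$ with both ratios strictly positive. In particular $p=0\iff q=0$, reducing to case~1 when either vanishes. Otherwise neither vanishes, and squaring the relation gives $p^2=(\mu/\lambda)^2 q^2$ with $(\mu/\lambda)^2>0$, so $p^2=0\iff q^2=0$. Thus either both $p,q$ are in $\mathbb{M}$ (case~6) or both are in $\mathbb{N}\setminus\{0\}$ (case~7).

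Finally I would verify mutual exclusivity by reading off, for each listed case, which of $p,q$ is zero and, when non-zero, to which of the disjoint sets $\mathbb{M}$ and $\mathbb{N}\setminus\{0\}$ it belongs; the seven patterns are manifestly distinct, so the preceding analysis assigns the pair $(p,q)$ to exactly one of them. There is no real obstacle here: the only mild subtlety is remembering that when both scaling factors are strictly positive the common relation $\lambda p = \mu q$ already compels $p$ and $q$ to have the same causal type, which is what separates case~6 from case~7.
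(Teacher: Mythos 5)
Your argument is correct and complete: the paper itself gives no proof (it is explicitly ``left as an exercise for the reader''), so there is no authorial route to compare against, and your case split on $(\lambda=0)$, $(\mu=0)$, and $(\lambda>0\land\mu>0)$, combined with the trichotomy $\mathbb{V}=\{0\}\sqcup\mathbb{M}\sqcup(\mathbb{N}\setminus\{0\})$, is exactly the natural way to fill that gap. The one step worth keeping explicit is the one you already flag: when both scale factors are strictly positive, squaring $p=(\mu/\lambda)q$ gives $p^2=(\mu/\lambda)^2q^2$, which forces $p$ and $q$ to share a causal type and hence separates case~6 from case~7; together with the disjointness of $\mathbb{M}$ and $\mathbb{N}$ this also gives the claimed non-overlap of the seven cases.
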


\begin{lemma}
\label{lem:yhnhvfgdy}
If $p$ and $q$ are both in $\mathbb{V}$ then the following five statements (one of which makes used of a symbol $\Delta_2$ for a symmetric Gram Determinant, defined in Appendix ~\ref{app:gramnotation}) are equivalent:
\begin{itemize}
\item
$(p.q)=m_p m_q$,
\item
$(p.q)^2=p^2 q^2$,
\item
$\Delta_2(p,q)=0$,
\item
there exists a $\lambda\ge0$ and a $\mu\ge0$, not both zero, such that $\lambda p = \mu q$,
\hide{\item
there exists a $0\le\theta_{pq}\le\frac{\pi}{2}$ such that $ p \cos \theta =  q \sin \theta$,
\item
there exists a $-\frac{\pi} 4\le\theta_{pq}\le\frac{\pi}{4}$ such that $ p \cos(\theta_{pq}+\frac \pi 4) =  q \sin(\theta_{pq}+\frac \pi 4)$,}
\item
the event falls into any of the categories listed in Lemma~\ref{lem:noncolfinegrainedprops}.
\end{itemize}
\begin{proof}
The equivalence of the first three statements is a trivial consequence of the definitions of  invariant mass and Gram Determinants.  Equivalence of the fourth to the second follows from Lemma~\ref{lem:whatisnotacolleventONE}. Equivalence of the fifth to the fourth follows from Lemma~\ref{lem:noncolfinegrainedprops}.
\end{proof}
\end{lemma}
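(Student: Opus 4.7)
The plan is to establish the five-way equivalence by chaining together the previously proved lemmas rather than attacking each pair individually. First I would note that statements 2 and 3 are essentially definitional: by the convention for the symmetric Gram determinant $\Delta_2(p,q)$ given in Appendix~\ref{app:gramnotation}, the equation $\Delta_2(p,q)=0$ is literally the equation $(p.q)^2=p^2q^2$ (up to an overall sign that does not affect vanishing). So that leg is immediate.

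Next I would handle $1\Leftrightarrow 2$. The forward direction is trivial: squaring $(p.q)=m_pm_q$ gives $(p.q)^2=m_p^2m_q^2=p^2q^2$. For the converse, the subtlety is the sign convention. Lemma~\ref{lem:pqpos} gives $p.q\ge0$, and Lemma~\ref{lem:pqbiggerthanmpmq} gives $p.q\ge m_pm_q\ge0$, so both $p.q$ and $m_pm_q$ are non-negative real numbers. Hence $(p.q)^2=m_p^2m_q^2$ may be square-rooted unambiguously to yield $p.q=m_pm_q$.

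For $2\Leftrightarrow 4$ I would simply invoke Lemma~\ref{lem:whatisnotacolleventONE}, whose statement is exactly this equivalence: $p^2q^2=(p.q)^2$ holds if and only if there exist non-negative $\lambda,\mu$, not both zero, with $\lambda p=\mu q$. Similarly, $4\Rightarrow 5$ is the content of Lemma~\ref{lem:noncolfinegrainedprops}: given the existence of such $\lambda,\mu$, one of the seven listed fine-grained configurations must obtain. For $5\Rightarrow 4$ one just inspects the seven cases: in each case one can read off explicit non-negative coefficients (for instance $(\lambda,\mu)=(1,0)$ when $p=0$, $(\lambda,\mu)=(0,1)$ when $q=0$, and $(\lambda,\mu)=(\lambda,1)$ in the last two parallel cases where $q=\lambda p\ne0$), witnessing statement 4.

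The main obstacle, such as it is, lies not in the logical structure but in ensuring that the sign bookkeeping in $1\Leftrightarrow 2$ is airtight; without Lemmas~\ref{lem:pqpos} and \ref{lem:pqbiggerthanmpmq} the extraction of a square root could fail, since in principle one might have $p.q=-m_pm_q$. Once those non-negativity facts are in hand, however, the entire lemma reduces to assembling the previously established equivalences into a single cycle, which is why the authors describe it as a trivial consequence of what has gone before.
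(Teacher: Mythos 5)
Your proof is correct and follows the same route as the paper: the first three statements are handled definitionally (you just make explicit the sign bookkeeping, via Lemmas~\ref{lem:pqpos} and \ref{lem:pqbiggerthanmpmq}, that the paper dismisses as trivial), and the fourth and fifth are chained in via Lemmas~\ref{lem:whatisnotacolleventONE} and \ref{lem:noncolfinegrainedprops}, exactly as the authors do.
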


\begin{definition}[\textbf{non-collision events}]\label{def:noncollevents}
Lemmas~\ref{lem:pqbiggerthanmpmq}, \ref{lem:D2neverpositive} and \ref{lem:whatisnotacolleventONE}
show us that the only \textbf{events} with symmetry group $G$  which the definition of \textbf{collision event} \textit{excludes} are: (i)
\textbf{events} with symmetry group $G$ having fewer than two particles (i.e.~events in $\mathscr{E}(\mathbb{V}^{0},G)\cup\mathscr{E}(\mathbb{V}^{1},G)$), and (ii)
\textbf{events}  $e=((p,q,\ldots),G)\in\mathscr{E}(\mathbb{V}^{n},G)$ for which any of the  five (equivalent) statements in Lemma~\ref{lem:yhnhvfgdy} applies.
We therefore define an \textbf{event} in either category (i) or category (ii) above to be a \textbf{non-collision event}.
\end{definition}

\begin{lemma}[\textbf{every collision event has an axis in a frame in which $(p+q)$ is stationary}]
\label{lem:collisionshaveaxis}
If follows from Lemma~\ref{lem:collisioncond} and Definition~\ref{def:collevent} that the directions of mutual approach or departure of $p$ and $q$ in a \textbf{collision event} can always be used to  define a unique axis in the rest frame of ($p+q$).
\end{lemma}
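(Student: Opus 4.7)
The plan is to unpack what the phrase ``collision event'' gives us via Lemma~\ref{lem:collisioncond} and then observe that the axis is immediate. Concretely, I would organise the argument in three short steps.

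First, I would invoke Lemma~\ref{lem:collisioncond}: the defining inequality $p^2 q^2 < (p.q)^2$ of Definition~\ref{def:collevent} is precisely the necessary and sufficient condition for $p$ and $q$ to possess a centre-of-mass frame in which their three-momenta are equal, opposite, and non-zero. Call such a frame $F$; the existence of $F$ is precisely ``a frame in which $(p+q)$ is stationary,'' because ``equal and opposite three-momenta'' is the same statement as $\vec p + \vec q = \vec 0$.

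Second, working inside $F$, write $\vec p$ for the (non-zero) spatial part of $p$; then $\vec q = -\vec p$ by the rest-frame condition. The one-dimensional linear subspace of $\mathbb{R}^3$ spanned by $\vec p$ is well-defined because $\vec p \neq \vec 0$, and it contains both $\vec p$ and $\vec q$. This is the axis of mutual approach/departure.

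Third, I would address uniqueness. Any other line in $F$ that contains both $\vec p$ and $\vec q = -\vec p$ must contain the non-zero vector $\vec p$ and pass through the origin (since $\vec q = -\vec p$ forces it to contain $-\vec p$ as well), and there is exactly one such line in $\mathbb{R}^3$. Hence the axis is unique within $F$. (The rest frame $F$ itself is only unique up to spatial rotations, but the statement of the lemma is about the axis \emph{in} the rest frame, so this residual rotational freedom is not a concern; rotating $F$ rotates the axis together with everything else.)

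I do not anticipate any real obstacle: the content of the lemma is essentially a restatement of the ``equal and opposite non-zero three-momenta'' clause of Lemma~\ref{lem:collisioncond} together with the trivial observation that two antiparallel non-zero vectors in $\mathbb{R}^3$ determine a unique line through the origin. The only care needed is to be explicit that $\vec p \neq \vec 0$ (so that ``axis'' actually makes sense) and that we are speaking of the axis as a line rather than an oriented direction -- since the collision event on its own does not single out one of $\vec p$ or $-\vec p$ as preferred, the object that is truly uniquely defined is the unoriented line.
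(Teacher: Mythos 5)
Your proposal is correct and follows exactly the route the paper intends: the paper gives no separate proof of this lemma, simply asserting that it follows from Lemma~\ref{lem:collisioncond} and Definition~\ref{def:collevent}, which is precisely the chain you spell out (the strict inequality $p^2q^2<(p.q)^2$ yields the $(p+q)$ rest frame with equal, opposite, non-zero three-momenta, and the non-zero $\vec p$ then spans a unique unoriented line). Your added care about $\vec p\neq\vec 0$ and about the axis being a line rather than an oriented direction is a sensible elaboration of the same argument, not a different approach.
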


\begin{remark}[\textbf{The relevance of of non-collision events}]
Collider physicists short on time might wish to skip the parts of this paper dealing with \textbf{non-collision events}. These events are included more for mathematical completeness  than for their usefulness in actual experiments.  The main reason that \textbf{non-collision events} are irrelevant to most real physics detectors is that non-collision events have $p$ and $q$ in a configuration in which each `overlaps' with the other (either as massive particles sharing the same rest frame, or as light-like momenta going in the same direction as each other, albeit with possibly different energies).  Not only do these properties  prevent $p$ and $q$ from representing colliding particles, they often also prevent $p$ and $q$ from being observed as momenta of final-state particles.  An exception could be if the detector is one which can distinguish collinear photons of different energies via the use of (say) scintillators with narrow-band sensitivities. Another exception could be the case in which overlapping massive-particle momenta can be inferred from subsidiary decay products (e.g.~two $Z$-bosons sharing a rest frame could be simultaneously observed if one $Z$-boson decayed to a dimuon pair while the other decaying to a dielectron pair).\footnote{Examples of concrete \textbf{non-collision events} are given in \textit{Remarks} at the end of Corollaries~\ref{cor:whennoncollpqabnoncollevisnonchiral} and \ref{cor:whennoncollpqmasslessabnoncollevisnonchiral}.}
\end{remark}

\begin{corollary}
\label{cor:somethiungusingaG}
If  $e = ((p,q,\cdots),G)$ is a \textbf{collision event} then $p+q\in \mathbb{M}.$
\begin{proof}
That $e = ((p,q,\cdots),G)$ is a \textbf{collision event} tells us that $p,q\in \mathbb{V}$ and $p^2q^2<(p.q)^2$. From Lemma~\ref{lem:vsumsareinv} it is already clear that $p+q\in\mathbb{V}$, so it only remains to show that $(p+q)^2>0.$ Since $p,q\in\mathbb{V}$, then $p^2\ge 0$, $q^2\ge 0$ and Lemme~\ref{lem:pqpos} tells us that $(p.q)\ge 0.$ 
Therefore $p^2q^2<(p.q)^2$ may be re-written $\sqrt{p^2} \sqrt{q^2} < (p.q)$ and so $(p+q)^2=p^2+q^2+2(p.q)>p^2+q^2+2\sqrt{p^2}\sqrt{q^2}=
\sqrt{p^2}+\sqrt{q^2})^2\ge 0$, and thus $(p+q)^2>0.$
\end{proof}
\end{corollary}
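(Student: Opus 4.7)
The plan is to establish two things about $p+q$: first, that it lies in $\mathbb{V}$, and second, that it is strictly massive (so actually lies in $\mathbb{M}$). The first is immediate from Lemma~\ref{lem:vsumsareinv} applied with $\lambda = \mu = 1$, since $p,q\in\mathbb{V}$ by virtue of $e$ being an event in $\mathscr{E}(\mathbb{V}^n,G)$. So all the work is in showing $(p+q)^2 > 0$.

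To get strict positivity, I would combine the collision condition $p^2 q^2 < (p\cdot q)^2$ with Lemma~\ref{lem:pqpos}, which tells us $p\cdot q \ge 0$. Since $p^2, q^2 \ge 0$ (as $p,q\in\mathbb{V}$), taking square roots of the collision inequality (all quantities being non-negative) yields $\sqrt{p^2}\sqrt{q^2} < p\cdot q$. Then expanding
\[
(p+q)^2 = p^2 + q^2 + 2(p\cdot q) > p^2 + q^2 + 2\sqrt{p^2}\sqrt{q^2} = \bigl(\sqrt{p^2}+\sqrt{q^2}\bigr)^2 \ge 0,
\]
which gives exactly what is needed.

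The only subtle step is the square-root manipulation: one must be careful that Lemma~\ref{lem:pqpos} allows us to say $p\cdot q$ is non-negative before asserting $\sqrt{p^2 q^2} < p\cdot q$ (otherwise the collision inequality only gives $|p\cdot q| > \sqrt{p^2 q^2}$, leaving open the possibility that $p\cdot q$ is a large negative number, which would wreck the argument). That sign control is the main, if minor, obstacle; everything else is algebraic manipulation on non-negative reals. Once $(p+q)^2 > 0$ is established, combining with $p+q\in\mathbb{V}$ immediately places $p+q$ in $\mathbb{M}$ by definition.
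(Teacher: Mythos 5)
Your proposal is correct and follows essentially the same route as the paper: Lemma~\ref{lem:vsumsareinv} to place $p+q$ in $\mathbb{V}$, then Lemma~\ref{lem:pqpos} to control the sign of $p\cdot q$ so that the collision inequality can be rewritten as $\sqrt{p^2}\sqrt{q^2}<p\cdot q$, and finally the expansion $(p+q)^2>\bigl(\sqrt{p^2}+\sqrt{q^2}\bigr)^2\ge 0$. Nothing further is needed.
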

\begin{definition}[\textbf{transverse plane}]
\label{def:transverseplane}
If $e$ is a \textbf{collision event}, we define its \textbf{transverse plane}, $\TransversePlaneOp(e)$, to be the set containing all momenta in $\mathbb{V}$ whose spatial parts are perpendicular to $\vec p - \vec q$ in the $(p+q)$ rest frame:
$$
\TransversePlaneOp(e) = \left\{ \ r \ \middle|\  r\in\mathbb{V},\ 
\PerpABF r {p-q} {p+q}
 \right\}
$$
in which $G(\cdots)$ is a `Gram Determinant' as defined in Appendix ~\ref{app:gramnotation}, not a symmetry group $G$ such as that used in Corollary~\ref{cor:somethiungusingaG}.
\begin{remark}
This definition makes sense as Lemma~\ref{lem:adotbinsomeframe} tells us that a Gram determinant of the form shown measures the dot product between $\vec r$ and $\vec p-\vec q$ in the $(p+q)$-rest frame. 
\end{remark}
\end{definition}

\begin{lemma}
\label{lem:TransPlaneReflectionLemma}
The momenta $a$ and $b$ in a  \textbf{collision event}  $e = ((p,q,a,b,\cdots),G)$ are reflections of each other in $\TransversePlaneOp(e)$ if and only if
$$
\left(\PerpABF{a+b}{p-q}{p+q} \right)
\land
(\SYMGRAMTHR {a-b} p q =0).
$$
\begin{proof}
The first criterion says that in the $(p+q)$-rest frame $\vec a+\vec b$ is perpendicular to the beam axis (see Lemma~\ref{lem:adotbinsomeframe}).  The second criterion says that $(\vec a-\vec b)$ has no transverse component in the $(p+q)$-rest frame (this may be seen by coupling Lemma~\ref{lem:transversemom} with invariance properties of Gram determinants).
If the former is true, it asserts that $a_z=-b_z$ in the $(p+q)$-rest frame. If the latter is true it asserts that $\vec a_T=\vec b_T$ in the same frame.  If both are true it asserts that if $\vec a=(x,y,z)$ then $\vec b=(x,y,-z)$ which indeed shows that $\vec a$ and $\vec b$ are reflections of each other in the transverse plane.  Conversely, assuming that $\vec a=(x,y,z)$ then $\vec b=(x,y,-z)$
the given condition may be shown to be true by reversing the argument just given.
\end{proof}
\end{lemma}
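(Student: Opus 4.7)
The plan is to reduce everything to explicit three-vector statements in the $(p+q)$-rest frame, which exists by Corollary~\ref{cor:somethiungusingaG} since $e$ is a collision event (so $p+q\in\mathbb{M}$). Once we are in that frame, Lemma~\ref{lem:collisionshaveaxis} lets us orient coordinates so that the beam axis (the direction of $\vec p-\vec q$) is the $z$-axis; because $\vec p+\vec q=\vec 0$ in that frame, we may write $\vec p=(0,0,k)$ and $\vec q=(0,0,-k)$ for some $k>0$. The transverse plane of Definition~\ref{def:transverseplane} is then literally the spatial $xy$-plane in this frame, and the assertion ``$\vec a$ and $\vec b$ are reflections through the transverse plane'' means: writing $\vec a=(a_x,a_y,a_z)$, we have $\vec b=(a_x,a_y,-a_z)$.

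Next, I translate each of the two algebraic conditions into this language using the Gram-determinant machinery already built up. The first condition $\PerpABF{a+b}{p-q}{p+q}$ is, by Lemma~\ref{lem:adotbinsomeframe}, exactly the statement $(\vec a+\vec b)\cdot(\vec p-\vec q)=0$ in the $(p+q)$-rest frame, i.e.\ $a_z+b_z=0$. For the second condition $\SYMGRAMTHR{a-b}{p}{q}=0$, note that by Lorentz invariance of the symmetric Gram determinant we may evaluate it in the $(p+q)$-rest frame; there $p$ and $q$ span the $(t,z)$-plane, so $\SYMGRAMTHR{a-b}{p}{q}=0$ iff $a-b$ lies in that same $(t,z)$-plane, iff the spatial transverse part of $\vec a-\vec b$ vanishes. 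This is what the proof sketch cites via Lemma~\ref{lem:transversemom}: $\SYMGRAMTHR{a-b}{p}{q}=0$ iff $(\vec a-\vec b)_T=\vec 0$, i.e.\ $a_x=b_x$ and $a_y=b_y$.

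With these two translations in hand, both directions of the biconditional are immediate. For the forward direction, if $\vec b=(a_x,a_y,-a_z)$ then $\vec a+\vec b=(2a_x,2a_y,0)$ is orthogonal to $\vec p-\vec q=(0,0,2k)$, and $\vec a-\vec b=(0,0,2a_z)$ has vanishing transverse part, yielding both Gram-determinant identities. For the reverse direction, the first condition forces $a_z=-b_z$ and the second forces $(a_x,a_y)=(b_x,b_y)$, which together say $\vec b$ is the reflection of $\vec a$ through the $xy$-plane.

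The only slightly delicate point, and what I would regard as the main obstacle, is justifying that the ``pre-frame'' Gram-determinant expressions translate cleanly into the claimed spatial statements in the $(p+q)$-rest frame; but this work has been packaged already into Lemma~\ref{lem:adotbinsomeframe} and Lemma~\ref{lem:transversemom}, so the proof is essentially a bookkeeping argument citing these and invoking the existence of the $(p+q)$-rest frame via Corollary~\ref{cor:somethiungusingaG}.
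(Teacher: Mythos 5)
Your proof is correct and follows essentially the same route as the paper's: translate the first Gram condition via Lemma~\ref{lem:adotbinsomeframe} into $(\vec a+\vec b)\cdot(\vec p-\vec q)=0$ (i.e.\ $a_z=-b_z$) and the second via Lemma~\ref{lem:transversemom} plus Gram-determinant invariance into $(\vec a-\vec b)_T=\vec 0$, then read off the reflection statement in the $(p+q)$-rest frame in both directions. The only difference is bookkeeping (your explicit coordinate choice and citation of Corollary~\ref{cor:somethiungusingaG}), not substance.
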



\begin{lemma}
\label{lem:energyinframe}
If $a\in \mathbb{A}$ and $\Lambda\in\mathbb{M}$ then the energy of $a$ in the rest frame of $\Lambda$ is given by:
$$
\left. E_a \vphantom{\int} \right|_{\Lambda}
=
\frac {a.\Lambda} {\sqrt{\Lambda^2}}.
$$
\end{lemma}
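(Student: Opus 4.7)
The plan is to exploit the manifest Lorentz invariance of the right-hand side: both $a\cdot\Lambda$ (a Minkowski inner product of two four-vectors) and $\Lambda^2$ (the squared invariant mass of $\Lambda$) are Lorentz scalars, and since $\Lambda\in\mathbb{M}$ ensures $\Lambda^2 > 0$, the quotient $a\cdot\Lambda/\sqrt{\Lambda^2}$ is a well-defined real number independent of frame. The left-hand side is defined in one specific frame (the rest frame of $\Lambda$, which exists precisely because $\Lambda\in\mathbb{M}$). Hence it suffices to verify the identity in that single frame.

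Concretely, first I would note that $\Lambda\in\mathbb{M}$ implies $m_\Lambda\equiv\sqrt{\Lambda^2}>0$ and that a frame exists in which $\Lambda^\mu=(m_\Lambda,\vec 0)$; this is the frame appearing on the left of the stated equation. Next I would write $a$ in that same frame as $a^\mu=\left(\left.E_a\right|_\Lambda,\vec a\right)$ where $\vec a\in\mathbb R^3$ (the hypothesis $a\in\mathbb A$ suffices here — we do not need $a\in\mathbb V$). The Minkowski product then evaluates to $a\cdot\Lambda = \left.E_a\right|_\Lambda\cdot m_\Lambda-\vec a\cdot\vec 0=\left.E_a\right|_\Lambda\, m_\Lambda$. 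Dividing both sides by $\sqrt{\Lambda^2}=m_\Lambda>0$ gives $\left.E_a\right|_\Lambda = a\cdot\Lambda/\sqrt{\Lambda^2}$, and because the right-hand side is a Lorentz scalar, this equality holds as stated irrespective of the frame in which one chooses to compute $a\cdot\Lambda$.

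There is no genuine obstacle; the only thing to check is that the hypotheses really do deliver everything we need. The assumption $\Lambda\in\mathbb{M}$ (rather than merely $\Lambda\in\mathbb{V}$) is doing essential work in two places: it guarantees the existence of the rest frame (so that the left-hand side is defined without a limiting procedure, consistent with the convention in the footnote of Lemma~\ref{lem:eppos}), and it makes $\sqrt{\Lambda^2}$ strictly positive so that we may divide by it. The weaker hypothesis $a\in\mathbb{A}$ rather than $a\in\mathbb V$ is deliberate and causes no difficulty, since the calculation above never uses any constraint on the sign of $a^2$ or on the sign of the time component of $a$.
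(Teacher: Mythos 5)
Your proof is correct, and it is the standard argument the paper implicitly relies on (the lemma is stated there without proof): evaluate the manifestly Lorentz-invariant right-hand side in the rest frame of $\Lambda$, exactly as the paper does for the analogous Lemma~\ref{lem:abcvectripprodinLIform}. Your remarks on why $\Lambda\in\mathbb{M}$ (rest frame exists, $\sqrt{\Lambda^2}>0$) and merely $a\in\mathbb{A}$ suffice are accurate and welcome.
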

\begin{lemma}
\label{lem:adotbinsomeframe}
If $a,b\in \mathbb{A}$ and $\Lambda\in\mathbb{M}$ then use of Lemma~\ref{lem:energyinframe} shows that $\vec a.\vec b$ in the rest frame of $\Lambda$ is given by:
$$
\left.
(\vec a . \vec b)\ 
\vphantom{\int}
\right|_{\Lambda}
=
\left.
E_a E_b
\vphantom{\int}
\right|_{\Lambda}
-
\left.
(
E_a E_b
-\vec a . \vec b
)
\vphantom{\int}
\right|_{\Lambda}
=
\frac {a.\Lambda} {\sqrt{\Lambda^2}}
\frac {b.\Lambda} {\sqrt{\Lambda^2}}
- a.b
=
\frac {
(a.\Lambda)(
b.\Lambda)
-
(a.b) \Lambda^2}
{\Lambda^2}
=
-\frac 1 {\Lambda^2}
\GRAMTWO a \Lambda
         b \Lambda
$$
in which we have used a Gram determinant and notation from Definition~\ref{def:gram}.
\end{lemma}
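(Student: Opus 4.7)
The claim is essentially a chain of algebraic identities, so the plan is simply to exhibit each equality in the displayed string and verify it, relying on the immediately preceding Lemma~\ref{lem:energyinframe}.

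First I would start from the Minkowski dot product identity $a \cdot b = E_a E_b - \vec a \cdot \vec b$, which holds in every frame and in particular in the rest frame of $\Lambda$. Rearranged, this reads $\left.(\vec a \cdot \vec b)\right|_\Lambda = \left.E_a E_b\right|_\Lambda - \left.(E_a E_b - \vec a \cdot \vec b)\right|_\Lambda$, which is exactly the first equality displayed in the statement. Note that $(E_a E_b - \vec a \cdot \vec b) = a \cdot b$ is Lorentz-invariant, so we may drop the $|_\Lambda$ from the second summand and simply write $a \cdot b$.

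Next, I would substitute the expression for energies from Lemma~\ref{lem:energyinframe}, applied once with the four-vector $a$ and once with $b$, yielding $\left.E_a\right|_\Lambda = (a \cdot \Lambda)/\sqrt{\Lambda^2}$ and similarly for $b$. (Here we use $\Lambda \in \mathbb{M}$ so that $\Lambda^2 > 0$ and the square root is well-defined and nonzero.) This gives the second equality. Combining over the common denominator $\Lambda^2$ produces the third equality.

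The final step is recognising the numerator $(a \cdot \Lambda)(b \cdot \Lambda) - (a \cdot b)\Lambda^2$ as $-1$ times the $2\times 2$ symmetric Gram determinant $\GRAMTWO a \Lambda b \Lambda$ as defined in Appendix~\ref{app:gramnotation}; this gives the fourth equality and concludes the proof. There is no real obstacle here: every step is either an algebraic rearrangement, an application of the previous lemma, or unpacking of the Gram determinant notation. The only thing to mention carefully is that the assumption $\Lambda \in \mathbb{M}$ (rather than merely $\mathbb{V}$) is used precisely to guarantee $\Lambda^2 > 0$, so that dividing by $\Lambda^2$ and taking $\sqrt{\Lambda^2}$ make sense.
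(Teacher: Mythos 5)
Your proposal is correct and follows essentially the same route as the paper, which presents this lemma as the displayed chain of equalities itself: rearrange the invariant $a\cdot b = E_a E_b - \vec a\cdot\vec b$ in the $\Lambda$ rest frame, substitute the energies from Lemma~\ref{lem:energyinframe} (using $\Lambda\in\mathbb{M}$ so $\Lambda^2>0$), combine over $\Lambda^2$, and identify the numerator with $-1$ times the Gram determinant of Definition~\ref{def:gram}. One trivial nomenclature point: the determinant $G$ appearing here is the (general) Gram determinant, not the \emph{symmetric} Gram determinant $\Delta_2$, since its two rows involve the distinct vectors $a$ and $b$.
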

\begin{lemma}
\label{lem:threemominframe}
If $a\in \mathbb{A}$ and $\Lambda\in\mathbb{M}$, then using Lemma~\ref{lem:adotbinsomeframe} the square of the magnitude of the three momentum of $a$  in the rest frame of $\Lambda$ is:
$$
\left. |\vec a|^2 \vphantom{\int} \right|_{\Lambda}
=
\left. (\vec a.\vec a) \vphantom{\int} \right|_{\Lambda}
=-\frac 1 {\Lambda^2}
\GRAMTWO a \Lambda
         a \Lambda
= \frac{-\SYMGRAMTWO a \Lambda } {\Lambda^2}
$$
in which we have used a symmetric Gram determinant and notation from Definition~\ref{def:symgram}.
\end{lemma}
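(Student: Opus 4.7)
The plan is to observe that this lemma is essentially a direct specialization of Lemma~\ref{lem:adotbinsomeframe} to the case $b = a$. The first equality, $\left.|\vec a|^2\right|_\Lambda = \left.(\vec a \cdot \vec a)\right|_\Lambda$, is just the standard identification of the squared magnitude of a three-vector with its dot product with itself, and requires no argument.

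For the second equality, I would apply Lemma~\ref{lem:adotbinsomeframe} verbatim, substituting $b \mapsto a$ in the formula
\[
\left.(\vec a . \vec b)\right|_\Lambda = -\frac{1}{\Lambda^2}\GRAMTWO a \Lambda b \Lambda,
\]
which immediately yields $\left.(\vec a \cdot \vec a)\right|_\Lambda = -\tfrac{1}{\Lambda^2}\GRAMTWO a \Lambda a \Lambda$. Note that $\Lambda \in \mathbb{M}$ ensures $\Lambda^2 > 0$, so division is well-defined; and the hypothesis $a \in \mathbb{A}$ is inherited unchanged from Lemma~\ref{lem:adotbinsomeframe}.

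For the final equality, I would simply invoke the notational definitions in Appendix~\ref{app:gramnotation}: the Gram determinant $\GRAMTWO a \Lambda a \Lambda$ with both rows equal is by definition the symmetric Gram determinant $\SYMGRAMTWO a \Lambda$ (as per Definition~\ref{def:symgram}). The minus sign and the division by $\Lambda^2$ carry through unchanged.

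There is no substantive obstacle here; the only thing worth flagging in the write-up is that one should confirm the convention linking $G(\cdots)$ and $\Delta_2(\cdots)$ so the reader can follow the last equality without hunting through the appendix. I would accordingly make that identification explicit in a single parenthetical remark rather than belabour it.
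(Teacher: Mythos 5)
Your proposal is correct and matches the paper's own (implicit) argument: the lemma is stated as a direct specialization of Lemma~\ref{lem:adotbinsomeframe} with $b=a$, followed by the notational identification of $\GRAMTWO a \Lambda a \Lambda$ with $\SYMGRAMTWO a \Lambda$ from Definition~\ref{def:symgram}. Nothing further is needed.
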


\begin{lemma}
\label{lem:transversemom}
If $a,p\in \mathbb{A}$, $\Lambda\in\mathbb{M}$, and $-\SYMGRAMTWO {p} {\Lambda} >0$, then the square of the transverse momentum of $\vec a$ with respect to the axis which $\vec p$ defines is: $$\left.|{\vec a}_T|^2 \vphantom{\int} \right|_{\Lambda}=\frac{
\SYMGRAMTHR  a p \Lambda 
}{
-\SYMGRAMTWO p \Lambda   
}.$$
\begin{proof}
An ugly argument follows:
\begin{align*}
\left. 
|{\vec a}_T|^2 \vphantom{\int} \right|_{\Lambda}
&=
\left. 
|{\vec a}|^2
\vphantom{\int}\right|_{\Lambda}
-
\left. 
\frac { (\vec a.\vec  p)^2} { |\vec p|^2} 
\vphantom{\int}\right|_{\Lambda}
=
 \frac{-\SYMGRAMTWO a \Lambda } {\Lambda^2}
-
\frac{
\frac 1 {\Lambda^4}
\GRAMTWO a \Lambda
         p \Lambda ^2
}{
\frac{-\SYMGRAMTWO p \Lambda } {\Lambda^2}
}
\\
&=
 \frac{-\SYMGRAMTWO a \Lambda } {\Lambda^2}
-
\frac{
\GRAMTWO a \Lambda
         p \Lambda ^2
}{
-\SYMGRAMTWO p \Lambda   {\Lambda^2}
}
=
 \frac{
 \SYMGRAMTWO a \Lambda \SYMGRAMTWO p \Lambda 
 -
 \GRAMTWO a \Lambda
         p \Lambda ^2}
{
-\SYMGRAMTWO p \Lambda   {\Lambda^2}
}
\\
&=
\frac{
\left|\begin{array}{cc}
     a^2 & a.\Lambda  \\
     a.\Lambda & \Lambda^2 
\end{array}\right|  
\left|\begin{array}{cc}
     p^2 & p.\Lambda  \\
     p.\Lambda & \Lambda^2 
\end{array}\right| 
 -
\left|\begin{array}{cc}
     a.p & a.\Lambda  \\
     p.\Lambda & \Lambda^2 
\end{array}\right|^2 
}{
-\SYMGRAMTWO p \Lambda   {\Lambda^2}
}
\\
&=
\frac{
(a^2 \Lambda^2-(a.\Lambda)^2)
(p^2 \Lambda^2-(p.\Lambda)^2)
-
((a.p) \Lambda^2 - (a.\Lambda)(p.\Lambda))^2
}{
-\SYMGRAMTWO p \Lambda   {\Lambda^2}
}
\\
&=
\frac{
(
a^2 \Lambda^2 
p^2 \Lambda^2
-
a^2 \Lambda^2
(p.\Lambda)^2
-
(a.\Lambda)^2
p^2 \Lambda^2
+
(a.\Lambda)^2
(p.\Lambda)^2
)
-
(
(a.p)^2 \Lambda^4 
+
(a.\Lambda)^2(p.\Lambda)^2
-2
(a.p) \Lambda^2  (a.\Lambda)(p.\Lambda)
)
}{
-\SYMGRAMTWO p \Lambda   {\Lambda^2}
}
\\
&=
\frac{
a^2 \Lambda^2 
p^2 \Lambda^2
-
a^2 \Lambda^2
(p.\Lambda)^2
-
(a.\Lambda)^2
p^2 \Lambda^2
+
(a.\Lambda)^2
(p.\Lambda)^2
-
(a.p)^2 \Lambda^4 
-
(a.\Lambda)^2(p.\Lambda)^2
+2
(a.p) \Lambda^2  (a.\Lambda)(p.\Lambda)
}{
-\SYMGRAMTWO p \Lambda   {\Lambda^2}
}
\\
&=
\frac{
a^2 \Lambda^2 
p^2 
-
a^2 
(p.\Lambda)^2
-
(a.\Lambda)^2
p^2 
-
(a.p)^2 \Lambda^2 
+2
(a.p) 
(a.\Lambda)(p.\Lambda)
}{
-\SYMGRAMTWO p \Lambda   
}
=
\frac{
\SYMGRAMTHR  a p \Lambda 
}{
-\SYMGRAMTWO p \Lambda   
}.
\end{align*}
A more elegant argument would start by proving that 
$$
\left. 
(\vec a \times \vec b) .(\vec c \times \vec d)
\vphantom{\int}\right|_{\Lambda}
=
\frac{
\GRAMTHR a b \Lambda
         c d \Lambda
}{
\Lambda^2
}
$$
and would then use that result to evaluate the RHS of:
$$
\left. 
| {\vec a}_T |^2 
\vphantom{\int}\right|_{\Lambda}
=
\left.
\frac{\left| \vec a \times \vec p\right|^2}{|\vec p|^2} 
\vphantom{\int}\right|_{\Lambda}
.
$$
\end{proof}
\end{lemma}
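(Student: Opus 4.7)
The plan is to reduce the problem to previously established Lemmas. In any frame, the square of the transverse component of $\vec a$ with respect to the axis defined by $\vec p$ admits the standard decomposition
$$
\left. |\vec a_T|^2 \vphantom{\int}\right|_\Lambda
=
\left. |\vec a|^2 \vphantom{\int}\right|_\Lambda
-
\left. \frac{(\vec a \cdot \vec p)^2}{|\vec p|^2} \vphantom{\int}\right|_\Lambda,
$$
so the strategy is simply to rewrite each of the three Euclidean quantities on the right-hand side using the Gram-determinant formulae that have already been proved. The condition $-\SYMGRAMTWO p \Lambda > 0$ guarantees that $|\vec p|^2 > 0$ in the $\Lambda$-rest-frame (by Lemma~\ref{lem:threemominframe}), so the division is legitimate.

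First I would invoke Lemma~\ref{lem:threemominframe} twice to replace $\left.|\vec a|^2\right|_\Lambda$ by $-\SYMGRAMTWO a \Lambda / \Lambda^2$ and $\left.|\vec p|^2\right|_\Lambda$ by $-\SYMGRAMTWO p \Lambda / \Lambda^2$, and Lemma~\ref{lem:adotbinsomeframe} to replace $\left.(\vec a \cdot \vec p)\right|_\Lambda$ by $-\GRAMTWO a \Lambda p \Lambda / \Lambda^2$. Substituting gives
$$
\left. |\vec a_T|^2 \vphantom{\int}\right|_\Lambda
=
\frac{-\SYMGRAMTWO a \Lambda}{\Lambda^2}
-
\frac{\left(\GRAMTWO a \Lambda p \Lambda\right)^2}{\Lambda^2 \left(-\SYMGRAMTWO p \Lambda\right)}
=
\frac{\SYMGRAMTWO a \Lambda \cdot \SYMGRAMTWO p \Lambda - \left(\GRAMTWO a \Lambda p \Lambda\right)^2}{\Lambda^2 \cdot \left(-\SYMGRAMTWO p \Lambda\right)}.
$$
The remaining task is purely algebraic: expand the $2\times 2$ determinants in the numerator as polynomials in the six scalars $a^2, p^2, \Lambda^2, a\cdot p, a\cdot \Lambda, p\cdot \Lambda$, collect terms, and verify that after dividing by $\Lambda^2$ one obtains the standard expansion of the symmetric $3\times 3$ Gram determinant $\SYMGRAMTHR a p \Lambda$.

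The only real obstacle is bookkeeping: the expansion of $\left(\GRAMTWO a \Lambda p \Lambda\right)^2$ produces four terms, two of which carry a factor of $\Lambda^2$ that cancels with the denominator's extra $\Lambda^2$ and two of which must combine with the product of the symmetric Gram determinants to reproduce the six signed products that make up $\SYMGRAMTHR a p \Lambda$. I expect no surprises here; the identity is essentially a restatement of the Gram-Schmidt/Binet-Cauchy relation for the length of the component of $\vec a$ orthogonal to $\vec p$, expressed in four-vector notation. As the author hints, an alternative and more conceptual route would be to first establish the general identity
$$
\left. (\vec a \times \vec b)\cdot(\vec c \times \vec d) \vphantom{\int}\right|_\Lambda = \frac{\GRAMTHR a b \Lambda c d \Lambda}{\Lambda^2},
$$
and then apply it with $b=d=p$ and $c=a$, using $\left.|\vec a_T|^2\right|_\Lambda = \left.|\vec a \times \vec p|^2/|\vec p|^2\right|_\Lambda$; but the direct calculation is shorter given the Lemmas already in hand.
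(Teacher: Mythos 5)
Your proposal is correct and follows essentially the same route as the paper's own proof: the decomposition $|\vec a_T|^2=|\vec a|^2-(\vec a\cdot\vec p)^2/|\vec p|^2$, substitution via Lemmas~\ref{lem:threemominframe} and \ref{lem:adotbinsomeframe}, and then the algebraic expansion showing the numerator equals $\SYMGRAMTHR a p \Lambda$ (which the paper carries out explicitly and which is indeed just the bookkeeping you describe). You even note the same more elegant cross-product alternative that the paper itself mentions, so there is nothing to add.
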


\begin{lemma}
\label{lem:abcvectripprodinLIform}
In the rest frame of $\Lambda$ the vector triple product $\vec a \times \vec b \cdot \vec c$ of the spatial momenta within $a^\mu$, $b^\mu$ and $c^\mu$ is given by:
\begin{align}
\left.\left(\vec a \times \vec b \cdot \vec c\right)\right|_{\Lambda}
&=\epsLor_{0123}\frac{
\epsLor_{\mu\nu\sigma\tau} \Lambda^\mu a^\nu b^\sigma c^\tau}{\sqrt{\Lambda^2}}\qquad\text{(see notes on `epsilon notation' in Section~\ref{sec:epsilonNotation})}\label{eq:aputativetripleprod}
\\
&= \epsLor_{0123}\frac{[\Lambda,a,b,c]}{\sqrt{\Lambda^2}}\qquad\qquad\qquad\qquad\text{(see same section for notation).}\label{eq:somethingdependingoneps0123}
\end{align}
\begin{proof}
The expression of \eqref{eq:aputativetripleprod} is manifestly Lorentz invariant.  It is therefore true if it is true in some frame. In the rest frame of $\Lambda$ it takes the form $\Lambda^\mu =\left(\sqrt{\Lambda^2},(0,0,0)\right)$.
In this form the required result follows naturally, with the $\epsLor_{0123}$ needed to ensure the sign of the answer is independent of the choice of sign convention within $\epsLor$.\end{proof}
\end{lemma}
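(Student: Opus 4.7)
The plan is to exploit the fact that the right-hand side of \eqref{eq:aputativetripleprod} is manifestly a Lorentz scalar, so we may verify the identity in whichever frame is most convenient. Since $\Lambda\in\mathbb{M}$ the quantity $\Lambda^2$ is strictly positive, so $\sqrt{\Lambda^2}$ is a well-defined Lorentz-invariant real number, and $\epsLor_{\mu\nu\sigma\tau}\Lambda^\mu a^\nu b^\sigma c^\tau$ is a Lorentz-invariant contraction of four four-vectors with the totally antisymmetric tensor. A rest frame of $\Lambda$ exists precisely because $\Lambda\in\mathbb{M}$, and by Lemma~\ref{lem:collisionshaveaxis}-style reasoning (more elementarily, just by applying a boost) we may pass to that frame.

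Next I would evaluate both sides in the rest frame of $\Lambda$, where $\Lambda^\mu = (\sqrt{\Lambda^2},\vec 0)$. In that frame the sum $\epsLor_{\mu\nu\sigma\tau}\Lambda^\mu a^\nu b^\sigma c^\tau$ collapses to a single term in which $\mu=0$, namely
\[
\sqrt{\Lambda^2}\,\epsLor_{0\nu\sigma\tau}\,a^\nu b^\sigma c^\tau,
\]
where $\nu,\sigma,\tau$ range over the three spatial indices. Dividing by $\sqrt{\Lambda^2}$ cancels the prefactor, leaving $\epsLor_{0ijk}a^i b^j c^k$ summed over the spatial indices. This purely spatial contraction is, up to the overall orientation sign carried by $\epsLor_{0123}$, exactly the ordinary vector triple product $\vec a\times\vec b\cdot\vec c$ evaluated in the $\Lambda$ rest frame.

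The final step is simply to rewrite $\epsLor_{\mu\nu\sigma\tau}\Lambda^\mu a^\nu b^\sigma c^\tau$ using the compact bracket notation $[\Lambda,a,b,c]$ introduced in Section~\ref{sec:epsilonNotation}, yielding \eqref{eq:somethingdependingoneps0123}. The only subtlety — and the reason the explicit $\epsLor_{0123}$ factor is retained rather than absorbed — is the sign convention for the Levi-Civita symbol: different conventions for whether $\epsLor_{0123}$ equals $+1$ or $-1$ would flip the apparent sign of the answer, so multiplying by $\epsLor_{0123}$ itself makes the identity convention-independent.

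I do not anticipate any real obstacle here; the whole proof is essentially a single calculation in a cleverly chosen frame. The only thing requiring care is bookkeeping of the $\epsLor_{0123}$ sign factor so that the statement is valid for either standard sign convention, and noting (as done above) that the rest frame of $\Lambda$ genuinely exists, which is guaranteed by $\Lambda\in\mathbb{M}$ and hence $\Lambda^2>0$.
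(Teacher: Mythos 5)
Your proof is correct and follows essentially the same route as the paper: both invoke the manifest Lorentz invariance of the right-hand side, evaluate in the rest frame of $\Lambda$ where $\Lambda^\mu=(\sqrt{\Lambda^2},\vec 0)$ so the contraction collapses to the spatial epsilon giving the triple product, and retain the explicit $\epsLor_{0123}$ factor to make the result independent of the sign convention. Your write-up merely spells out the index bookkeeping that the paper's terser proof leaves implicit.
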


\begin{lemma}
\label{lem:transmomforcolevent}
In the $(p+q)$ rest frame of a \textbf{collision event}  $e = ((p,q,a,\cdots),G)$ the square of the transverse momentum of $a$ with respect to the beam axis is given by:
\begin{align}
\left. |{\vec a}_T|^2 \vphantom{\int}\right|_{\Lambda}
&=\label{eq:transmomforcollevent}
\frac{
\SYMGRAMTHR  a p q 
}{
-\SYMGRAMTWO p q   
}.
\end{align}
\begin{proof}
Lemma~\ref{lem:transversemom} gives us the first step: 
\begin{align*}
\left. |{\vec a}_T|^2 \vphantom{\int}\right|_{\Lambda}
&=
\frac{
\SYMGRAMTHR  a p {p+q} 
}{
-\SYMGRAMTWO p {p+q}   
}
\end{align*}
This simplifies to the result given using the property that matrix determinants (in this case Gram determinants) are insensitive to the addition of any multiple of some row (or column) to another row (or column).  Note also that \textbf{collision events} have $\SYMGRAMTWO p q <0$ by Definition~\ref{def:collevent}, so there is no risk of division by zero in Equation~(\ref{eq:transmomforcollevent}).
\end{proof}
\end{lemma}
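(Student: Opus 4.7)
The plan is to reduce this statement to Lemma~\ref{lem:transversemom} applied with $\Lambda = p+q$. Since we are working in the $(p+q)$ rest frame of a collision event, Corollary~\ref{cor:somethiungusingaG} guarantees $p+q \in \mathbb{M}$, so $\Lambda = p+q$ is a legitimate choice. Before invoking Lemma~\ref{lem:transversemom} I must also check its hypothesis $-\SYMGRAMTWO p \Lambda > 0$; as shown below, this condition is in fact equivalent to the defining condition $\SYMGRAMTWO p q < 0$ of a collision event, so it holds automatically and no division by zero occurs.

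Once the lemma is applied, it yields
\begin{align*}
\left. |{\vec a}_T|^2 \vphantom{\int}\right|_{\Lambda}
&=
\frac{\SYMGRAMTHR a p {p+q}}{-\SYMGRAMTWO p {p+q}}.
\end{align*}
The remaining task is purely algebraic: I would show that the $p+q$ appearing inside each Gram determinant can be replaced by $q$ without changing its value. This is the standard fact that a determinant is invariant under adding a scalar multiple of one row to another (and, for symmetric Gram matrices, under the corresponding column operation). Concretely, on the numerator I would subtract the ``$p$'' row from the ``$p+q$'' row (turning its entries from $(a.(p+q),\, p.(p+q),\, (p+q)^2)$ into $(a.q,\, p.q,\, q.(p+q))$), then subtract the ``$p$'' column from the ``$p+q$'' column to produce the final entry $q^2$; the resulting matrix is exactly the Gram matrix of $(a,p,q)$. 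The same two-step operation on the denominator converts $\SYMGRAMTWO p {p+q}$ into $\SYMGRAMTWO p q$, which incidentally verifies the equivalence of the two positivity conditions mentioned above.

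There is no real obstacle here: both ingredients — the reduction to the $\Lambda$-form of the transverse momentum and the invariance of Gram determinants under elementary row/column operations — are already in hand, and the only care required is to confirm that the hypothesis of Lemma~\ref{lem:transversemom} is satisfied for collision events so that the denominator does not vanish.
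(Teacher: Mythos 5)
Your proposal is correct and follows essentially the same route as the paper: apply Lemma~\ref{lem:transversemom} with $\Lambda=p+q$ and then use the invariance of (Gram) determinants under adding a multiple of one row/column to another to replace $p+q$ by $q$, with the non-vanishing denominator guaranteed by the collision-event condition $\SYMGRAMTWO p q <0$. The only difference is that you spell out the row/column operations and the equivalence of the positivity conditions more explicitly than the paper does, which is a welcome but inessential elaboration.
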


\begin{lemma}\label{lem:modamodbisdotprodrel}
For any two vectors $\vec a,\vec b\in\mathbb{R}^n$:
$
(|\vec a|=|\vec b|)
\iff
((\vec a-\vec b).(\vec a+\vec b)=0).
$
\begin{proof}
$
(|\vec a|=|\vec b|)
\iff
(|\vec a|^2=|\vec b|^2)
\iff 
(|\vec a|^2 - |\vec b|^2=0)
\iff 
(\vec a.\vec a  -  \vec b.\vec b=0)
\iff 
(\vec a.\vec a + \vec a.\vec b - \vec b.\vec a  -  \vec b.\vec b=0)
\iff 
((\vec a-\vec b).(\vec a+\vec b)=0).
$
\end{proof}
\end{lemma}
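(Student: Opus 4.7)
The plan is to prove the biconditional by a single chain of elementary equivalences, exploiting the bilinearity and symmetry of the Euclidean inner product. I would first note the algebraic identity
\[
(\vec a - \vec b)\cdot(\vec a + \vec b) = \vec a\cdot\vec a + \vec a\cdot\vec b - \vec b\cdot\vec a - \vec b\cdot\vec b = |\vec a|^2 - |\vec b|^2,
\]
which reduces the right-hand side of the claimed biconditional to the statement $|\vec a|^2 - |\vec b|^2 = 0$, i.e.\ $|\vec a|^2 = |\vec b|^2$.

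Next I would use the fact that magnitudes are non-negative real numbers, so that squaring is a bijection on $[0,\infty)$. Thus $|\vec a|^2 = |\vec b|^2$ if and only if $|\vec a| = |\vec b|$, which is exactly the left-hand side of the biconditional. Chaining these equivalences together delivers the result in both directions simultaneously, so there is no need to split into separate ``only if'' and ``if'' arguments.

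No step is genuinely hard; the only minor subtlety is making explicit why $|\vec a|^2 = |\vec b|^2$ is equivalent to $|\vec a| = |\vec b|$ (namely, that the magnitudes are non-negative, so one may take the unique non-negative square root on both sides). Presenting the proof as a single line of $\iff$'s, as the paper already does in its displayed calculation, seems to me the cleanest format.
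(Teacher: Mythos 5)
Your proposal is correct and follows essentially the same route as the paper: expand $(\vec a-\vec b)\cdot(\vec a+\vec b)$ by bilinearity to $|\vec a|^2-|\vec b|^2$ and use that squaring is injective on non-negative reals. The only difference is cosmetic (you traverse the chain of equivalences in the opposite direction and make the non-negativity of magnitudes explicit), so nothing further is needed.
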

\hide{
\begin{corollary}
\comCGL{[This corollary is correct but not helpful. It is just a special case of the identity given in Lemma~\ref{lem:momdifferencesidentity}.]}
For a \textbf{collision event}:
$
( 
\SYMGRAMTWO a {p+q} 
=
\SYMGRAMTWO b {p+q}
)
\iff
\left(
\PerpABF{a-b}{a+b}{p+q}
\right).
$
\begin{proof}
The proof follows immediately from Lemma~\ref{lem:modamodbisdotprodrel} in conjunction with  Lemmas~\ref{lem:adotbinsomeframe}
and 
\ref{lem:threemominframe} and
Definition~\ref{def:collevent}.
\end{proof}
\end{corollary}
}


\begin{definition}
$\qquad\pqabcColEvent \equivdef \mathscr{E}^c(P(5,2,(2,3)))\qquad\text{with}\qquad \mathbf{x}=(p,q,a,b,c).$
\begin{remark}
This definition is useful for referring quickly and concisely to \textbf{collision events} of the form $(pq)\rightarrow (abc)+X$ which live in $\mathscr{E}^c(P(5,2,(2,3)))$ with  $\mathbf{x}=(p,q,a,b,c)$ and have the symmetry group $G=SO^+(1,3)\times S_2(pq) \times S_3(abc).$
\end{remark}
\end{definition}
\begin{definition}
$\qquad\pqabcAllEvent \equivdef \mathscr{E}(P(5,2,(2,3)))\qquad\text{with}\qquad \mathbf{x}=(p,q,a,b,c).$
\begin{remark}
This definition is useful for referring quickly and concisely  to \textbf{events} having entities  $p$, $q$, $a$, $b$, $c$ and possibly also other things $X$, which live in $\mathscr{E}(P(5,2,(2,3)))$ with  $\mathbf{x}=(p,q,a,b,c)$ and have the symmetry group $G=SO^+(1,3)\times S_2(pq) \times S_3(abc).$
\end{remark}
\end{definition}

\begin{definition}
$\qquad\pqabColEvent \equivdef \mathscr{E}^c(P(4,2,(2,2)))\qquad\text{with}\qquad \mathbf{x}=(p,q,a,b).$
\begin{remark}
This definition is useful for referring quickly and concisely to \textbf{collision events} of the form $(pq)\rightarrow (ab)+X$ which live in $\mathscr{E}^c(P(4,2,(2,2)))$ with  $\mathbf{x}=(p,q,a,b)$ and have the symmetry group $G=SO^+(1,3)\times S_2(pq) \times S_2(ab).$
\end{remark}
\end{definition}
\begin{definition}
$\qquad\pqabAllEvent \equivdef \mathscr{E}(P(4,2,(2,2)))\qquad\text{with}\qquad \mathbf{x}=(p,q,a,b).$
\begin{remark}
This definition is useful for referring quickly and concisely  to \textbf{events} having entities  $p$, $q$, $a$, $b$ and possibly also other things $X$, which live in $\mathscr{E}(P(4,2,(2,2)))$ with  $\mathbf{x}=(p,q,a,b)$ and have the symmetry group $G=SO^+(1,3)\times S_2(pq) \times S_2(ab).$
\end{remark}
\end{definition}

\section{Conditions under which certain types of event are chiral}
\label{sec:condsforvariouschiraleventclasses}

\subsection{\textbf{Non-collision events} $e\in\pqabAllEvent$}

\begin{lemma}
\label{lem:everynoncollevinpqabxallisnonchiral}
Every \textbf{non-collision event} in $\pqabAllEvent$ is \textbf{non-chiral}.
\begin{proof}
Definition \ref{def:noncollevents} reminds us that for every non-collision event in $\pqabAllEvent$ there will exist a $\lambda\ge0$ and a $\mu\ge0$ (not both zero) such that $\lambda p=\mu q$.  Equivalently, $p$ and $q$ must be linearly dependent.  This means that of the four vectors in the set $\{p^\mu, q^\mu, a^\mu,b^\mu\}$ at most three are linearly independent. This is too few independent momenta to create a non-zero pseudoscalar though dotting with an $\epsLor_{\mu\nu\sigma\tau}$.
\end{proof}
\end{lemma}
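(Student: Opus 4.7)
The plan is to exhibit, for each non-collision event $e = ((p,q,a,b),G) \in \pqabAllEvent$, an element $g \in G = SO^+(1,3) \times S_2(pq) \times S_2(ab)$ with $g \cdot e = \mathscr{P} \cdot e$; Corollary~\ref{cor:simplehandedness} then delivers non-chirality.

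The opening observation, supplied by Definition~\ref{def:noncollevents} together with Lemma~\ref{lem:yhnhvfgdy}, is that $p$ and $q$ are linearly dependent as 4-vectors, so $V := \operatorname{span}(p,q,a,b)$ has dimension at most three and its orthogonal complement $V^\perp$ contains a non-zero 4-vector $n$. I would then take $g = (\Lambda, \mathrm{id}, \mathrm{id})$ with $\Lambda := \mathscr{P} \circ R_n$, where $R_n$ denotes the improper Lorentz reflection that fixes $n^\perp$ pointwise and sends $n \mapsto -n$. Because $V \subseteq n^\perp$, the reflection $R_n$ acts trivially on each of $p,q,a,b$, so $\Lambda v = \mathscr{P} v$ for every $v \in V$, giving $g \cdot e = \mathscr{P} \cdot e$ — \emph{provided} $\Lambda \in SO^+(1,3)$.

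Verifying that last membership is the delicate step. As a product of two Lorentz reflections, $\det \Lambda = +1$, so properness is automatic; orthochronicity, however, requires $n$ to be spacelike. That is guaranteed whenever $V$ contains any timelike vector — a condition that covers all the principal sub-cases of Lemma~\ref{lem:noncolfinegrainedprops}, in particular cases 2, 3, and 6, where at least one of $p,q,a,b$ is massive, and more generally any configuration in which $V$ contains two non-parallel null vectors (since the sum of two non-parallel null vectors is timelike, using Lemma~\ref{lem:pqpos}). The residual edge cases, in which every non-zero member of $\{p,q,a,b\}$ is null and they are all mutually parallel, form the main obstacle and need to be treated by hand: in those configurations one can pass to a frame in which every non-zero three-momentum lies along a common axis $\hat z$, and take $\Lambda$ to be the $\pi$-rotation about any axis perpendicular to $\hat z$, which simultaneously implements parity on each of $p,q,a,b$. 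In every case the required element of $G$ is thus produced, establishing non-chirality via Corollary~\ref{cor:simplehandedness}.
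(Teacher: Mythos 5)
Your proof is correct, and it takes a genuinely different, more constructive route than the paper's. The paper disposes of the lemma in one line: since $\lambda p=\mu q$ forces $\{p,q,a,b\}$ to span at most a three-dimensional subspace, every pseudoscalar built by contracting the momenta with $\epsLor_{\mu\nu\sigma\tau}$ vanishes, and non-chirality is read off from that; this tacitly appeals to the principle that an event supporting no non-zero pseudoscalar is non-chiral, which is geometrically clear but not formally established at that point in the text. You instead verify the definition directly, as Corollary~\ref{cor:simplehandedness} asks: you exhibit $g=(\Lambda,\mathrm{id},\mathrm{id})$ with $\Lambda=\mathscr{P}\circ R_n$, a $\pi$-rotation about a spacelike $n$ orthogonal to $\operatorname{span}(p,q,a,b)$, you check properness and (the genuinely delicate point) orthochronicity, and you treat by hand the only configurations not guaranteed a timelike vector in the span --- all non-zero momenta null and mutually parallel --- where a $\pi$-rotation about any perpendicular axis implements parity on every momentum; your dichotomy is exhaustive because the sum of two non-parallel forward null vectors is timelike (Lemma~\ref{lem:pqpos}). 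What the paper's sketch buys is brevity and an early link to the $[a,b,p,q]$ invariant that organises everything downstream; what yours buys is a self-contained argument pinned to Definition~\ref{def:unhanded}, with the orthochronicity issue and the degenerate case made explicit. As a minor streamlining, the edge case is not actually needed as a separate branch: in the all-null-parallel situation $V^\perp$ still contains spacelike vectors (a forward causal vector orthogonal to a null vector is parallel to it, so $V^\perp$ can never be a purely causal line here), so choosing $n$ spacelike in $V^\perp$ runs the main construction uniformly.
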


\subsection{\textbf{Non-collision events} $e\in\pqabcAllEvent$ for which at least one of $p$ and $q$ is massive}
\label{sec:movingontononcollevtsinpqabc}

In this section we will find a condition which specifies whether or not a \textbf{non-collision event} $e\in\pqabcAllEvent$ is \textbf{non-chiral} given that which at least one of $p$ and $q$ is massive.  The condition is given in Corollary~\ref{cor:whennoncollpqabnoncollevisnonchiral}.
\begin{lemma}
For each \textbf{non-collision event} $e\in\pqabcAllEvent$ for which at least one of $p$ and $q$ is massive, there is a frame in which both $p$ and $q$ have no spatial momentum.  It is the rest frame of $(p+q)$.
\begin{proof}
Since at least one of $p$ and $q$ has a mass assume, without loss of generality, that $m_p>0$, i.e.~$p\in\mathbb{M}$.  The particle $p$ therefore has a rest frame.  We will show that in this frame the particle $q$ also has no spatial momentum.  Definition~\ref{def:noncollevents} tells us that  our non-collision event must be in one of the seven configurations listed in Lemma~\ref{lem:noncolfinegrainedprops}.  Our additional requirement that (without loss of generality) $p\in\mathbb{M}$ further constrains our event $e$ to be in configuration 3 or configuration 6 of Lemma~\ref{lem:noncolfinegrainedprops}. In configuration 3 we see that $q$ has no spatial momentum since it is the zero four-vector: $q=0$.  In configuration $6$ we are told that there exists a $\lambda>0$ such that $q=\lambda p$ which shows us that in a frame in which $p$ has no spatial momentum then $q$ shall also have none. 
\end{proof}
\end{lemma}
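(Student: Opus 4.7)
The plan is to reduce to the classification of non-collision configurations already established and then work case by case.

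First, I would invoke Lemma~\ref{lem:yhnhvfgdy} to pass from the ``non-collision'' hypothesis to the existence of non-negative reals $\lambda,\mu$, not both zero, with $\lambda p=\mu q$, and then apply Lemma~\ref{lem:noncolfinegrainedprops} to split into the seven mutually exclusive sub-configurations listed there. Without loss of generality assume the massive vector is $p$, i.e.\ $p\in\mathbb{M}$. This immediately excludes configurations~1, 2, 4 of Lemma~\ref{lem:noncolfinegrainedprops} (all of which require $p=0$) and configurations~5, 7 (which require $p\in\mathbb{N}$, hence $m_p=0$). Only configurations 3 and 6 survive.

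Next I would treat the two surviving cases. In configuration~3 we have $q=0$ and $p\in\mathbb{M}$, so $p+q=p\in\mathbb{M}$ and its rest frame exists. In that frame $p$ has vanishing spatial momentum by definition, and the zero vector $q$ has vanishing spatial momentum in every frame. In configuration~6 there exist $\lambda,\mu>0$ with $p=\mu q$ and $q=\lambda p$ (forcing $\lambda\mu=1$), and both $p,q\in\mathbb{M}$. Here $p+q=(1+\lambda)p$, which is a strictly positive scalar multiple of $p\in\mathbb{M}$, hence itself in $\mathbb{M}$ by Lemma~\ref{lem:msummareinm} (or directly), so the rest frame of $p+q$ exists and coincides with the rest frame of $p$; in that frame $\vec p=\vec 0$ and then $\vec q=\lambda\vec p=\vec 0$ as well.

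The main obstacle, such as it is, is merely bookkeeping: verifying that the phrase ``rest frame of $(p+q)$'' is well-defined in both surviving cases (i.e.\ that $p+q\in\mathbb{M}$, not merely in $\mathbb{V}$). This follows in case~3 trivially and in case~6 because positive scaling preserves membership in $\mathbb{M}$. Once that is noted, the identification ``rest frame of $p$ = rest frame of $p+q$'' is immediate in both cases and delivers the conclusion. No new ingredients beyond the already-established Lemmas~\ref{lem:noncolfinegrainedprops}, \ref{lem:yhnhvfgdy}, and \ref{lem:msummareinm} are needed.
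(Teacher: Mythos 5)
Your proof is correct and follows essentially the same route as the paper's: reduce via the non-collision classification to configurations 3 and 6 (after discarding the others using $p\in\mathbb{M}$) and conclude in each case. The only difference is that you explicitly verify $p+q\in\mathbb{M}$ so that the rest frame of $(p+q)$ is well-defined and coincides with that of $p$, a point the paper leaves implicit; this is a small but welcome tightening, not a different argument.
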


\begin{corollary}
In the $(p+q)$ rest frame every \textbf{non-collision event} $e\in\pqabcAllEvent$ for which at least one of $p$ and $q$ is massive may be parameterized by the five non-negative masses $m_p$, $m_q$, $m_a$, $m_b$ and $m_c$ together with the three spatial momenta $\vec a$, $\vec b$ and $\vec c$ which in that frame $\vec p=\vec q=\vec 0$. At least one of $m_p$ and $m_q$ is non-zero.  We may represent such an event notationally as follows:
\begin{align}
e = 
\left\{
\begin{array}{cc|ccc}
p & q & a & b & c \\
\hline
m_p & m_q & m_a & m_b & m_c \\
\vec 0 & \vec 0 & \vec a & \vec b & \vec c 
\end{array}
\right\}.
\label{form:atleastonemassivenoncoll}
\end{align}
\end{corollary}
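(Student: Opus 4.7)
The plan is to assemble this corollary essentially by combining the preceding lemma with one small verification, namely that the $(p+q)$ rest frame actually exists for such events. First I would invoke the preceding lemma directly: it already tells us that there is a frame in which $\vec p = \vec q = \vec 0$, and identifies it as the rest frame of $(p+q)$. All that remains is to confirm that the parametrisation is complete.

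Next I would verify that $p+q \in \mathbb{M}$, so that the rest frame of $(p+q)$ is well-defined (not a limit). Using Definition~\ref{def:noncollevents} together with Lemma~\ref{lem:noncolfinegrainedprops}, an event satisfying the hypotheses (non-collision, with at least one of $p,q$ massive) must fall into either configuration~3 (where $q=0$ and $p\in\mathbb{M}$, so $p+q=p\in\mathbb{M}$) or configuration~6 (where both $p,q\in\mathbb{M}$, whence Lemma~\ref{lem:msummareinm} gives $p+q\in\mathbb{M}$). In either case $p+q$ has a strictly positive invariant mass, so its rest frame exists.

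Having secured the frame, I would simply observe that in the rest frame of $(p+q)$ the four-momenta $p$ and $q$ carry no information beyond their masses $m_p$ and $m_q$, since their spatial parts vanish by the preceding lemma, while each of $a,b,c$, being an arbitrary element of $\mathbb{V}$, is fully specified by its mass and its spatial three-momentum (the energy being fixed by $E = \sqrt{m^{2}+|\vec{\,\cdot\,}|^{2}}$). Bundling these data into the tabular form shown in \eqref{form:atleastonemassivenoncoll} gives exactly the claimed parametrisation, and the assertion that at least one of $m_p,m_q$ is non-zero is just a restatement of the hypothesis.

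There is no real obstacle here: the content is entirely a repackaging of the preceding lemma with an existence-of-rest-frame check. The only thing to be careful about is not overlooking configuration~3 of Lemma~\ref{lem:noncolfinegrainedprops}, where $q=0$ but the rest frame of $(p+q)$ is still well defined because $p\in\mathbb{M}$; this is the only place a reader might worry that the notation $\vec q = \vec 0$ makes sense.
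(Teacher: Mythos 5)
Your proposal is correct and follows essentially the same route as the paper, which states this corollary without a separate proof as an immediate repackaging of the preceding lemma. Your additional check that $p+q\in\mathbb{M}$ (via configurations 3 and 6 of Lemma~\ref{lem:noncolfinegrainedprops}, with the same implicit relabelling of $p$ and $q$ the paper uses) is a harmless extra detail that only makes the existence of the $(p+q)$ rest frame explicit.
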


\begin{remark}
The task in-hand is to identify which events of the form  \eqref{form:atleastonemassivenoncoll} are \textbf{chiral} and which are not.  Recall that Definition~\ref{def:unhanded} says that an \textbf{event} is \textbf{non-chiral} if and only if its parity inverted form,
\begin{align}
\mathscr P\cdot e
=
\left\{
\begin{array}{cc|ccc}
p & q & a & b & c \\
\hline
m_p & m_q & m_a & m_b & m_c \\
\vec 0 & \vec 0 & -\vec a & -\vec b & -\vec c 
\end{array}
\right\},
\end{align}
can be mapped onto $e$ by the action of an element of the symmetry group, which in our case consists of Lorentz boosts, rotations, permutations of $(pq$) and permutations of $(abc)$.  To get a match with $p$ and $q$ it is clear that by working in the $(p+q)$ frame we can now exclude Lorentz boosts from further consideration.  Likewise, $p$ and $q$ do not need permuting, and even if they did, such permutations would only affect $m_p$ and $m_q$.  We therefore only need consider permutations of $(abc)$ and global rotations $R$.
In principle there are six permutations of $(abc)$ to consider: $1$, $(ab)$, $(bc)$, $(ca)$, $(abc)$ and $(cba)$.  We shall only consider $h_1=R$, $h_2=(ab)\cdot R$ and $h_3=(abc)\cdot R$ since the effects of the remaining ones may be inferred from these three by symmetry.
\begin{alignat}{5}
h_1  \cdot \mathscr P\cdot e
&=&
R \cdot \mathscr P\cdot e
&&=
\left\{
\begin{array}{cc|ccc}
p & q & a & b & c \\
\hline
m_p & m_q & m_a & m_b & m_c \\
\vec 0 & \vec 0 & -R \vec a & - R\vec b & -R\vec c 
\end{array}
\right\},
\\
h_2  \cdot \mathscr P\cdot e
&=&
(ab) \cdot R \cdot \mathscr P\cdot e
&&=
\left\{
\begin{array}{cc|ccc}
p & q & a & b & c \\
\hline
m_p & m_q & m_b & m_a & m_c \\
\vec 0 & \vec 0 & -R \vec b & - R\vec a & -R\vec c 
\end{array}
\right\},
\\
h_3 \cdot  \mathscr P\cdot e
&=&\ 
(abc) \cdot R \cdot \mathscr P\cdot e
&&=
\left\{
\begin{array}{cc|ccc}
p & q & a & b & c \\
\hline
m_p & m_q & m_b & m_c & m_a \\
\vec 0 & \vec 0 & -R \vec b & - R\vec c & -R\vec a 
\end{array}
\right\}.  
\end{alignat}
\end{remark}

\subsubsection{Case $h_1$:  `$R\cdot \mathscr P \cdot e = e$'}

For events to be \textbf{non-chiral} under this case we shall need $\vec a=-R \vec a$, $\vec b=-R \vec b$ and $\vec c=-R \vec c$.   
If one or more of $\vec a$, $\vec b$ or $\vec c$ is zero, then the two remaining vectors will always be in a common plane, and so $R$ could be taken to be a 180-degree rotation about an axis normal to that common plane.  The only non-trivial constraint therefore comes when all of $\vec a$, $\vec b$ and $\vec c$ are non-zero. In such a case, those three vectors must be required to be in a common plane, and this may be enforced by requiring that  $[a,b,c,p+q]=0$ using the square-bracket contraction  notation defined in \eqref{eq:epscontractionnotation}. Since this statement is also trivially true in the case that any of $\vec a$, $\vec b$ or $\vec c$ is zero, we can say that Case $h_1$ is satisfied if and only if $$[a,b,c,p+q]=0.$$ 
\subsubsection{Case $h_2$: `$(ab)\cdot R\cdot \mathscr P \cdot e = e$'}
For events to be \textbf{non-chiral} under this case we shall need $m_a=m_b$, $\vec a=-R \vec b$, $\vec b=-R \vec a$ and $\vec c=-R \vec c$.   Note that these relations require that:  $(\vec a=R^2 \vec a)$;  $(\vec b=R^2\vec b)$ and; $((\vec a\ne \vec 0)\iff(\vec b\ne\vec 0)\iff(R^2=1))$.   Case $h_2$ will be satisfied, therefore, if $m_a=m_b$ and least one of the following is true:
\begin{enumerate}
\item
$\vec a=\vec b=\vec c=\vec 0$; or
\item
$\vec a=\vec b=0$ and $\vec c\ne\vec 0$ (in which case $R$ must be a rotation of 180 degrees about some axis perpendicular to $\vec c$); or
\item
$\vec a\ne\vec0\ne\vec b$ and $R=1$  (in which case it must be the case that $\vec a=-\vec b$ and $\vec c=\vec0$); or
\item
$\vec a\ne\vec0\ne\vec b$ and $R$ is a rotation of 180 degrees about some axis (in which case it must be the case that $\vec c$ is in the plane orthogonal to the axis of rotation).
\end{enumerate}
Sub-cases 1, 2 and 3 all have $[a,b,c,p+q]=0$ and so are also sub-cases of $h_1$ (though $R$ may have a different interpretation).  The only `new' way of being \textbf{non-chiral} provided by Case $h_2$ is therefore sub-case 4. It says that $m_a=m_b$ and that $\vec a$ and $\vec b$ are reflections of each other in a plane which also holds $\vec c$. We may exclude from consideration both the case in which $\vec a=\vec b$ and the case $\vec c=\vec 0$, since both of these place $\vec a$, $\vec b$ and $\vec c$ into a common plane, thus making them sub-cases of Case $h_1$.  The  geometry of the new \textbf{non-chiral} case is shown in Figure~\ref{fig:diag2}.
\begin{figure}
\begin{centering}
\includegraphics[width=0.5\textwidth]{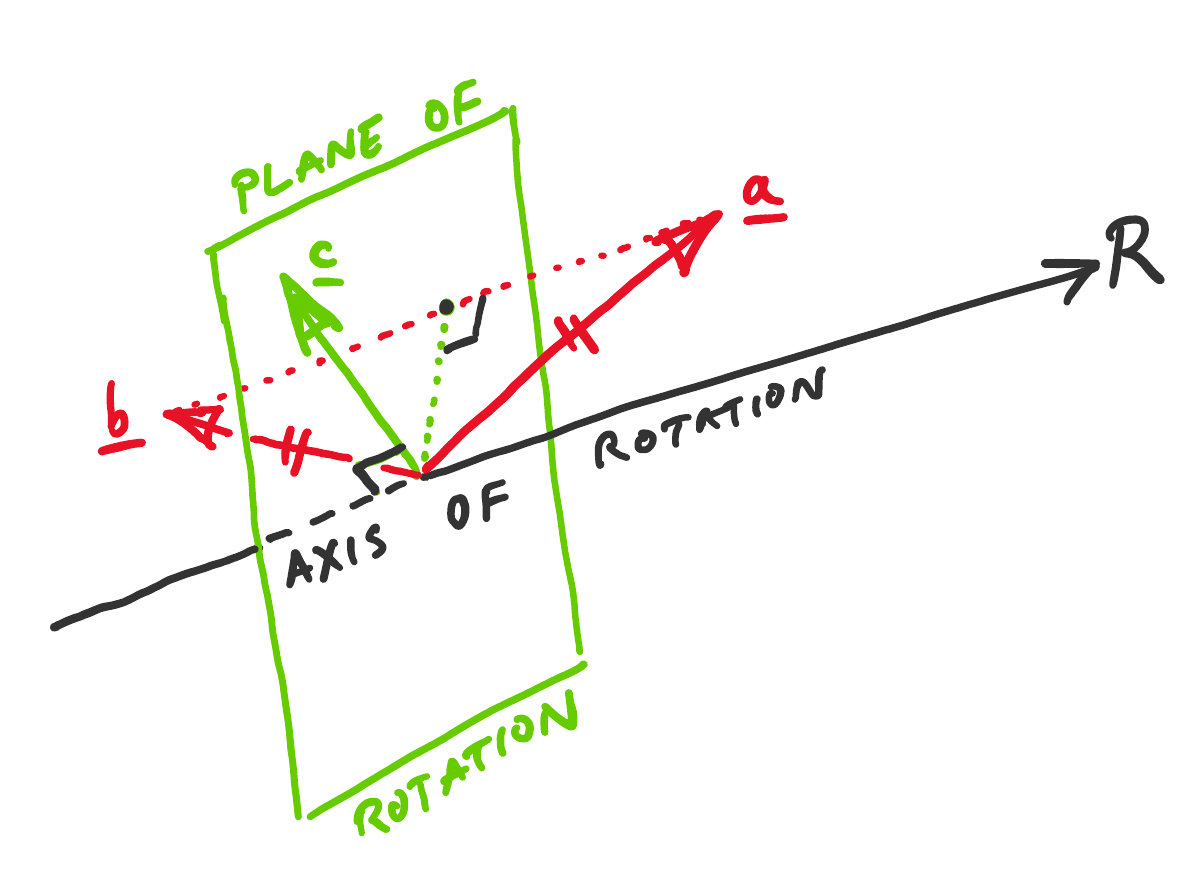}
\caption{
Diagram of a \textbf{non-chiral} configuration of momenta described in the text.
\label{fig:diag2}
}
\end{centering}
\end{figure} The condition shown can be summarised by the requirement:
\begin{align}
\left(m_a=m_b \right)
\land
\left(|\vec a|=|\vec b| \right)
\land
\left(\vec a\cdot\vec c = \vec b\cdot \vec c\right)
\end{align}
or in Lorentz-invariant form as
\begin{align}
\left(a^2 =b^2 \right)
\land
\left(
\GRAMTWO a {p+q} a {p+q} =\GRAMTWO b {p+q} b {p+q}
\right)
\land
\left(\GRAMTWO a {p+q} c {p+q} =\GRAMTWO b {p+q} c {p+q}\right).\label{eq:prerfgjhgfd}
\end{align}
Using Lemma~\ref{lem:tetanotherlemmaaboutgramdetsfughfjdn} (having set $s$ therein to $p+q$) we note that \eqref{eq:prerfgjhgfd} may be written more symmetrically as
\begin{align}
\left(a^2-b^2=0\right)\land
\left(\GRAMTWO a {p+q} a {p+q}-\GRAMTWO b {p+q} b {p+q}=0\right)\land
\left(\GRAMTWO {a-b} {p+q} {a+b+c} {p+q}=0\right)
\label{eq:possvariationone}
\end{align}
which, in the compressed notation later to be introduced in \eqref{eq:defofgramshorthand},
may be written as
\begin{align}
\left(a^2-b^2=0\right)\land
\left(g^a_a-g^b_b=0\right)\land
\left(g^{a-b}_{a+b+c} =0\right)
\label{eq:possvariationtwo}
\end{align}
or, if using also  \eqref{eq:gaagbbdifferencelemma}, may be written as:
\begin{align}
\left(a^2-b^2=0\right)\land
\left(g^{a-b}_{a+b}=0\right)\land
\left(g^{a-b}_{a+b+c} =0\right)
\label{eq:possvariationthr}
\end{align}
according to taste.  Note the similarity of the above constraints to those in the relevant row(s) of  \eqref{eq:secondtimeCisdefined}.
\subsubsection{Case $h_3$: `$(abc)\cdot R\cdot \mathscr P \cdot e = e$'}
For events to be \textbf{non-chiral} under this case we shall need $m_a=m_b=m_c$, $\vec a=-R \vec b$, $\vec b=-R \vec c$ and $\vec c=-R \vec a$.   Note that these relations require that: $(\vec a,\vec b,\vec c) =-R^3\cdot(\vec a,\vec b,\vec c)$.  By Lemma~\ref{lem:whenonplaneperttorot}, therefore, $\vec a$, $\vec b$ and $\vec c$ must all lie on a common plane which is perpendicular to the axis of rotation of $R$.  This means that any configuration satisfying Case $h_3$ is already to be found as a special case of Case $h_1$.  Case $h_3$ therefore adds no new ways of being \textbf{non-chiral}.
\hide{; $|\vec a|=|\vec b|=|\vec c|$; and $((\vec a=\vec b=\vec c=\vec 0)\lor(R^3=-1))$ is true.   Case $h_3$ will be satisfied, therefore, if $m_a=m_b=m_c$ and least one of the following is true:
\begin{enumerate}
\item
$\vec a=\vec b=\vec c=\vec 0$; or
\item
$|\vec a|=|\vec b|=|\vec c|\ne0$ and $R$ is a rotation of 180 degrees about some axis; or
\item
$|\vec a|=|\vec b|=|\vec c|\ne0$ and $R$ is a rotation of 60 degrees about some axis.
\end{enumerate}
Sub-case 1 is already a member of Case $h_1$ and so is not a `new' condition for \textbf{non-chiral} events and may be ignored.
Sub-case 2 implies that $R^2=1$ and hence we have $\vec a=-R\vec b =-R(-R\vec c)= R^2\vec c = \vec c$ and thus by symmetry $\vec a=\vec b=\vec c$.  This is already a sub-case of Case $h_1$ and so is not a `new' condition for \textbf{non-chiral} events and may be ignored. Sub-case 3 implies that }

\subsubsection{Summary of all cases}
\begin{corollary}
\label{cor:whennoncollpqabnoncollevisnonchiral}
A \textbf{non-collision event} $e\in\pqabcAllEvent$ for which at least one of $p$ and $q$ is massive is \textbf{non-chiral} if and only if
\begin{gather*}
[a,b,c,p+q]=0 \\
\separotor \lor {3.6cm}
\\
\left\{
\begin{array}{c}
\left(a^2-b^2=0\right)\land
\left(g^a_a-g^b_b=0\right)\land
\left(g^{a-b}_{a+b+c} =0\right)
\\
\separotor \lor {3.3cm}
\\
\left(b^2-c^2=0\right)\land
\left(g^b_b-g^c_c=0\right)\land
\left(g^{b-c}_{a+b+c} =0\right)
\\
\separotor \lor {3.3cm}
\\
\left(c^2-a^2=0\right)\land
\left(g^c_c-g^a_a=0\right)\land
\left(g^{c-a}_{a+b+c} =0\right)
\end{array}
\right\}.
\end{gather*}
Note that the conditions inside the curly brakets have been written in the form of \eqref{eq:possvariationtwo} however they could just have easily been written using the alternative variations \eqref{eq:possvariationone} or  \eqref{eq:possvariationthr}.  Note also use of the  square-bracket contraction  notation defined in \eqref{eq:epscontractionnotation}.
\begin{proof}
This result is just an `or' of the conditions obtained after consideration of cases $h_1$ to $h_3$ above, together with the other cases that would be obtained from them by symmetry.
\end{proof}
\end{corollary}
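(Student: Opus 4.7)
The plan is to work in the $(p+q)$ rest frame (guaranteed to exist by the assumption that at least one of $p$ and $q$ is massive, via the earlier lemma which parameterizes such events in the form \eqref{form:atleastonemassivenoncoll}), since any Lorentz boost mapping $\mathscr{P}\cdot e$ to $e$ must preserve the configuration $\vec p=\vec q=\vec 0$. In this frame the only remaining symmetry-group elements that can realize $g\cdot\mathscr{P}\cdot e\sim e$ are compositions of a spatial rotation $R$ with a permutation $\sigma\in S_3$ acting on $(a,b,c)$; the $S_2(pq)$ permutation is irrelevant because $\vec p=\vec q=\vec 0$ and only affects $(m_p,m_q)$.

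First I would reduce the six permutations in $S_3(abc)$ to three representative cases $h_1=R$, $h_2=(ab)\cdot R$, $h_3=(abc)\cdot R$, noting that $(bc)$ and $(ca)$ yield conditions related to $h_2$ by the evident $S_3$ symmetry, while $(cba)$ yields a condition related to $h_3$. For Case $h_1$ the requirement $\vec a=-R\vec a$, $\vec b=-R\vec b$, $\vec c=-R\vec c$ forces either at least one of $\vec a,\vec b,\vec c$ to vanish (making the three vectors automatically coplanar) or $R$ to be a 180-degree rotation about the common perpendicular to their shared plane; in all sub-cases the condition is precisely that $\vec a,\vec b,\vec c$ lie in a common plane, equivalently $[a,b,c,p+q]=0$ (using the $\epsilon$-contraction notation \eqref{eq:epscontractionnotation} together with the fact that in the chosen frame $p+q$ is purely time-like).

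For Case $h_2$ the requirement becomes $m_a=m_b$, $\vec a=-R\vec b$, $\vec b=-R\vec a$, $\vec c=-R\vec c$, and I would split this into the four geometric sub-cases enumerated in the text, discarding sub-cases 1--3 (which all force coplanarity and hence are already contained in Case $h_1$) and keeping sub-case 4, which says $R$ is a 180-degree rotation about some axis orthogonal to $\vec c$, with $\vec a$ and $\vec b$ related by reflection in the plane spanned by $\vec c$ and the axis. This geometric condition translates to $m_a=m_b$, $|\vec a|=|\vec b|$, and $\vec a\cdot\vec c=\vec b\cdot\vec c$; the first becomes $a^2=b^2$, the second becomes the Gram-determinant equality for the $(p+q)$-rest-frame three-momenta (Lemma~\ref{lem:threemominframe}), and the third follows from Lemma~\ref{lem:adotbinsomeframe}. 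Converting these to the form \eqref{eq:possvariationtwo} is then a straightforward application of Lemma~\ref{lem:tetanotherlemmaaboutgramdetsfughfjdn}. The main obstacle is being careful that sub-cases 1--3 really do produce configurations already handled by Case $h_1$; I would verify this by observing that each either forces one of the vectors to vanish or forces $\vec a=-\vec b$, and in each case the three vectors are manifestly coplanar.

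Finally, for Case $h_3$ the relation $(\vec a,\vec b,\vec c)=-R^3\cdot(\vec a,\vec b,\vec c)$ applied with Lemma~\ref{lem:whenonplaneperttorot} forces $\vec a$, $\vec b$, $\vec c$ to share a common plane orthogonal to the rotation axis of $R$, so Case $h_3$ is subsumed by Case $h_1$ and contributes nothing new. The \textbf{if} direction is immediate: each stated condition exhibits an explicit $(\sigma,R)$ realizing $\sigma\cdot R\cdot\mathscr{P}\cdot e=e$, hence $e\sim\mathscr{P}\cdot e$ by Corollary~\ref{cor:simplehandedness}. For the \textbf{only if} direction, any witnessing group element must decompose as $\sigma\cdot R$ for some permutation $\sigma$ and rotation $R$ in the $(p+q)$ rest frame, so it falls into one of the three cases (or its $S_3$-conjugate), and we have shown that each such case implies one of the disjuncts in the statement, with the pair-wise conditions from Case $h_2$ replicated for $(b,c)$ and $(c,a)$ by cycling.
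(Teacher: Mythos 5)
Your proposal is correct and follows essentially the same route as the paper: work in the $(p+q)$ rest frame using the parameterization \eqref{form:atleastonemassivenoncoll}, reduce to the representative cases $h_1=R$, $h_2=(ab)\cdot R$, $h_3=(abc)\cdot R$, show $h_1$ gives $[a,b,c,p+q]=0$, extract from $h_2$ the new condition $(m_a=m_b)\land(|\vec a|=|\vec b|)\land(\vec a\cdot\vec c=\vec b\cdot\vec c)$ rendered Lorentz-invariantly via Lemmas~\ref{lem:adotbinsomeframe}, \ref{lem:threemominframe} and \ref{lem:tetanotherlemmaaboutgramdetsfughfjdn}, and dismiss $h_3$ via Lemma~\ref{lem:whenonplaneperttorot} as a sub-case of $h_1$, finally taking the `or' over the $S_3$-conjugates. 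This matches the paper's argument in Section~\ref{sec:movingontononcollevtsinpqabc}, with your explicit statement of the two directions being a slightly more careful presentation of what the paper leaves implicit.
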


\begin{remark}
The following event
\begin{align}
e=\left[
\begin{array}{l}
p^\mu = \left(10, \left(0, 0, 0 \right)\right),\\
q^\mu = \left(10, \left(0, 0, 0 \right)\right),\\
a^\mu = \left(3, \left(1, 0, 0 \right)\right),\\
b^\mu = \left(2, \left(0, 1, 0 \right)\right),\\
c^\mu = \left(1, \left(0, 0, 1 \right)\right]
\end{array}
\right]
\label{firstnoncollecenteindfgdfg}
\end{align}
(which is written using the Lorentz vector
the display conventions given in Appendix~\ref{sec:lorentzvectornotation}) 
is an example of a \textbf{non-collision event} $e\in\pqabcAllEvent$ for which at least one of $p$ and $q$ is massive and which is \textbf{chiral}.  The \textbf{non-collision} status may be verified by checking that $p$ and $q$ meet the requirements of Definition~\ref{def:noncollevents}. The \textbf{chiral} status may be checked by noting that since $a^2=3^2-1^2=8$, $b^2=2^2-1^2=3$ and $c^2=1^2-1^2=0$, every line of the condition given in Corollary~\ref{cor:whennoncollpqabnoncollevisnonchiral} is violated because:
\begin{align*}
[a,b,c,p+q]\propto \left|\ 
\begin{matrix}
3 & 2 & 1 & 20 \\
1 & 0 & 0 & 0 \\
0 & 1 & 0 & 0 \\
0 & 0 & 1 & 0
\end{matrix}
\ 
\right| = -20 &\ne 0,\\
a^2-b^2=8-3 = +5&\ne0,\\
b^2-c^2=3-0 = +3&\ne0,\\
c^2-a^2=0-8 = -8&\ne0.
\end{align*}
\end{remark}

\subsection{\textbf{Non-collision events}
$e\in\pqabcAllEvent$ for which both $p$ and $q$ are massless}
\label{sec:ncevpqmasslessabc}

In this section we will find a condition which specifies whether or not a \textbf{non-collision event} $e\in\pqabcAllEvent$ is \textbf{non-chiral} given that both of $p$ and $q$ are massless.  The condition is given in Corollary~\ref{cor:whennoncollpqmasslessabnoncollevisnonchiral}.

If $(a+b+c)^2=0$ then all of $a$, $b$ and $c$ must be massless and must all point in the same direction (i.e.~there will exist $\alpha\ge0$ and $\beta\ge0$, not both zero, such that $\alpha a=\beta b$, \textit{etc}). In such a case there are at most three linearly independent Lorentz-vectors in the event, namely $p$ and $q$ together with at most one of $a$, $b$ and $c$. These are too few  to construct a non-zero pseudoscalar from contraction with an epsilon alternating tensor.  Such an event therefore \textbf{non-chiral}.

Similarly, if $p=0=q$ then again there are too few linearly independent Lorentz-vectors to construct a non-zero pseudoscalar.  Such events are therefore also \textbf{non-chiral}.

If $(a+b+c)^2>0$ and at least one of $p$ and $q$ is not the zero-vector, then more work needs to be done to determine whether or not the event is \textbf{chiral}. However, in this case the non-zero mass of $(a+b+c)$ means there is a natural frame in which we can work, namely that in which $(a+b+c)$ is at rest.  In the remainder of Section~\ref{sec:ncevpqmasslessabc} (and its subsections) we will work in that frame and will assume (without loss of generality) that $p$ is not the zero-vector.

Although we have already chosen a rest-frame in which to work, we have not yet used the freedom to orient that frame.
Definition~\ref{def:noncollevents} tells us that, since we are considering only \textbf{non-collision events}, there must exist a $\lambda\ge0$ and a $\mu\ge 0$, not both zero, such that $\lambda p =  \mu q$.  This means that in the frame in which we have chosen to work, at least one of $\vec p$ and $\vec q$ is non-zero, and furthermore if both are non-zero then both point in the same direction.  We therefore choose to orient our frame such that $\vec p$ and $\vec q$ are parallel to the $z$-axis and so that neither of them has a negative $z$-component.\footnote{At least one of them will have a positive $z$-component.}

\begin{corollary}
Every \textbf{non-collision event} $e\in\pqabcAllEvent$ having $(a+b+c)^2>0$ and having at least one of $p$ and $q$ non-zero may, in the $(a+b+c)$-rest frame, be parameterized by the three non-negative masses $m_a$, $m_b$ and $m_c$, two non-negative scalars $p_z$ and $q_z$ (at least one of which is non-zero), three transverse momenta $\vec a_T$, $\vec b_T$ and $\vec c_T$ (each having two components) jointly satisfying $a_T +\vec b_T+\vec c_T=\vec 0_T$ and three other scalars $a_z$, $b_z$ and $c_z$ jointly satisfying $a_z+b_z+c_z=0$. We may represent such an event notationally as follows:
\begin{align}
\hat e
=
\left[\begin{array}{cc|ccc}
     0 &0 & m_a & m_b & m_c \\
     0 & 0 & \mathbf{a} & \mathbf{b} & \mathbf{c} \\
     p_z & q_z & a_z & b_z & c_z
\end{array}\right]\qquad \text{with}\qquad\left(\begin{array}{l}
p_z, q_z,m_a,m_b,m_c\ge0, \\
p_z+q_z>0, \\
\mathbf{a},\mathbf{b},\mathbf{c}\in\mathbb{C},\\
\mathbf{a}+\mathbf{b}+\mathbf{c}=0,\\
a_z,b_z,c_z\in\mathbb{R}\ \text{and}\ 
a_z+b_z+c_z=0
\end{array}\right).\label{form:bothmasslesscoll}
\end{align}
\end{corollary}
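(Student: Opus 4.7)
The plan is to assemble the parameterization from the ingredients already laid down in the text preceding the corollary, then verify each constraint shown in the display \eqref{form:bothmasslesscoll} in turn.

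First I would justify the existence of a natural frame. Since $(a+b+c)^2>0$ and Lemma~\ref{lem:vsumsareinv} places $a+b+c\in\mathbb V$, we have $a+b+c\in\mathbb M$, which has a rest frame. Working in that frame automatically enforces $\vec a+\vec b+\vec c=\vec 0$; splitting into transverse and longitudinal parts yields both $\mathbf a+\mathbf b+\mathbf c=0$ (using the complex-number encoding of the two transverse components introduced elsewhere in the paper) and $a_z+b_z+c_z=0$. The individual $a_z,b_z,c_z$ are real and the masses $m_a,m_b,m_c$ are non-negative by virtue of each of $a,b,c$ lying in $\mathbb V$, so the right-hand three columns of the array in \eqref{form:bothmasslesscoll} are accounted for.

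Next I would handle the $p$ and $q$ columns. Because $p,q\in\mathbb N$ by hypothesis, their mass entries are $0$. The event is a non-collision event, so by Definition~\ref{def:noncollevents} and Lemma~\ref{lem:yhnhvfgdy} there exist $\lambda,\mu\ge 0$, not both zero, with $\lambda p=\mu q$. Combined with $p,q\in\mathbb N$ and the assumption that at least one of $p,q$ is non-zero (w.l.o.g.\ $p\ne 0$), Lemma~\ref{lem:noncolfinegrainedprops} forces $e$ into one of the two sub-cases in which $\vec p$ and $\vec q$ are either zero or parallel and co-directed (sub-cases 5 or 7 of that lemma). At this point I would invoke the remaining rotational freedom of the $(a+b+c)$-rest frame: since rotations preserve the rest-frame condition on $a+b+c$, we may rotate so that whichever of $\vec p,\vec q$ is non-zero lies along the positive $z$-axis; the other is then either zero or likewise along the positive $z$-axis. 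Hence the transverse entries of $p$ and $q$ vanish and $p_z,q_z\ge 0$, with $p_z+q_z>0$ because $p\ne 0$ and a null vector with $\vec p\ne \vec 0$ has $p_z=|\vec p|>0$.

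Finally I would check that the array so constructed really does represent $p,q,a,b,c$ faithfully. For $p$ and $q$ this amounts to observing that for a null vector with $\vec p=(0,0,p_z)$, $p_z\ge 0$, the energy is $E_p=|\vec p|=p_z$, so the three rows (mass, complex transverse, $z$-component) determine the four-vector. For $a,b,c$ the three rows determine each four-vector in the standard way and the constraints listed on the right of \eqref{form:bothmasslesscoll} are exactly those enforced above.

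The only mildly subtle step is the simultaneous choice of frame: one must check that the two orientation tasks (putting $a+b+c$ at rest and aligning $\vec p,\vec q$ along the positive $z$-axis) do not conflict. They do not, because after boosting to the $(a+b+c)$-rest frame one still has a full $SO(3)$ of rotations available, and the parallelism of $\vec p$ and $\vec q$ ensured by the non-collision condition means a single rotation suffices to put both along the chosen axis with non-negative $z$-components. The remainder of the argument is bookkeeping.
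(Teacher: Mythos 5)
Your proposal is correct and follows essentially the same route as the paper: the paper's own justification is precisely the discussion immediately preceding the corollary, which boosts to the $(a+b+c)$ rest frame (legitimate since $(a+b+c)^2>0$ and $a+b+c\in\mathbb V$), invokes the non-collision condition $\lambda p=\mu q$ to conclude that $\vec p$ and $\vec q$ are collinear and co-directed, and then uses the residual rotational freedom to align them with the positive $z$-axis. Your additional bookkeeping (the case split via Lemma~\ref{lem:noncolfinegrainedprops}, the check that $E_p=p_z$ for the null momenta, and the remark that the boost and rotation choices do not conflict) just makes explicit what the paper leaves implicit.
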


\begin{remark}
The task in-hand is to identify which events of the form  \eqref{form:bothmasslesscoll} are \textbf{chiral} and which are not.  Recall that Definition~\ref{def:unhanded} says that an \textbf{event} is \textbf{non-chiral} if and only if its parity inverted form,
\begin{align}
\mathscr P\cdot e
=
\left[\begin{array}{cc|ccc}
     0 &0 & m_a & m_b & m_c \\
     0 & 0 & -\mathbf{a} & -\mathbf{b} & -\mathbf{c} \\
     -p_z & -q_z & -a_z & -b_z & -c_z
\end{array}\right]
,
\end{align}
can be mapped onto $e$ by the action of an element of the symmetry group, which in our case consists of Lorentz boosts, rotations, permutations of $(pq$) and permutations of $(abc)$. 
Since every element $g$ in the symmetry group has an inverse, we may instead choose to apply the above definition not to $\mathscr P\cdot e$ but to $g\cdot\mathscr P\cdot e$.  We will do so, choosing $g$ to be a rotation of 180 degrees about the $y$-axis, this being something that will ensure that the momenta of $p$ and $q$ are thereby already mapped into agreement:
\begin{align}
R_y(\pi) \cdot \mathscr P\cdot e
=
\left[\begin{array}{cc|ccc}
     0 &0 & m_a & m_b & m_c \\
     0 & 0 & \mathbf{a}^* & \mathbf{b}^* & \mathbf{c}^* \\
     p_z & q_z & a_z & b_z & c_z
\end{array}\right].
\end{align}
What elements of the symmetry group can be used to map $R_y(\pi) \cdot \mathscr P\cdot e$ back onto $e$?  Since we already have a match for $p$ and $q$ we must not break this match. This rules out the use of rotations except for those about the $z$-axis.  Boosts are also ruled out for those same reason.  Interchanges of $p$ and $q$ could, in principle, be applied if $p_z=q_z$. However such swaps achieve nothing.  All we need consider, therefore, are rotations $R_z(\theta)$ about the $z$-axis by an arbitrary angle $\theta$, together with any permutation of $a$, $b$ and $c$.  Though there are six such permutations, it will only be necessary to examine three of them (1, (ab) and (abc)) since the behaviour of the others may be determined by symmetry from these.   We therefore consider three cases as follows:
\begin{align}
 R_z(\theta) \cdot R_y(\pi) \cdot \mathscr P\cdot e
&=
\left[\begin{array}{cc|ccc}
     0 &0 & m_a & m_b & m_c \\
     0 & 0 & \mathbf{a}^* e^{i\theta} & \mathbf{b}^* e^{i\theta}& \mathbf{c}^* e^{i\theta}\\
     p_z & q_z & a_z & b_z & c_z
\end{array}\right],
\\
(ab)\cdot R_z(\theta) \cdot R_y(\pi) \cdot \mathscr P\cdot e
&=
\left[\begin{array}{cc|ccc}
     0 &0 & m_b & m_a & m_c \\
     0 & 0 & \mathbf{b}^* e^{i\theta} & \mathbf{a}^* e^{i\theta}& \mathbf{c}^* e^{i\theta}\\
     p_z & q_z & b_z & a_z & c_z
\end{array}\right],
\\
(abc)\cdot R_z(\theta) \cdot R_y(\pi) \cdot \mathscr P\cdot e
&=
\left[\begin{array}{cc|ccc}
     0 &0 & m_b & m_c & m_a \\
     0 & 0 & \mathbf{b}^* e^{i\theta} & \mathbf{c}^* e^{i\theta}& \mathbf{a}^* e^{i\theta}\\
     p_z & q_z & b_z & c_z & a_z
\end{array}\right].
\end{align}
The above cases will be considered with the complex number $\mathbf{a}$, $\mathbf{b}$ and $\mathbf{c}$ in polar form, \textit{vis} $\mathbf{a}=|\mathbf{a}|e^{i\alpha}$,
$\mathbf{b}=|\mathbf{b}|e^{i\beta}$ and
$\mathbf{c}=|\mathbf{c}|e^{i\gamma}$.
\end{remark}

\subsubsection{Case $h_1$: `$R_z(\theta) \cdot R_y(\pi) \cdot \mathscr P\cdot e = e$'}  

This case requires $$
((\mathbf{a}=0)\lor(\theta-\alpha+2 n_a \pi = \alpha))
\land
((\mathbf{b}=0)\lor(\theta-\beta+2 n_b \pi = \beta))
\land
((\mathbf{c}=0)\lor(\theta-\gamma+2 n_c \pi = \gamma))
$$
which is equivalent to
$$
((\mathbf{a}=0)\lor(\alpha=\frac 1 2 \theta+ n_a \pi ))
\land
((\mathbf{b}=0)\lor(\beta=\frac 1 2 \theta+ n_b \pi ))
\land
((\mathbf{c}=0)\lor(\gamma=\frac 1 2 \theta+ n_c \pi ))
$$
which is the same as requiring that all non-zero elements of the set $\{\mathbf{a},\mathbf{b},\mathbf{c}\}$ must have the same complex argument  to within a $\pi$, This is itself the same as saying that $\vec a$, $\vec b$, $\vec c$, $\vec p$ and $\vec q$ are all in a common plane.   Note: it is no surprise that $\vec a$, $\vec b$, $\vec c$ are in a common plane.   This is inevitable given that we are working in the $(a+b+c)$ rest frame.  The `extra' requirement here is only that $p$ and $q$ can be added while remaining in one plane.  Since $\vec p$ and $\vec q$ are necessarily collinear (and not anti-parallel) it is sufficient to test that $(\vec p+\vec q)$ lie in the same plane as each of the other two vectors. I.e.~in the $(a+b+c)$ rest frame we wish to enforce the requirement that $\vec a\times \vec b\cdot(\vec p+\vec q)=\vec b\times \vec c\cdot(\vec p+\vec q)=\vec c\times \vec a\cdot(\vec p+\vec q)=0$.  Using the square-bracket contraction notation defined in \eqref{eq:epscontractionnotation} this constraint may be written in Lorentz invariant form as 
$[a,b,p+q,a+b+c]=[b,c,p+q,a+b+c]=[c,a,p+q,a+b+c]=0$. However, given standard properties of determinants these are all equivalent to the single constraint $$[a,b,c,p+q]=0.$$

\subsubsection{Case $h_2$: `$(ab)\cdot R_z(\theta) \cdot R_y(\pi) \cdot \mathscr P\cdot e = e$'}  

This case requires $m_a=m_b$, $a_z=b_z$ and $|\mathbf{a}|=|\mathbf{b}|$ together with $$
((\mathbf{a}=0)\lor(
(\theta-\beta+2 n_a \pi = \alpha)
\land
(\theta-\alpha+2 n_b \pi = \beta)
))
\land
((\mathbf{c}=0)\lor(\theta-\gamma+2 n_c \pi = \gamma))
$$
which is equivalent to $m_a=m_b$, $a_z=b_z$ and $|\mathbf{a}|=|\mathbf{b}|$ together with \begin{align}
\label{eq:sduhjhbnghytfdfghjk}
((\mathbf{a}=0)\lor(
\frac 1 2 (\alpha+\beta) = \frac 1 2 \theta+ n \pi
))
\land
((\mathbf{c}=0)\lor(\gamma = \frac 1 2 \theta+ n_c \pi )).
\end{align}
Given that we have a constraint $|\mathbf{a}|=|\mathbf{b}|$ we may break this down into four sub-cases:
\begin{enumerate}
\item
$|\mathbf{a}|=|\mathbf{b}|=|\mathbf{c}|=0$,
\item
$|\mathbf{a}|=|\mathbf{b}|=0$ and $|\mathbf{c}|\ne0$, and
\item
$|\mathbf{a}|=|\mathbf{b}|\ne0$. \end{enumerate}
Sub-cases 1 and 2 above are manifestly already contained within Case $h_1$ and so do not represent `new' sources of \textbf{non-chirality} for these events and so may be ignored.
Only sub-case 3 tells us something new.
It tells us that a \textbf{non-collision event} of the type we are considering will be \textbf{non-chiral} if, in the $(a+b+c)$ rest frame, $\vec a$ and $\vec b$  have transverse\footnote{Here `transverse' means with respect to the $(\vec p+\vec q)$-axis in our frame.)} momenta with equal magnitudes and have the same component in the $(\vec p+\vec q)$-direction provided that, whenever $\vec c$ has a non-zero transverse momentum, $\vec c$ lies in the plane  in which $\vec a$ mirrors $\vec b$ and which also includes $(\vec p+\vec q)$.

Note that the mirror plane just mentioned is poorly defined if $\vec a$ and $\vec b$ are axial (that is to say they are along $\vec p+\vec q$ direction) however this case may be ignored since it is a sub-case of Case $h_1$.  The case where $\vec a=\vec b$ and off axis may also be ignored as another a sub-case of Case $h_1$, even though the mirror plane is well defined in this case.

The forms of \textbf{non-chiral event} contained in the above and not already contained within the $[a,b,c,p+q]=0$ condition of Case $h_1$ may be written
\begin{align}
(m_a=m_b)
\land
(a_z=b_z)
\land
(|\vec a|=|\vec b|)
\land
((\vec a-\vec b)\cdot \vec c = 0)
\end{align}
which, having defined $\Sigma=a+b+c$, can re-written in Lorentz-invariant form as
\begin{align}
(a^2=b^2)
\land
\left(
\GRAMTWO {a-b} \Sigma {p+q} \Sigma
=0
\right)
\land
\left(
\SYMGRAMTWO a \Sigma
=
\SYMGRAMTWO b \Sigma
\right)
\land
\left(
\GRAMTWO a \Sigma c \Sigma
=
\GRAMTWO b \Sigma c \Sigma
\right).\label{eq:prerfddsggjhgfd}
\end{align}
Using Lemma~\ref{lem:tetanotherlemmaaboutgramdetsfughfjdn} (having set $s$ therein to $\Sigma$) we note that \eqref{eq:prerfddsggjhgfd} may be written more symmetrically as
\begin{align}
(a^2=b^2)
\land
\left(
\GRAMTWO {a-b} \Sigma {p+q} \Sigma
=0
\right)
\land
\left(
\SYMGRAMTWO a \Sigma
=
\SYMGRAMTWO b \Sigma
\right)
\land
\left(
\GRAMTWO a \Sigma \Sigma \Sigma
=
\GRAMTWO b \Sigma \Sigma \Sigma
\right)
\end{align}
however the properties of Gram determinants ensure that $\left(
\GRAMTWO a \Sigma \Sigma \Sigma
=
\GRAMTWO b \Sigma \Sigma \Sigma
\right)$ is always true (since the left and right hand sides of this equation are both identically zero) so we may write our condition in the simpler and final form:
\begin{align}
(a^2=b^2)
\land
\left(
\GRAMTWO {a-b} \Sigma {p+q} \Sigma
=0
\right)
\land
\left(
\SYMGRAMTWO a \Sigma
=
\SYMGRAMTWO b \Sigma
\right)
.
\end{align}
\subsubsection{Case $h_3$: `$(abc)\cdot R_z(\theta) \cdot R_y(\pi) \cdot \mathscr P\cdot e = e$'}  

In this case, although many constraints are required to be satisfied we will only need to consider three. They are those which say $a_z=b_z=c_z$. These, together with the already made requirement that $a_z+b_z+c_z=0$ reduce to $a_z=b_z=c_z=0$.  These place all of $\vec a$, $\vec b$ and $\vec c$ on the plane transverse to $\vec p+\vec q$ meaning that, in the $(a+b+c)$ rest frame $\vec a \times \vec b \cdot (p+q)$, $\vec b \times \vec c \cdot (p+q)$ and $\vec c \times \vec a \cdot (p+q)$  must all be zero.  This case is therefore a sub-case of Case $h_1$ and adds nothing new.
\subsubsection{Summary of all cases}
\begin{corollary}
\label{cor:whennoncollpqmasslessabnoncollevisnonchiral}
A \textbf{non-collision event} $e\in\pqabcAllEvent$ for which both $p$ and $q$ are massless is \textbf{non-chiral} if and only if 
\begin{gather*}
(a+b+c)^2=0 \\
\separotor \lor {3.6cm} \\
(p=0) \land (q=0) \\
\separotor \lor {3.6cm} \\
[a,b,c,p+q]=0 \\
\separotor \lor {3.6cm}
\\
\left\{
\begin{array}{c}
(a^2=b^2)
\land
\left(
\GRAMTWO {a-b} \Sigma {p+q} \Sigma
=0
\right)
\land
\left(
\SYMGRAMTWO a \Sigma
=
\SYMGRAMTWO b \Sigma
\right)
\\
\separotor \lor {4cm}
\\
(b^2=c^2)
\land
\left(
\GRAMTWO {b-c} \Sigma {p+q} \Sigma
=0
\right)
\land
\left(
\SYMGRAMTWO b \Sigma
=
\SYMGRAMTWO c \Sigma
\right)
\\
\separotor \lor {4cm}
\\
(c^2=a^2)
\land
\left(
\GRAMTWO {c-a} \Sigma {p+q} \Sigma
=0
\right)
\land
\left(
\SYMGRAMTWO c \Sigma
=
\SYMGRAMTWO a \Sigma
\right)
\end{array}
\right\}
\end{gather*}
wherein $\Sigma$ stands for $a+b+c$ and where the square-bracket contraction $[a,b,c,p+q]$ uses notation defined in \eqref{eq:epscontractionnotation}.
\begin{proof}
This result is just an `or' of the conditions obtained after consideration of cases $h_1$ to $h_3$ above, together with the other cases that would be obtained from them by symmetry, and together with the two earlier results just preceding them concerning (i) the case $(a+b+c)^2=0$ and (ii) the case $p=q=0$.
\end{proof}
\end{corollary}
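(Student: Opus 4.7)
The plan is to reduce to a convenient frame, enumerate the symmetry-group elements that could witness non-chirality, and then translate the resulting geometric conditions into the Lorentz-invariant Gram-determinant form of the statement. First I would dispose of the two degenerate branches appearing in the corollary: if $(a+b+c)^2=0$ then $a,b,c$ all point along a single null direction, so together with $p,q$ (themselves null and collinear by the non-collision hypothesis of Definition~\ref{def:noncollevents}) the event contains too few linearly independent four-vectors to support a non-zero $\epsilon$-contraction, hence is non-chiral; similarly if $p=q=0$ the only remaining three vectors $a,b,c$ are again too few. With these cases handled, the residual assumption is $(a+b+c)^2>0$ with at least one of $p,q$ non-zero, so the $(a+b+c)$-rest frame exists and yields the parameterization~\eqref{form:bothmasslesscoll}.

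Next I would exploit the residual orientation freedom to align $\vec p$ and $\vec q$ along the $+z$-axis, which is possible precisely because $\lambda p = \mu q$ for some non-negative $\lambda,\mu$ not both zero. By Corollary~\ref{cor:simplehandedness} the event is non-chiral iff some $g$ in the symmetry group maps $\mathscr{P}\cdot e$ back to $e$. Pre-composing with $R_y(\pi)$ restores the $(p,q)$ entries, so the admissible $g$ must preserve $p$ and $q$: this excludes boosts and any rotation with axis other than $\hat z$, leaving $g = \sigma \circ R_z(\theta)$ with $\sigma \in S_3(abc)$. By the symmetry inherent in the $(abc)$-labels, only the three conjugacy-class representatives $\sigma \in \{1,(ab),(abc)\}$ need be examined.

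For each representative I would read off matching conditions on the polar data $|\mathbf{a}|e^{i\alpha},|\mathbf{b}|e^{i\beta},|\mathbf{c}|e^{i\gamma}$, on $a_z,b_z,c_z$, and on the masses. Case $h_1$ forces all non-zero transverse arguments to agree modulo $\pi$, i.e.~$\vec a,\vec b,\vec c$ coplanar with the beam, which is the Lorentz-invariant condition $[a,b,c,p+q]=0$ via Lemma~\ref{lem:abcvectripprodinLIform}. Case $h_2$ adds $m_a=m_b$, $|\vec a|=|\vec b|$, $a_z=b_z$, and (when $\vec c$ is off-axis) the requirement that $\vec c$ lie in the mirror plane of $\vec a,\vec b$; the genuinely new sub-case is then promoted to manifestly invariant form through Lemmas~\ref{lem:adotbinsomeframe}, \ref{lem:threemominframe} and \ref{lem:tetanotherlemmaaboutgramdetsfughfjdn} applied with $\Lambda=\Sigma=a+b+c$, yielding the three-fold Gram-determinant clause in the statement. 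Case $h_3$ collapses (using $a_z+b_z+c_z=0$) to $a_z=b_z=c_z=0$, placing $\vec a,\vec b,\vec c$ in the transverse plane and so reproducing $[a,b,c,p+q]=0$, contributing nothing new. The $(bc)$ and $(ca)$ rows of the corollary then follow by $S_3(abc)$-symmetry.

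The main obstacle is the bookkeeping in Case $h_2$: cleanly separating the sub-configurations that are genuinely new from those already absorbed into $h_1$ (for instance $\vec a=\vec b$, or $\vec a,\vec b$ on-axis, or $\vec c=\vec 0$), and then verifying that the frame-dependent conditions $|\vec a|=|\vec b|$, $a_z=b_z$, and $(\vec a-\vec b)\cdot\vec c=0$ rewrite exactly as the three Gram-determinant equations shown, with the trivially-vanishing Gram $\GRAMTWO a \Sigma \Sigma \Sigma$ correctly discarded. Once this is done, the stated iff is simply the disjunction over all cases examined.
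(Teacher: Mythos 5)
Your proposal is correct and follows essentially the same route as the paper: dispose of the $(a+b+c)^2=0$ and $p=q=0$ branches by linear-dependence counting, pass to the $(a+b+c)$-rest frame with $\vec p,\vec q$ along $+z$, reduce via Corollary~\ref{cor:simplehandedness} and a pre-composed $R_y(\pi)$ to the representatives $1$, $(ab)$, $(abc)$ acting with $R_z(\theta)$, and convert the surviving $h_2$ conditions to invariant form with Lemmas~\ref{lem:adotbinsomeframe}, \ref{lem:threemominframe} and \ref{lem:tetanotherlemmaaboutgramdetsfughfjdn} (including discarding the identically vanishing Gram term), exactly as the paper does.
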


\begin{remark}
The following event
\begin{align}
e=\left[
\begin{array}{l}
p^\mu = \left(2, \left(0, 0, 2 \right)\right),\\
q^\mu = \left(1, \left(0, 0, 1 \right)\right),\\
a^\mu = \left(3, \left(1, 0, 0 \right)\right),\\
b^\mu = \left(2, \left(0, 1, 0 \right)\right),\\
c^\mu = \left(2, \left(-1, -1, 0 \right)\right]
\end{array}
\right]
\label{secondnoncollecenteindfgdfg}
\end{align}
(which is written using the Lorentz vector
the display conventions given in Appendix~\ref{sec:lorentzvectornotation}) 
is an example of a \textbf{non-collision event} $e\in\pqabcAllEvent$ for which both $p$ and $q$ are massless and which is \textbf{chiral}.  The \textbf{non-collision} status may be verified by checking that $p$ and $q$ meet the requirements of Definition~\ref{def:noncollevents}. The \textbf{chiral} status may be checked by noting that since $a^2=3^2-1^2=8$, $b^2=2^2-1^2=3$ and $c^2=2^2-1^2-1^2=2$, every line of the condition given in Corollary~\ref{cor:whennoncollpqmasslessabnoncollevisnonchiral} is violated because:
\begin{align*}
(a+b+c)^2  \ge a^2+b^2+c^2 = 8+3+2 = 13 &> 0,\\
p &\ne 0,\\
[a,b,c,p+q]\propto \left|\ 
\begin{matrix}
3 & 2 &  2 & 3 \\
1 & 0 & -1 & 0 \\
0 & 1 & -1 & 0 \\
0 & 0 &  0 & 3
\end{matrix}
\ 
\right| = 21 &\ne 0,\\
a^2-b^2=8-3 = +5&\ne0,\\
b^2-c^2=3-2 = +1&\ne0,\\
c^2-a^2=2-8 = -6&\ne0.
\end{align*}
\end{remark}

\subsection{\textbf{Collision events} in either $\pqabcColEvent$ or $\pqabColEvent$}

\label{sec:averyimportantpairofresults}

\begin{theorem}
\label{thm:nonchiralpqabcEvents}
\label{thm:chiralpqabcEvents}
A \hypertarget{link:chiralpqabcevents}{ \textbf{collision event} $e\in\pqabcColEvent$ is \textbf{chiral}} if and only if $C$ is true, where
\begin{gather*}
C\equiv\numberthis \label{eq:veryfirstCdef}\\
\ANTIbignonchiralcond
\end{gather*}
which makes use of square-bracket contraction notation defined in \eqref{eq:epscontractionnotation} and which uses notation for Gram determinants $G$ and $\Delta$  defined in \eqref{eq:gram} and \eqref{eq:symmgram}.
\end{theorem}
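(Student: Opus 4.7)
By Corollary~\ref{cor:simplehandedness}, $e$ is non-chiral iff there exists $g\in G=SO^+(1,3)\times S_2(pq)\times S_3(abc)$ with $g\cdot e=\mathscr P\cdot e$; we prove the contrapositive of the stated claim, i.e.\ that $e$ is non-chiral iff the big disjunction $\neg C$ holds. Because $e$ is a collision event, Corollary~\ref{cor:somethiungusingaG} and Lemma~\ref{lem:collisionshaveaxis} provide the $(p+q)$-rest frame and a canonical beam axis; the plan is to fix coordinates in this frame so that $\vec p=-\vec q$ lies along $+z$, use the residual $SO(2)$ of rotations about $z$ to fix one further parameter if needed, and then enumerate the possible $g$ that can solve $g\cdot e=\mathscr P\cdot e$.

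Writing $g=L\cdot\tau\cdot\sigma$ with $L\in SO(3)$ (boosts out of the $(p+q)$-rest frame play no role once we have fixed that frame), $\tau\in S_2(pq)$, and $\sigma\in S_3(abc)$, the requirement that the $p$ and $q$ entries match splits into two structural cases: \textbf{(A)} $\tau=1$, forcing $L$ to send $-\vec p\mapsto\vec p$, so that $L$ is a $\pi$-rotation about some axis $\hat n\in xy$-plane; and \textbf{(B)} $\tau=(pq)$, forcing $m_p=m_q$ (equivalently $p^2=q^2$) and $L$ to fix $\vec p$, so $L$ is a rotation $R_z(\theta)$ about the beam. I would then tabulate, for each of the six $\sigma\in S_3$ and each case, the algebraic constraints on $(a,b,c)$ that make $L\cdot\tau\cdot\sigma\cdot(\mathscr P\cdot e)=e$ solvable in $L$. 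In Case~A with $\sigma=1$ one needs $\vec a,\vec b,\vec c,\hat n$ coplanar with $\hat z$, which (using Lemma~\ref{lem:abcvectripprodinLIform}) is Lorentz-invariantly the condition that each of $[a,b,p,q]$, $[b,c,p,q]$, $[c,a,p,q]$ vanish. In Case~A with $\sigma=(ab)$ one finds that $\vec a$ and $\vec b$ must be mirror images in a plane containing $\hat z$ and $\vec c$, giving the \texttt{axialCatapult} block via Lemma~\ref{lem:TransPlaneReflectionLemma}-type reasoning together with an equality of longitudinal components and magnitudes. In Case~B with $\sigma=1$ one finds $\vec a,\vec b,\vec c$ must all be transverse to the beam (so that $\mathscr P$ followed by $R_z(\theta)$ acts trivially only on the transverse components via rotation through the right $\theta$), yielding the $p^2=q^2$ block with $\PerpABF a{p-q}{p+q}$ etc. In Case~B with $\sigma=(ab)$ one finds the \texttt{cFinger} condition in which $c$ is transverse, $a_z=-b_z$, $|\vec a_T|=|\vec b_T|$, and $\vec a_T+\vec b_T$ is transverse with a consistent rotation angle, matching the block labelled \texttt{cFinger} in the statement.

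The six $\sigma\in S_3$ give rise only to the patterns just described together with their $S_3(abc)$-symmetric images: the three-cycles in Case~A degenerate to the planarity condition by Lemma~\ref{lem:whenonplaneperttorot} (and so add nothing new), and the three-cycles in Case~B collapse to the $p^2=q^2$ transversality condition (because $\sigma^3=1$ forces $a_z=b_z=c_z=0$ via $a_z+b_z+c_z=0$). Transpositions $(bc)$ and $(ca)$ give the remaining two rows of each block by relabelling. Assembling the disjunction of all the cases reproduces exactly the stated $\neg C$; translating each frame-adapted condition into manifestly Lorentz-invariant form is done once and for all using Lemmas~\ref{lem:energyinframe}, \ref{lem:adotbinsomeframe}, \ref{lem:threemominframe}, \ref{lem:transversemom}, \ref{lem:transmomforcolevent} and \ref{lem:modamodbisdotprodrel}, together with the Gram-determinant identities cited in the preceding subsections (e.g.\ the rewrites leading to \eqref{eq:possvariationone}--\eqref{eq:possvariationthr}).

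The main obstacle is not any single calculation but the bookkeeping: one must verify that the twelve (case, $\sigma$)-combinations really do collapse to the four blocks displayed in $\neg C$, that each Lorentz-invariant rewriting is faithful (no spurious solutions introduced when squaring magnitudes or when a Gram determinant is rewritten by row/column operations), and that conditions read off in one fixed orientation of the $(p+q)$-rest frame remain invariant under the residual $SO(2)$ choice. I expect a non-trivial amount of care to be needed in Case~B with $\sigma=(ab)$, where the rotation angle $\theta$ must simultaneously solve the transverse-momentum matching for both the $(ab)$-pair and for the fixed point $c$; this is where the coupling between the $|\vec a_T|=|\vec b_T|$, the longitudinal equality, and the \texttt{cFinger} perpendicularity constraint arises, and it is the step most likely to be overlooked or mis-stated.
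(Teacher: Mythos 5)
Your proposal is essentially the paper's own proof: the paper likewise fixes the $(p+q)$ rest frame with $\vec p$ along $+z$, reduces the residual symmetry to rotations about the beam, enumerates the twelve elements $h_1,\dots,h_{12}$ of $\{1,(pq)\}\otimes S_3(abc)$ (your Cases A and B), works out the black cases in detail with the transpositions and three-cycles collapsing into the $h_1$/$h_7$ conditions, and then rewrites each frame-adapted condition in Lorentz-invariant Gram-determinant form before negating the disjunction. One small correction: in your Case B the three-cycle collapse does not follow from $a_z+b_z+c_z=0$ (that holds only in the non-collision $(a+b+c)$-rest-frame analysis, not here); rather the cyclic matching forces $a_z=-b_z$, $b_z=-c_z$, $c_z=-a_z$, hence $a_z=b_z=c_z=0$, which gives the same conclusion.
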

\begin{proof} 
The proof of Theorem~\ref{thm:nonchiralpqabcEvents} is long and not particularly illuminating.
Exposition of it is therefore delayed until Sections~\ref{sec:longproofstart} to \ref{sec:longproofend} so as not to disrupt the arguments about to be made.
\end{proof}

\begin{corollary}
It follows from Theorem~\ref{thm:nonchiralpqabcEvents} almost by inspection\footnote{The result may be obtained either by substituting the zero four-vector in place of $c$ in the statement of Theorem~\ref{thm:nonchiralpqabcEvents} and then simplifying the resulting expression, or by considering the steps of the proof of Theorem~\ref{thm:nonchiralpqabcEvents} which would have been required had there been no $c$ to begin with.} that 
a \textbf{collision event} $e\in\pqabColEvent$ is \textbf{chiral} if and only if $B$ is true, where
\begin{gather*}
B\equiv\numberthis \label{eq:veryfirstBdef}\\
\ANTIsmallnonchiralcond
\end{gather*}
which makes use of square-bracket contraction notation defined in \eqref{eq:epscontractionnotation} and which uses notation for Gram determinants $G$ and $\Delta$  defined in \eqref{eq:gram} and \eqref{eq:symmgram}.
\end{corollary}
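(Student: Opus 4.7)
The plan is to obtain this corollary by specialising Theorem~\ref{thm:nonchiralpqabcEvents} to a degenerate case rather than repeating the whole proof. For any collision event $e = ((p,q,a,b), G_e) \in \pqabColEvent$, form the embedding $\tilde e = ((p,q,a,b,0), G_{\tilde e}) \in \pqabcColEvent$ obtained by appending the zero four-vector as a third member of the $(abc)$-triplet and enlarging the permutation group from $S_2(ab)$ to $S_3(abc)$. The collision condition $p^2q^2 < (p.q)^2$ depends only on $p$ and $q$, so it is preserved, and $\tilde e$ really does land in $\pqabcColEvent$.

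The first substantive step is to show that $e$ is chiral if and only if $\tilde e$ is chiral. One direction is immediate: any $g \in SO^+(1,3) \times S_2(pq) \times S_2(ab)$ witnessing $e \sim \mathscr P \cdot e$ lifts to an element of the larger group that fixes the third $(abc)$-slot, and therefore witnesses $\tilde e \sim \mathscr P \cdot \tilde e$. Conversely, a group element witnessing non-chirality of $\tilde e$ must map the zero vector in the third slot to the zero vector (parity leaves $0$ unchanged); in the generic case where neither $a$ nor $b$ vanishes, the only permutations in $S_3(abc)$ satisfying this constraint are those in $S_2(ab)$, and the witness descends to $e$. The degenerate cases where $a$ or $b$ is itself zero can be handled by a short direct argument using Lemma~\ref{lem:everynoncollevinpqabxallisnonchiral}-style reasoning, since such events contain at most three linearly independent four-momenta and so are automatically non-chiral (and consistently rendered non-chiral by $B$, which collapses $[a,b,p,q]$ to zero).

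Given the equivalence, the remaining task is to substitute $c = 0$ into the condition $C$ and simplify. The brackets $[b,c,p,q]$ and $[c,a,p,q]$ vanish identically, leaving $[a,b,p,q] \neq 0$ as the sole survivor of the first block. The $(bca)$- and $(cab)$-rotations of the axial-catapult constraint and of the cFinger constraint each contain a mass-equality clause ($b^2 = c^2$ or $c^2 = a^2$) which, combined with $c = 0$, forces $a$ or $b$ to be massless and reduces those disjuncts to sub-cases already covered by the $(abc)$-orientation; hence only the $(abc)$-orientation survives as an independent disjunct. The perpendicularity clause on $c$ likewise becomes trivial. What remains matches $B$ disjunct-by-disjunct: in particular the `small' axial-catapult and cFinger constraints written into $B$ depend only on the first two listed momenta and so are identical to the $(abc)$-orientation of their `big' counterparts once $c$ is removed.

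The main obstacle is therefore purely bookkeeping rather than mathematics: one must verify, line by line, that every $c$-dependent conjunct of $C$ either vanishes identically, is absorbed into another disjunct, or reduces to the `small' form appearing in $B$, and that no simplification accidentally strengthens or weakens the condition. The alternative route mentioned in the footnote --- re-running the proof of Theorem~\ref{thm:nonchiralpqabcEvents} from scratch in the absence of a $c$ momentum --- would sidestep this simplification but duplicate most of the argument, so specialisation is the more economical route.
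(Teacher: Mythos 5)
Your proposal is correct and takes essentially the same route as the paper, whose entire proof is the footnote's instruction to substitute the zero four-vector for $c$ in Theorem~\ref{thm:nonchiralpqabcEvents} and simplify; your embedding argument, relating chirality of $e$ under $SO^+(1,3)\times S_2(pq)\times S_2(ab)$ to chirality of the $c=0$ event under the enlarged $S_3(abc)$ symmetry, merely makes explicit a step the paper leaves implicit. The only (immaterial) imprecision is in the bookkeeping you defer: with $c=0$ the $(bc)$- and $(ca)$-orientations of the axial-catapult block are in fact unsatisfiable rather than subsumed (their $b^2=c^2$ and $\Delta_2(b-c,p+q)=0$ clauses force $b=0$, clashing with the $b\ne c$ addon), and the corresponding cFinger orientations collapse onto the transverse-plane block rather than onto the $(abc)$-orientation, so the reduction to $B$ goes through exactly as you expect.
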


\begin{definition}
\label{def:helpfulnotationforsymms}
Because the constraints $C$ and $B$ just written are rather big and bulky, we will benefit from writing them in a compressed notation. It will be particularly important to be able to see, at a glance, which quantities within them change sign under interchange of $p$ and $q$ (these will be denoted with a subscript ${\cdot}_{pq}$), which quantities are completely antisymmetric in $a$, $b$ and/or $c$ (these will be denoted with superscripts ${\cdot}^{abc}$ or ${\cdot}^{ab}$ etc.), and which change sign under parity (these will be denoted with a $\mathscr{P}$).   Having made those symmetries clear, it will then be easier to plot a path toward combining the ingredients into functions which are invariant under the required interchange symmetries. We therefore make the following definitions:
\hypertarget{link:abridgedchiralpqabcevents}{}
\begin{align}
\mathscr{P}^{ab}_{pq}(0)
&\equiv
\epsShort^{ab}_{pq}
\equiv
[a,b,p,q],\qquad\text{(square-bracket contraction defined in \eqref{eq:epscontractionnotation})}
\\
  \POne\label{eq:p1def}
  &\equiv
  \epsShort^{ab}_{pq} 
  +
  \epsShort^{bc}_{pq} 
  +
  \epsShort^{ca}_{pq},
\\
\PTwo\label{eq:p2def}
&\equiv
  \epsShort^{ab}_{pq} 
  \epsShort^{bc}_{pq} 
  \epsShort^{ca}_{pq},
  \\
  \PThr
  &\equiv
  (\epsShort^{ab}_{pq} -
  \epsShort^{bc}_{pq} )
  (\epsShort^{bc}_{pq} -
  \epsShort^{ca}_{pq})
  (\epsShort^{ca}_{pq}-
  \epsShort^{ab}_{pq} ),\label{eq:p3def}
    \\
    \PCom 
    &\equiv
    \POne
    + i
    \PTwo
    \\
  g^{x}_{y} 
  &\equiv \GRAMTWO{x}{p+q}
          {y}{p+q},\label{eq:defofgramshorthand}
     \qquad\text{(using a Gram determinant defined in \eqref{eq:gram})}
  \\  
  G^{\phantom{ab}}_{pq}(0)\equiv
  \FZerFIXEDWIDTH
  &\equiv
p^2-q^2,
\\
\FOneFIXEDWIDTH
  \label{eq:F1def}
  &\equiv
  g^{a+b+c}_{p-q},
\\
  \FTwoFIXEDWIDTH
  \label{eq:F2def}
  &\equiv
  g^a_{p-q} 
  g^b_{p-q} 
  g^c_{p-q},
\\ 
\FThrFIXEDWIDTH
  &\equiv
  g^{a-b}_{p-q} 
  g^{b-c}_{p-q} 
  g^{c-a}_{p-q},\label{eq:F3abcpqdef}
  \\
 \FZerOneTwo
 &\equiv\left(
 \FZer,
 \FOne,
 \FTwo
 \right),
  \\
    G^{\phantom{ab}}_{pq}(1) &\equiv
  g^{a+b}_{p-q},
  \\
    G^{ab}_{pq}(2) &\equiv
  g^{a-b}_{p-q},\qquad\qquad\text{and}
    \\
  f^{ab}
  &\equiv
a^2-b^2.
\end{align}
\end{definition}
\begin{lemma}
\label{lem:pqabcchiralcond}
A \textbf{collision event} $e\in\pqabcColEvent$ is \textbf{chiral} if and only if $C$ is true where:
\begin{gather*}
C = \numberthis \label{eq:secondtimeCisdefined} \\
\toperator \\
(
\POne
  \ne0
)
\lor
(
\PTwo
  \ne0
)
\lor
(
\PThr
  \ne0
)
\\ 
\separatorlor \land
\\ 
\left\{
\begin{array}{c}
(f^{ab}\ne0) \lor (g^{a-b}_{p-q}\ne0) \lor (g^{a-b}_{a+b}\ne0) \lor (g^{a-b}_{a+b+c}\ne0) \lor (a=b) \\ 
\smallseparatorlor \land \\
(f^{bc}\ne0) \lor (g^{b-c}_{p-q}\ne0) \lor (g^{b-c}_{b+c}\ne0) \lor (g^{b-c}_{a+b+c}\ne0) \lor (b=c) \\ 
\smallseparatorlor \land \\
(f^{ca}\ne0) \lor (g^{c-a}_{p-q}\ne0) \lor (g^{c-a}_{c+a}\ne0) \lor (g^{c-a}_{a+b+c}\ne0) \lor (c=a) 
\end{array}
\right\}
\\
\separatorlor \land
\\ 
(
\FZer
  \ne0
)
\lor
(
  \FOne
  \ne0
)
\lor
(
\FTwo
  \ne0
)
\lor
(
\FThr
  \ne0
)
\\ 
\separatorlor \land
\\ 
(
 \FZer
  \ne0
)
\lor
(
  \FOne
  \ne0
)
\lor
\left\{
\begin{array}{c}
    (f^{ab}\ne0) \lor (g^{a+b}_{p-q}\ne0) \lor (\Delta_3(a-b,p,q)\ne0) \\
    \tinyseparatorlor \land \\
    (f^{bc}\ne0) \lor (g^{b+c}_{p-q}\ne0) \lor (\Delta_3(b-c,p,q)\ne0) \\
    \tinyseparatorlor \land \\
    (f^{ca}\ne0) \lor (g^{c+a}_{p-q}\ne0) \lor (\Delta_3(c-a,p,q)\ne0) 
\end{array}
\right\}
\\
\toperator
\end{gather*}
\begin{proof}
This is an almost trivial re-writing of the condition $C$ (of \eqref{eq:veryfirstCdef})   using  the notation of  Definition~\ref{def:helpfulnotationforsymms} together with a few applications of Lemma~\ref{lem:tripledecomp}.
\end{proof}
\end{lemma}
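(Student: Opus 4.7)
The plan is to show that the compressed expression for $C$ stated in the lemma is literally the condition of \eqref{eq:veryfirstCdef} re-expressed in the shorthand of Definition~\ref{def:helpfulnotationforsymms}, so the proof is a block-by-block matching exercise. I will pair each of the seven rows of the original $C$ (one $[a,b,p,q]$-type disjunction, three axial-catapult rows, one $(p^2\ne q^2)$ row, and three cFinger rows) with a row, or with a factored combination of rows, of the lemma's statement.

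First I would translate the three axial-catapult rows and the three cFinger rows by pure substitution. Using $f^{xy}=x^2-y^2$ and $g^x_y=\GRAMTWO{x}{p+q}{y}{p+q}$, and recognizing that each atomic clause of the form $\GRAMTWO{u}{p+q}{p-q}{p+q}\ne 0$ is just $g^u_{p-q}\ne 0$, every inequality in each row of the original has a transparent counterpart in the re-written version. For the cFinger block I would additionally apply the distributive law
\[
\bigwedge_{i\in\{(ab),(bc),(ca)\}}\!\bigl(A\lor X_i\bigr)\ \equiv\ A\ \lor\ \bigwedge_{i}X_i
\]
to factor the shared clauses $(\FZer\ne 0)$ and $(\FOne\ne 0)$ out of the three cyclic cFinger rows, arriving at the condensed presentation of the last block in the lemma's statement.

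The substantive step is the use of Lemma~\ref{lem:tripledecomp}, which I expect to invoke twice. The first application is on the three real numbers $[a,b,p,q]$, $[b,c,p,q]$, $[c,a,p,q]$: it says ``at least one of them is non-zero'' is equivalent to ``at least one of their sum, product, or alternating-difference product is non-zero'', i.e.~$(\POne\ne 0)\lor(\PTwo\ne 0)\lor(\PThr\ne 0)$, by \eqref{eq:p1def}--\eqref{eq:p3def}. The second application is on the trio $g^a_{p-q}$, $g^b_{p-q}$, $g^c_{p-q}$, where the linearity of the Gram determinant identifies their sum with $\FOne=g^{a+b+c}_{p-q}$, their product with $\FTwo$, and their alternating product with $\FThr$; the $(p^2\ne q^2)$ clause already sitting in that row becomes $(\FZer\ne 0)$ via $\FZer=p^2-q^2$.

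The main obstacle is purely bookkeeping: keeping the cyclic structure on $(abc)$ aligned across the two formats, checking that the $a+b+c$ symmetric variant in the cFinger clause corresponds correctly to $\FOne$ rather than being absorbed elsewhere, and verifying that no atomic clause is silently dropped or duplicated when distributivity is invoked. Once all seven rows have been paired with their counterparts, the lemma follows with no further mathematical content beyond Lemma~\ref{lem:tripledecomp} and the definitions.
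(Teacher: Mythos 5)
Your proposal is correct and follows essentially the same route as the paper: the paper's own proof is precisely the observation that \eqref{eq:secondtimeCisdefined} is a notational re-writing of \eqref{eq:veryfirstCdef} via Definition~\ref{def:helpfulnotationforsymms} plus "a few applications" of Lemma~\ref{lem:tripledecomp}, and your two invocations of that lemma (on the trio $[a,b,p,q]$, $[b,c,p,q]$, $[c,a,p,q]$ and on $g^a_{p-q}$, $g^b_{p-q}$, $g^c_{p-q}$, with Gram-determinant linearity giving $\FOne$, $\FTwo$, $\FThr$ and $\FZer=p^2-q^2$) together with the distributive factoring of the shared $(\FZer\ne0)\lor(\FOne\ne0)$ clauses out of the three cFinger rows are exactly the steps the paper leaves implicit. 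No gap; your write-up simply makes the bookkeeping explicit.
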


\begin{lemma}
\label{lem:pqabchiralcond}
A \textbf{collision event} $e\in\pqabColEvent$ is \textbf{chiral} if and only if $B$ is true where:
\begin{gather*}
B = \numberthis\label{eq:secondtimeBisdefined}\\
\toperator \\
  \mathscr{P}^{ab}_{pq}(0)
  \ne 0
\\ 
\separatorlor \land
\\ 
(f^{ab}\ne0) \lor
(G^{ab}_{pq}(2) \ne0)
\lor (g^{a-b}_{a+b}\ne0) 
\\
\separatorlor \land
\\ 
(
 G^{\phantom{ab}}_{pq}(0)
  \ne0
)
\lor
(
  G^{\phantom{ab}}_{pq}(1)
  \ne0
)
\lor
(
G^{ab}_{pq}(2)
  \ne0
)
\\ 
\separatorlor \land
\\ 
(
 G^{\phantom{ab}}_{pq}(0)
  \ne0
)
\lor
(
  G^{\phantom{ab}}_{pq}(1)
  \ne0
)
\lor
\left\{
    (f^{ab}\ne0) 
    \lor (\Delta_3(a-b,p,q)\ne0) 
\right\}
.
\\
\toperator
\end{gather*}
\begin{proof}
This is largely just a matter of substituting the new notation into the expression already given in \eqref{eq:veryfirstBdef}).  Only two things are worth remarking:
\begin{itemize}
\item
the `$\ldots \lor(a=b)$' part of \eqref{eq:veryfirstBdef} has been omitted as it is incompatible with the very first `$\epsShort^{ab}_{pq}\ne 0$' requirement, and 
\item na\"ive notational substitution for $\left(\GRAMTWO a {p+q} {p-q} {p+q}\ne 0 \right)\lor \left(\GRAMTWO b {p+q} {p-q} {p+q}\ne0\right)$ would result in $\left(g^a_{p-q}\ne 0\right)\lor\left(g^b_{p-q}\ne 0\right)$ however it has been rendered instead as
$(
  G^{\phantom{ab}}_{pq}(1)
  \ne0
)
\lor
(
G^{ab}_{pq}(2) 
  \ne0
)$
meaning $\left(g^{a+b}_{p-q}\ne 0\right)\lor\left(g^{a-b}_{p-q}\ne 0\right)$ since we crave statements which are either $(ab)$-even or $(ab)$-odd.
\end{itemize}
\end{proof}
\end{lemma}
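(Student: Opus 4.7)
The plan is to treat Lemma~\ref{lem:pqabchiralcond} as essentially a bookkeeping exercise: it ought to follow from the Corollary immediately above (which provides condition \eqref{eq:veryfirstBdef}) by direct substitution of the shorthand of Definition~\ref{def:helpfulnotationforsymms}. This mirrors exactly the treatment used for Lemma~\ref{lem:pqabcchiralcond}, whose proof I would use as a template.

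First, I would translate \eqref{eq:veryfirstBdef} line-by-line into the new notation. The top block $[a,b,p,q]\ne 0$ becomes $\mathscr{P}^{ab}_{pq}(0)\ne 0$ by definition. Each Gram-determinant perpendicularity-failure clause of the form $G(x,p+q,p-q,p+q)\ne 0$ translates into the corresponding abbreviation $g^{x}_{p-q}\ne 0$ via \eqref{eq:defofgramshorthand}. Squared-mass differences become $f^{ab}\ne 0$ and the beam-mass asymmetry becomes $G_{pq}(0)\ne 0$. The block containing $\SYMGRAMTHR{a-b}{p}{q}$ unpacks via Lemma~\ref{lem:tripledecomp} into a disjunction over $f^{ab}$, $g^{a-b}_{p-q}$ and $g^{a-b}_{a+b}$, exactly as was done in Lemma~\ref{lem:pqabcchiralcond}.

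Two subtleties then require attention. First, every atomic clause of the final condition needs to be either even or odd under $a\leftrightarrow b$, but a na\"ive translation of the separate perpendicularity-failure conditions for $a$ and for $b$ would yield $(g^a_{p-q}\ne 0)\lor(g^b_{p-q}\ne 0)$, neither of whose operands has a manifest parity under $(ab)$. By the linearity of $g$ in its first argument, however, the pair $(g^a_{p-q}, g^b_{p-q})$ vanishes identically iff the pair $(g^{a+b}_{p-q}, g^{a-b}_{p-q})$ does, so the disjunction may be rewritten as $(G_{pq}(1)\ne 0)\lor(G^{ab}_{pq}(2)\ne 0)$ without loss of logical content. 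Second, the $(a=b)$ disjunct that appears in the analogous $\pqabcColEvent$ condition \eqref{eq:secondtimeCisdefined} is absorbed here: the first conjunct demands $\mathscr{P}^{ab}_{pq}(0)=[a,b,p,q]\ne 0$, but $a=b$ forces $[a,b,p,q]=0$, so the $(a=b)$ disjunct is incompatible with the rest of the conjunction and may be silently dropped.

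The main obstacle is purely record-keeping: ensuring that each atomic clause of \eqref{eq:veryfirstBdef} lands in the correct conjunct of \eqref{eq:secondtimeBisdefined} and that the symmetry-adapted rewrite genuinely preserves every disjunction. No new geometric or algebraic content is required beyond what is already available via Lemma~\ref{lem:pqabcchiralcond} and Lemma~\ref{lem:tripledecomp}; once the correspondence table is laid out, the equivalence follows by inspection.
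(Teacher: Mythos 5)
Your proposal is correct and follows the paper's own route: direct notational substitution from \eqref{eq:veryfirstBdef}, plus exactly the two remarks the paper itself makes — dropping the $(a=b)$ disjunct because it is incompatible with the conjoined requirement $[a,b,p,q]\ne 0$, and replacing $(g^a_{p-q}\ne0)\lor(g^b_{p-q}\ne0)$ by the $(ab)$-symmetry-adapted $(g^{a+b}_{p-q}\ne0)\lor(g^{a-b}_{p-q}\ne0)$ via linearity of the Gram determinant. One bookkeeping slip to fix: in this two-body case no appeal to Lemma~\ref{lem:tripledecomp} is needed or made; the term $\Delta_3(a-b,p,q)\ne0$ is carried over verbatim into the final conjunct of \eqref{eq:secondtimeBisdefined}, while the disjunction $(f^{ab}\ne0)\lor(g^{a-b}_{p-q}\ne0)\lor(g^{a-b}_{a+b}\ne0)$ is the direct transcription of the axial-catapult block of $B$, not an ``unpacking'' of $\Delta_3(a-b,p,q)$ (which would not be a valid equivalence anyway). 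With that misattribution removed, your argument coincides with the paper's proof.
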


\subsubsection{Start of proof of Theorem~\ref{thm:nonchiralpqabcEvents}}


\label{sec:longproofstart}

\begin{remark}
\followUpInFuture[It feels like there must be a much better way of proving Theorem~\ref{thm:chiralpqabcEvents} than the method I used.]
Given Lemma~\ref{lem:collisionshaveaxis} we can work with any event in $\pqabcColEvent$ in the $(p+q)$ rest frame, having aligned $\vec{p}$ with the positive $z$-axis and $\vec{q}$ with the negative $z$-axis.  In such a frame, the following quantities in $\hat e$ uniquely describe any \textbf{event} in $\pqabcColEvent$:
\begin{align}
\hat e
=
\left[\begin{array}{cc|ccc}
     m_p & m_q & m_a & m_b & m_c \\
     0 & 0 & \mathbf{a} & \mathbf{b} & \mathbf{c} \\
     p & -p & a_z & b_z & c_z
\end{array}\right]\qquad \text{with}\qquad\left(\begin{array}{l}
m_p, m_q,m_a,m_b,m_c\ge0, \\
p>0, \\
\mathbf{a},\mathbf{b},\mathbf{c}\in\mathbb{C},\ \text{and}\\
a_z,b_z,c_z\in\mathbb{R}
\end{array}\right),
\end{align}
provided that we take, respectively, the real and imaginary parts of $\mathbf{a}$ to represent the $x$ and $y$ components of $a$ (and similarly for $\mathbf{b}$ and $\mathbf{c}$).
\end{remark}
\begin{definition}
We name a matrix $\hat e$ which has the properties shown above to be the \textbf{event representation} of an event $e\in\pqabcColEvent$. 
\end{definition}
\begin{remark}
One may naturally extend the action of the parity operator, $\mathscr{P}$, and of elements of the symmetry group $G$, to an action on    \textbf{event representations} $\hat e$.  We will make such an extensions implicitly rather than explicitly, as happens in the next lemma:
\end{remark}
\begin{lemma}
\begin{align}
\label{eq:perep}
\mathscr{P} \cdot \hat e
=
\left[\begin{array}{cc|ccc}
     m_p & m_q & m_a & m_b & m_c \\
     0 & 0 & -\mathbf{a} & -\mathbf{b} & -\mathbf{c} \\
     -p & p & -a_z & -b_z & -c_z
\end{array}\right].
\end{align}
\begin{proof}
The result follows directly from Definition~\ref{def:paroponthings}.
\end{proof}
\end{lemma}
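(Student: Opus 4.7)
The plan is to unfold Definition~\ref{def:paroponthings} row by row on the matrix encoding $\hat e$. By the preceding remark, the action of $\mathscr{P}$ on the \textbf{event representation} $\hat e$ is inherited from its action on the underlying event $e=((p,q,a,b,c),G)$, which in turn is defined component-wise on each four-vector by $\mathscr{P}\cdot(E,\vec r)\equivdef(E,-\vec r)$. So the entire content of the lemma is to track how the three rows of $\hat e$ -- which package, respectively, the invariant masses, the transverse spatial components, and the $z$-components of those five four-vectors -- transform under the spatial flip $\vec r\mapsto-\vec r$.

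For the top row, each entry $m_i$ is determined by the Lorentz invariant $p_i^\mu p_{i\mu}=E_i^2-|\vec p_i|^2$, and since $|\vec p_i|^2$ is unchanged by the sign flip, the entire top row is preserved. For the middle row, the complex number $\mathbf{a}=a_x+ia_y$ by convention stores the two transverse spatial components of $\vec a$; the parity flip therefore sends $\mathbf{a}\mapsto-\mathbf{a}$, and similarly $\mathbf{b}\mapsto-\mathbf{b}$ and $\mathbf{c}\mapsto-\mathbf{c}$, while the $(p,q)$ entries remain $0$ because in the $(p+q)$-rest frame $\vec p$ and $\vec q$ are purely along the $z$-axis (having no transverse part to negate). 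For the bottom row, every entry is a $z$-component of a three-momentum, so each simply acquires an overall minus sign: the entry $p$ becomes $-p$, the entry $-p$ becomes $+p$, and $a_z$, $b_z$, $c_z$ become their negatives. Collecting these three rows yields exactly the matrix on the right-hand side of \eqref{eq:perep}.

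There is no genuine obstacle here: once the bookkeeping conventions of $\hat e$ (which slice of each four-vector sits in which row of the matrix) are made explicit, the statement is an immediate term-by-term application of Definition~\ref{def:paroponthings}. The one subtlety worth flagging, so the reader is not confused later, is that we are \emph{not} rotating the post-parity event into a fresh canonical frame in which $\vec p$ once again points along $+\hat z$; we are simply reading off the raw Lorentz components of $\mathscr{P}\cdot e$ in the same frame and orientation that were fixed before $\mathscr{P}$ was applied. This is why the first two bottom-row entries are written as $-p$ and $+p$ rather than being swapped back to $+p,-p$ by an implicit $(pq)$-relabelling.
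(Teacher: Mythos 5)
Your proof is correct and is essentially the paper's own argument: the paper simply asserts that the result "follows directly from Definition~\ref{def:paroponthings}", and your row-by-row unfolding (masses invariant, transverse complex entries negated, $z$-components sign-flipped with no re-rotation into a canonical frame) is exactly the bookkeeping that assertion leaves implicit. Nothing further is needed.
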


\noindent Recall that we are attempting to find the conditions under which \textbf{collision events} $e\in\pqabcColEvent$ are \textbf{non-chiral}. Corollary~\ref{cor:simplehandedness} reminds us that in order to identify such events it suffices to find all those $\hat e$ for which there exists a $g\in G$ such that $g\cdot \hat e = \mathscr{P} \cdot \hat e$.  However, because
\begin{align}
\label{eq:gpe}
\gyzpi\cdot\mathscr{P} \cdot \hat e
=
\left[\begin{array}{cc|ccc}
     m_p & m_q & m_a & m_b & m_c \\
     0 & 0 & -\mathbf{a}^* & -\mathbf{b}^* & -\mathbf{c}^* \\
     p & -p & a_z & b_z & c_z
\end{array}\right]
\end{align}
has fewer minus signs than $\mathscr{P}\cdot\hat e$ and is identical to $\hat e$ in all but three `cells', it is more convenient for us to find the \textbf{non-chiral collision events} by searching for all $\hat e$ for which there exists a $g\in G$ such that  $g\cdot \hat e = \gyzpi \cdot \mathscr{P} \cdot \hat e$.  The inclusion of the $\gyzpi$ has no effect on the set of events so obtained since $G$ is a group. Furthermore, every element of $g\in G$ can be written as a product of an element of $SO^+(1,3)$ with one of the twelve elements of the following group $H$:
$$
H=\left\{1,\gyzpi (pq)\right\}\otimes\left\{
     1, (ab), (ac), (bc), (abc), (cba)
     \right\}
$$
in which, once more, the $\gyzpi$ operator has been added only for convenience, to reduce the number of minus signs appearing in later work.  We can therefore reduce the task of finding \textbf{non-chiral collider events} to having to find only those events $\hat e$ for which there exists an element $s\in SO^+(1,3)$ such that $s\cdot h\cdot \hat e = \gyzpi \cdot \mathscr{P} \cdot \hat e$ for some $h$ in $H$.

It is easy to check that every for every element $h\in H$: 
\begin{align}
s\cdot h\cdot \hat e 
&=
s\cdot h \cdot
\left[\begin{array}{cc|ccc}
     m_p & m_q & m_a & m_b & m_c \\
     0 & 0 & \mathbf{a} & \mathbf{b} & \mathbf{c} \\
     p & -p & a_z & b_z & c_z
\end{array}\right]
\\
&=
s\cdot 
\left[\begin{array}{cc|ccc}
     ? & ? & ? & ? & ? \\
     0 & 0 & \mathbf{?} & \mathbf{?} & \mathbf{?} \\
     p & -p & ? & ? & ?
\end{array}\right]\qquad\text{with $p>0$.}\label{eq:targetforshe}
\end{align}
Since we require that the RHS of Equation~(\ref{eq:targetforshe}) must match the $\gyzpi\cdot \mathscr{P}\cdot \hat e$ shown in Equation~(\ref{eq:gpe}) we can see that the only elements $s\in SO^+(1,3)$ which need be considered are those  which leave the spatial components of $\vec p$ and $\vec q$ invariant.\footnote{That we can make such a statement derives ultimately from Lemma~\ref{lem:collisionshaveaxis}.}   Any boost would necessarily modify some component of $\vec p$ or $\vec q$ we know that the only permissible options for $s$ are rotations which, since they must leave the $z$-axis invariant, are rotations about $z$.  We have therefore reduced the task of proving Theorem~\ref{thm:nonchiralpqabcEvents} to finding the events $\hat e$ for which there exists an angle $\theta$ and an element $h\in H$ such that
\begin{align}
\gaxrot{\theta}\cdot h\cdot \hat e = \gyzpi \cdot \mathscr{P} \cdot \hat e \label{eq:oneoftwelve}
\end{align}
whose RHS we already know. 
\begin{definition}
Let us define $\nonchiral(h,\theta,\pqabcColEvent)$ to be the set of \textbf{non-chiral collision events} $e\in\nonchiral(\pqabcColEvent)\subset\pqabcColEvent$ which satisfy Equation~(\ref{eq:oneoftwelve}):
$$
\nonchiral(h,\theta,\pqabcColEvent)
\equivdef
\left\{
\ \ e \ \ \middle|\ \ 
e\in\pqabcColEvent,
\gaxrot{\theta}\cdot h\cdot \hat e = \gyzpi \cdot \mathscr{P} \cdot \hat e
\right\}.
$$
Similarly, let us define $\nonchiral(h,\pqabcColEvent)$ to be the set of \textbf{non-chiral collision events} $e\in\nonchiral(\pqabcColEvent)\subset\pqabcColEvent$ which satisfy Equation~(\ref{eq:oneoftwelve}) \textit{for some $\theta$}:
\begin{align*}
\nonchiral(h,\pqabcColEvent)
&\equivdef
\left\{
\ \ e \ \ \middle|\ \ 
e\in\pqabcColEvent,
\gaxrot{\theta}\cdot h\cdot \hat e = \gyzpi \cdot \mathscr{P} \cdot \hat e, \theta\in\mathbb{R}
\right\}
\\
&=
\bigcup_{\theta\in\mathbb{R}} \nonchiral(h,\theta,\pqabcColEvent).
\end{align*}
\end{definition}
\begin{corollary}
In terms of the quantities just defined, our goal, $\nonchiral(\pqabcColEvent)$, may be written as:
$$
\nonchiral(\pqabcColEvent) = \bigcup_{h\in H} \nonchiral(h,\pqabcColEvent).
$$
\end{corollary}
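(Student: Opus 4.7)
The plan is to prove the corollary by double inclusion, with nearly all substantive work already carried out in the discussion preceding the statement; the proof is essentially a bookkeeping exercise that repackages what has been argued.

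For the $(\supseteq)$ direction, suppose $e$ lies in $\nonchiral(h,\pqabcColEvent)$ for some $h\in H$. Then there exist $\theta\in\mathbb{R}$ and $h\in H$ with $\gaxrot{\theta}\cdot h\cdot \hat e = \gyzpi \cdot \mathscr{P} \cdot \hat e$. Since $\gyzpi$ is a rotation, it is an element of $SO^+(1,3)\subset G$ and hence invertible in $G$, so the relation rearranges to $\bigl(\gyzpi\bigr)^{-1} \cdot \gaxrot{\theta}\cdot h \cdot \hat e = \mathscr{P}\cdot \hat e$, with the product on the left living in $G$. By Definition~\ref{def:eventequivalence} this gives $e\sim \mathscr{P}\cdot e$, so $e\in\nonchiral(\pqabcColEvent)$ by Definition~\ref{def:unhanded}.

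For the $(\subseteq)$ direction, let $e\in\nonchiral(\pqabcColEvent)$. By Corollary~\ref{cor:simplehandedness} there exists $g\in G$ with $g\cdot \hat e = \mathscr{P}\cdot \hat e$; equivalently, $g'\cdot \hat e = \gyzpi\cdot\mathscr{P}\cdot\hat e$ where $g'=\gyzpi\cdot g\in G$. The preceding discussion decomposes any such $g'$ as $g'=s\cdot h$ with $s\in SO^+(1,3)$ and $h\in H$. Comparing the first two columns of $s\cdot h\cdot \hat e$ with those of $\gyzpi\cdot\mathscr{P}\cdot \hat e$ in \eqref{eq:gpe}, one sees that $s$ must fix the spatial parts of both $\vec p$ and $\vec q$; because $\pqabcColEvent$ consists of collision events (so Lemma~\ref{lem:collisionshaveaxis} supplies a genuine beam axis in the $(p+q)$ rest frame), this forces $s$ to be a rotation $\gaxrot{\theta}$ about the $z$-axis for some $\theta\in\mathbb{R}$. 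Consequently $e\in\nonchiral(h,\theta,\pqabcColEvent)\subseteq\nonchiral(h,\pqabcColEvent)$, so $e$ belongs to the union, completing the inclusion.

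There is no serious obstacle left at this point: the delicate structural input, namely the factorisation $G = SO^+(1,3)\cdot H$ and the observation that the beam axis constrains $s$ to a $z$-rotation, was already established in the text above the corollary. The only care needed in writing up is to make explicit the insertion and removal of the $\gyzpi$ factor, and to invoke Lemma~\ref{lem:collisionshaveaxis} at the step where $s$ is pinned down.
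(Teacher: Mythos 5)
Your proposal is correct and follows essentially the same route as the paper: the paper presents this corollary as a direct repackaging of the preceding discussion (the insertion of the $\gyzpi$ factor, the factorisation of any $g\in G$ as $s\cdot h$ with $s\in SO^+(1,3)$ and $h\in H$, and the observation that matching the $p$,$q$ columns forces $s$ to be a $z$-rotation), and your double-inclusion write-up just makes that bookkeeping explicit.
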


The action of $\gaxrot{\theta}$ on any event representation $\hat e$ is nice, as it only multiplies each element of the complex row of the representation by $e^{i\theta}$.  Since there are only twelve elements of $H$ (we will define them to be
$\{$
$h_1{=}1$,
$h_2{=}(ab)$,
$h_3{=}(ac)$,
$h_4{=}(bc)$,
$h_5{=}(abc)$,
$h_6{=}(cba)$,
$h_7{=}\gyzpi(pq)$,
$h_8{=}\gyzpi(pq)(ab)$,
$h_9{=}\gyzpi(pq)(ac)$,
$h_{10}{=}\gyzpi(pq)(bc)$,
$h_{11}{=}\gyzpi(pq)(abc)$,
$h_{12}{=}\gyzpi(pq)(cba)$
$\}$)
we can now enumerate all twelve LHSs of Equation~(\ref{eq:oneoftwelve}):
\begin{align}
  \begin{array}{rr}
{
\gaxrot{\theta} h_1  \hat e = 
\nicety a b c},
&
{
\gaxrot{\theta} h_7  \hat e =
\oddity a b c},
\\ 
{
\gaxrot{\theta} h_2  \hat e =
\nicety b a c},
&
{
\gaxrot{\theta} h_8  \hat e =
\oddity b a c},
\\  
{\color{blue}
\gaxrot{\theta} h_3  \hat e =
\nicety c b a},
&
{\color{blue}
\gaxrot{\theta} h_9  \hat e =
\oddity c b a},
\\  
{\color{blue}
\gaxrot{\theta} h_4  \hat e =
\nicety a c b},
&
{\color{blue}
\gaxrot{\theta} h_{10} \hat e =
\oddity a c b},
\\  
{
\gaxrot{\theta} h_5  \hat e =
\nicety b c a},
&
{
\gaxrot{\theta} h_{11}   \hat e =
\oddity b c a},
\\  
{\color{blue}
\gaxrot{\theta} h_6  \hat e =
\nicety c a b},
&
{\color{blue}
\gaxrot{\theta} h_{12}  \hat e =
\oddity c a b},
    \end{array}
    \label{eq:6black6blue}
\end{align}
at least one of which will equal
$$
\gyzpi\cdot\mathscr{P} \cdot \hat e
=
\left[\begin{array}{cc|ccc}
     m_p & m_q & m_a & m_b & m_c \\
     0 & 0 & -\mathbf{a}^* & -\mathbf{b}^* & -\mathbf{c}^* \\
     p & -p & a_z & b_z & c_z
\end{array}\right].
$$
for some $\theta$ if $\hat e$ is \textbf{non-chiral}. In (\ref{eq:6black6blue}) six of of the constraints have been coloured blue.  This is because these constraints are strongly related to another black one in the list. (We will see how later.) The `hard work' only needs to be invested in the black cases. Once those are done, results relating to the remaining blue ones will follow trivially.

We now take the black constraints from (\ref{eq:6black6blue}) and re-write their complex numbers in polar form by making the following replacements:
\begin{align}
    \mathbf{a} &\rightarrow |\mathbf{a}| e^{i\alpha}, \\
    \mathbf{b} &\rightarrow |\mathbf{b}| e^{i\beta}, \qquad\text{and}\\
    \mathbf{c} &\rightarrow |\mathbf{c}| e^{i\gamma}.
\end{align}
This results in:
\begin{align}
  \begin{array}{rr}
\gaxrot{\theta} h_1  \hat e = 
\polarnicety a b c \alpha \beta \gamma,
\\
\gaxrot{\theta} h_2  \hat e =
\polarnicety b a c \beta \alpha \gamma,
\\
\gaxrot{\theta} h_5  \hat e =
\polarnicety b c a \beta \gamma \alpha,
\\
\gaxrot{\theta} h_7  \hat e =
\polaroddity a b c \alpha \beta \gamma,
\\ 
\gaxrot{\theta} h_8  \hat e =
\polaroddity b a c \beta \alpha \gamma,
\\
\gaxrot{\theta} h_{11}   \hat e =
\polaroddity b c a \beta \gamma \alpha,
    \end{array}
    \label{eq:just6polarblack}
\end{align}
needing to equal
$$
\gyzpi\cdot\mathscr{P} \cdot \hat e
=
\left[\begin{array}{cc|ccc}
     m_p & m_q & m_a & m_b & m_c \\
     0 & 0 & |\mathbf{a}|e^{i(\pi-\alpha)} & |\mathbf{b}|e^{i(\pi-\beta)} & |\mathbf{c}|e^{i(\pi-\gamma)} \\
     p & -p & a_z & b_z & c_z
\end{array}\right].
$$

\subsubsection{Case $h_1$}
$h_1= 1 $.

$
((\mathbf{a}=0) \lor (\theta+\alpha = \pi-\alpha + 2 n_a \pi))
\land
((\mathbf{b}=0) \lor (\theta+\beta = \pi-\beta + 2 n_b \pi))
\land
((\mathbf{c}=0) \lor (\theta+\gamma = \pi-\gamma + 2 n_c \pi))
$

$
\implies ((\mathbf{a}=0) \lor (\alpha = \frac 1 2 (\pi-\theta) +  n_a \pi))
\land
((\mathbf{b}=0) \lor (\beta = \frac 1 2 (\pi-\theta) +  n_b \pi))
\land
((\mathbf{c}=0) \lor (\gamma = \frac 1 2 (\pi-\theta) +  n_c \pi))
$

$\implies$ the three-momenta $\vec a$, $\vec b$ and $\vec c$ must all live in a common plane containing the beam axis, but are otherwise unconstrained. This can be written in a frame-independent way using the \textbf{Lorentz contraction shorthand notation} of \eqref{eq:lorentzcontractionsepsilonhorthand}  as:
\begin{align}
\epsLor_{abpq}=
\epsLor_{acpq}=
\epsLor_{bcpq}=0.
\end{align}
\followUpInFuture[Rigorous proof is Missing]
A rigorous proof of the above statement is left as an exercise for the reader.

\subsubsection{Case $h_2$}
$h_2=(ab)  $.

$
(m_a=m_b)
\land
(a_z=b_z)
\land
(|\mathbf{a}|=|\mathbf{b}|)
\land
((\mathbf{a}=\mathbf{b}=0) \lor (\theta+\beta = \pi-\alpha + 2 n_a \pi))
\land
((\mathbf{a}=\mathbf{b}=0) \lor (\theta+\alpha = \pi-\beta + 2 n_b \pi))
\land
((\mathbf{c}=0) \lor (\theta+\gamma = \pi-\gamma + 2 n_c \pi))
$

$
\implies 
(m_a=m_b)
\land
(a_z=b_z)
\land
(|\mathbf{a}|=|\mathbf{b}|)
\land
(
  (\mathbf{a}=\mathbf{b}=0) 
  \lor 
  (
    (\alpha+\beta=\pi-\theta+2n_a  \pi)
    \land
    (\alpha+\beta=\pi-\theta+2n_b  \pi)
  )
)
\land
((\mathbf{c}=0) \lor (\gamma = \frac 1 2 (\pi-\theta) +  n_c \pi))
$

$
\implies 
(m_a=m_b)
\land
(a_z=b_z)
\land
(|\mathbf{a}|=|\mathbf{b}|)
\land
(
  (\mathbf{a}=\mathbf{b}=0) 
  \lor 
  (\alpha+\beta=\pi-\theta+2n_{ab}  \pi)
)
\land
((\mathbf{c}=0) \lor (\gamma = \frac 1 2 (\pi-\theta) +  n_c \pi))
$

Expanding over the four cases ($\mathbf{a}=0 \land \mathbf{c}=0$), ($\mathbf{a}\ne0 \land \mathbf{c}=0$)
 ($\mathbf{a}=0 \land \mathbf{c}\ne0$) and
  ($\mathbf{a}\ne0 \land \mathbf{c}\ne0$):\footnote{Here and in similar places some terms have been highlighted in blue. This is to provide either emphasis to changed terms, or to assist in identifying pairs of opening and closing brackets, \textit{etc}.}

$
\implies 
(m_a=m_b)
\land
(a_z=b_z)
\land
(|\mathbf{a}|=|\mathbf{b}|)
\land
(
{\color{blue}[}
  (\mathbf{a}=\mathbf{b}=\mathbf{c}=0)
{\color{blue}]}
\lor
{\color{blue}[}
  (\mathbf{a}=\mathbf{b}=0)
  \land
  (|\mathbf{c}|>0)
  \land
  (\gamma = \frac 1 2 (\pi-\theta) +  n_c \pi)
{\color{blue}]}
\lor
{\color{blue}[}
  (|\mathbf{a}|>0)
  \land
  (\mathbf{c}=0)
  \land
  (\alpha+\beta=\pi-\theta+2n_{ab}  \pi)
{\color{blue}]}
\lor
{\color{blue}[}
  (|\mathbf{a}|>0)
  \land
  (|\mathbf{c}|>0)
  \land
  (\alpha+\beta=\pi-\theta+2n_{ab}  \pi)
  \land
  (\gamma = \frac 1 2 (\pi-\theta) +  n_c \pi)
{\color{blue}]}
)
$

Removing the constraints which only constrain something outside of $\hat e$ and scaling the second last constraint:

$
\implies 
(m_a=m_b)
\land
(a_z=b_z)
\land
(|\mathbf{a}|=|\mathbf{b}|)
\land
(
{\color{blue}[}
  (\mathbf{a}=\mathbf{b}=\mathbf{c}=0)
{\color{blue}]}
\lor
{\color{blue}[}
  (\mathbf{a}=\mathbf{b}=0)
  \land
  (|\mathbf{c}|>0)
{\color{blue}]}
\lor
{\color{blue}[}
  (|\mathbf{a}|>0)
  \land
  (\mathbf{c}=0)
{\color{blue}]}
\lor
{\color{blue}[}
  (|\mathbf{a}|>0)
  \land
  (|\mathbf{c}|>0)
  \land
  (\frac 1 2 (\alpha+\beta)=\frac 1 2 (\pi-\theta)+n_{ab}  \pi)
  \land
  (\gamma = \frac 1 2 (\pi-\theta) +  n_c \pi)
{\color{blue}]}
)
$

Replacing second-last constraint with an equivalent constraint:

$
\implies 
(m_a=m_b)
\land
(a_z=b_z)
\land
(|\mathbf{a}|=|\mathbf{b}|)
\land
(
{\color{blue}[}
  (\mathbf{a}=\mathbf{b}=\mathbf{c}=0)
{\color{blue}]}
\lor
{\color{blue}[}
  (\mathbf{a}=\mathbf{b}=0)
  \land
  (|\mathbf{c}|>0)
{\color{blue}]}
\lor
{\color{blue}[}
  (|\mathbf{a}|>0)
  \land
  (\mathbf{c}=0)
{\color{blue}]}
\lor
{\color{blue}[}
  (|\mathbf{a}|>0)
  \land
  (|\mathbf{c}|>0)
  \land
{\color{blue}
  (\frac 1 2 (\alpha+\beta)-\gamma=(n_{ab}-n_c)  \pi)
}
  \land
  (\gamma = \frac 1 2 (\pi-\theta) +  n_c \pi)
{\color{blue}]}
)
$

Removing, again, constraints that only fix things outside of $\hat e$:

$
\implies 
(m_a=m_b)
\land
(a_z=b_z)
\land
(|\mathbf{a}|=|\mathbf{b}|)
\land
(
{\color{blue}[}
  (\mathbf{a}=\mathbf{b}=\mathbf{c}=0)
{\color{blue}]}
\lor
{\color{blue}[}
  (\mathbf{a}=\mathbf{b}=0)
  \land
  (|\mathbf{c}|>0)
{\color{blue}]}
\lor
{\color{blue}[}
  (|\mathbf{a}|>0)
  \land
  (\mathbf{c}=0)
{\color{blue}]}
\lor
{\color{blue}[}
  (|\mathbf{a}|>0)
  \land
  (|\mathbf{c}|>0)
  \land
  (\frac 1 2 (\alpha+\beta)-\gamma=(n_{ab}-n_c)  \pi)
{\color{blue}]}
)
$

Removing constraints that are just more stringent versions of things we already have covered in Case $\gaxrot{\theta} 1 $ , since we only need \textbf{new} events:
\followUpInFuture[Add more rigour to make sure that such copies are not thrown away by mistake, and that the link is done at the maths not the interpretive level?]

$
\implies 
(m_a=m_b)
\land
(a_z=b_z)
\land
(|\mathbf{a}|=|\mathbf{b}|)
\land
(
{\color{blue}[}
  (|\mathbf{a}|>0)
  \land
  (\mathbf{c}=0)
{\color{blue}]}
\lor
{\color{blue}[}
  (|\mathbf{a}|>0)
  \land
  (|\mathbf{c}|>0)
  \land
  (\frac 1 2 (\alpha+\beta)-\gamma=(n_{ab}-n_c)  \pi)
{\color{blue}]}
)
$

Factorizing:

$
\implies 
(m_a=m_b)
\land
(a_z=b_z)
\land
(|\mathbf{a}|=|\mathbf{b}|
{\color{blue}>0})
\land
(
{\color{blue}[}
  (\mathbf{c}=0)
{\color{blue}]}
\lor
{\color{blue}[}
  (|\mathbf{c}|>0)
  \land
  (\frac 1 2 (\alpha+\beta)-\gamma=(n_{ab}-n_c)  \pi)
{\color{blue}]}
)
$

\begin{align}
\implies 
(m_a=m_b)
\land
(a_z=b_z)
\land
(|\vec a|=|\vec b |)
\land
(|\mathbf{a}|{\color{blue}>0})
\qquad\qquad\qquad\qquad\nonumber\\ \qquad
\land
(
{\color{blue}[}
  (\mathbf{c}=0)
{\color{blue}]}
\lor 
{\color{blue}[}
  (|\mathbf{c}|>0)
  \land
  (\frac 1 2 (\alpha+\beta)-\gamma=(n_{ab}-n_c)  \pi)
{\color{blue}]}
)
\label{eq:annoyingoneA}
\end{align}
or alternatively (using Lemma~\ref{lem:modamodbisdotprodrel})
\begin{align}
\implies 
(m_a=m_b)
\land
(a_z=b_z)
\land
((\vec a-\vec b).(\vec a+\vec b)=0)
\land
(|\mathbf{a}|{\color{blue}>0})
\qquad\qquad\qquad\qquad\nonumber\\ \qquad
\land
(
{\color{blue}[}
  (\mathbf{c}=0)
{\color{blue}]}
\lor
{\color{blue}[}
  (|\mathbf{c}|>0)
  \land
  (\frac 1 2 (\alpha+\beta)-\gamma=(n_{ab}-n_c)  \pi)
{\color{blue}]}
).
\label{eq:annoyingoneB}
\end{align}

$\implies$ the most general event of this type may be constructed by: (i) defining a plane $P$ containing the beam axis; (ii) putting the four-momentum $c$ anywhere in $P$; (iii) placing the four-momentum $a$ anywhere; (iv) giving $b$ the same mass as $a$; and (v) and positioning $\vec b$  such that it is the reflection of $\vec a$ in $P$. 
\followUpInFuture[WHY DO WE HAVE TO WORK IN A FRAME RATHER THAN IN NONE?]

Writing all of (\ref{eq:annoyingoneA}) or (\ref{eq:annoyingoneB}) in frame-invariant form is annoying, although the first four terms of either are easy. 
Using Lemmas~\ref{lem:adotbinsomeframe}, \ref{lem:threemominframe} and \ref{lem:transmomforcolevent}
they become:
\begin{align}\label{eq:whatwealreadyhaveA}
\easyPartOfAxialCatapultConstraintABCVersionA a b c \end{align}
or using Lemma~\ref{lem:momdifferencesidentity} as 
\begin{align}\label{eq:whatwealreadyhaveB}
\easyPartOfAxialCatapultConstraintABCVersionB a b c \end{align}
and these restrict $\vec a$ and $\vec b$ to run from the origin to points on the circumference of a circle which lies in the plane $z=z_0$ and which is centred on the point $(x,y,z)=(0,0,z_0)$, for some $z_0$.
The problematic part is the last part which should place $\vec c$ on the plane in which $\vec a$ mirrors $\vec b$:
\begin{align}
(
{\color{blue}[}
  (\mathbf{c}=0)
{\color{blue}]}
\lor
{\color{blue}[}
  (|\mathbf{c}|>0)
  \land
  (\frac 1 2 (\alpha+\beta)-\gamma=(n_{ab}-n_c)  \pi)
{\color{blue}]}
).\label{eq:trickycontstraint34}
\end{align}
It is tempting to try to use $(\epsLor_{(a+b)cpq}=0)$ since it says that in the $(p+q)$ rest frame $(\vec a+\vec b)$, $\vec c$ and $\vec p$ all lie in a common plane!
Alas, while this correctly constrains $\vec c$ under most circumstances, it fails to constrain $\vec c$ when $\vec a + \vec b=0$.   This is a problem because in such circumstances $\vec c$ needs still to be on the plane in which $\vec a$ mirrors $\vec b$.
Another tempting idea is to seek to assert that $(\vec a-\vec b).\vec c$ should be zero in the $(p+q)$ rest frame.  Again, this successfully constrains $c$ under most circumstances, but (alas) fails to do so when $\vec a=\vec b$.

Since each of the two previous ideas works  when the other fails, one could sum their squares, and require that the result is zero, but this would both (i) be ugly, and (ii) would square away the $(pq)$-oddness of the former and the $(ab)$-oddness of the latter, which isn't usually ideal as (as will be seen later) we will need $(ab)$-oddness here.

However, the failure of $(\vec a-\vec b).\vec c=0$ to constrain $\vec c$ when $\vec a=\vec b$ could be countered by deliberately excluding that case from consideration.  Specifically, in the case where $\vec a=\vec b$ then $\vec c$ is supposed to sit in the same plane as $\vec a$ (or $\vec b$) and $\vec p$ --- but events of that type are already included in case $h_1=1$, and we therefore do not need to include them here.  The best way to excluding events for which $\vec a=\vec b$, given that we already have a constraint here that $m_a=m_b$, is just to exclude events in which $a=b$ (that's a comparison between four vectors).\footnote{Other options which just target $\vec a=\vec b$ could include demanding that $|\vec a-\vec b|^2\ne0$ or that $\epsLor_{abpq}\ne 0$.  The former could be written using Lemma~\ref{lem:threemominframe} as $
\SYMGRAMTWO{a-b}{p+q}\ne 0
$ which is $(ab)$-even and $(pq)$-even.  The latter is $(ab)$-odd and $(pq)$-odd.  Alas, excluding events with  $\epsLor_{abpq}= 0$ may exclude MORE than we want to exclude. For example, it also excludes events in which $((\vec a_T=-\vec b_T\ne0) \land (a_z=b_z))$ for which it would be acceptable to have $\vec c$ at right angles to both $\vec a_T$ and $\vec b_T$, and which would consequently place such an event outside of Case $h_1$.} 
Therefore, in order to implement (\ref{eq:trickycontstraint34}) as an add-on to (\ref{eq:whatwealreadyhaveA}) or (\ref{eq:whatwealreadyhaveB}) we append the following::
\begin{align}
\label{eq:addon}
\left(\PerpABF{a-b}{c}{p+q}\right)\land(\theaddon a b).
\end{align}
The form of (\ref{eq:addon}) fits better with (\ref{eq:whatwealreadyhaveB}) than with (\ref{eq:whatwealreadyhaveA}) since the linearity of Gram Determinants ensures that
\begin{align*}
\left(
\left(\PerpABF {a-b} {a+b} {p+q}\right)
\land
\left(\PerpABF {a-b} {c} {p+q}\right)
\right)
\iff \qquad\qquad\qquad
\\
\qquad\qquad\qquad \iff
\left(
\left(\PerpABF {a-b} {a+b} {p+q}\right)
\land
\left(\PerpABF {a-b} {a+b+c} {p+q}\right)
\right).
\end{align*}
in which last term is (perhaps) nicer than the term it replaces insofar as the former contains an $a+b+c$, which is invariant under the discrete part of our symmetry group, while the latter contains a lone $c$, which is not.  We therefore can say that the all the new \textbf{non-chiral} events brought by case $h_2$ are contained within the condition:
\begin{align*}
\axialCatapultConstraintABCPreFinal a b c.
\end{align*}
Is it possible to remove the ``$(\SYMGRAMTHR a p q \color{blue}{}>0)$'' term?  If we were to remove this part, we would allow $\vec a$ and $\vec b$ to be axial.  If $\vec a$ and $\vec b$ were axial, then given that they have the same $z$-value under our constraint, they would actually be co-incident. This is already forbidden by the $(\theaddon a b)$ term above. Therefore the ``$(\SYMGRAMTHR a p q \color{blue}{}>0)$'' term is redundant and can be removed, leaving us with:
\begin{align*}
\axialCatapultConstraintABC a b c.
\end{align*}

\subsubsection{Case $h_3$}
$h_3= (ac)  $. 

Everything here works the same as with $h_2 = (ab)  $ except with $b\leftrightarrow c$.

\subsubsection{Case $h_4$}
$h_4= (bc)  $.

Everything here works the same as with $h_2 = (ab)$ except with $a\leftrightarrow c$.

\subsubsection{Case $h_5$} 
$h_5= (abc).$

$
(m_a=m_b=m_c)
\land
(|\mathbf{a}|=|\mathbf{b}|=|\mathbf{c}|)
\land
(a_z=b_z=c_z)
\land
(
    (\mathbf{a}=\mathbf{b}=\mathbf{c}=0)
    \lor
    (
      (\theta+\beta=\pi-\alpha+2 n_1 \pi)
      \land
       (\theta+\gamma=\pi-\beta+2 n_2 \pi)
      \land
       (\theta+\alpha=\pi-\gamma+2 n_3 \pi)
    )
)
$

$
\implies
(m_a=m_b=m_c)
\land
(|\mathbf{a}|=|\mathbf{b}|=|\mathbf{c}|)
\land
(a_z=b_z=c_z)
\land
(
    (\mathbf{a}=\mathbf{b}=\mathbf{c}=0)
    \lor
    (
      (\alpha+\beta=\pi-\theta+2 n_1 \pi)
      \land
       (\beta+\gamma=\pi-\theta+2 n_2 \pi)
      \land
       (\alpha+\gamma=\pi-\theta+2 n_3 \pi)
    )
)
$

But
$$(\alpha+\beta=\pi-\theta+2 n_1 \pi) \land
       (\beta+\gamma=\pi-\theta+2 n_2 \pi)$$
       $\implies$
$$\alpha-\gamma=2 (n_1-n_2) \pi=0\mod 2\pi$$
and similarly
$$\alpha-\beta=0\mod 2\pi$$
$$\gamma-\beta=0\mod 2\pi$$
and so
$$\alpha=\beta=\gamma\mod 2\pi$$
meaning that the four-momenta $a$, $b$ and $c$ must all share a common mass, and must have $\vec a=\vec b=\vec c$. 
The above constraint is a more restrictive sub-case of Case $h_1= 1 $ and therefore \textbf{this case may hereafter be discarded}.

\subsubsection{Case $h_6$} $h_6= (cba).$

\textbf{This case may hereafter be discarded} for the same reasons as Case $h_6= (abc).$

\subsubsection{Case $h_7$} $h_7= \gyzpi(pq).$

$
(m_p=m_q)
\land
(a_z=b_z=c_z=0)
\land
((\mathbf{a}=0) \lor (\theta-\alpha = \pi-\alpha + 2 n_a\pi))
\land
((\mathbf{b}=0) \lor (\theta-\beta = \pi-\beta + 2 n_b\pi))
\land
((\mathbf{c}=0) \lor (\theta-\gamma = \pi-\gamma + 2 n_c\pi))
$

$
\implies
(m_p=m_q)
\land
(a_z=b_z=c_z=0)
\land
((\mathbf{a}=0) \lor (\theta = \pi + 2 n_a\pi))
\land
((\mathbf{b}=0) \lor (\theta = \pi + 2 n_b\pi))
\land
((\mathbf{c}=0) \lor (\theta = \pi + 2 n_c\pi))
$

So far as $\hat e$ is concerned (i.e.~ignoring constraints that only affect $\theta$, $n_1$, $n_2$ and $n_3$) we have:

$
\implies
(m_p=m_q)
\land
(a_z=b_z=c_z=0)
\land
((\mathbf{a}=0) \lor (-)
\land
((\mathbf{b}=0) \lor (-)
\land
((\mathbf{c}=0) \lor (-)
$

$
\implies
(m_p=m_q)
\land
(a_z=b_z=c_z=0)
$

$\implies$ $m_p=m_q$, while the four-vectors $a$, $b$ and $c$ may lie anywhere on collision event's \textbf{transverse plane} (see Definition~\ref{def:transverseplane}).  This condition may be re-phrased in a frame-independent way as follows:
$$
(p^2=q^2)
\land
\left(\PerpABF {a} {p-q} {p+q}\right)
\land
\left(\PerpABF {b} {p-q} {p+q}\right)
\land
\left(\PerpABF {c} {p-q} {p+q}\right).
$$

\subsubsection{Case $h_8$} $h_8= \gyzpi (pq)(ab).$

$
(m_p=m_q)
\land
(m_a=m_b)
\land
(a_z+b_z=c_z=0)
\land
(|\mathbf{a}|=|\mathbf{b}|)
\land
(
   (\mathbf{a}=\mathbf{b}=0)
   \lor
   (
    (\theta-\beta=\pi-\alpha+2n_1\pi)
    \land
    (\theta-\alpha=\pi-\beta+2n_2\pi)
   )
)
\land
(
   (|\mathbf{c}|=0)
   \lor
   (\theta-\gamma = \pi-\gamma + 2n_3\pi)
)
$

Alphas and betas to LHS:

$
\implies
(m_p=m_q)
\land
(m_a=m_b)
\land
(a_z+b_z=c_z=0)
\land
(|\mathbf{a}|=|\mathbf{b}|)
\land
(
   (\mathbf{a}=\mathbf{b}=0)
   \lor
   (
    (\alpha-\beta=\pi-\theta+2n_1\pi)
    \land
    (\beta-\alpha=\pi-\theta+2n_2\pi)
   )
)
\land
(
   (\mathbf{c}=0)
   \lor
   (\theta = \pi + 2n_3\pi)
)
$

De-duplicating alphas and betas:

$
\implies
(m_p=m_q)
\land
(m_a=m_b)
\land
(a_z+b_z=c_z=0)
\land
(|\mathbf{a}|=|\mathbf{b}|)
\land
(
   (\mathbf{a}=\mathbf{b}=0)
   \lor
   (
    (\pi-\theta+(n_1+n_2)\pi=0)
    \land
    (\alpha-\beta=(n_1-n_2)\pi)
   )
)
\land
(
   (\mathbf{c}=0)
   \lor
   (\theta = \pi + 2n_3\pi)
)
$

Expanding over $\mathbf{c}=0$ vs $\mathbf{c}\ne0$:

$
\implies
{\color{blue}[}
(\mathbf{c}=0)
\land
(m_p=m_q)
\land
(m_a=m_b)
\land
(a_z+b_z=c_z=0)
\land
(|\mathbf{a}|=|\mathbf{b}|)
\land
(
   (\mathbf{a}=\mathbf{b}=0)
   \lor
   (
    (\pi-\theta+(n_1+n_2)\pi=0)
    \land
    (\alpha-\beta=(n_1-n_2)\pi)
   )
)
\land
(
   (\mathbf{c}=0)
   \lor
   (\theta = \pi + 2n_3\pi)
)
{\color{blue}]}
\lor
{\color{blue}[}
(\mathbf{c}\ne 0)
\land
(m_p=m_q)
\land
(m_a=m_b)
\land
(a_z+b_z=c_z=0)
\land
(|\mathbf{a}|=|\mathbf{b}|)
\land
(
   (\mathbf{a}=\mathbf{b}=0)
   \lor
   (
    (\pi-\theta+(n_1+n_2)\pi=0)
    \land
    (\alpha-\beta=(n_1-n_2)\pi)
   )
)
\land
(
   (\mathbf{c}=0)
   \lor
   (\theta = \pi + 2n_3\pi)
)
{\color{blue}]}
$

Capitalising on the above expansion:

$
\implies
{\color{blue}[}
(\vec c=0)
\land
(m_p=m_q)
\land
(m_a=m_b)
\land
(a_z+b_z=0)
\land
(|\mathbf{a}|=|\mathbf{b}|)
\land
(
   (\mathbf{a}=\mathbf{b}=0)
   \lor
   (
    (\pi-\theta+(n_1+n_2)\pi=0)
    \land
    (\alpha-\beta=(n_1-n_2)\pi)
   )
)
{\color{blue}]}
\lor
{\color{blue}[}
(\mathbf{c}\ne 0)
\land
(m_p=m_q)
\land
(m_a=m_b)
\land
(a_z+b_z=c_z=0)
\land
(|\mathbf{a}|=|\mathbf{b}|)
\land
(
   (\mathbf{a}=\mathbf{b}=0)
   \lor
   (
    (\pi-\theta+(n_1+n_2)\pi=0)
    \land
    (\alpha-\beta=(n_1-n_2)\pi)
    \land
    (\theta -\pi = 2n_3\pi)
   )
)
\land
   (\theta = \pi + 2n_3\pi)
{\color{blue}]}
$

De-duplicating theta-phi terms in $\mathbf{c}\ne 0$ part:

$
\implies
{\color{blue}[}
(\vec c=0)
\land
(m_p=m_q)
\land
(m_a=m_b)
\land
(a_z+b_z=0)
\land
(|\mathbf{a}|=|\mathbf{b}|)
\land
(
   (\mathbf{a}=\mathbf{b}=0)
   \lor
   (
    (\pi-\theta+(n_1+n_2)\pi=0)
    \land
    (\alpha-\beta=(n_1-n_2)\pi)
   )
)
{\color{blue}]}
\lor
{\color{blue}[}
(\mathbf{c}\ne 0)
\land
(m_p=m_q)
\land
(m_a=m_b)
\land
(a_z+b_z=c_z=0)
\land
(|\mathbf{a}|=|\mathbf{b}|)
\land
(
   (\mathbf{a}=\mathbf{b}=0)
   \lor
   (
    (\alpha-\beta=(n_1-n_2)\pi)
    \land
    (n_1+n_2=2n_3)
    \land
    (2(\pi-\theta)+(n_1+n_2)\pi=-2n_3\pi)
   )
)
\land
   (\theta = \pi + 2n_3\pi)
{\color{blue}]}
$

But $n_1+n_2=2n_3\implies n_1-n_2=2(n_3-n_2)$, while knowledge that `$n_1+n_2=2n_3$' allows you to re-write `$2(\pi-\theta)+(n_1+n_2)\pi=-2n_3\pi$' as `$\pi-\theta+2 n_3\pi=0$' so:

$
\implies
{\color{blue}[}
(\vec c=0)
\land
(m_p=m_q)
\land
(m_a=m_b)
\land
(a_z+b_z=0)
\land
(|\mathbf{a}|=|\mathbf{b}|)
\land
(
   (\mathbf{a}=\mathbf{b}=0)
   \lor
   (
    (\pi-\theta+(n_1+n_2)\pi=0)
    \land
    (\alpha-\beta=(n_1-n_2)\pi)
   )
)
{\color{blue}]}
\lor
{\color{blue}[}
(\mathbf{c}\ne 0)
\land
(m_p=m_q)
\land
(m_a=m_b)
\land
(a_z+b_z=c_z=0)
\land
(|\mathbf{a}|=|\mathbf{b}|)
\land
(
   (\mathbf{a}=\mathbf{b}=0)
   \lor
   (
    (\alpha-\beta=2(n_3-n_2)\pi)
    \land
    (n_1+n_2=2n_3)
    \land
    (\pi-\theta+2 n_3\pi=0)
   )
)
\land
   (\theta = \pi + 2n_3\pi)
{\color{blue}]}
$

Removing redundancy:

$
\implies
{\color{blue}[}
(\vec c=0)
\land
(m_p=m_q)
\land
(m_a=m_b)
\land
(a_z+b_z=0)
\land
(|\mathbf{a}|=|\mathbf{b}|)
\land
(
   (a=b=0)
   \lor
   (
    (\pi-\theta+(n_1+n_2)\pi=0)
    \land
    (\alpha-\beta=(n_1-n_2)\pi)
   )
)
{\color{blue}]}
\lor
{\color{blue}[}
(\mathbf{c}\ne 0)
\land
(m_p=m_q)
\land
(m_a=m_b)
\land
(a_z+b_z=c_z=0)
\land
(|\mathbf{a}|=|\mathbf{b}|)
\land
(
   (\mathbf{a}=\mathbf{b}=0)
   \lor
   (
    (\alpha-\beta=2(n_3-n_2)\pi)
    \land
    (n_1+n_2=2n_3)
   )
)
\land
   (\theta = \pi + 2n_3\pi)
{\color{blue}]}
$

In second blue block, $\theta$ is now mentioned only once, and we are not interested in $\theta$ so it may be removed as it is now only constraining itself:

$
\implies
{\color{blue}[}
(\vec c=0)
\land
(m_p=m_q)
\land
(m_a=m_b)
\land
(a_z+b_z=0)
\land
(|\mathbf{a}|=|\mathbf{b}|)
\land
(
   (\mathbf{a}=\mathbf{b}=0)
   \lor
   (
    (\pi-\theta+(n_1+n_2)\pi=0)
    \land
    (\alpha-\beta=(n_1-n_2)\pi)
   )
)
{\color{blue}]}
\lor
{\color{blue}[}
(\mathbf{c}\ne 0)
\land
(m_p=m_q)
\land
(m_a=m_b)
\land
(a_z+b_z=c_z=0)
\land
(|\mathbf{a}|=|\mathbf{b}|)
\land
(
   (\mathbf{a}=\mathbf{b}=0)
   \lor
   (
    (\alpha-\beta=2(n_3-n_2)\pi)
    \land
    (n_1+n_2=2n_3)
   )
)
{\color{blue}]}
$

Now, in the second blue block $n_1$ is only constraining itself, and $\theta$ is only constraining itself in the first red block, so more things can be removed:

$
\implies
{\color{blue}[}
(\vec c=0)
\land
(m_p=m_q)
\land
(m_a=m_b)
\land
(a_z+b_z=0)
\land
(|\mathbf{a}|=|\mathbf{b}|)
\land
(
   (\mathbf{a}=\mathbf{b}=0)
   \lor
   (\alpha-\beta=(n_1-n_2)\pi)
)
{\color{blue}]}
\lor
{\color{blue}[}
(\textbf{c}\ne 0)
\land
(m_p=m_q)
\land
(m_a=m_b)
\land
(a_z+b_z=c_z=0)
\land
(|\mathbf{a}|=|\mathbf{b}|)
\land
(
   (\mathbf{a}=\mathbf{b}=0)
   \lor
   (\alpha-\beta=2(n_3-n_2)\pi)
)
{\color{blue}]}
$

Factorizing:

$
\implies
(m_p=m_q)
\land
(m_a=m_b)
\land
(a_z+b_z=0)
\land
{\color{magenta}[}
{\color{blue}[}
(\vec c=0)
\land
(|\mathbf{a}|=|\mathbf{b}|)
\land
(
   (\mathbf{a}=\mathbf{b}=0)
   \lor
   (\alpha-\beta=(n_1-n_2)\pi)
)
{\color{blue}]}
\lor
{\color{blue}[}
(\mathbf{c}\ne 0)
\land
(c_z=0)
\land
(|\mathbf{a}|=|\mathbf{b}|)
\land
(
   (\mathbf{a}=\mathbf{b}=0)
   \lor
   (\alpha-\beta=2(n_3-n_2)\pi)
)
{\color{blue}]}
{\color{magenta}]}
$

This implies that \textbf{non-chiral collider events} in this case must have $(m_p=m_q)$ and $(m_a=m_b)$ and:
\begin{enumerate}
    \item particle $c$ should be at rest in the $(p+q)$ rest frame, while $\vec a$ and $\vec b$ are equal and opposite but are otherwise unconstrained; or
    \item particle $c$ should be at rest in the $(p+q)$ rest frame, and $\vec a$ is free to point wherever it likes,  so long as $\vec b$ is the reflection of $\vec a$ in the \textbf{transverse plane} of the event; or
    \item particle $c$ may lie anywhere on the \textbf{transverse plane} other than at the origin, and $\vec a$ is free to point wherever it likes,  so long as $\vec b$ is the reflection of $\vec a$ in the \textbf{transverse plane} of the event.
 \end{enumerate}
Enumerated point 1 above is a more restrictive sub-case of pre-existing Case $h_1=1 $, and so \textbf{it may hereafter be disregarded}.

Furthermore, enumerated points 2 and 3 above above may be combined into a single simpler case, giving us instead the final result that:
\begin{quote}
The \textbf{non-chiral collider events} provided by this case that are not  already covered by earlier cases must have $m_p=m_q$ and $m_a=m_b$, but allow  $\vec c$ to lie anywhere on the \textbf{transverse plane}, and permit $\vec a$ to point wherever it likes, so long as $\vec b$ is then the reflection of $\vec a$ in the \textbf{transverse plane} of the event.
\end{quote}
Using Lemma~\ref{lem:TransPlaneReflectionLemma}, the above constraint may be re-phrased in a frame-independent way as follows:
$$
\cFingerConstraintABCPreFinal  a b    c,
$$
which (using the linear properties of Gram Determinants) can be re-written more symmetrically as
$$
\cFingerConstraintABC  a b    c.
$$



\subsubsection{Case $h_9$} $h_9= \gyzpi (pq) (ac)$.

Results here are the same as those of Case  $h_8=\gyzpi (pq) (ac)$ except with $b\leftrightarrow c$.

\subsubsection{Case $h_{10}$} $h_{10}=\gyzpi (pq) (bc)$.

Results here are the same as those of Case $h_8= \gyzpi (pq) (ac)$ except with $a\leftrightarrow c$.

\subsubsection{Case $h_{11}$} $h_{11}= \gyzpi (pq)(abc) $.

$
\implies
(m_p=m_q)
\land
(m_a=m_b=m_c)
\land
(|\mathbf{a}|=|\mathbf{b}|=|\mathbf{c}|)
\land
(a_z=-b_z=c_z=-a_z)
\land
(
   (\mathbf{a}=\mathbf{b}=\mathbf{c}=0)
   \lor 
   (
     (\theta-\beta=\pi-\alpha+2n_1\pi)
     \land 
     (\theta-\gamma=\pi-\beta+2n_2\pi)
     \land 
     (\theta-\alpha=\pi-\gamma+2n_3\pi)  
   )
)
$

\noindent which is evidently just a more restrictive version of Case $h_7= \gyzpi (pq) $, and so \textbf{may subsequently be ignored}.

\subsubsection{Case $h_{12}$} $h_{12}=\gyzpi(pq)(cba)$.

This case \textbf{may subsequently be ignored} for the same reason as Case $h_{11}=\gyzpi(pq)(abc)$.
\subsubsection{Summary of all cases}
The above results, when brought together, and when using the $[a,b,p,q]$ square-bracket notation explained in equation~(\ref{eq:epscontractionnotation} and the $G(\cdots)$ and $\Delta_3(\cdots)$ Gram Determinant notation defined in Appendix~\ref{app:gramnotation}), establish that a
\textbf{collision event} $e\in\pqabcColEvent$ is \textbf{non-chiral} if and only if:
\begin{gather*}
\bignonchiralcond.
\end{gather*}
The above result, once negated, concludes the proof of Theorem~\ref{thm:nonchiralpqabcEvents}.\label{ss:endofGIANTPROOF001}
\subsubsection{End of proof of Theorem~\ref{thm:nonchiralpqabcEvents}}
\label{sec:longproofend}


\section{Continuous Lorentz-invariant permutation-invariant parity-odd event variables for chiral events in various classes}
\label{sec:wherewearenow}

Each of the remaining parts of this section is named according to the class of \textbf{events} for which the results within it are relevant.  

For each class of \textbf{event} we seek to identify one or more sets of \textbf{continuous}, parity-odd, Lorentz-invariant, appropriately permutation-invariant event variables about which strong statements can be made concerning their \textbf{necessity} and \textbf{sufficiency}.\footnote{The definitions of \textbf{continuity}, \textbf{necessity} and \textbf{sufficiency} which are being used in this context may be found in Section~\ref{sec:ourobjectives} of the Introduction.}  

The \textbf{continuity}
requirement is met by our assembling higher-order event variables out of simpler ones (such as invariant masses, scalar products, and elemental pseudoscalars) by only ever using  the operations of multiplication, addition and subtraction. Since the simpler variables are continuous, and since addition, subtraction and multiplication (unlike division) maintain continuity (which is why all polynomials are continuous) the resulting high-level event-variables are also continuous.

We begin in Section~\ref{sec:exploratorysectionforpqabeventvars} by considering a very simple case: \textbf{events} in $\pqabAllEvent$.  For those the event variables may be written down with minimal effort. Though these variables are novel, the result is presented here less on account of its potential usefulness to physics than for its ability to serve as an example which illustrates, in miniature, the structure of the arguments being used to construct sufficient sets of parity-odd variables for the classes of \textbf{event} which then follow.  

In Section~\ref{sec:exploratorysectionforpqabceventvars}, we  address the more complex task of creating a set of \textbf{necessary} and  \textbf{sufficient} parity-odd variables which can ascribe parities first to \textbf{chiral} \textbf{collision events} of the form $e\in\pqabcColEvent$, and which are invariant under both the $S_2$ permutation of $(pq)$ and the $S_3$ permutation of $(abc)$.  Sections~\ref{sec:bringbothnoncolltypestogeterinpqabc} and \ref{sec:pqabcbothcollandnoncollvars} extend these variables to a set which is able to ascribe a non-zero parity to any \textbf{ events} in $e\in\pqabcAllEvent$, regardless of whether the $(pq)$-component is or is not in the initial state.

Finally, Section~\ref{sec:supportmaterials} describes resources which may help readers wishing to reproduce calcualtions from places within Section~\ref{sec:wherewearenow}.

\subsection{\textbf{Events} in $\pqabAllEvent$}
\label{sec:exploratorysectionforpqabeventvars}



\begin{lemma}[\textbf{Sufficiency} of $X_1$, $X_2$ and $X_3$]
\label{lem:threevarsforpqabevents}
For any $u_1,u_2\in \mathbb{R}$ with $u_1\ne0$ and $u_2\ne 0$ at least one of the three quantities:
\begin{align}
X_1 &=  \mathscr{P}^{ab}_{pq}(0)  G^{ab}_{pq}(2),\label{eq:unhattedx1} \\
X_2 &=  \mathscr{P}^{ab}_{pq}(0) \Re\left[ \left(   f^{ab} + i u_1 g^{a-b}_{a+b} \right)  
\left(  G^{\phantom{ab}}_{pq}(0) + i u_2  G^{\phantom{ab}}_{pq}(1)   \right) \right],\label{eq:unhattedx2} \qquad\text{and}\\
X_3 &=  \mathscr{P}^{ab}_{pq}(0) \Im\left[ \left(   f^{ab} + i u_1 g^{a-b}_{a+b} \right)  
\left(  G^{\phantom{ab}}_{pq}(0) + i u_2  G^{\phantom{ab}}_{pq}(1)   \right) \right]\label{eq:unhattedx3}
\end{align}
will be non-zero for any \textbf{event} $e\in\pqabAllEvent$ which is \textbf{chiral}.
\begin{proof}
\textbf{Events} in $\pqabAllEvent$ are either \textbf{collision events} or \textbf{non-collision events}. Lemma~\ref{lem:everynoncollevinpqabxallisnonchiral} has already shown  that the only \textbf{events} in $\pqabAllEvent$ capable of being \textbf{chiral} are \textbf{collision events}. Attempts to construct parity odd variables may therefore, without any loss of generality, focus entirely on \textbf{collision events}.  With that in mind, the desired result follows from inspection of  \eqref{eq:secondtimeBisdefined}, together with the fact that the product of two complex numbers is non-zero if and only if each of the complex numbers themselves is non-zero.  Note that the quantities $u_1$ and $u_2$  permit quantities like $X_2$ and $X_3$ to be dimensionally self-consistent. The quantities $u_1$ and $u_2$ therefore  represent a sort of `gauge freedom' within the definitions of $X_1$, $X_2$ and $X_3$.  The choice of $u_1$ and $u_2$ values can affect \textit{which} of  $X_1$, $X_2$ and $X_3$ is/are non-zero for any given event, but their choice cannot affect the statement that \textit{at least one of $X_1$, $X_2$ or $X_3$ is non-zero for each \textbf{chiral} \textbf{event}  $e\in\pqabAllEvent$.}
\end{proof}
\end{lemma}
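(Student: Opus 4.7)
The plan is to invoke the explicit characterization of chirality in $\pqabAllEvent$ established earlier, followed by a short case split on whether $G^{ab}_{pq}(2)$ vanishes. First, Lemma~\ref{lem:everynoncollevinpqabxallisnonchiral} tells us that every non-collision event in $\pqabAllEvent$ is non-chiral; hence it suffices to restrict attention to collision events $e\in\pqabColEvent$. For such an event to be chiral, the compressed condition $B$ of Lemma~\ref{lem:pqabchiralcond} must hold. Its very first conjunct, $\mathscr{P}^{ab}_{pq}(0)\ne 0$, forces the common prefactor appearing in $X_1$, $X_2$ and $X_3$ to be non-zero, so the question reduces to showing that at least one of the respective second factors is non-zero.

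I would then split on $G^{ab}_{pq}(2)$. If $G^{ab}_{pq}(2)\ne 0$, then $X_1\ne 0$ immediately and we are done. If instead $G^{ab}_{pq}(2)=0$, then since $G^{ab}_{pq}(2)$ is one of the disjuncts in both the second and third conjuncts of $B$, those conjuncts degenerate respectively to
\[
\left(f^{ab}\ne 0\right)\lor\left(g^{a-b}_{a+b}\ne 0\right)
\qquad\text{and}\qquad
\left(G_{pq}(0)\ne 0\right)\lor\left(G_{pq}(1)\ne 0\right).
\]
Using $u_1\ne 0$ and $u_2\ne 0$, these are precisely the statements that the complex numbers
$
\zeta_1\equiv f^{ab}+iu_1 g^{a-b}_{a+b}
$
and
$
\zeta_2\equiv G_{pq}(0)+iu_2 G_{pq}(1)
$
are each individually non-zero. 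Their product $\zeta_1\zeta_2$ is therefore a non-zero complex number, so at least one of $\Re(\zeta_1\zeta_2)$ and $\Im(\zeta_1\zeta_2)$ is non-zero, yielding $X_2\ne 0$ or $X_3\ne 0$.

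There is no real obstacle here: the argument is an elementary unpacking of Lemma~\ref{lem:pqabchiralcond} combined with the fact that a product of two non-zero complex numbers is non-zero. Notice that the fourth conjunct of $B$ is never used, and that the non-vanishing requirements on $u_1,u_2$ are needed only to prevent the `packaging' of real parts with imaginary parts into $\zeta_1,\zeta_2$ from accidentally collapsing information; the argument is otherwise uniform in any admissible choice of those parameters.
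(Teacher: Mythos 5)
Your proposal is correct and follows essentially the same route as the paper: reduce to collision events via Lemma~\ref{lem:everynoncollevinpqabxallisnonchiral}, then read off the result from condition $B$ of \eqref{eq:secondtimeBisdefined} together with the fact that a product of non-zero complex numbers is non-zero. Your explicit case split on whether $G^{ab}_{pq}(2)$ vanishes is simply the "inspection" the paper leaves implicit, so there is nothing to add.
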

\begin{lemma}[\textbf{\Independence} of $X_1$, $X_2$ and $X_3$]
None of the quantities $X_1$, $X_2$ $X_3$ can be omitted from the statement of Lemma~\ref{lem:threevarsforpqabevents} without invalidating it.
\begin{proof}
To demonstrate that $X_j$ cannot be omitted from the statement of Lemma~\ref{lem:threevarsforpqabevents} without invalidating it, it is sufficient to find an \textbf{event}  $e\in\pqabAllEvent$ for which $X_j\ne 0$ and for which $X_k=0$ for all $k\ne j$. To this end we observe (making use of the different types of Lorentz-vector notation described in Section~\ref{sec:lorentzvectornotation}) that:
\begin{itemize}
\item
$X_1$ may not be omitted since:
\begin{align}
\left(
\begin{array}{l}
a^\mu=\fourvecMassForm 0 {a_x} {+y} {+z},\\
b^\mu=\fourvecMassForm 0 {b_x} {-y} {-z},\\
p^\mu=\fourvecT e 0 0 {+p},\\
q^\mu=\fourvecT e 0 0 {-p}
\end{array}
\right)
\implies
\left(
\begin{array}{l}
X_1 = -32 (a_x + b_x) e^3 p^2 y z,\\
X_2 = 0, \\
X_3 = 0
\end{array}
\right)\label{eq:exampleeventforX1version1}
\end{align}
or
\begin{align}
\left(
\begin{array}{l}
a^\mu=\fourvecMassForm 0 {\lambda \cos\alpha} {+y} {\lambda \sin\alpha},\\
b^\mu=\fourvecMassForm 0 {\lambda \cos\beta} {-y} {\lambda \sin\beta},\\
p^\mu=\fourvecT e 0 0 {+p},\\
q^\mu=\fourvecT e 0 0 {-p}
\end{array}
\right)
\implies
\left(
\begin{array}{l}
X_1 = -16 e^3 p^2 y \lambda^2 (\cos\alpha + 
   \cos\beta) (\sin\alpha - \sin\beta),\\
X_2 = 0, \\
X_3 = 0
\end{array}
\right)\label{eq:exampleeventforX1version2}
\end{align}
\item 
while
$X_2$ may not be omitted since:
\begin{align}
\left(
\begin{array}{l}
a^\mu=\fourvecMassForm 0 {a_x} {+y} {z},\\
b^\mu=\fourvecMassForm 0 {b_x} {-y} {z},\\
p^\mu=\fourvecT e 0 0 {+p},\\
q^\mu=\fourvecT e 0 0 {-p}
\end{array}
\right)
\implies
\left(
\begin{array}{l}
X_1 = 0,\\
X_2 = -128 (a_x - b_x) (a_x + b_x)^2 e^5 p^2 u_1 u_2 y z, \\
X_3 = 0
\end{array}
\right)
\end{align}
or
\begin{align}
\left(
\begin{array}{l}
a^\mu=\fourvecMassForm {m_a} {a_x} {+y} {z_0},\\
b^\mu=\fourvecMassForm {m_b} {b_x} {-y} {z_0},\\
p^\mu=\fourvecT {e_p} 0 0 {+p},\\
q^\mu=\fourvecT {e_q} 0 0 {-p}
\end{array}
\right)
\implies
\left(
\begin{array}{l}
X_1 = 0,\\
X_2 = (\text{something that can be non-zero}), \\
X_3 = 0
\end{array}
\right)
\end{align}
when $z_0=
-\frac{(a_x^2 - b_x^2)  (e_p^2 - e_q^2)  u_1}{
 4 (m_a^2 - m_b^2)p u_2}$,
 \item
 and finally $X_3$ may not be omitted since:
\begin{align}
\left(
\begin{array}{l}
a^\mu=\fourvecMassForm 0 {a_x} {+y} {0},\\
b^\mu=\fourvecMassForm 0 {b_x} {-y} {0},\\
p^\mu=\fourvecT {e_p} 0 0 {+p},\\
q^\mu=\fourvecT {e_q} 0 0 {-p}
\end{array}
\right)
\implies
\left(
\begin{array}{l}
X_1 = 0,\\
X_2 = 0, \\
X_3 = -(a_x - b_x) (a_x + b_x)^2 (e_p - e_q) (e_p + e_q)^4 p u_1 y
\end{array}
\right)
\end{align}
or
\begin{align}
\left(
\begin{array}{l}
a^\mu=\fourvecMassForm {m_a} {x} {+y} {z},\\
b^\mu=\fourvecMassForm {m_b} {x} {-y} {z},\\
p^\mu=\fourvecT {e} 0 0 {+p},\\
q^\mu=\fourvecT {e} 0 0 {-p}
\end{array}
\right)
\implies
\left(
\begin{array}{l}
X_1 = 0,\\
X_2 = 0, \\
X_3 =-64 e^3 (m_a - m_b) (m_a + m_b) p^2 u_2 x y z
\end{array}
\right).
\end{align}
\end{itemize}

\end{proof}
\end{lemma}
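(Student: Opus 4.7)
The plan is to prove \textbf{\independence} the direct way dictated by the definition: for each $j\in\{1,2,3\}$ exhibit at least one \textbf{chiral} event $e_j\in\pqabAllEvent$ on which $X_j$ does not vanish while the other two do. If we can do this, then removing $X_j$ from $S=\{X_1,X_2,X_3\}$ would leave a set which cannot detect the chirality of $e_j$, contradicting the \textbf{sufficiency} of $S\setminus\{X_j\}$.

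To design the examples it helps to read off the zeros of each $X_j$ from its factorisation. Each $X_j$ carries an overall factor of $\mathscr{P}^{ab}_{pq}(0)=[a,b,p,q]$, so in every example I will choose momenta which are generic enough that $[a,b,p,q]\ne 0$; this keeps us inside the \textbf{collision event} regime, which (by Lemma~\ref{lem:everynoncollevinpqabxallisnonchiral}) is the only place where \textbf{chirality} can occur. After stripping that common factor, $X_1$ vanishes iff $g^{a-b}_{p-q}=0$, while $X_2$ and $X_3$ are, respectively, the real and imaginary parts of a product of two complex numbers built from the four real quantities $f^{ab}$, $g^{a-b}_{a+b}$, $G_{pq}(0)$ and $G_{pq}(1)$. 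The easiest way to switch both $X_2$ and $X_3$ off simultaneously is therefore to force one of these two complex factors to vanish; conversely, to turn on just $X_2$ or just $X_3$ we need the product to be purely real or purely imaginary, which means engineering the phases of the two factors appropriately.

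Concretely, I would proceed as follows. For \textbf{necessity of $X_1$} the simplest candidate is an event with symmetric beams ($p_z=-q_z$, $E_p=E_q$, so that $G_{pq}(0)=0$) and massless $a,b$ chosen with $a_z=-b_z$ and equal transverse magnitudes but different transverse directions; this forces $f^{ab}=0$, $G_{pq}(0)=0$ and it is easy to arrange $g^{a-b}_{a+b}=0$, while $[a,b,p,q]$ and $g^{a-b}_{p-q}$ remain generic. For \textbf{necessity of $X_2$} I would break the mass-parity of $a$ and $b$ (so $f^{ab}\ne 0$) and/or the beam mass-parity (so $G_{pq}(0)\ne 0$) while keeping a transverse-plane mirror symmetry in the spatial components of $a$ and $b$ that kills $g^{a-b}_{p-q}$; then tune one free parameter (e.g.\ a common $z$-offset of $a$ and $b$) to cancel the imaginary part of the product so only $X_2$ survives. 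For \textbf{necessity of $X_3$} I would run the same strategy but tune instead so the real part cancels: e.g.\ take $E_p=E_q$ and massless $a,b$ with $a_x=b_x=x$ but $m_a\ne m_b$, which kills $G_{pq}(0)$, $g^{a-b}_{p-q}$ and $f^{ab}\cdot G_{pq}(0)$ while leaving $u_1 g^{a-b}_{a+b}\cdot G_{pq}(0)$ and $u_2 f^{ab}\cdot G_{pq}(1)$ available to populate $X_3$.

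The main obstacle is purely combinatorial bookkeeping: each candidate event has to simultaneously zero out several bilinear/quartic invariants of the momenta without accidentally zeroing the one we want to keep, and also without making the event \textbf{non-chiral} or landing it in the \textbf{non-collision} set (where Lemma~\ref{lem:everynoncollevinpqabxallisnonchiral} would make the whole exercise vacuous). Rather than searching blindly, the cleanest tactic is the one visible in the author's templates: parameterise $p,q$ as back-to-back beams along $\hat z$, and $a,b$ as transverse-mirror pairs with a few tunable parameters ($x$-components, masses, common $z$), then write down the three polynomials $X_1, X_2, X_3$ and solve, by inspection, for parameter choices making two of them vanish identically while the third reduces to a monomial that is manifestly non-zero on an open set. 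Once such parameter choices are exhibited, the verification of the three required equalities and one inequality is a finite algebraic check for each of the three cases, completing the proof.
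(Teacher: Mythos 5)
Your proposal takes essentially the same route as the paper: for each $j$ you exhibit (or sketch how to exhibit) a back-to-back-beam event with a mirror-paired $a,b$ on which one factor structure forces $X_k=0$ for $k\ne j$ while $X_j$ reduces to a manifestly non-zero monomial, which is exactly how the paper's proof proceeds via its explicit example events. The only things separating your sketch from a complete proof are writing out the concrete momenta and doing the finite algebraic checks, plus fixing the small slip in your $X_3$ case ("massless $a,b$ \ldots\ but $m_a\ne m_b$" is self-contradictory; you need $m_a\ne m_b$ with massive $a$, $b$, as in the paper's example).
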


\begin{corollary}
The parity-even part of any event selection is not unimportant.  This statement applies regardless of whether the class of events is $\pqabAllEvent$ or some other class of events.
\label{cor:theneedforparityeveneventselectionsaswellaspoddvars}
\begin{proof}
It should not need saying that  one cannot discover new physics if one throws away the events that contain the signs of new physics.  But a more subtle point is that a poor event selection can render as unobservable signs of new physics for which evidence is present in the events selected.  The events parameterised in \eqref{eq:exampleeventforX1version1} and \eqref{eq:exampleeventforX1version2} provide a nice concrete illustration of this point. Specifically:  suppose that there were a hypothetical parity-violating  new-physics process ``$C$'' which had a cross section ``$\sigma$'' for producing events of the form shown in \eqref{eq:exampleeventforX1version1}  with the parameters taking the values $a_x=y=z=1$ and $b_x=1$.  An event of that type (we will call it $c$) would look like this:
\begin{align}
c=\left(
\begin{array}{l}
a^\mu=\fourvecMassForm 0 1 {+1} {+1},\\
b^\mu=\fourvecMassForm 0 1 {-1} {-1},\\
p^\mu=\fourvecT e 0 0 {+p},\\
q^\mu=\fourvecT e 0 0 {-p}
\end{array}
\right)
\end{align}
for which
\begin{align}
\left(
\begin{array}{l}
X_1(c) = -64e^3 p^2,\\
X_2(c) = 0, \\
X_3(c) =0
\end{array}
\right).
\end{align}
Furthermore, suppose that there is a different parity-violating new-physics process ``$D_1$'' which has the same cross section $\sigma$ but which only produces events of form shown in \eqref{eq:exampleeventforX1version2}  with the parameters taking the values $\alpha=0$, $\beta=\frac \pi 2$, $y=1$ and $\lambda=2$.  An event of that type (we will call it $d_1$) would look like this:
\begin{align}
d_1=\left(
\begin{array}{l}
a^\mu=\fourvecMassForm 0 2 {1} 0,\\
b^\mu=\fourvecMassForm 0 0 {-1} {2},\\
p^\mu=\fourvecT e 0 0 {+p},\\
q^\mu=\fourvecT e 0 0 {-p}
\end{array}
\right)
\end{align}
for which
\begin{align}
\left(
\begin{array}{l}
X_1(d_1) = +64 e^3 p^2,\\
X_2(d_1) = 0, \\
X_3(d_1) =0
\end{array}
\right)
.
\end{align}
One readily sees that because the values of $X_1$ are equal and opposite for events of type $c$ and $d_1$, and because the cross sections for both new-physics processes are identical, a histogram of $X_1$-values built from a dataset containing events of both types would be symmetric and would therefore provide no evidence of parity violation.  The failure to see the parity violation is not, however, the fault of $X_1$.  The inability to discover parity violation stems from failure to separate two very distinguishable sources of parity violation.  Concretely: consider the parity-even Lorentz scalar $K$ defined\footnote{Note that although $K$ is parity-even it otherwise has the same `\textbf{invariance under specific symmetries}' which we required of the parity-odd event variables $X_n$.  Specifically $K$ is: (i) Lorentz-invariant, (ii) invariant with respect to  $a^\mu\leftrightarrow b^\mu$ and (iii)  invariant with respect to  $p^\mu\leftrightarrow q^\mu$. Furthermore, $K$ is \textbf{continuous}.}
by\begin{align}
K=(a^\mu(p_\mu-q_\mu))^2 + (b^\mu(p_\mu-q_\mu))^2.\end{align}  
Because
\begin{align}
K(c)&=+1,\qquad\text{and}\\
K(d_1)&=-1
\end{align}
an event selection which kept only events for which $K>0$ would therefore generate an asymmetric distribution for $X_1$ (specifically a delta function at $X_1=-64 e^3 p^2$) and so would find evidence of parity-violation  caused by process $C$. Similarly evidence for parity violation could also have been found if we had instead selected only events with $K<0$.  This time the observed parity violation would be due to process $D_1$.

We found above that a non-trivial event selection (in this case on a variable $K$) was needed to find the parity violation which was present in $C$ and $D_1$.  The $X_n$ variables have done what was claimed of them in the preamble to this paper: an asymmetry in at least one of their distributions provided evidence of observable parity violation given an appropriate selection.

But at this point, one might ask:
\begin{quote}`Were we lucky in the example just given?  Could a situation have arisen in which there was a third parity-violating new-physics processes $D_2$ which (like $D_1$) could hide its parity violation when present with $C$, but whose events \textbf{could not}, alas, be separated from those of $C$ by the use of an appropriate event selection?  In this case would the variables $X_n$ fail to do what is claimed of them?'\end{quote}

The answer here is a simple `no'.  At least one of the $X_n$ is always capable of seeing any  source of parity violation which is present in the dataset of input momenta (given enough luminosity and an appropriate event selection). The only way one source of parity violation, $C$, can completely hide another, $D_2$, is if for every type of event $e$ which $C$ produces, process $D$ produces the parity-inversion of  $e$ (up to a rotation or an element of the symmetry group  imposed on the events) at the same rate. In such a situation the totality of $C$ together with $D_2$ would amount to a parity conserving theory overall. Process $D_2$ would (in such a case) be the parity-complement of process $C$. Put differently: what it means to say that there is `observable parity violation within a dataset' is that there is an event selection for which a parity-odd variable can show an asymmetry.  This is simply true by definition. If no sufficiently good selection exits, then there was no observable parity violation in the first place!

We conclude this discussion by noting that our sister papers \cite{Gripaios:2020ori} and \cite{Gripaios:2020hya} exist to explain how (for any class of events) one may create a finite set $S=\{K_1,\ldots,K_m\}$ containing event variables $K_1,\ldots,K_m$ which collectively are guaranteed to be able to perform the role which $K$ performed in the example above. In summary:
\begin{itemize}
\item
the present paper shows how the number of parity-odd event-variables whose asymmetries  need to be measured (if one wishes to maintain full sensitivity to all forms of observable parity-violation) can be reduced from infinity to a finite number, while still ensuring that those parity-odd event-variables have the properties listed in  Section~\ref{sec:ourobjectives}; while
\item
our sister papers \cite{Gripaios:2020ori} and \cite{Gripaios:2020hya} demonstrate how the number of parity-even event-variables which are needed to form event-selections (if one is to make use of the parity-odd discovery variables mentioned in the last bullet point) can be reduced from infinity to a finite number, while still ensuring that those parity-even event-variables have the properties listed in  Section~\ref{sec:ourobjectives}; and so
\item
the goals of the present paper rely on the results of the sister papers.
\end{itemize}
\end{proof}
\end{corollary}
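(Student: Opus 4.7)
The plan is to establish the corollary by constructing an explicit counterexample in $\pqabAllEvent$ showing that the set $S=\{X_1,X_2,X_3\}$ alone is insufficient to expose all observable parity violation in a dataset, and then demonstrating that a suitably chosen parity-even event selection restores the visibility. This dual construction makes the qualitative statement quantitative, and the extension beyond $\pqabAllEvent$ will follow from a structural argument in the orbit language of the introduction.

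First I would exploit the event family parameterised in equation~\eqref{eq:exampleeventforX1version1}, already used to establish the necessity of $X_1$. For those events $X_2$ and $X_3$ vanish identically, so only $X_1$ can carry a signal. I would select two parameter tuples yielding events $c$ and $d_1$ with $X_1(c)=-X_1(d_1)\ne 0$, and postulate two hypothetical parity-violating processes $C$ and $D_1$ with identical cross sections producing exclusively $c$ and $d_1$ respectively. The combined $X_1$-histogram is then exactly symmetric, while the $X_2$ and $X_3$ histograms collapse to delta functions at zero, so the set $S$, applied with no prior selection, fails to detect parity violation that is nevertheless present in each contributing process.

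Next I would exhibit a parity-even, Lorentz-invariant, $S_2(pq)$- and $S_2(ab)$-symmetric continuous scalar $K$ — a natural choice is $K(e)=(a.(p-q))^2+(b.(p-q))^2$ — and verify by direct substitution that $K(c)$ and $K(d_1)$ have opposite signs for the chosen parameter values. The selection cut $K>0$ then retains only $c$-type events, collapsing the $X_1$-histogram onto a non-zero spike that manifestly signals parity violation from $C$; the complementary cut $K<0$ isolates $D_1$ and does the same for the other source. This shows that a non-trivial parity-even selection is indispensable even when the parity-odd variable set is \textbf{sufficient} in the sense of Lemma~\ref{lem:threevarsforpqabevents}.

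The extension to other event classes I would phrase in the orbit language introduced in Section on the bigger picture: observable parity violation in a dataset amounts to unequal weights on the two $H$-orbits composing some parity-pair, and any parity-odd observable averaged over the dataset is a signed sum of such imbalances which can globally cancel. A parity-even selection descends to $X/H$ and can isolate individual pairs; the existence of a finite family of such selections which jointly separate orbits is precisely what the sister papers \cite{Gripaios:2020ori,Gripaios:2020hya} supply. The main obstacle I anticipate is making this extension rigorous without slipping into circularity around the phrase ``observable parity violation''; the cleanest resolution is to define observable parity violation as the existence of a measurable subset of $X/H$ on which the dataset differs from its parity image, and then invoke the sister-paper separating sets to convert any such subset into a concrete parity-even cut, closing the argument uniformly across event classes.
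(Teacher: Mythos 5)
Your proposal is essentially the paper's own proof: the same two-process construction ($C$ and $D_1$ with equal cross sections and cancelling $X_1$ values, $X_2=X_3=0$), the same parity-even discriminator $K=(a\cdot(p-q))^2+(b\cdot(p-q))^2$ used as a cut to resurrect the asymmetry, the same definitional resolution of the ``could $D_2$ hide?'' worry (observable parity violation \emph{means} the existence of a selection exhibiting an asymmetry in some parity-odd variable), and the same appeal to the sister papers for a finite separating family of parity-even selection variables.

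Two small wrinkles in your write-up are worth repairing. First, the step ``verify that $K(c)$ and $K(d_1)$ have opposite signs'' cannot succeed as written: $K$ is a sum of squares, hence non-negative (the paper's quoted values $K(c)=+1$, $K(d_1)=-1$ suffer the same slip); what is actually needed, and what holds, is merely that $K$ takes \emph{different} values on the two event types, so that a threshold cut on $K$ isolates one process. Second, you draw both $c$ and $d_1$ from the single family \eqref{eq:exampleeventforX1version1}, whereas the paper takes them from \eqref{eq:exampleeventforX1version1} and \eqref{eq:exampleeventforX1version2} respectively. Within the single family the obvious sign-flipped parameter choices (e.g.\ $z\to-z$ or $a_x,b_x\to-a_x,-b_x$) produce an event that is the parity image of the first up to a rotation about the beam axis and an $(ab)$ swap, in which case the combined dataset is genuinely non-chiral and there is nothing for any selection to recover; moreover, for that family $K=8p^2z^2$, so two events with $|z|$ equal are not separated by your $K$ at all. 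You must therefore either check explicitly that your chosen pair is not related by parity composed with an element of the symmetry group and that your chosen parity-even invariant distinguishes them, or simply follow the paper and take the two events from the two different families.
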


\begin{corollary}
The quantities $X_1$,  $X_2$ and $X_3$ defined in Lemma~\ref{lem:threevarsforpqabevents} are defined in a co-ordinate free way, and may be computed both for \textbf{collision events} and \textbf{non-collision events}. However, if one is willing to work just with \textbf{collision events}, one could choose to work in the frame in which $\vec p$ is equal and opposite to $\vec q$, with both $\vec p$ and $\vec q$ parallel to the $z$-axis, and with the axes oriented such that $p_z>0$, (see  Lemmas~\ref{lem:collisioncond} and \ref{lem:collisionshaveaxis}).  By glancing at the form of $X_1$, $X_2$ and $X_3$ in that frame one may write down three alternative variables $\hat X_1$,  $\hat X_2$ and $\hat X_3$ which may be easier to understand but share the key properties of  $X_1$, $X_2$ and $X_3$.
\begin{align}
\hat X_1 &=  ((\vec a \times \vec b )\cdot \hat {\vec p})
( (\vec a - \vec b)\cdot \hat {\vec p} ) \nonumber \\
&=  (a_x b_y - a_y b_x) (a_z - b_z) \label{eq:hattedx1}
,\\
\hat X_2 &=  ((\vec a \times \vec b )\cdot \hat {\vec p})
\ \Re\left[ \left(   (m_a-m_b) + i u_1 (|\vec a|-|\vec b|) \right)  
\left(  (m_p-m_q) + i u_2  (\vec a+\vec b)\cdot \hat{\vec p}   \right) \right] \nonumber
\\
&=
(a_x b_y - a_y b_x) 
 \left(  (m_a-m_b)(m_p-m_q) -   u_1 u_2   (|\vec a|-|\vec b|) (a_z+b_z)   \right),\label{eq:hattedx2}
\\
\hat X_3 &=  ((\vec a \times \vec b )\cdot \hat {\vec p})
\ \Im\left[ \left(   (m_a-m_b) + i u_1 (|\vec a|-|\vec b|) \right)  
\left(  (m_p-m_q) + i u_2  (\vec a+\vec b)\cdot \hat{\vec p}   \right) \right] \nonumber
\\
&=
(a_x b_y - a_y b_x) 
 \left(     u_1 (|\vec a|-|\vec b|) (m_p-m_q) +  u_2 (m_a-m_b) (a_z+b_z)   \right). \label{eq:hattedx3}
\end{align}
Note, these hatted variables are not identical to the un-hatted ones $X_1$,  $X_2$ and $X_3$.  Rather they are a different set of variables which nevertheless satisfy their own equivalent of   Lemma~\ref{lem:threevarsforpqabevents}
(provided that $e$ is a \textbf{collision event} in $\pqabColEvent$).
\end{corollary}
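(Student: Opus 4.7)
The plan is to show that, when evaluated in the $(p+q)$-rest frame with $\hat{\vec p}$ along the positive $z$-axis (a frame which exists and is unique up to a rotation about the $z$-axis by Lemma~\ref{lem:collisionshaveaxis}), each of $\hat X_1$, $\hat X_2$, $\hat X_3$ has exactly the same zero set on $\pqabColEvent$ as the corresponding $X_1$, $X_2$, $X_3$. Since Lemma~\ref{lem:threevarsforpqabevents} already establishes sufficiency of $\{X_1,X_2,X_3\}$ for chiral events in $\pqabAllEvent$ (hence in $\pqabColEvent$), the corresponding sufficiency claim for $\{\hat X_1,\hat X_2,\hat X_3\}$ follows immediately.

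The heart of the argument is a factor-by-factor identification. Using Lemma~\ref{lem:abcvectripprodinLIform} one finds that in the chosen frame $\mathscr P^{ab}_{pq}(0)=[a,b,p,q]$ is proportional to $(\vec a\times\vec b)\cdot\hat{\vec p}$, with a strictly positive multiplier determined by $(p+q)^2$ and $|\vec p|$; by Lemma~\ref{lem:adotbinsomeframe} the Gram determinants $G^{ab}_{pq}(2)=g^{a-b}_{p-q}$ and $G^{\phantom{ab}}_{pq}(1)=g^{a+b}_{p-q}$ are proportional to $(\vec a-\vec b)\cdot\hat{\vec p}$ and $(\vec a+\vec b)\cdot\hat{\vec p}$ respectively; by Lemma~\ref{lem:threemominframe} $|\vec a|$ is a strictly positive function of $\SYMGRAMTWO{a}{p+q}$ and $m_a$; the mass factor $f^{ab}=m_a^2-m_b^2=(m_a+m_b)(m_a-m_b)$ vanishes iff $m_a-m_b=0$ since both masses are non-negative; and $g^{a-b}_{a+b}$ reduces in the frame to $|\vec a|^2-|\vec b|^2$, which vanishes iff $|\vec a|-|\vec b|=0$ by Lemma~\ref{lem:modamodbisdotprodrel}. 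Analogously, $G^{\phantom{ab}}_{pq}(0)=p^2-q^2$ vanishes iff $m_p-m_q=0$. Hence in this frame $\hat X_1=K\cdot X_1$ for some strictly positive kinematic $K$, while the complex number $(f^{ab}+iu_1g^{a-b}_{a+b})(G^{\phantom{ab}}_{pq}(0)+iu_2G^{\phantom{ab}}_{pq}(1))$ has exactly the same vanishing locus as $((m_a-m_b)+iu_1(|\vec a|-|\vec b|))((m_p-m_q)+iu_2(\vec a+\vec b)\cdot\hat{\vec p})$, so $\{\hat X_2,\hat X_3\}$ vanishes simultaneously iff $\{X_2,X_3\}$ does.

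The main obstacle, and the one requiring genuine care, is well-definedness: the hatted expressions are written in a frame, so one must verify that their values do not depend on the residual gauge freedom left unfixed by the prescription. Invariance under the $SO(2)$ of rotations about the beam axis is immediate, since the only ingredients used are $|\vec a|$, $|\vec b|$, $a_z$, $b_z$, $(\vec a+\vec b)\cdot\hat{\vec p}$, $(\vec a-\vec b)\cdot\hat{\vec p}$, and the signed $z$-triple-product $(\vec a\times\vec b)\cdot\hat{\vec p}$, all of which are $SO(2)$-scalars. Invariance under the discrete permutations $p\leftrightarrow q$ and $a\leftrightarrow b$ reduces to a bookkeeping of signs: under $p\leftrightarrow q$ both $\hat{\vec p}$ and $(m_p-m_q)$ change sign, and under $a\leftrightarrow b$ each of $(\vec a\times\vec b)\cdot\hat{\vec p}$, $(\vec a-\vec b)\cdot\hat{\vec p}$, $m_a-m_b$ and $|\vec a|-|\vec b|$ changes sign, and a count of such factors in each $\hat X_n$ yields a net invariance. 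Parity-oddness and reality are then immediate from the form of the expressions. Once these checks are in place the correspondence with $\{X_1,X_2,X_3\}$ described above completes the sufficiency claim.
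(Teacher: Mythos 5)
Your proof is correct and follows essentially the same route the paper intends: in the $(p+q)$ rest frame each unhatted ingredient ($[a,b,p,q]$, $g^{a-b}_{p-q}$, $g^{a+b}_{p-q}$, $f^{ab}$, $g^{a-b}_{a+b}$, $p^2-q^2$) equals the corresponding hatted ingredient up to a nonvanishing kinematic factor, so the zero sets coincide and sufficiency transfers from Lemma~\ref{lem:threevarsforpqabevents}, with your added checks of invariance under residual $z$-rotations, $p\leftrightarrow q$ and $a\leftrightarrow b$ being a welcome explicit verification of what the paper leaves to inspection. The only nitpick is that the sign (not just the nonvanishing) of the multiplier relating $[a,b,p,q]$ to $(\vec a\times\vec b)\cdot\hat{\vec p}$ depends on the unspecified $\epsLor_{0123}$ convention, but since your argument only uses zero sets this is harmless.
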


\subsection{
\textbf{Collision events} in $\pqabcColEvent$}

\label{sec:exploratorysectionforpqabceventvars}

We will first identify a set $S_{19}$ of 19 variables which, between them, will assign at least one non-zero parity to every \textbf{chiral} \textbf{collision event} $e\in\pqabcColEvent$, that is to say to any \textbf{collision event} satisfying the constraint $C$ of \eqref{eq:secondtimeCisdefined}.  $S_{19}$ will therefore have the \textbf{sufficiency} property mentioned in the introduction.  The equivalent task for events in $\pqabColEvent$ was performed (in Lemma~\ref{lem:threevarsforpqabevents}) largely by inspection.  With the addition of $c$ the task in hand will take longer to accomplish.

Once the set $S_{19}$ has been constructed, we will prove that it has the \textbf{irreducible} property mentioned in the introduction.

\subsubsection{Four sub-cases of $C$:  $C_1$, $C_2$, $C_3$ and $C_4$.}

\begin{figure}
\begin{centering}
\includegraphics[width=0.95\textwidth]{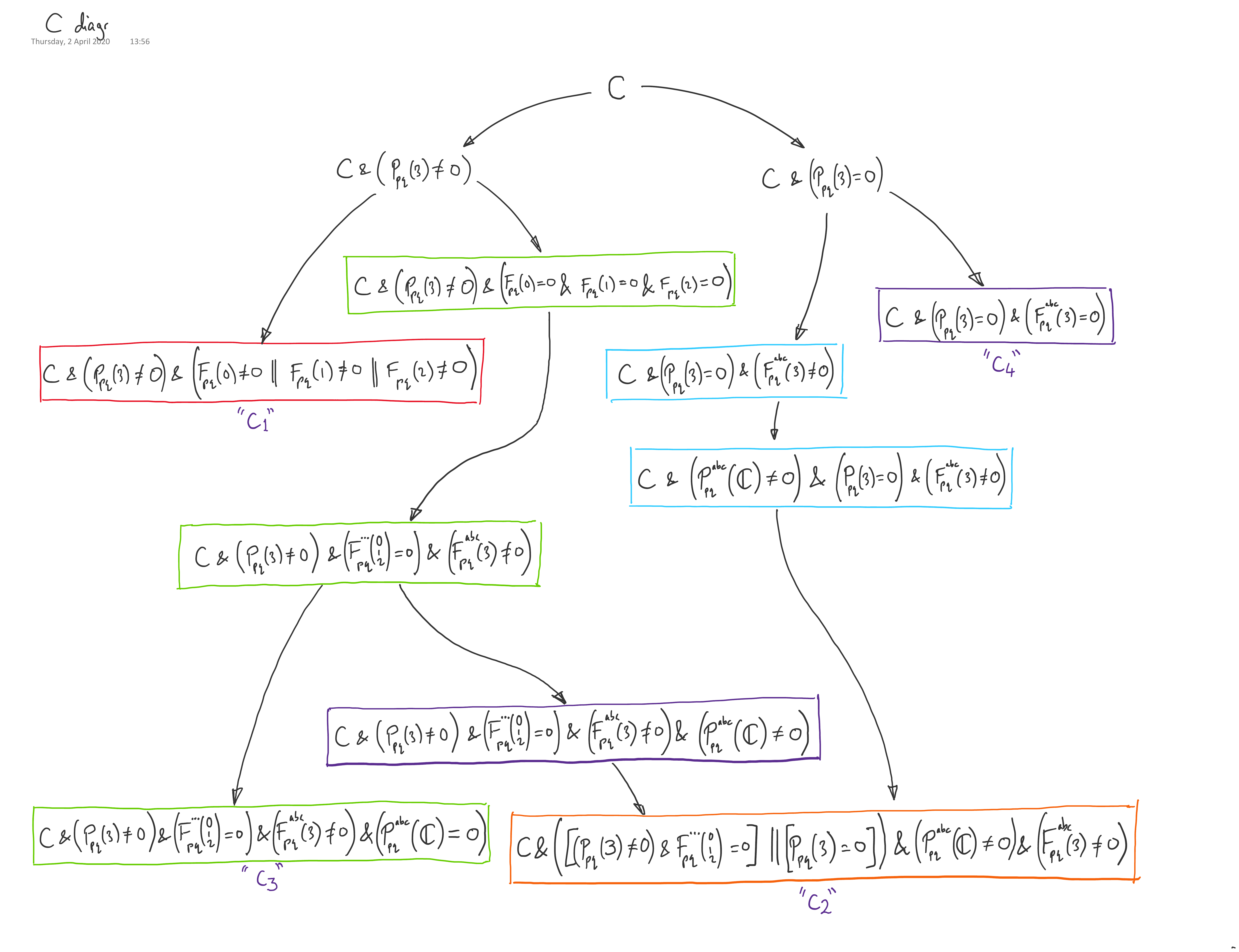}
\caption{The method by which $C$ is split into disjoint sub-cases, $C_1$, $C_2$, $C_3$ and $C_4$.  Note that the development of the second cyan box out of the first follows from the requirement within $C$ that $(\POne\ne0)\lor(\PTwo\ne0)\lor(\PThr\ne0)$. \label{fig:C_split_big}}
\end{centering}
\end{figure}

We begin by partitioning $C$ into four disjoint sub-cases, $C_1$, $C_2$, $C_3$ and $C_4$ as shown  in Figure~\ref{fig:C_split_big}. For each sub-case $C_i$ we will find a set of variables $S_{19}^{(i)}$ that will have the \textbf{sufficiency} property for any  \textbf{collisions event} in $\pqabcColEvent$ satisfying $C_i$.  The set $S_{19}$ will then be defined to be the union of these sets,
$S_{19}=S_{19}^{(1)} \cup S_{19}^{(2)} \cup S_{19}^{(3)} \cup S_{19}^{(4)}$, and will therefore, by construction, have the desired \textbf{sufficiency} property for any event satisfying $C$.

\begin{definition}
\label{def:CPARTBEFOREC1C2}
The following five \hyperlink{link:allpqabcvars}{variables} $V_1,\ldots,V_5$ are all \textbf{parity-odd}, are manifestly Lorentz-invariant, are manifestly invariant under permutations of $a$, $b$ and $c$, and are manifestly invariant under exchange of $p$ and $q$: 
\begin{align}
V_1 
    &=  
    \mathscr{P}^{abc}_{pq}(1)
    F^{abc}_{pq}(3),\label{eq:V1DEF}
\\
V_2
    &=  
    \mathscr{P}^{abc}_{pq}(2)
    F^{abc}_{pq}(3),
\\
V_3
    &=  
    \mathscr{P}^{\phantom{abc}}_{pq}(3)
    F^{\phantom{abc}}_{pq}(0),
\\
V_4
    &=  
    \mathscr{P}^{\phantom{abc}}_{pq}(3)
    F^{\phantom{abc}}_{pq}(1),
\\
V_5
    &=  
    \mathscr{P}^{\phantom{abc}}_{pq}(3)
    F^{\phantom{abc}}_{pq}(2),\label{eq:V5DEF}
\end{align}
\end{definition}

\subsubsection{A set of variables $S_{19}^{(1)}$ to cover $C_1$}

\begin{lemma}
\label{lem:s19ONE}
The set
\begin{align}
S_{19}^{(1)} = \{ V_3, V_4, V_5 \} \label{eq:s19ONE}
\end{align}
assigns at least one non-zero parity to every event in $C_1$ as required.
\begin{proof}
The result follows trivially from the definition of $C_1$.  [Recall that $C_1 = C \land (\mathscr{P}^{\phantom{abc}}_{pq}(3)\ne 0) \land
\left(
(F^{\phantom{abc}}_{pq}(0)\ne0) \lor
(F^{\phantom{abc}}_{pq}(1)\ne0) \lor
(F^{\phantom{abc}}_{pq}(2)\ne0)
\right).
$]
\end{proof}
\end{lemma}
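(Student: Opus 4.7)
The plan is to read the conclusion straight off the definition of $C_1$. By construction, every event satisfying $C_1$ obeys both $\mathscr{P}^{\phantom{abc}}_{pq}(3)\ne 0$ and $(F^{\phantom{abc}}_{pq}(0)\ne 0) \lor (F^{\phantom{abc}}_{pq}(1)\ne 0) \lor (F^{\phantom{abc}}_{pq}(2)\ne 0)$. First I would pick whichever $F^{\phantom{abc}}_{pq}(j)$ (for some $j\in\{0,1,2\}$) is guaranteed non-zero by the disjunction, and then observe that the matching element of $S_{19}^{(1)}$ --- namely $V_{3+j} = \mathscr{P}^{\phantom{abc}}_{pq}(3)\, F^{\phantom{abc}}_{pq}(j)$ --- is a product of two non-zero real numbers, hence non-zero.

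The reason this argument is so direct is that the partition of $C$ drawn in Figure~\ref{fig:C_split_big} was designed to peel off the easy case first: whenever the fully $S_2(pq)\times S_3(abc)$-symmetric pseudoscalar $\mathscr{P}^{\phantom{abc}}_{pq}(3)$ is non-vanishing, it may simply be multiplied by any manifestly symmetric parity-even Lorentz scalar to produce a viable $V_i$, and the $F^{\phantom{abc}}_{pq}(j)$ factors constructed in Definition~\ref{def:CPARTBEFOREC1C2} supply exactly three such companions. Under $C_1$ at least one of those companions is promised to be non-zero, closing the proof.

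I do not expect any real obstacle here: neither a case analysis on the permutation symmetries of the constituent factors nor any appeal to invariant-theoretic manipulations is needed, because both $\mathscr{P}^{\phantom{abc}}_{pq}(3)$ and each $F^{\phantom{abc}}_{pq}(j)$ have already been certified Lorentz-invariant, $(pq)$-invariant and $(abc)$-invariant in the lines immediately following equations \eqref{eq:V1DEF}--\eqref{eq:V5DEF}. The genuinely difficult work is deferred to the complementary sub-cases $C_2$, $C_3$ and $C_4$, in which the pseudoscalar ingredient will carry only residual symmetry and must be paired with carefully engineered parity-even companions --- likely via the real- and imaginary-part construction already rehearsed for $X_2$ and $X_3$ in Lemma~\ref{lem:threevarsforpqabevents} --- in order to recover full $(pq)$- and $(abc)$-invariance.
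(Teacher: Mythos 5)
Your argument is correct and is exactly the paper's own (one-line) proof, merely spelled out: under $C_1$ the factor $\mathscr{P}^{\phantom{abc}}_{pq}(3)$ is non-zero and at least one $F^{\phantom{abc}}_{pq}(j)$ with $j\in\{0,1,2\}$ is non-zero, so the corresponding product $V_{3+j}$ is non-zero. Nothing is missing.
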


\subsubsection{A set of variables $S_{19}^{(2)}$ to cover $C_2$}
\label{lem:s19TWO}
\begin{lemma}
The set
\begin{align}
S_{19}^{(2)} = \{ V_1, V_2 \}\label{eq:s19TWO}
\end{align}
assigns at least one non-zero parity to every event in $C_2$ as required.
\begin{proof}
Recall that 
\begin{align*}
C_2 
&= 
\left(
C \land (\text{another constraint})\land(\mathscr{P}^{{abc}}_{pq}(\mathbb{C})\ne 0) \land
(F^{{abc}}_{pq}(3)\ne0)\right)
\\
&=
\left(
C \land (\text{another constraint})\land\left(
(
\mathscr{P}^{{abc}}_{pq}(1)\ne0
)
\lor
(
\mathscr{P}^{{abc}}_{pq}(2)\ne0
)
\right) \land
(F^{{abc}}_{pq}(3)\ne0)\right)
\end{align*}
which is sufficient to ensure that at least one of $V_1$ and $V_2$ is non-zero on $C_2$.
\end{proof}
\end{lemma}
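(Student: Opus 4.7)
The plan is to argue directly from how $C_2$ is defined, using nothing more than the nonvanishing clauses that partition out $C_2$ from the other three cases. First I would unpack membership in $C_2$ as stated in the excerpt: up to the auxiliary clauses inherited from $C$ (which play no role in this step), belonging to $C_2$ requires both $\mathscr{P}^{abc}_{pq}(\mathbb{C})\ne 0$ and $F^{abc}_{pq}(3)\ne 0$.

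Second, I would invoke the identity $\mathscr{P}^{abc}_{pq}(\mathbb{C}) = \mathscr{P}^{abc}_{pq}(1) + i\,\mathscr{P}^{abc}_{pq}(2)$ from Definition~\ref{def:helpfulnotationforsymms}. Since a complex number vanishes precisely when both its real and imaginary parts vanish, the hypothesis $\mathscr{P}^{abc}_{pq}(\mathbb{C})\ne 0$ is equivalent to $(\mathscr{P}^{abc}_{pq}(1)\ne 0) \lor (\mathscr{P}^{abc}_{pq}(2)\ne 0)$.

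Third, I would combine this disjunction with the second hypothesis $F^{abc}_{pq}(3)\ne 0$: if the first disjunct holds then $V_1=\mathscr{P}^{abc}_{pq}(1)\,F^{abc}_{pq}(3)$ is the product of two nonzero reals and so is nonzero; if instead the second holds then $V_2=\mathscr{P}^{abc}_{pq}(2)\,F^{abc}_{pq}(3)\ne 0$ by the same argument. Either way at least one variable in $S_{19}^{(2)}$ is nonzero, which is precisely the sufficiency claim for $C_2$.

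There is essentially no obstacle here; the lemma is a one-line consequence of how the partition of $C$ into sub-cases was engineered in Figure~\ref{fig:C_split_big}, together with the design of $V_1$ and $V_2$ as the real and imaginary parts of the product $\mathscr{P}^{abc}_{pq}(\mathbb{C})\cdot F^{abc}_{pq}(3)$. The only thing to keep a small eye on is that $F^{abc}_{pq}(3)$ is real (so that multiplying it by the complex $\mathscr{P}^{abc}_{pq}(\mathbb{C})$ genuinely separates the two parity-odd pieces into $V_1$ and $V_2$ without mixing), which is immediate from its definition in \eqref{eq:F3abcpqdef} as a product of real Gram determinants.
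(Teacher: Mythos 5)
Your proof is correct and is essentially the paper's own argument: both unpack the $\mathscr{P}^{abc}_{pq}(\mathbb{C})\ne 0$ clause of $C_2$ into the disjunction $(\mathscr{P}^{abc}_{pq}(1)\ne0)\lor(\mathscr{P}^{abc}_{pq}(2)\ne0)$ and pair it with $F^{abc}_{pq}(3)\ne0$ to conclude that $V_1$ or $V_2$ is a product of nonzero reals. You simply spell out the real/imaginary-part step and the realness of $F^{abc}_{pq}(3)$ a little more explicitly than the paper does.
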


\subsubsection{A set of variables $S_{19}^{(3)}$ to cover $C_3$}
\label{sec:wemushtgivethissecaname}

\begin{definition}
Suppose $f(x,y;z)$ is a function which takes as arguments three four-momenta ($x$, $y$ and $z$) and which returns are real or complex Lorentz scalar (and not a pseudoscalar!).  Suppose also that $f$ is antisymmetric in the first two arguments: $f(x,y;z)=-f(y,x;z)$.  Given such an $f$, define the event variable $\mathscr{P}(f)$ as follows:\footnote{Perhaps it would be better called an event \textit{functional}.}
\begin{align}
\mathscr{P}(f)&\equivdef \sum_{
\tiny\begin{array}{c}
  \text{Perms of} 
      \\
      \{a,b,c\}
\end{array}} 
\left\{
\mathscr{P}^{\phantom{abc}}_{pq}(3)
\cdot
F^{abc}_{pq}(3)
\cdot
g^a_{p-q} \cdot g^c_{p-q} \cdot f(a,c;b)
\vphantom{\int}
\right\}.\label{eq:bigpf}
\end{align}
By construction such an event variable is parity-odd, Lorentz invariant, and invariant under permutations of both $(abc)$ and $(pq)$.
Since there $3!$ perms of $\{a,b,c\}$ such  a function is a sum of six terms.
\end{definition}

\begin{definition}
Define two auxiliary functions of four-momenta, $f_6(x,y;z)$ and $f_7(x,y;z)$ as follows:
\begin{align}
f_6(x,y;z) &= \epsShort^{xz}_{pq} - \epsShort^{yz}_{pq}, \\
f_7(x,y;z) &= x^2 - y^2.
\end{align}
In terms of those auxiliary functions defined two new event variables $V_6$ and $V_7$ as follows:
\begin{align}
V_6 \label{eq:v6def} &\equiv \mathscr{P}(f_6) \equiv \sum_{
\tiny\begin{array}{c}
  \text{Perms of} 
      \\
      \{a,b,c\}
\end{array}} 
\left\{
\mathscr{P}^{\phantom{abc}}_{pq}(3)
\cdot
F^{abc}_{pq}(3)
\cdot
g^a_{p-q} \cdot g^c_{p-q} \cdot \left(
(\epsShort^{ab}_{pq})^2
-
(\epsShort^{cb}_{pq})^2
\right)
\vphantom{\int}
\right\}, \qquad\text{and}\\
V_7 \label{eq:v7def} &\equiv \mathscr{P}(f_7) \equiv \sum_{
\tiny\begin{array}{c}
  \text{Perms of} 
      \\
      \{a,b,c\}
\end{array}} 
\left\{
\mathscr{P}^{\phantom{abc}}_{pq}(3)
\cdot
F^{abc}_{pq}(3)
\cdot
g^a_{p-q} \cdot g^c_{p-q} \cdot (a^2-c^2)
\vphantom{\int}
\right\}. 
\end{align}
\end{definition}

\begin{remark}
\noindent  The remainder of  Section~\ref{sec:wemushtgivethissecaname} will be directed toward proving Lemma~\ref{lem:C1maintheorem}. This lemma will show that the set of event variables $S_{19}^{(3)}=\{V_6, V_7\}$  is sufficient to assign a non-zero parity to any \textbf{chiral} \textbf{collision event} in $C_3$.  A first step toward doing so is to consider the conditions $C_3$ in more detail.
\end{remark}

\noindent Writing out $C_3$ explicitly we see that:
\begin{gather*}
C_3 \\
\equiv \\
\left(
    \PThr\ne 0)
\land
    (\FZer=0) \land (\FOne=0) \land (\FTwo=0)
\land
    (\FThr\ne0)
\land
    (\PCom=0)
\right) 
        \land
        C\\
        =\\
\toperator \\
    (\PCom=0)
\land
    (\PThr\ne 0)
\land
    (\FZer=0)
\land
    (\FOne=0)
\land
    (\FTwo=0)
\land
    (\FThr\ne0)
\\
\separatorlor \land
\\
 (
\isFalse{
  \POne
  \ne0}
)
\lor
(
\isFalse{
\PTwo
  \ne0
  }
)
\lor
(
\isTrue{
\PThr
  \ne0
  }
)
\\ 
\separatorlor \land
\\ 
\left\{
\begin{array}{c}
(f^{ab}\ne0) \lor \isTrue{(g^{a-b}_{p-q}\ne0)} \lor (g^{a-b}_{a+b}\ne0) \lor (g^{a-b}_{a+b+c}\ne0) \lor (a=b) \\ 
\smallseparatorlor \land \\
(f^{bc}\ne0) \lor \isTrue{(g^{b-c}_{p-q}\ne0)} \lor (g^{b-c}_{b+c}\ne0) \lor (g^{b-c}_{a+b+c}\ne0) \lor (b=c) \\ 
\smallseparatorlor \land \\
(f^{ca}\ne0) \lor \isTrue{(g^{c-a}_{p-q}\ne0)} \lor (g^{c-a}_{c+a}\ne0) \lor (g^{c-a}_{a+b+c}\ne0) \lor (c=a) 
\end{array}
\right\}
\\
\separatorlor \land
\\ 
(\isFalse{
 \FZer
  \ne0}
)
\lor
(\isFalse{
  \FOne
  \ne0}
)
\lor
(\isFalse{
\FTwo
  \ne0}
)
\lor
(
\isTrue{
\FThr
\equivdef
  g^{a-b}_{p-q} 
  g^{b-c}_{p-q} 
  g^{c-a}_{p-q}
  \ne0
  }
)
\\ 
\separatorlor \land
\\ 
(\isFalse{
 \FZer
  \ne0}
)
\lor
(\isFalse{
  \FOne
  \ne0}
)
\lor
\left\{
\begin{array}{c}
    (f^{ab}\ne0) \lor (g^{a+b}_{p-q}\ne0) \lor (\Delta_3(a-b,p,q)\ne0) \\
    \tinyseparatorlor \land \\
    (f^{bc}\ne0) \lor (g^{b+c}_{p-q}\ne0) \lor (\Delta_3(b-c,p,q)\ne0) \\
    \tinyseparatorlor \land \\
    (f^{ca}\ne0) \lor (g^{c+a}_{p-q}\ne0) \lor (\Delta_3(c-a,p,q)\ne0) 
\end{array}
\right\}
.
\\
\toperator
\end{gather*}
in which the second, third and fourth `TRUE' remarks follow from the true `$(
F^{abc}_{pq}(3)
\equivdef
  g^{a-b}_{p-q} 
  g^{b-c}_{p-q} 
  g^{c-a}_{p-q}
  \ne0
)$' statement.  We may therefore simplify the above expression for $C_3$, re-writing it as:
\begin{gather*}
C_3 = \numberthis \label{eq:newC3kkk}\\
\toparotor {0.5\textwidth}
\\
        (\FZerOneTwo=0)
        \land
        (\FThr\ne0)
\\
\separotor \land {0.22\textwidth}
\\
        (\POne=0)
        \land
        (\PTwo=0)
        \land
        (\PThr\ne0)
\\
\separotor \land {0.22\textwidth}
\\ 
\left\{
\begin{array}{c}
    (f^{ab}\ne0) \lor (g^{a+b}_{p-q}\ne0) \lor (\Delta_3(a-b,p,q)\ne0) \\
    \tinyseparatorlor \land \\
    (f^{bc}\ne0) \lor (g^{b+c}_{p-q}\ne0) \lor (\Delta_3(b-c,p,q)\ne0) \\
    \tinyseparatorlor \land \\
    (f^{ca}\ne0) \lor (g^{c+a}_{p-q}\ne0) \lor (\Delta_3(c-a,p,q)\ne0) 
\end{array}
\right\}\numberthis \label{con:cOneExtras}
\\
\toparotor {0.5\textwidth}.
\end{gather*}

\begin{lemma}
\label{lem:weakerpocketbarnowl}
When $C_3$ is satisfied, it is necessarily the case that one of the elements of $\{g^{a}_{p-q},g^{b}_{p-q},g^{c}_{p-q}\}$ is zero while the other two are non-zero and are unequal to each other.
\begin{proof}
Inspection of  \eqref{eq:newC3kkk} shows that $
C_3
\implies  \left(
        (\FTwo =0)
        \land 
        (\FThr\ne0) \right).
$
The result therefore follows from an application of Lemma~\ref{lem:weakercentralcolumnlem} using the assignments 
 $x=g^{a}_{p-q}$, $y=g^{b}_{p-q}$ and $z=g^{c}_{p-q}$, having noted the definitions of  $\FTwo$ and $\FThr$ in  \eqref{eq:F2def} and \eqref{eq:F3abcpqdef}.
\end{proof}
\end{lemma}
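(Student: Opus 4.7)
The plan is to reduce the lemma to a purely algebraic statement about three real numbers by extracting the relevant conjuncts from $C_3$ and then invoking the referenced Lemma~\ref{lem:weakercentralcolumnlem}. First I would examine the rewritten form of $C_3$ given in \eqref{eq:newC3kkk}. The top line of that expression asserts $(\FZerOneTwo = 0) \land (\FThr \ne 0)$; in particular this contains both the conjunct $\FTwo = 0$ and the conjunct $\FThr \ne 0$. These are the only two facts needed.

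Next I would unfold the definitions \eqref{eq:F2def} and \eqref{eq:F3abcpqdef}, which say
\begin{align*}
\FTwo &= g^a_{p-q}\, g^b_{p-q}\, g^c_{p-q}, &
\FThr &= g^{a-b}_{p-q}\, g^{b-c}_{p-q}\, g^{c-a}_{p-q}.
\end{align*}
Setting $x = g^a_{p-q}$, $y = g^b_{p-q}$, $z = g^c_{p-q}$, and using the linearity of the Gram determinant in its arguments to write $g^{a-b}_{p-q} = x - y$ etc., the two facts extracted from $C_3$ become exactly $xyz = 0$ and $(x-y)(y-z)(z-x) \ne 0$. This is the hypothesis of Lemma~\ref{lem:weakercentralcolumnlem} under the assignment indicated in the lemma statement, so its conclusion---that exactly one of $x, y, z$ vanishes while the remaining two are non-zero and distinct---applies directly and finishes the proof.

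I expect no substantial obstacle: the only content is the elementary observation that $xyz = 0$ forces at least one of $x, y, z$ to vanish, while $(x-y)(y-z)(z-x) \ne 0$ forces $x, y, z$ to be pairwise distinct (so that at most one can vanish). Combining the two gives exactly one zero and two distinct non-zero values. The main thing to be careful about is simply to point to the correct conjuncts inside the rather large compound condition $C_3$ and to invoke the linearity of $g^{\cdot}_{p-q}$ in its upper argument so that $g^{a-b}_{p-q} = g^a_{p-q} - g^b_{p-q}$, which justifies the substitution into Lemma~\ref{lem:weakercentralcolumnlem}.
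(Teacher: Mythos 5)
Your proposal is correct and follows essentially the same route as the paper's own proof: extract the conjuncts $\FTwo=0$ and $\FThr\ne0$ from \eqref{eq:newC3kkk} and apply Lemma~\ref{lem:weakercentralcolumnlem} with $x=g^{a}_{p-q}$, $y=g^{b}_{p-q}$, $z=g^{c}_{p-q}$. Your explicit appeal to the linearity of the Gram determinant to justify $g^{a-b}_{p-q}=g^{a}_{p-q}-g^{b}_{p-q}$ is a detail the paper leaves implicit, but otherwise the arguments coincide.
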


\begin{lemma}
\label{lem:pocketbarnowl}
When $C_3$ is satisfied, it is necessarily the case that one of the elements of $\{g^{a}_{p-q},g^{b}_{p-q},g^{c}_{p-q}\}$ is zero while the other two sum to zero but are themselves non-zero.
\begin{proof}
Inspection of  \eqref{eq:newC3kkk} shows that $
C_3
\implies  \left(
        (\FOne=0)
        \land 
        (\FTwo=0)
        \land 
        (\FThr\ne0) \right).
$
The result therefore follows from an application of Lemma~\ref{lem:centralcolumnlem} using the assignments 
 $x=g^{a}_{p-q}$, $y=g^{b}_{p-q}$ and $z=g^{c}_{p-q}$, having noted the definitions of $\FOne$,  $\FTwo$ and $\FThr$ in  \eqref{eq:F1def} to \eqref{eq:F3abcpqdef}.
\end{proof}
\end{lemma}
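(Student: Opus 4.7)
The plan is to read off from the definition of $C_3$ in \eqref{eq:newC3kkk} the three conditions that constrain the triple $(x,y,z) \equivdef (g^{a}_{p-q}, g^{b}_{p-q}, g^{c}_{p-q})$, and then invoke the abstract Lemma~\ref{lem:centralcolumnlem} on that triple.

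First, I would expand each of the three $F$-quantities mentioned by $C_3$ in terms of the Gram determinants $g$. Using the linearity of Gram determinants in each argument (so that $g^{a+b+c}_{p-q} = g^a_{p-q} + g^b_{p-q} + g^c_{p-q}$), the definitions in \eqref{eq:F1def}, \eqref{eq:F2def} and \eqref{eq:F3abcpqdef} translate the relevant parts of $C_3$ into the three statements
\begin{align*}
\FOne = 0 &\iff x + y + z = 0, \\
\FTwo = 0 &\iff xyz = 0, \\
\FThr \ne 0 &\iff (x-y)(y-z)(z-x) \ne 0.
\end{align*}

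Next, I would feed these three facts into Lemma~\ref{lem:centralcolumnlem} via the assignments $x = g^{a}_{p-q}$, $y = g^{b}_{p-q}$, $z = g^{c}_{p-q}$, exactly as the author does in the proof of the weaker Lemma~\ref{lem:weakerpocketbarnowl}. The additional ingredient compared to Lemma~\ref{lem:weakerpocketbarnowl} is the $\FOne = 0$ hypothesis, so Lemma~\ref{lem:centralcolumnlem} should be the strengthened version that takes this into account.

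The elementary content to be verified (and which Lemma~\ref{lem:centralcolumnlem} presumably packages) is straightforward: $xyz=0$ forces at least one of the three to vanish, say $x = 0$ without loss of generality; then $x+y+z=0$ reduces to $y + z = 0$, i.e.~$z = -y$; and the pairwise-distinctness guaranteed by $(x-y)(y-z)(z-x) \ne 0$ implies in particular $y \ne z = -y$, so $y \ne 0$ and hence also $z = -y \ne 0$. Thus exactly one of $x,y,z$ vanishes while the other two are non-zero and sum to zero, which is precisely the claim.

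No step here is a real obstacle: the only thing to be careful about is to make sure that Lemma~\ref{lem:centralcolumnlem} really has all three hypotheses $x+y+z = 0$, $xyz = 0$, $(x-y)(y-z)(z-x) \ne 0$ built into it (as opposed to the weaker two-hypothesis version used in Lemma~\ref{lem:weakerpocketbarnowl}), so that its conclusion is the stronger statement that the two non-vanishing elements are negatives of each other rather than merely distinct.
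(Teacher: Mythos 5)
Your proposal is correct and matches the paper's own proof: the paper likewise reads off $(\FOne=0)\land(\FTwo=0)\land(\FThr\ne0)$ from \eqref{eq:newC3kkk} and applies Lemma~\ref{lem:centralcolumnlem} with $x=g^{a}_{p-q}$, $y=g^{b}_{p-q}$, $z=g^{c}_{p-q}$ (with the linearity $g^{a+b+c}_{p-q}=g^a_{p-q}+g^b_{p-q}+g^c_{p-q}$ implicit in "having noted the definitions"). Your explicit unpacking of the three-hypothesis content of Lemma~\ref{lem:centralcolumnlem} is exactly what that lemma provides, so there is no gap.
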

\begin{lemma}
\label{lem:pocketeagle}
When $C_3$ is satisfied, it is necessarily the case that one of the elements of $\{\epsShort^{ab}_{pq}, \epsShort^{bc}_{pq}, \epsShort^{ca}_{pq} \}$ is zero while the other two sum to zero but are themselves non-zero.
\begin{proof}
Inspection of  \eqref{eq:newC3kkk} shows that $
C_3
\implies  \left(
        (\POne=0)
        \land 
        (\PTwo=0)
        \land 
        (\PThr\ne0) \right).
$
The result therefore follows from an application of Lemma~\ref{lem:centralcolumnlem} using the assignments 
 $x=\epsShort^{ab}_{pq}$, $y=\epsShort^{bc}_{pq}$ and $z=\epsShort^{ca}_{pq}$, having noted the definitions of $\POne$,  $\PTwo$ and $\PThr$ in  \eqref{eq:p1def} to \eqref{eq:p3def}.
\end{proof}
\end{lemma}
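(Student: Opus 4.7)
The plan is to mimic precisely the argument used to establish Lemma~\ref{lem:pocketbarnowl}, with the only change being which three scalar quantities we plug into the auxiliary Lemma~\ref{lem:centralcolumnlem}. First I would inspect the simplified form \eqref{eq:newC3kkk} of $C_3$ and read off the three conjuncts that concern pseudoscalar invariants, namely $\POne = 0$, $\PTwo = 0$, and $\PThr \ne 0$. These are exactly the statements we need to drive the whole argument; the remaining parts of $C_3$ play no role here.

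Next I would recall the abbreviating definitions \eqref{eq:p1def}, \eqref{eq:p2def}, \eqref{eq:p3def} and set $x = \epsShort^{ab}_{pq}$, $y = \epsShort^{bc}_{pq}$, $z = \epsShort^{ca}_{pq}$. In this notation the three conditions extracted from $C_3$ become, respectively, $x+y+z = 0$, $xyz = 0$, and $(x-y)(y-z)(z-x)\ne 0$. These are exactly the hypotheses of Lemma~\ref{lem:centralcolumnlem}, so a direct invocation of that lemma yields the desired conclusion: precisely one of $x$, $y$, $z$ vanishes, while the other two are non-zero, are distinct from each other, and sum to zero (the last of these being forced by the vanishing of $x+y+z$ together with the vanishing of one summand).

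There is essentially no obstacle here; the work was done once-and-for-all in proving Lemma~\ref{lem:centralcolumnlem}, and the present lemma is just a second application of it, with $\{\epsShort^{ab}_{pq}, \epsShort^{bc}_{pq}, \epsShort^{ca}_{pq}\}$ in place of the triple $\{g^{a}_{p-q}, g^{b}_{p-q}, g^{c}_{p-q}\}$ used for Lemma~\ref{lem:pocketbarnowl}. The only place one has to be slightly careful is in matching the symbolic identifications \eqref{eq:p1def}--\eqref{eq:p3def} against the statement of Lemma~\ref{lem:centralcolumnlem}, but this matching is immediate once the definitions are laid side by side.
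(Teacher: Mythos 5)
Your proposal is correct and follows essentially the same route as the paper: read off $(\POne=0)\land(\PTwo=0)\land(\PThr\ne0)$ from \eqref{eq:newC3kkk} and then apply Lemma~\ref{lem:centralcolumnlem} with $x=\epsShort^{ab}_{pq}$, $y=\epsShort^{bc}_{pq}$, $z=\epsShort^{ca}_{pq}$, using the definitions \eqref{eq:p1def}--\eqref{eq:p3def}. Nothing is missing.
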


\begin{lemma}
\label{eq:whenpfsarenonzero}
For an event satisfying $C_3$:
\begin{gather*}
\left[
\begin{array}{c}
\left( (f(b,c;a)\ne 0)\land(g^a_{p-q}=0)  \right) 
\\
 \tinyseparatorlor \lor \\
\left((f(c,a;b)\ne 0)\land(g^b_{p-q}=0)  \right) 
\\
 \tinyseparatorlor \lor \\
\left((f(a,b;c)\ne 0)\land(g^c_{p-q}=0)  \right)
\end{array}
\right]
\implies 
 \left[ \mathscr{P}(f) \ne 0 \right].
\end{gather*}
\begin{proof}
For an event satisfying $C_3$ Lemma~\ref{lem:weakerpocketbarnowl} has already shown us that one of the elements of $\{g^{a}_{p-q},g^{b}_{p-q},g^{c}_{p-q}\}$ is zero while the other two are non-zero.  The $g$ which vanishes will `switch off'  four of the six terms in the sum which defines $\mathscr{P}(f)$ in \eqref{eq:bigpf}.  Suppose, without loss of generality, that $g^b_{p-q}=0$ and
\begin{align}
g^a_{p-q}\ne &0 \ne g^c_{p-q} \label{eq:whatwejustsupposed}
\end{align}  In such a case:
\begin{align*}
\left.\vphantom{\int}\mathscr{P}(f)\right|_{g^b_{p-q}=0}&=
\sum_{
\tiny\begin{array}{c}
  \text{Perms of} 
      \\
      \{a,c\}
\end{array}} 
\left\{
\PThr
\cdot
F^{abc}_{pq}(3)
\cdot
g^a_{p-q} \cdot g^c_{p-q} 
f(a,c;b)
\vphantom{\int}
\right\}
\\
&= 
\PThr
\cdot
g^a_{p-q} \cdot g^c_{p-q} \cdot
\sum_{
\tiny\begin{array}{c}
  \text{Perms of} 
      \\
      \{a,c\}
\end{array}} 
\left\{
F^{abc}_{pq}(3)
\cdot
f(a,c;b)
\vphantom{\int}
\right\}
\\
&= 
\PThr
\cdot
g^a_{p-q} \cdot g^c_{p-q} \cdot
\left\{
F^{abc}_{pq}(3)
\cdot
f(a,c;b)
+
F^{cba}_{pq}(3)
\cdot
f(c,a;b)
\vphantom{\int}
\right\}
\\
&= 
\PThr
\cdot
g^a_{p-q} \cdot g^c_{p-q} \cdot
\left\{
F^{abc}_{pq}(3)
\cdot
f(a,c;b)
+
(
-F^{abc}_{pq}(3)
)
\cdot
(-f(a,c;b))
\vphantom{\int}
\right\}
\\
&= 
\PThr
\cdot
g^a_{p-q} \cdot g^c_{p-q} \cdot
2\cdot \FThr
\cdot \numberthis \label{eq:ourlastprod}
f(a,c;b)
.
\end{align*}
We recognise, therefore, that in such a case $\mathscr{P}(f)$  is non-zero if $f(a,c;b)$ is non-zero since the remaining terms in
the product \eqref{eq:ourlastprod} are non-zero either due to \eqref{eq:whatwejustsupposed} or due to constraints in $C_3$.
\end{proof}
\end{lemma}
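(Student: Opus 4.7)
The plan is to reduce the six-term sum defining $\mathscr{P}(f)$ to a single non-trivial contribution by exploiting the heavy cancellation forced by $C_3$. Lemma~\ref{lem:weakerpocketbarnowl} tells us that under $C_3$ exactly one of $g^a_{p-q}$, $g^b_{p-q}$, $g^c_{p-q}$ vanishes, and the three disjuncts of the hypothesis correspond precisely to these three possibilities. Since the construction is $S_3$-symmetric under relabellings of $\{a,b,c\}$, it suffices to treat a single case; I would work out $g^b_{p-q}=0$, for which the active disjunct reads $f(c,a;b)\ne 0$.

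Under this assumption I would first identify the surviving summands of \eqref{eq:bigpf}. Each summand indexed by a permutation $\sigma$ of $\{a,b,c\}$ carries a factor $g^{\sigma(a)}_{p-q}\,g^{\sigma(c)}_{p-q}$, so it survives only when $\sigma$ fixes $b$. That leaves exactly two permutations, the identity and the transposition $(a\,c)$. Collecting them gives, up to the common and non-vanishing prefactor $\mathscr{P}^{\phantom{abc}}_{pq}(3)\,g^a_{p-q}\,g^c_{p-q}$, the bracket $F^{abc}_{pq}(3)\,f(a,c;b)+F^{cba}_{pq}(3)\,f(c,a;b)$.

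The pivotal step is the sign bookkeeping. Under the swap $a\leftrightarrow c$, the factor $F^{abc}_{pq}(3)=g^{a-b}_{p-q}\,g^{b-c}_{p-q}\,g^{c-a}_{p-q}$ picks up a minus sign (each of its three $g$-differences flips), and so does $f(\cdot,\cdot;b)$ by the antisymmetry imposed on $f$. The two surviving terms therefore \emph{reinforce} each other rather than cancel, so the bracket collapses to $2\,F^{abc}_{pq}(3)\,f(a,c;b)$. Every factor in the resulting expression is non-zero: $\mathscr{P}^{\phantom{abc}}_{pq}(3)$ and $F^{abc}_{pq}(3)$ by the defining conditions of $C_3$; the two surviving $g$-factors by Lemma~\ref{lem:weakerpocketbarnowl}; and $f(a,c;b)=-f(c,a;b)$ is non-zero by the case hypothesis. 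The remaining two cases of the hypothesis are handled by identical reasoning after cyclic relabelling.

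The only genuinely delicate point I expect is precisely this sign calculation: if either of $F^{abc}_{pq}(3)$ or $f$ had been symmetric (rather than antisymmetric) under the relevant transposition, the two surviving summands would annihilate and the lemma would fail. Once both antisymmetries are confirmed to point the same way, the rest is routine unpacking of definitions, together with the non-vanishing statements already packaged in $C_3$ and Lemma~\ref{lem:weakerpocketbarnowl}.
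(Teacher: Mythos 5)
Your proposal is correct and follows essentially the same route as the paper's proof: invoke Lemma~\ref{lem:weakerpocketbarnowl} to kill the four summands involving the vanishing $g$, observe that the two surviving terms (the permutations fixing the distinguished label) reinforce rather than cancel because both $F^{abc}_{pq}(3)$ and $f$ flip sign under the transposition, and conclude non-vanishing from the $C_3$ constraints and the case hypothesis. The sign bookkeeping you flag as the delicate step is exactly the step the paper's proof carries out explicitly.
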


\hide{
\begin{lemma}
\label{lem:veryshortlem}
\begin{gather*}
C_1 \implies \left(\text{there exists a non-zero $X\in\mathb{R}$ such that}
\left[
\begin{array}{c}
      (\epsShort^{bc}_{pq}=0) \land ( \epsShort^{ca}_{pq} = - \epsShort^{ab}_{pq} = X) \\
      \tinyseparatorlor \lor
      \\
    (\epsShort^{ca}_{pq}=0) \land ( \epsShort^{ab}_{pq} = - \epsShort^{bc}_{pq} = X)
    \\
    \tinyseparatorlor \lor 
    \\
( \epsShort^{ab}_{pq}=0) \land (\epsShort^{bc}_{pq} = - \epsShort^{ca}_{pq} = X)
\end{array}
\right]
\right).
\end{gather*}
\begin{proof}
The $\mathscr{P}$ constraints of \eqref{eq:newC3kkk}, together with Lemma~\ref{lem:centralcolumnlem}, tell us that of the elements of $\{\epsShort^{ab}_{pq},\epsShort^{bc}_{pq},\epsShort^{ca}_{pq}\}$ only one is zero, while the other two \textit{sum} to zero but are not zero themselves.
\end{proof}
\end{lemma}
}

\begin{lemma}
\label{lem:shouldreallybesandwhichlemma}
For an any \textbf{collision event} $e$ in $\pqabcColEvent$ the following result holds:
\begin{align*}
\hide{
\nibbledchocolatewafer {} \numberthis \label{eq:nibbledchocwafer} \\
\implies \\
}
C_3 &\implies \nibbledchocolatewafer {} 
\\
&\implies\left[
\begin{array}{l}
    \left[ (\epsShort^{ab}_{pq}=0)\land (\epsShort^{bc}_{pq}=-\epsShort^{ca}_{pq}\ne0)   \land   \left\{\text{$(\vec a-\vec b)\propto \vec p$ in the $(p+q)$ rest frame}\right\} \right] \lor \qquad\qquad
    \\
      \qquad\left[ (\epsShort^{bc}_{pq}=0) \land (\epsShort^{ca}_{pq}=-\epsShort^{ab}_{pq}\ne0)  
      \land
      \left\{\text{$(\vec b-\vec c)\propto \vec p$ in the $(p+q)$ rest frame}\right\} \right] \lor \qquad
    \\
      \qquad\qquad\left[ (\epsShort^{ca}_{pq}=0)\land (\epsShort^{ab}_{pq}=-\epsShort^{bc}_{pq}\ne0)   \land   \left\{\text{$(\vec c
      \vphantom{\vec b} 
      -\vec a) \propto \vec p$ in the $(p+q)$ rest frame}\right\}
     \right] 
     \end{array}
    \right].
\end{align*}
\begin{proof}
\noindent 
The existence of the first implication has already been noted in the proofs of Lemmas~\ref{lem:weakerpocketbarnowl} and \ref{lem:pocketeagle}.
Regarding the second implication:
Lemma~\ref{lem:weakerpocketbarnowl}
tells us that of the elements of $\{g^{a}_{p-q},g^{b}_{p-q},g^{c}_{p-q}\}$, exactly one is zero, while the other two are neither zero nor equal to each other.  Since the statement of the lemma which we are trying to prove is invariant under any permutation of $a$, $b$ and $c$ we may assume, \textbf{without loss of generality}, that \begin{align}
g^b_{p-q}=0\label{eq:wherewebrokethesymm}\qquad\qquad
\text{and}
\qquad\qquad 0\ne g^a_{p-q} \ne g^c_{p-q} \ne 0.
\end{align}
Lemma~\ref{lem:pocketeagle}
tells us that of the elements of $\{\epsShort^{ab}_{pq},\epsShort^{bc}_{pq},\epsShort^{ca}_{pq}\}$ only one is zero, while the other two \textit{sum} to zero but are not zero themselves. 
There are then three separate possibilities for the epsilons, each of which can be parameterized by some $Y\in\mathbb R$ with $Y\ne0$:
\begin{align}
\left[
\begin{array}{c!{:}c!{\text{and}}c}
     \text{Case $a$} &  \epsShort^{bc}_{pq}=0 & \epsShort^{ca}_{pq} = - \epsShort^{ab}_{pq} = Y, \\
     \text{Case $b$} &  \epsShort^{ca}_{pq}=0 & \epsShort^{ab}_{pq} = - \epsShort^{bc}_{pq} = Y, \\
     \text{Case $c$} &  \epsShort^{ab}_{pq}=0 & \epsShort^{bc}_{pq} = - \epsShort^{ca}_{pq} = Y .
\end{array}
\right]\label{eq:theepsiloncases}
\end{align}
The result of the Lemma~\ref{lem:shouldreallybesandwhichlemma} will be proved if it is proved in each of the three cases listed above.  The proof of the Lemma for `Case $b$' will proceed differently to those for `Case $a$' and `Case $c$' 
on account of our having made $b$ special in \eqref{eq:wherewebrokethesymm}.  Proofs in each of the three sub-cases follow.  Each proof proceeds in the $(p+q)$ rest-frame, which Lemma~\ref{lem:collisionshaveaxis} tells us always exists.
\begin{subproof}[Case $a$]
Making use of Lemma~\ref{lem:abcvectripprodinLIform} this case asserts that \begin{align}
\vec b\times \vec c \cdot \vec p&=0,\label{eq:bcp0}
\\
\vec c\times \vec a \cdot \vec p&= X,\label{eq:capX}
\\
\vec b\times \vec a \cdot \vec p&=X\label{eq:bapX}
\end{align}
for some $X\in\mathbb R$ with $X\ne0$. 
Subtracting the last two equations gives:
\begin{align}
(\vec c-\vec b)\times \vec a \cdot \vec p = 0.
\end{align}
This tells us that $(\vec c-\vec b)$, $\vec a$ and $\vec p$ are linearly dependent, \textit{i.e.}\ there exists some constants $\sigma, \lambda,\mu$ (not all zero) such that 
\begin{align}
\sigma (\vec c-\vec b) + \lambda \vec a + \mu \vec p = 0.\label{eq:firstlindepstatement}
\end{align}
If $\lambda\ne0$ then we would have
$$
\vec a = \frac \sigma \lambda (\vec b-\vec c)-\frac \mu \lambda \vec p
$$
from which we could deduce that
$$\vec a \times \vec c \cdot \vec p 
=
\frac \sigma \lambda \vec b \times \vec c \cdot \vec p = 0\text{\qquad(using \eqref{eq:bcp0})}
$$
which would contradict \eqref{eq:capX}.  It must therefore be the case that $\lambda=0$.   As $\lambda =0$, we know that: (i) $\sigma$ and $\mu$ are not both zero; and (ii) equation~\eqref{eq:firstlindepstatement} can be re-written as:
\begin{align}
    \sigma (\vec c-\vec b) + \mu \vec p = 0.\label{eq:notsurewhattocallthis}
\end{align}
Neither $\vec c-\vec b$ nor $\vec p$ is the zero vector (since $\vec c=\vec b$ would contradict \eqref{eq:wherewebrokethesymm} while $\vec p=0$ would contradict both \eqref{eq:capX} and the assumption that $e$ is a \textbf{collision event}) so both $\sigma$ \textbf{and} $\mu$ must be non-zero.  Therefore
\begin{align}
\vec c-\vec b = -\frac \mu \sigma \vec p\qquad\qquad\text{($\sigma\ne0$,$\mu\ne0$).} \label{eq:anotherparallalelsdfsdf}
\end{align}
\end{subproof}

\begin{subproof}[Case $b$]

This case asserts that \begin{align}
\vec c\times \vec a \cdot \vec p&=0,\label{eq:cap0}
\\
\vec a\times \vec b \cdot \vec p&= X,\label{eq:abpX}
\\
\vec c\times \vec b \cdot \vec p&=X\label{eq:cbpX}
\end{align}
for some $X\in\mathbb R$ with $X\ne0$. 
Subtracting the last two equations gives:
\begin{align}
(\vec a-\vec c)\times \vec b \cdot \vec p = 0.
\end{align}
This tells us that $(\vec a-\vec c)$, $\vec b$ and $\vec p$ are linearly dependent, i.e.~there exists some constants $\alpha, \beta,\gamma$ (not all zero) such that 
\begin{align}
\alpha (\vec a-\vec c) + \beta \vec b + \gamma \vec p = 0.\label{eq:secondlindepstatement}
\end{align}
We will now prove that $\gamma\ne0$ by  assuming that $\gamma=0$ and getting a contradiction.  If it were the case that $\gamma=0$ we could re-write \eqref{eq:secondlindepstatement} as:
\begin{align}
\alpha (\vec c-\vec a) = \beta \vec b \label{eq:thirdlindepstatement}
\end{align}
	together with the new requirement that at least one of $\alpha$ and $\beta$ is non-zero.  Note, however, that neither $\vec c-\vec a$ nor $\vec b$ is the zero vector (since $\vec a=\vec c$ would contradict \eqref{eq:wherewebrokethesymm} while $\vec p=0$ would contradict both \eqref{eq:abpX} and the assumption that $e$ is a \textbf{collision event}) so having one of $\alpha$ and $\beta$ non-zero and the other zero is impossible.  We can therefore say that $\gamma=0$ implies $\alpha\ne0$ and $\beta\ne0$.   Evaluating $\GRAMTWO \bullet {p+q} {p-q} {p+q}$ on each side of  \eqref{eq:thirdlindepstatement}, and using the linear properties of Gram Determinants, gives:
\begin{align}
    \alpha g^c_{p-q} - \alpha g^a_{p-q} = \beta g^b_{p-q}
\end{align}
which (using \eqref{eq:wherewebrokethesymm}) may be re-written as:
\begin{align}
	\alpha \left( g^c_{p-q} - g^a_{p-q} \right) = 0
\end{align}
	which contradicts either \eqref{eq:wherewebrokethesymm} or the already established requirements that $\alpha\ne0$.  Accordingly, from this contradiction we conclude that $\gamma\ne0$ as originally desired.

Next we will prove that $\alpha\ne0$, also by contradiction.  If it were the case that $\alpha=0$, then since we now know that $\gamma\ne0$ we could re-write \eqref{eq:secondlindepstatement} as $\vec p = -\frac\beta\gamma \vec b$ which would contradict \eqref{eq:abpX}.  Accordingly $\alpha\ne0$ as required.

 Knowing that $\alpha\ne0$ and $\gamma\ne0$ we may now rewrite \eqref{eq:secondlindepstatement} as \begin{align}
\vec a - \vec c= -\frac\beta\alpha \vec b -\frac \gamma\alpha \vec p\qquad\text{($\alpha\ne0, \gamma\ne 0$)} \label{eq:magicIneedlater}
\end{align} 
 so that
\begin{align}
    \vec a\times\vec b\cdot\vec c
    &= (\vec a-\vec c)\times\vec b\cdot\vec c \nonumber
    \\
    &=
    \left( -\frac\beta\alpha \vec b -\frac \gamma\alpha\vec p\right)\times\vec b\cdot\vec c \nonumber
    \\
    &=
     -\frac \gamma\alpha\vec p \times\vec b\cdot\vec c \nonumber
     \\
    &=
     \frac \gamma\alpha X \qquad\text{(by \eqref{eq:cbpX})}\label{eq:thirdmagic}
\end{align}
which is non-zero since all of $\gamma$, $\alpha$ and $X$ are themselves non-zero.

\noindent

We will now show that $\beta=0$.  To do this, note that \eqref{eq:magicIneedlater} can be arranged to read:
\begin{align}
    \vec p = \frac{\alpha}{\gamma}( \vec c-\vec a) - \frac \beta \gamma \vec b 
   \qquad \qquad\text{($\alpha\ne0, \gamma\ne0$)}\label{eq:secondmagic}
\end{align}
and thus
\begin{alignat}{2}
0 &= 
    \vec c\times \vec a \cdot \vec p &&{\text{(from \eqref{eq:cap0})}} \nonumber
    \\
    &=
    \vec c\times \vec a \cdot \left( \frac{\alpha}{\gamma}( \vec c-\vec a) - \frac \beta \gamma \vec b \right)\qquad&&{\text{(using \eqref{eq:secondmagic})}}\nonumber
    \\
      &=
   - \frac \beta \gamma \vec c\times \vec a \cdot   \vec b \nonumber
   \\
   &= -\frac{\beta X}{\gamma \alpha}  && \text{(using \eqref{eq:thirdmagic})}\nonumber
   \\
   &\implies (\beta=0) &&\text{(since $X\ne0$, $\gamma\ne0$ and $\alpha\ne0$).}
\end{alignat}

Putting the above result together with \eqref{eq:magicIneedlater} it is seen that in `Case $b$' it is always the case that:
\begin{align}
    \vec a - \vec c= -\frac \gamma\alpha \vec p\qquad\text{($\alpha\ne0, \gamma\ne 0$).} \label{eq:bigMagicIneedlater}
\end{align}
\end{subproof}
\begin{subproof}[Case $c$]
The proof of the lemma for `Case $c$' is exactly the same as the proof of the lemma for `Case $a$' but with $a\leftrightarrow c$.
\end{subproof}
\noindent This concludes the proof of Lemma~\ref{lem:shouldreallybesandwhichlemma}.
\end{proof}
\end{lemma}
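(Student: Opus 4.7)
The plan is to recognise that the first implication is essentially a bookkeeping step: the ``chocolate wafer'' merely repackages the conjuncts $\POne{=}0$, $\PTwo{=}0$, $\PThr\ne0$, $\FTwo{=}0$, $\FThr\ne0$ that are already visible by inspection of $C_3$ as rewritten in \eqref{eq:newC3kkk} (the $\FOne$ row is absorbed because $\FThr\ne0$ already forces non-degeneracy of each individual $g^{\cdot}_{p-q}$ factor). The substance of the lemma is therefore the second implication, on which I concentrate.

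For the second implication, I would first invoke Lemma~\ref{lem:weakerpocketbarnowl} to deduce that exactly one of $\{g^a_{p-q},g^b_{p-q},g^c_{p-q}\}$ vanishes while the other two are non-zero and distinct, and Lemma~\ref{lem:pocketeagle} to deduce that exactly one of $\{\epsShort^{ab}_{pq},\epsShort^{bc}_{pq},\epsShort^{ca}_{pq}\}$ vanishes while the other two are non-zero negatives of each other. The conclusion of the lemma is invariant under the full $S_3$ action on $(a,b,c)$, so I may fix $g^b_{p-q}=0$ and $0\ne g^a_{p-q}\ne g^c_{p-q}\ne 0$ without loss of generality, leaving three sub-cases indexed by which $\epsShort$ vanishes. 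I then pass to the $(p+q)$ rest frame (available by Lemma~\ref{lem:collisionshaveaxis}) and use Lemma~\ref{lem:abcvectripprodinLIform} to rewrite each $\epsShort^{xy}_{pq}$ as the spatial triple product $\vec x\times\vec y\cdot\vec p$. In each sub-case I subtract the two non-vanishing triple-product equalities to kill the common non-zero value and obtain a vanishing triple product, which is equivalent to a three-way linear dependence in that frame; this is the engine that will force a $\vec p$-parallel conclusion.

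In the two ``diagonal'' sub-cases, where the vanishing $\epsShort$ has $b$ as an index (i.e.\ $\epsShort^{bc}_{pq}=0$ or $\epsShort^{ab}_{pq}=0$), the subtraction produces a dependence among $\vec a-\vec b$ (or its symmetric partner), the remaining momentum ($\vec a$ or $\vec c$), and $\vec p$; the coefficient of the stray momentum must vanish, for otherwise one of the supposedly non-zero triple products would collapse to zero, and neither $(\vec a-\vec b)$ nor $\vec p$ can itself be zero (the former since $g^a_{p-q}\ne g^c_{p-q}$ together with $g^b_{p-q}=0$ rules out $\vec a=\vec b$ in the rest frame via Lemma~\ref{lem:adotbinsomeframe}; the latter because the event is a collision event). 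The target parallelism then follows immediately.

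The main obstacle is the ``off-diagonal'' sub-case where the vanishing $\epsShort^{ca}_{pq}=0$ involves exactly the two particles whose $g$ is non-zero, while $b$ (the one with $g^b_{p-q}=0$) is what appears with unit coefficient in the triple products that do \textbf{not} vanish. Here the subtraction yields $\alpha(\vec a-\vec c)+\beta\vec b+\gamma\vec p=0$ and one must independently show $\gamma\ne0$ and $\beta=0$. I would rule out $\gamma=0$ by contradiction: it would force $\vec b$ to be a non-trivial multiple of $\vec a-\vec c$, and applying $\GRAMTWO \bullet {p+q} {p-q} {p+q}$ linearly to that relation would give $g^b_{p-q}\propto g^a_{p-q}-g^c_{p-q}$, contradicting $g^b_{p-q}=0$ together with $g^a_{p-q}\ne g^c_{p-q}$. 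With $\gamma\ne0$ in hand I would solve for $\vec p$ and substitute back into the rest-frame form $\vec c\times\vec a\cdot\vec p=0$ of $\epsShort^{ca}_{pq}=0$; the $\alpha$-term drops out and what remains is proportional to $\beta\,\vec c\times\vec a\cdot\vec b$. Since $\vec c\times\vec a\cdot\vec b$ coincides with one of the surviving non-zero $\epsShort^{\cdot\cdot}_{pq}$ up to sign, it is non-zero, forcing $\beta=0$. The relation collapses to $(\vec a-\vec c)\propto\vec p$, which is the third disjunct and closes the hardest sub-case.
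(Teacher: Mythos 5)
Your strategy is the paper's strategy almost line for line: the first implication by inspection of \eqref{eq:newC3kkk}, the appeal to Lemmas~\ref{lem:weakerpocketbarnowl} and \ref{lem:pocketeagle}, the WLOG choice $g^b_{p-q}=0$ with $0\ne g^a_{p-q}\ne g^c_{p-q}\ne0$, the three-way split on which $\epsShort$ vanishes, the diagonal-case elimination of the stray momentum's coefficient, and the $\gamma\ne0$ contradiction via the Gram functional. The problem is in the off-diagonal case $\epsShort^{ca}_{pq}=0$, at exactly the step that carries the real weight. You deduce $\beta=0$ from $\beta\,(\vec c\times\vec a\cdot\vec b)=0$ by asserting that $\vec c\times\vec a\cdot\vec b$ ``coincides with one of the surviving non-zero $\epsShort^{\cdot\cdot}_{pq}$ up to sign''. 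That is false: by Lemma~\ref{lem:abcvectripprodinLIform}, in the $(p+q)$ rest frame the surviving quantities $\epsShort^{ab}_{pq}$ and $\epsShort^{bc}_{pq}$ are proportional to $\vec a\times\vec b\cdot\vec p$ and $\vec b\times\vec c\cdot\vec p$ --- triple products with $\vec p$ --- whereas $\vec c\times\vec a\cdot\vec b$ involves only the three final-state momenta and equals none of them up to sign. Without an independent proof that $\vec a\times\vec b\cdot\vec c\ne0$, your substitution of $\vec p$ into \eqref{eq:cap0} only gives $\beta\,(\vec c\times\vec a\cdot\vec b)=0$, from which $\beta=0$ does not follow.

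The missing fact is true, but it is precisely what the paper has to work for in Case $b$: one first shows $\alpha\ne0$ (if $\alpha=0$ then, with $\gamma\ne0$, $\vec p=-\tfrac\beta\gamma\vec b$, contradicting $\vec a\times\vec b\cdot\vec p=X\ne0$ from \eqref{eq:abpX}), and only then uses the dependence relation to compute $\vec a\times\vec b\cdot\vec c=(\vec a-\vec c)\times\vec b\cdot\vec c=-\tfrac\gamma\alpha\,\vec p\times\vec b\cdot\vec c=\tfrac\gamma\alpha X\ne0$ as in \eqref{eq:thirdmagic}, after which the substitution forces $\beta=0$ and hence $(\vec a-\vec c)\propto\vec p$. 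Your proposal never establishes $\alpha\ne0$ at all, and the incorrect identification of $\vec c\times\vec a\cdot\vec b$ with an $\epsShort$ is standing in for this entire chain; that is a genuine gap, though a repairable one. A minor secondary imprecision: in the diagonal cases what rules out, say, $\vec c=\vec b$ is $g^c_{p-q}\ne0=g^b_{p-q}$ (and similarly $g^a_{p-q}\ne0=g^b_{p-q}$ for $\vec a=\vec b$), not ``$g^a_{p-q}\ne g^c_{p-q}$ together with $g^b_{p-q}=0$''; the needed facts are in your WLOG hypothesis, but the reason you cite is not the operative one.
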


\hide{
\begin{corollary}
\label{cor:safeujgh}
$C_1  \implies [a,b,c,p+q]\ne 0$.
\begin{proof}
This result follows from Lemma~\ref{lem:chocwafer} together with the $\mathscr{P}$- and $F$-rows of $C_1$ in (\ref{con:cOneExtras}).
\end{proof}
\end{corollary}
}

\begin{corollary}
\label{cor:notcurrentlyusedbutcouldbeuseful}
For an any \textbf{collision event} $e$ in $\pqabcColEvent$ the following result holds:
\begin{gather*}
C_3\implies
\nibbledchocolatewafer {} 
\implies ([a,b,c,(p+q)]\ne0).
\end{gather*}
\begin{proof}
The proof of Lemma~\ref{lem:shouldreallybesandwhichlemma} was divided into three cases: (a), (b) and (c).
In  \eqref{eq:anotherparallalelsdfsdf} of case (a) it was seen that \begin{align}
\vec c-\vec b = -\frac \mu \sigma \vec p\qquad\qquad\text{($\sigma\ne0$,$\mu\ne0$)}
\end{align}
which shows that
$$
\vec a\times \vec b\cdot \vec c 
=
\vec a\times \vec b\cdot (\vec c -\vec b)
=
\vec a\times \vec b\cdot (-\frac \mu \sigma \vec p)
=
\frac \mu \sigma \vec b\times \vec a\cdot \vec p
=
\frac \mu \sigma X \qquad\text{(by \eqref{eq:bapX})}
$$ which is non-zero since none of $\mu$, $\sigma$ or $X$ is zero.
In \eqref {eq:thirdmagic} of case (b) it was already observed that $\vec a\times \vec b\cdot \vec c \ne 0$.
Case (c) would proceed exactly as case (a) but with $a\leftrightarrow c$.
In all cases, therefore, we have seen that $\vec a\times \vec b\cdot \vec c \ne 0$.  This, together with Lemma~\ref{lem:abcvectripprodinLIform} concludes the proof.
\end{proof}
\end{corollary}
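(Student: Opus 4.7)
The plan is to exploit Lemma~\ref{lem:shouldreallybesandwhichlemma} directly; that lemma has already done nearly all of the hard work.  The first implication in the corollary is exactly the statement of Lemma~\ref{lem:shouldreallybesandwhichlemma}, so I would only need to supply the second implication: that the conclusion of that lemma forces $[a,b,c,p+q]\ne0$.

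First I would pass to the $(p+q)$ rest frame, which exists by Lemma~\ref{lem:collisionshaveaxis}.  In that frame Lemma~\ref{lem:abcvectripprodinLIform} tells us that $[a,b,c,p+q]$ is (up to a non-zero factor of $\sqrt{(p+q)^2}$ and a sign) equal to $\vec a\times \vec b\cdot \vec c$.  It therefore suffices to show that this spatial triple product is non-zero.

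Next I would dispatch each of the three branches of the disjunction produced by Lemma~\ref{lem:shouldreallybesandwhichlemma} in turn.  Consider the representative branch in which $\vec a -\vec b=\lambda\vec p$ with $\lambda\ne 0$, paired with $\epsShort^{ab}_{pq}=0$ and $\epsShort^{bc}_{pq}=-\epsShort^{ca}_{pq}\ne 0$.  Substituting $\vec a=\vec b+\lambda\vec p$ yields
\begin{equation*}
\vec a\times \vec b\cdot \vec c
=(\vec b+\lambda \vec p)\times \vec b\cdot \vec c
=\lambda\,\vec p\times \vec b\cdot \vec c
=\lambda\,\vec b\times \vec c\cdot \vec p,
\end{equation*}
which by Lemma~\ref{lem:abcvectripprodinLIform} (applied in the same frame) is a non-zero scalar multiple of $\epsShort^{bc}_{pq}$, hence non-zero.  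The remaining two branches are obtained from this one by the cyclic substitutions $(a,b,c)\mapsto(b,c,a)$ and $(a,b,c)\mapsto(c,a,b)$, under which the argument proceeds identically since the statement of the corollary, the triple product, and the conclusion of Lemma~\ref{lem:shouldreallybesandwhichlemma} are all cyclic-symmetric in $(a,b,c)$.

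The main obstacle is not really an obstacle at all once Lemma~\ref{lem:shouldreallybesandwhichlemma} is in place: the key content is already packaged there, together with the crucial pairing between the vanishing epsilon and the \emph{other} difference being parallel to $\vec p$.  The only point requiring care is verifying that, in each of the three branches, the epsilon one ends up with on the right-hand side after substitution is one of the two \emph{non-zero} epsilons, rather than the vanishing one; this is automatic because the difference which is parallel to $\vec p$ involves precisely the two indices appearing in the vanishing epsilon, so the surviving epsilon necessarily involves a different pair.
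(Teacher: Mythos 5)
You take essentially the same route as the paper: reduce via Lemma~\ref{lem:shouldreallybesandwhichlemma} to the three branches, substitute the beam-parallel difference into the spatial triple product in the $(p+q)$ rest frame, identify the result as a non-zero multiple of one of the surviving epsilons, and convert back with Lemma~\ref{lem:abcvectripprodinLIform}. The only structural difference is cosmetic: the paper quotes the internals of the proof of Lemma~\ref{lem:shouldreallybesandwhichlemma} (the case~(a) relation \eqref{eq:anotherparallalelsdfsdf} and the intermediate identity \eqref{eq:thirdmagic} of case~(b)), whereas you argue from the lemma's statement and dispose of the remaining branches by cyclic symmetry.

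There is, however, one assertion doing real work that you leave unjustified: ``$\vec a-\vec b=\lambda\vec p$ with $\lambda\ne0$''. The statement of Lemma~\ref{lem:shouldreallybesandwhichlemma} says only that $(\vec a-\vec b)\propto\vec p$, and the three-branch disjunction on its own does not exclude $\lambda=0$: if $\vec a=\vec b$ in the $(p+q)$ rest frame then $\epsShort^{ab}_{pq}=0$ and $\epsShort^{bc}_{pq}=-\epsShort^{ca}_{pq}$ hold automatically, yet $\vec a\times\vec b\cdot\vec c=0$. So $\lambda\ne0$ must be sourced from somewhere. It is available in one line from the hypothesis of the corollary itself: the wafer contains $\FThr\ne0$, whose factor $g^{a-b}_{p-q}\ne0$ says (via Lemma~\ref{lem:adotbinsomeframe}) that $(\vec a-\vec b)\cdot(\vec p-\vec q)\ne0$ in the $(p+q)$ rest frame, i.e.\ $(\vec a-\vec b)\cdot\vec p\ne0$ there, which forces $\lambda\ne0$; the factors $g^{b-c}_{p-q}\ne0$ and $g^{c-a}_{p-q}\ne0$ do the same job in the other two branches. (Alternatively you may, as the paper does, quote the manifestly non-zero constants $-\mu/\sigma$ and $-\gamma/\alpha$ produced inside the proof of Lemma~\ref{lem:shouldreallybesandwhichlemma}.) With that line added, your argument is complete and coincides in substance with the paper's.
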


\begin{definition}[\textbf{$C_3$ brace shorthand}]
\label{def:braceshorthand}Symbolically, we shall denote the curly braced part of (\ref{con:cOneExtras}),  reproduced here:
\begin{align}
\left\{
\begin{array}{c}
    (f^{ab}\ne0) \lor (g^{a+b}_{p-q}\ne0) \lor (\Delta_3(a-b,p,q)\ne0) \\
    \tinyseparatorlor \land \\
    (f^{bc}\ne0) \lor (g^{b+c}_{p-q}\ne0) \lor (\Delta_3(b-c,p,q)\ne0) \\
    \tinyseparatorlor \land \\
    (f^{ca}\ne0) \lor (g^{c+a}_{p-q}\ne0) \lor (\Delta_3(c-a,p,q)\ne0) ,
\end{array}
\right\}
\end{align}
with this symbol:
$$
\tritribox {\phantom{\square}}{\phantom{\square}}{\phantom{\square}}{\phantom{\square}}{\phantom{\square}}{\phantom{\square}}{\phantom{\square}}{\phantom{\square}}{\phantom{\square}}
  .
$$
We will fill this symbol with ticks ($\tmark$), crosses ($\xmark$) or circles ($\circ$) to indicate which of the corresponding conditions are true, false or unconstrained respectively.  For example,
$$
\tritribox \tmark \circ \circ
           \circ \circ \xmark
           \circ \circ \circ
$$
would mean that $(f^{ab}\ne0)$ and  $(\Delta_3(b-c,p,q)=0)$, with the state of all other constraints unspecified.

\end{definition}

\begin{lemma}
\label{lem:thingstodowithmiddlecolumns}
If the constraints of $C_3$ in  (\ref{con:cOneExtras}) are satisfied, then the \textbf{brace shorthand} of Definition~\ref{def:braceshorthand} must  take one of the following six forms:
$$
\tritribox \circ \tmark \circ
           \circ \tmark \circ
           \tmark \xmark \circ,
\tritribox \circ \tmark \circ
           \circ \tmark \circ
           \circ \xmark \tmark,
\tritribox \circ \tmark \circ 
           \tmark \xmark \circ
           \circ \tmark \circ,
\tritribox \circ \tmark \circ
           \circ \xmark \tmark
           \circ \tmark \circ,
\tritribox \tmark \xmark \circ
           \circ \tmark \circ 
           \circ \tmark \circ,
\tritribox \circ \xmark \tmark
           \circ \tmark \circ 
           \circ \tmark \circ,
$$
which may be summarised more succinctly as
$$
\left[
\tritribox \circ \tmark \circ
           \circ \tmark \circ
           \tmark \xmark \circ,
\tritribox \circ \tmark \circ
           \circ \tmark \circ
           \circ \xmark \tmark,
           \text{and all row permutations thereof}\right].
$$
\begin{proof}
That the centre column need always contain two ticks and one cross is easy to show. Lemma~\ref{lem:pocketbarnowl} tells us that of the elements of  $\{g^{a}_{p-q},g^{b}_{p-q},g^{c}_{p-q},\}$ only one is zero, while the other two \textit{sum} to zero but are not zero themselves. Suppose, without loss of generality, that $g^b_{p-q}=0$ while the other two are non-zero but sum to zero $g^{a}_{p-q}+g^{c}_{p-q}=0$.  In such a case it is trivial to see that the central column of constraints is:
\begin{align}
\left(
\begin{array}{c}
 g^{a+b}_{p-q}\ne 0  \\
 g^{b+c}_{p-q}\ne 0  \\
 g^{c+a}_{p-q}\ne 0  
\end{array}
\right)
=
\left(
\begin{array}{c!{\ne}l}
 g^{a}_{p-q}& 0  \\
 g^{c}_{p-q}& 0  \\
 0& 0  
\end{array}
\right)
=
\left(
\begin{array}{c}
 \tmark \\
 \tmark  \\
 \xmark   
\end{array}
\right).\label{eq:cenrecolumnfun}
\end{align}
The proof is completed by noting that in order for the $C_1$ constraints to be satisfied, there must be at least one tick in every row of the \textbf{brace shorthand}.
\end{proof}
\end{lemma}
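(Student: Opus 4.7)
The plan is to combine two ingredients already at hand: the structural information about the $g$-row from Lemma~\ref{lem:pocketbarnowl}, and the row-by-row disjunctive requirement coming from the $C_3$ brace itself.

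First I would use Lemma~\ref{lem:pocketbarnowl} to conclude that exactly one element of $\{g^a_{p-q},g^b_{p-q},g^c_{p-q}\}$ vanishes while the other two are nonzero and sum to zero. Working without loss of generality under the case $g^b_{p-q}=0$ and $g^a_{p-q}=-g^c_{p-q}\ne0$, I would plug these into the centre column entries by linearity of the Gram determinants encoded in the shorthand $g^{\cdot}_{p-q}$:
\begin{align*}
g^{a+b}_{p-q} &= g^a_{p-q}+g^b_{p-q} = g^a_{p-q}\ne 0,\\
g^{b+c}_{p-q} &= g^b_{p-q}+g^c_{p-q} = g^c_{p-q}\ne 0,\\
g^{c+a}_{p-q} &= g^c_{p-q}+g^a_{p-q} = 0,
\end{align*}
so the centre column is $(\tmark,\tmark,\xmark)$, with the cross lying in the unique row whose two subscript indices are precisely the two non-vanishing $g$'s. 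By symmetry among $a,b,c$ the cross can, a priori, sit in any of the three rows, giving three placements for the $\xmark$ in the centre column.

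Next, I would invoke the requirement built into $C_3$ (the curly-brace block of \eqref{con:cOneExtras}) that each of its three rows be a true disjunction, i.e.\ each row of the \textbf{brace shorthand} contains at least one $\tmark$. The two rows that already carry a $\tmark$ in the centre column satisfy this requirement automatically, so the outer two entries in each of those rows remain unconstrained and can be recorded as $\circ$. The single row whose centre entry is $\xmark$ must then be rescued by a $\tmark$ in either its left or its right position, with the remaining outer entry again unconstrained.

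This yields exactly $3\times 2 = 6$ admissible patterns: three choices for which row contains the centre $\xmark$, times two choices for which outer slot in that row is forced to $\tmark$. These are precisely the six diagrams listed in the lemma, and grouping them by whether the forcing tick is on the left or on the right gives the two representative diagrams followed by ``all row permutations thereof.'' I do not expect any real obstacle here: the argument is a clean case analysis once Lemma~\ref{lem:pocketbarnowl} is applied, and the only step requiring care is verifying that the linearity of Gram determinants genuinely collapses $g^{a+b}_{p-q}$ etc.\ to sums of the individual $g^{\cdot}_{p-q}$'s, which is immediate from \eqref{eq:defofgramshorthand}.
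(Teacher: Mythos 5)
Your proof is correct and follows essentially the same route as the paper: apply Lemma~\ref{lem:pocketbarnowl}, use linearity of the Gram determinants to pin the centre column to two ticks and one cross, and then note that each row of the brace must contain at least one tick (your explicit $3\times2=6$ count just makes the final enumeration more visible). No gaps.
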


\begin{lemma}
\label{lem:importantp4lemma}
$V_6$ of \eqref{eq:v6def} is non-zero when the $C_3$ constraint of \eqref{con:cOneExtras} is satisfied by an event having one of the following three types:
$$
\tritribox \circ \tmark \circ
           \circ \tmark \circ
           \circ \xmark \tmark _D,
\tritribox \circ \tmark \circ
           \circ \xmark \tmark
           \circ \tmark \circ _E,
\tritribox \circ \xmark \tmark
           \circ \tmark \circ
           \circ \tmark \circ _F.
$$
\begin{proof}
If we can show the result for $\{\phantom{M} \}_D$ it will follow for $\{\phantom{M} \}_E$ and $\{\phantom{M} \}_F$ by symmetry.
The proof of Lemma~\ref{lem:thingstodowithmiddlecolumns} has already shown that the central column of $\{\phantom{M} \}_D$ implies that
$g^b_{p-q}=0
$
and so Lemma~\ref{eq:whenpfsarenonzero} tells us that $V_6$ will be non-zero if we can show that  $\left(\epsShort^{ab}_{pq}\right)^2-\left(\epsShort^{bc}_{pq}\right)^2\ne0$ when $\{\phantom{M} \}_D$ pertains.
To prove that $\left(\epsShort^{ab}_{pq}\right)^2-\left(\epsShort^{bc}_{pq}\right)^2$ is not zero, we use Lemma~\ref{lem:shouldreallybesandwhichlemma}.  This lemma tells us that we may choose to divide the problem into three sub-cases, one for each of the line in this expression:
\begin{align}
\label{expr:thisexprhasthreelines}
\left[
\begin{array}{l}
    \left[ (\epsShort^{ab}_{pq}=0)\land (\epsShort^{bc}_{pq}=-\epsShort^{ca}_{pq}\ne0)   \land   \left\{\text{$(\vec a-\vec b)\propto \vec p$ in the $(p+q)$ rest frame}\right\} \right] \lor \qquad\qquad
    \\
      \qquad\left[ (\epsShort^{bc}_{pq}=0) \land (\epsShort^{ca}_{pq}=-\epsShort^{ab}_{pq}\ne0)  
      \land
      \left\{\text{$(\vec b-\vec c)\propto \vec p$ in the $(p+q)$ rest frame}\right\} \right] \lor \qquad
    \\
      \qquad\qquad\left[ (\epsShort^{ca}_{pq}=0)\land (\epsShort^{ab}_{pq}=-\epsShort^{bc}_{pq}\ne0)   \land   \left\{\text{$(\vec c
      \vphantom{\vec b} 
      -\vec a) \propto \vec p$ in the $(p+q)$ rest frame}\right\}
     \right] 
     \end{array}
    \right].
\end{align}

\noindent In the case shown in the \textit{first} line of \eqref{expr:thisexprhasthreelines} we see that
$\left(\epsShort^{ab}_{pq}\right)^2-\left(\epsShort^{bc}_{pq}\right)^2
=
-\left(\epsShort^{bc}_{pq}\right)^2 \ne 0$ as desired.

\noindent In the case shown in the \textit{second} line of \eqref{expr:thisexprhasthreelines} we see that
$\left(\epsShort^{ab}_{pq}\right)^2-\left(\epsShort^{bc}_{pq}\right)^2
=
\left(\epsShort^{ab}_{pq}\right)^2 \ne 0$ as desired.

\noindent The case shown in the \textit{third} line of \eqref{expr:thisexprhasthreelines} is more problematic, however.  Na\"ive substitution gives 
$\left(\epsShort^{ab}_{pq}\right)^2-\left(\epsShort^{bc}_{pq}\right)^2
=0$ which is not the result we desire.  We must show, therefore, that the case shown in the \textit{third} line above is incompatible with $\{\phantom{M} \}_D$ and is therefore impossible.  
Specifically, the incompatibility comes from the tick in the bottom right hand corner of $\{\phantom{M} \}_D$.  This tick asserts that $\Delta_3(a-c,p,q)\ne 0$ which,
given Lemma~\ref{lem:transmomforcolevent}, tells us that $\vec a -\vec c$ cannot be parallel to $\vec p$ in the $(p+q)$ rest frame.  This statement, however,  is incompatible with the $((\vec c-\vec a) \propto \vec p)$ requirement in the third line of \eqref{expr:thisexprhasthreelines}.
In all the cases which are compatible with $\{\phantom{M} \}_D$ we have shown that $\left(\epsShort^{ab}_{pq}\right)^2-\left(\epsShort^{bc}_{pq}\right)^2
\ne0$. This concludes the proof.
\end{proof}
\end{lemma}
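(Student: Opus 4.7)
The plan is to exploit the cyclic $S_3$-symmetry shared by $V_6$ and $C_3$ to reduce to the single template $\{\phantom{M}\}_D$, and then to perform a case analysis driven by Lemma~\ref{lem:shouldreallybesandwhichlemma}. Since $V_6 = \mathscr{P}(f_6)$ is manifestly invariant under permutations of $\{a,b,c\}$ and the three templates $\{\phantom{M}\}_D, \{\phantom{M}\}_E, \{\phantom{M}\}_F$ are cyclic relabellings of one another, it suffices to show $V_6 \ne 0$ whenever $C_3 \land \{\phantom{M}\}_D$ holds.

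From the middle column of $\{\phantom{M}\}_D$, Lemma~\ref{lem:pocketbarnowl} (equivalently Lemma~\ref{lem:thingstodowithmiddlecolumns}) reads off $g^b_{p-q}=0$ together with $g^a_{p-q}=-g^c_{p-q}\ne 0$. Substituting this into the six-term sum defining $\mathscr{P}(f_6)$ kills four of the terms, and the collapse step of Lemma~\ref{eq:whenpfsarenonzero} shows that $V_6$ equals a manifestly nonvanishing Lorentz-invariant prefactor (namely $2\,\PThr\,F^{abc}_{pq}(3)\,g^a_{p-q}\,g^c_{p-q}$) multiplied by $(\epsShort^{ab}_{pq})^2 - (\epsShort^{cb}_{pq})^2$. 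So everything reduces to showing this last quantity is nonzero.

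At this point I would invoke Lemma~\ref{lem:shouldreallybesandwhichlemma}, which under $C_3$ splits into three disjoint sub-cases according to which of $\epsShort^{ab}_{pq}, \epsShort^{bc}_{pq}, \epsShort^{ca}_{pq}$ vanishes. Two are trivial: if $\epsShort^{ab}_{pq}=0$ the remainder is $-(\epsShort^{cb}_{pq})^2\ne 0$, and if $\epsShort^{bc}_{pq}=0$ the remainder is $(\epsShort^{ab}_{pq})^2\ne 0$.

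The main obstacle is the third sub-case, $\epsShort^{ca}_{pq}=0$, because then $\epsShort^{ab}_{pq}=-\epsShort^{bc}_{pq}$ and the two squares cancel. Here I would lean on the extra geometric content of Lemma~\ref{lem:shouldreallybesandwhichlemma}, which in this sub-case forces $(\vec c-\vec a)\propto \vec p$ in the $(p+q)$ rest frame. By Lemma~\ref{lem:transmomforcolevent} (applied by linearity of the Gram determinant to the vector $c-a$), this in turn forces the transverse part of $(c-a)$ to vanish, i.e.\ $\Delta_3(c-a,p,q)=0$. But the bottom-right tick in $\{\phantom{M}\}_D$ asserts the exact opposite, so this sub-case is incompatible with the template and can be excluded. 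All remaining cases yield a nonzero value, completing the argument.
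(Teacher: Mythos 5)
Your proposal is correct and follows essentially the same route as the paper's own proof: symmetry reduction to $\{\phantom{M} \}_D$, extraction of $g^b_{p-q}=0$ from the middle column, the collapse of $V_6$ via Lemma~\ref{eq:whenpfsarenonzero} to the factor $\left(\epsShort^{ab}_{pq}\right)^2-\left(\epsShort^{bc}_{pq}\right)^2$, and the three-way case split from Lemma~\ref{lem:shouldreallybesandwhichlemma} with the problematic $\epsShort^{ca}_{pq}=0$ case excluded by the bottom-right tick $\Delta_3(c-a,p,q)\ne0$ via Lemma~\ref{lem:transmomforcolevent}. No gaps to report.
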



\begin{lemma}
\label{lem:importantp5lemma}
$V_7$ of \eqref{eq:v7def} is non-zero when the $C_3$ constraint of \eqref{con:cOneExtras} is satisfied by an event having one of the following three types:
$$
\tritribox \circ \tmark \circ
           \circ \tmark \circ
           \tmark \xmark \circ_A,
\tritribox \circ \tmark \circ
           \tmark \xmark \circ
           \circ \tmark \circ_B,
\tritribox \tmark \xmark \circ
           \circ \tmark \circ
           \circ \tmark \circ_C.
$$
\begin{proof}
If we can show the result for $\{\phantom{M} \}_A$ it will follow for $\{\phantom{M} \}_B$ and $\{\phantom{M} \}_C$ by symmetry.
The proof of Lemma~\ref{lem:thingstodowithmiddlecolumns} has already shown that the central column of $\{\phantom{M} \}_A$ implies that
$g^b_{p-q}=0
$
and so Lemma~\ref{eq:whenpfsarenonzero} tells us that $V_7$ will be non-zero if we can show that  $a^2-c^2\ne0$ when $\{\phantom{M} \}_A$ pertains.  However this is precisely what the tick in the bottom left corner of $\{\phantom{M} \}_A$ guarantees.
\end{proof}
\end{lemma}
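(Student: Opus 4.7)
My plan is to handle case $\{\phantom{M}\}_A$ explicitly and then invoke the manifest cyclic symmetry of the claim, and of the definition of $V_7$, under permutations of $(abc)$ to deduce cases $\{\phantom{M}\}_B$ and $\{\phantom{M}\}_C$. The key observation, in analogy with the strategy already used for the companion Lemma~\ref{lem:importantp4lemma}, is that Lemma~\ref{lem:thingstodowithmiddlecolumns} forces the central column of the brace shorthand to contain exactly one cross, which pinpoints precisely one of $g^a_{p-q}, g^b_{p-q}, g^c_{p-q}$ as the vanishing element. This is what collapses the six-term sum defining $V_7 = \mathscr{P}(f_7)$ to the two-term form already analysed in Lemma~\ref{eq:whenpfsarenonzero}.

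Concretely, in case $\{\phantom{M}\}_A$ the central-column pattern $(\tmark,\tmark,\xmark)$ identifies $g^b_{p-q}=0$, with $g^a_{p-q}$ and $g^c_{p-q}$ both non-zero (and summing to zero), and with $\PThr$ and $\FThr$ non-zero by virtue of the outer $C_3$ constraints already imposed. Lemma~\ref{eq:whenpfsarenonzero} then asserts that $V_7$ is non-zero provided $f_7(a,c;b)\ne 0$. Since $f_7(x,y;z)=x^2-y^2$, this reduces to the requirement $a^2-c^2\ne 0$, which is precisely the content of the $f^{ca}\ne 0$ tick in the bottom-left corner of $\{\phantom{M}\}_A$.

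I do not anticipate a real obstacle here. Unlike the companion proof for $V_6$, no auxiliary appeal to Lemma~\ref{lem:shouldreallybesandwhichlemma} is required, because $f_7$ is independent of its third argument, and the inequality that Lemma~\ref{eq:whenpfsarenonzero} demands is supplied verbatim by the appropriate left-column tick of the brace pattern. The only small bookkeeping task is to verify that, in cases $\{\phantom{M}\}_B$ and $\{\phantom{M}\}_C$, the unique left-column tick sits in the row labelled by the surviving pair of momenta singled out by the central-column cross; this matching is immediate from the row ordering of the brace shorthand introduced in Definition~\ref{def:braceshorthand}, so the cyclic symmetry reduction is genuinely without loss of generality.
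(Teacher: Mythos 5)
Your proposal is correct and follows essentially the same route as the paper's proof: the central-column pattern of $\{\phantom{M}\}_A$ pins down $g^b_{p-q}=0$ via Lemma~\ref{lem:thingstodowithmiddlecolumns}, Lemma~\ref{eq:whenpfsarenonzero} then reduces the claim to $f_7(a,c;b)=a^2-c^2\ne0$, which is exactly the bottom-left tick $f^{ca}\ne0$, and cases $\{\phantom{M}\}_B$, $\{\phantom{M}\}_C$ follow by the permutation symmetry. Your added remark that, unlike the $V_6$ case, no appeal to Lemma~\ref{lem:shouldreallybesandwhichlemma} is needed because $f_7$ ignores its third argument is accurate but does not change the argument.
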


\begin{lemma}
\label{lem:C1maintheorem}
\label{lem:s19THR}
Given the definitions of $V_6$ and $V_7$ in \eqref{eq:v6def} and \eqref{eq:v7def} respectively, the variables within the set
\begin{align}
S_{19}^{(3)} = \{ V_6, V_7 \} \label{eq:s19THR}
\end{align}
assign at least one non-zero parity to every event in $C_3$ as required.
\begin{proof}
This theorem is just a synthesis of Lemmas~\ref{lem:importantp5lemma} and
\ref{lem:importantp4lemma}.
\end{proof}
\end{lemma}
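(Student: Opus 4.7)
The plan is to combine the three preceding lemmas into a single case-exhaustion argument. By Lemma~\ref{lem:thingstodowithmiddlecolumns}, every event satisfying $C_3$ falls into one of exactly six types as classified by the \textbf{brace shorthand} of Definition~\ref{def:braceshorthand}, which (after relabelling) I will call $\{\cdot\}_A,\{\cdot\}_B,\{\cdot\}_C,\{\cdot\}_D,\{\cdot\}_E,\{\cdot\}_F$ matching the notation used in Lemmas~\ref{lem:importantp4lemma} and \ref{lem:importantp5lemma}. Thus it suffices to verify that for each of these six types at least one of $V_6$ and $V_7$ is non-zero.

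The work is already done: Lemma~\ref{lem:importantp5lemma} guarantees $V_7\ne 0$ on events of types $\{\cdot\}_A,\{\cdot\}_B,\{\cdot\}_C$ (the three cases where the ``isolated'' cross in the central column is paired with a tick from the leftmost column, i.e.\ a non-trivial squared-mass difference), while Lemma~\ref{lem:importantp4lemma} guarantees $V_6\ne 0$ on events of types $\{\cdot\}_D,\{\cdot\}_E,\{\cdot\}_F$ (the three cases where the isolated cross is instead paired with a tick from the rightmost column, i.e.\ a non-trivial $\Delta_3$ transverse-momentum term, so that the epsilon-squared-difference $f_6$ is non-vanishing). Because the six types $\{\cdot\}_A,\ldots,\{\cdot\}_F$ together exhaust all possibilities permitted by Lemma~\ref{lem:thingstodowithmiddlecolumns}, the set $S_{19}^{(3)}=\{V_6,V_7\}$ therefore assigns a non-zero value to at least one event variable on every $C_3$ event.

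There is no hard step here: the entire difficulty has already been absorbed into the three preceding lemmas (the structural analysis via Lemma~\ref{lem:pocketbarnowl}, Lemma~\ref{lem:pocketeagle} and Lemma~\ref{lem:shouldreallybesandwhichlemma} that pins down the central column, together with the compatibility check in each of the six brace-shorthand patterns). The only things I would double-check when writing the proof are (i) that the case-labels $A$--$F$ in Lemmas~\ref{lem:importantp4lemma} and \ref{lem:importantp5lemma} genuinely cover the six rows listed in Lemma~\ref{lem:thingstodowithmiddlecolumns} with no overlap and no omission, and (ii) that the symmetry arguments used to propagate each named case to the other two in its group (via cyclic permutation of $(abc)$) are consistent with the symmetries of $V_6$ and $V_7$ as defined in \eqref{eq:v6def}--\eqref{eq:v7def}. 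Both checks are routine because $V_6$ and $V_7$ are manifestly invariant under $S_3$ acting on $(abc)$ and under the $(pq)$ swap, as noted in their definitions. The proof therefore reduces to a one-line synthesis, which is exactly what the lemma's own proof statement asserts.
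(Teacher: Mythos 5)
Your proposal is correct and follows exactly the paper's route: the paper's proof is precisely the synthesis of Lemmas~\ref{lem:importantp4lemma} and \ref{lem:importantp5lemma}, whose six brace-shorthand cases (from Lemma~\ref{lem:thingstodowithmiddlecolumns}) exhaust $C_3$, with $V_7$ covering the types where the cross in the central column sits alongside a mass-difference tick and $V_6$ covering those where it sits alongside a $\Delta_3$ tick. Nothing in your elaboration departs from, or adds a gap to, the paper's argument.
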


\begin{remark}
We do not currently use Corollary~\ref{cor:notcurrentlyusedbutcouldbeuseful}.  In principle it would permit $[a,b,c,(p+q)]$ to be used in place of the product
$$
\PThr
\cdot
\FThr$$
	in \eqref{eq:bigpf}, \eqref{eq:v6def} and \eqref{eq:v7def}.  This substitution would change (and simplify!) the definition of $\mathscr{P}(f)$, which in turn would change (and simplify!) the definitions of $V_6$ and $V_7$.  This substitution would not affect the \textbf{sufficiency} proofs associated with $V_6$ and $V_7$, however, it complicates the proofs associated with their joint \textbf{necessity} and so we therefore avoid making it.  It remains possible that the difficulty demonstrating the \textbf{\independence} of the final set of nineteen variables (when this substitution is made and we have a new $V_6$ and $V_7$) is a sign that this alternative set of 19 variables is not \textbf{\independent}, and that therefore our own $S_{19}$ is not a \textbf{minimal} set.  This line of enquiry should be investigated further. \followUpInFuture[Take note Gripaios and Hadaddin!  We won't be wrong as we don't claim $S_{19}$ is globally minimal.  But it would be a shame to miss a set smaller than $S_{19}$ if this is evidence pointing to the existence of one.]
\end{remark}

\subsubsection{A set of variables $S_{19}^{(4)}$ to cover $C_4$}
Recall that
\begin{gather*}
C_4 \\
= \\
\left( (\PThr=0)
        \land
        (\FThr=0)\right)\land C
\\
=
\\
\toperator \\
        (\PThr=0)
        \land
        (\FThr=0)
\\
\separatorlor \land
\\
(
  \POne
  \ne0
)
\lor
(
\PTwo
  \ne0
)
\lor
(
\isFalse{
\PThr
  \ne0}
)
\\ 
\separatorlor \land
\\ 
\left\{
\begin{array}{c}
(f^{ab}\ne0) \lor (g^{a-b}_{p-q}\ne0) \lor (g^{a-b}_{a+b}\ne0) \lor (g^{a-b}_{a+b+c}\ne0) \lor (a=b) \\ 
\smallseparatorlor \land \\
(f^{bc}\ne0) \lor (g^{b-c}_{p-q}\ne0) \lor (g^{b-c}_{b+c}\ne0) \lor (g^{b-c}_{a+b+c}\ne0) \lor (b=c) \\ 
\smallseparatorlor \land \\
(f^{ca}\ne0) \lor (g^{c-a}_{p-q}\ne0) \lor (g^{c-a}_{c+a}\ne0) \lor (g^{c-a}_{a+b+c}\ne0) \lor (c=a) 
\end{array}
\right\}
\\
\separatorlor \land
\\ 
(
 \FZer
  \ne0
)
\lor
(
  \FOne
  \ne0
)
\lor
(
\FTwo
  \ne0
)
\lor
(
\isFalse{
\FThr
  \ne0}
)
\\ 
\separatorlor \land
\\ 
(
 \FZer
  \ne0
)
\lor
(
 \FOne
  \ne0
)
\lor
\left\{
\begin{array}{c}
    (f^{ab}\ne0) \lor (g^{a+b}_{p-q}\ne0) \lor (\Delta_3(a-b,p,q)\ne0) \\
    \tinyseparatorlor \land \\
    (f^{bc}\ne0) \lor (g^{b+c}_{p-q}\ne0) \lor (\Delta_3(b-c,p,q)\ne0) \\
    \tinyseparatorlor \land \\
    (f^{ca}\ne0) \lor (g^{c+a}_{p-q}\ne0) \lor (\Delta_3(c-a,p,q)\ne0) 
\end{array}
\right\}
\\
\toperator
\end{gather*}
and thus
\begin{gather*}
C_4 = \numberthis \label{con:c4first} \\
\toperator \\
        (\FThr
        \equivdef
  g^{a-b}_{p-q} 
  g^{b-c}_{p-q} 
  g^{c-a}_{p-q}=0)
\\
\separatorlor \land
\\
 \left[
(
  \POne
  \ne0
)
\lor
(
\PTwo
  \ne0
)
\right]
\land
(
\PThr
  =0
)\numberthis \label{con:c2firstppart}
\\ 
\separatorlor \land
\\ 
\left\{
\begin{array}{c}
(f^{ab}\ne0) \lor (g^{a-b}_{p-q}\ne0) \lor (g^{a-b}_{a+b}\ne0) \lor (g^{a-b}_{a+b+c}\ne0) \lor (a=b) \\ 
\smallseparatorlor \land \\
(f^{bc}\ne0) \lor (g^{b-c}_{p-q}\ne0) \lor (g^{b-c}_{b+c}\ne0) \lor (g^{b-c}_{a+b+c}\ne0) \lor (b=c) \\ 
\smallseparatorlor \land \\
(f^{ca}\ne0) \lor (g^{c-a}_{p-q}\ne0) \lor (g^{c-a}_{c+a}\ne0) \lor (g^{c-a}_{a+b+c}\ne0) \lor (c=a) 
\end{array}
\right\} \numberthis \label{con:blockOfFours}
\\
\separatorlor \land
\\ 
(
 \FZer
  \ne0
)
\lor
(
 \FOne
  \ne0
)
\lor
(
\FTwo
  \ne0
)
\\ 
\separatorlor \land
\\ 
(
 \FZer
  \ne0
)
\lor
(
  \FOne
  \ne0
)
\lor
\left\{
\begin{array}{c}
    (f^{ab}\ne0) \lor (g^{a+b}_{p-q}\ne0) \lor (\Delta(a-b,p,q)\ne0) \\
    \tinyseparatorlor \land \\
    (f^{bc}\ne0) \lor (g^{b+c}_{p-q}\ne0) \lor (\Delta_3(b-c,p,q)\ne0) \\
    \tinyseparatorlor \land \\
    (f^{ca}\ne0) \lor (g^{c+a}_{p-q}\ne0) \lor (\Delta_3(c-a,p,q)\ne0) 
\end{array}
\right\}
\\
\toperator
.
\end{gather*}

\hide{
\begin{lemma}
Note that \eqref{eq:machineone}, \eqref{eq:CothersDEF} and \eqref{eq:machinetwo} imply that
\begin{align}
    C=C_0\lor C_1 \lor C_2.\label{eq:allthecsthingaddsup}
\end{align}
\end{lemma}
}

\begin{lemma}
$C_4$ of \eqref{con:c4first} is incompatible with $a=b$ and is incompatible with $b=c$ and is incompatible with $c=a$.
\begin{proof}
Given the $(a,b,c)$-symmetry of $C_4$ it is sufficient to prove that $C_4$ is incompatible with $a=b$.  Suppose therefore, without loss of generality, that $a=b$.  In this case, $\epsShort^{ab}_{pq}=0$ and so $\mathscr{P}^{abc}_{pq}(2)=\epsShort^{ab}_{pq} \epsShort^{bc}_{pq} \epsShort^{ca}_{pq}=0$.
Independently, $a=b$ implies that $\epsShort^{bc}_{pq} = -\epsShort^{ca}_{pq}$. This in turn means that $\mathscr{P}^{abc}_{pq}(1) = \epsShort^{ab}_{pq}+ \epsShort^{bc}_{pq} +\epsShort^{ca}_{pq}=0-\epsShort^{ca}_{pq} +\epsShort^{ca}_{pq}  =0$.  These two results ($\mathscr{P}^{abc}_{pq}(1)=0$ and $\mathscr{P}^{abc}_{pq}(2)=0$) are in conflict with \eqref{con:c2firstppart} which itself requires that at least one of them be non-zero.
\end{proof}
\end{lemma}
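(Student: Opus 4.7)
The plan is to use the $(a,b,c)$-permutation symmetry of $C_4$ to collapse the three stated cases ($a=b$, $b=c$, $c=a$) into a single case, and then derive a direct contradiction with the sub-condition \eqref{con:c2firstppart}, which insists that at least one of $\POne$ or $\PTwo$ be non-zero. All the ingredients of $C_4$ are manifestly symmetric under permutations of $\{a,b,c\}$ (they are built from quantities like $\POne$, $\PTwo$, $\PThr$, $\FThr$, and symmetric `or'-blocks over the three pairs), so whatever argument works for $a=b$ will work for $b=c$ and $c=a$ after relabelling. Hence I would immediately reduce to showing $C_4 \land (a=b)$ is inconsistent.

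Assuming $a=b$, the key observation is that $\epsShort^{ab}_{pq}=[a,b,p,q]$ vanishes by antisymmetry of the Levi-Civita contraction (it has two identical arguments). Because $\PTwo = \epsShort^{ab}_{pq}\,\epsShort^{bc}_{pq}\,\epsShort^{ca}_{pq}$ is a product containing this factor, $\PTwo=0$ is immediate. For $\POne$, the first term drops out and the remaining two satisfy $\epsShort^{ca}_{pq}=[c,a,p,q]=[c,b,p,q]=-[b,c,p,q]=-\epsShort^{bc}_{pq}$ (again by antisymmetry, using $a=b$), so the sum telescopes to $\POne=0$ as well.

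The pair $(\POne=0)\land(\PTwo=0)$ directly contradicts \eqref{con:c2firstppart}, which is part of the definition of $C_4$. That delivers the contradiction and finishes the argument for $a=b$; the other two cases follow by the initial symmetry reduction.

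I do not anticipate any real obstacle: the whole argument rests on two elementary properties of the epsilon contraction (that it vanishes when two arguments coincide, and that it is antisymmetric in any pair of arguments). The only thing to be careful about is tracking signs when rewriting $\epsShort^{ca}_{pq}$ under the substitution $a=b$; once that is done the cancellation in $\POne$ is automatic and no further input from the elaborate structure of $C_4$ is needed.
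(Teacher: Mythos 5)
Your proposal is correct and follows essentially the same route as the paper: reduce to $a=b$ by the $(a,b,c)$-symmetry of $C_4$, use antisymmetry of the epsilon contraction to get $\epsShort^{ab}_{pq}=0$ (hence $\PTwo=0$) and $\epsShort^{ca}_{pq}=-\epsShort^{bc}_{pq}$ (hence $\POne=0$), and conclude by contradiction with \eqref{con:c2firstppart}. Nothing further is needed.
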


\begin{corollary}
We may therefore re-write $C_4$ in an arguably simpler and more explicit form in which the `$a=b$' options are removed and replaced by explicit (albeit redundant) statements to the contrary:
\begin{gather*}
C_4 = \numberthis \label{con:c2smaller} \\
\toparotor  {0.60\textwidth}
\\
  (a\ne b)
  \land (b\ne c)
  \land (c\ne a)
  \\
  \separotor \land {0.28\textwidth}
  \\
        (\FThr
        \equivdef
  g^{a-b}_{p-q} 
  g^{b-c}_{p-q} 
  g^{c-a}_{p-q}=0)
\\
\separotor \land {0.28\textwidth}
\\
 \left[
(
  \POne
  \ne0
)
\lor
(
\PTwo
  \ne0
)
\right]
\land
(
\PThr
  =0
)
\\ 
\separotor \land {0.28\textwidth}
\\ 
\timesavingsandwichfilling
\numberthis \label{con:blockOfFourssmaller}
\\
\separotor \land {0.28\textwidth}
\\ 
(
 \FZer
  \ne0
)
\lor
(
  \FOne
  \ne0
)
\lor
(
\FTwo
  \ne0
)
\\ 
\separotor \land {0.28\textwidth}
\\ 
(
 \FZer
  \ne0
)
\lor
(
  \FOne
  \ne0
)
\lor
\left\{
\begin{array}{c}
    (f^{ab}\ne0) \lor (g^{a+b}_{p-q}\ne0) \lor (\Delta_3(a-b,p,q)\ne0) \\
    \tinyseparatorlor \land \\
    (f^{bc}\ne0) \lor (g^{b+c}_{p-q}\ne0) \lor (\Delta_3(b-c,p,q)\ne0) \\
    \tinyseparatorlor \land \\
    (f^{ca}\ne0) \lor (g^{c+a}_{p-q}\ne0) \lor (\Delta_3(c-a,p,q)\ne0) 
\end{array}
\right\}
\\
\toparotor  {0.60\textwidth}
.
\end{gather*}
\end{corollary}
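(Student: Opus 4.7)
The plan is to exploit the immediately-preceding Lemma, which already established that whenever $C_4$ holds we have $a\ne b$, $b\ne c$ and $c\ne a$. Given that input, the corollary reduces to a purely syntactic rewriting: the three disjuncts $(a=b)$, $(b=c)$ and $(c=a)$ appearing as the rightmost alternatives in each of the three rows of the bracketed block \eqref{con:blockOfFours} are individually \textbf{FALSE} on every event satisfying $C_4$, and so may be deleted from their respective disjunctions without altering the truth value of the block.

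First I would note that, for any statement of the form $P\equiv (Q_1\lor Q_2 \lor \cdots \lor Q_k \lor F)$ in which $F$ is known to be false whenever $P$ is invoked, one has $P \iff (Q_1\lor Q_2 \lor \cdots \lor Q_k)$. Applying this three times (once per row of \eqref{con:blockOfFours}) with $F$ instantiated to $(a=b)$, $(b=c)$ and $(c=a)$ in turn, the middle block of \eqref{con:c4first} collapses to exactly the four-disjunct-per-row expression packaged as $\timesavingsandwichfilling$ in \eqref{con:blockOfFourssmaller}. All other lines of \eqref{con:c4first} are carried across to \eqref{con:c2smaller} unchanged.

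Second, to obtain the \emph{explicit} statement $(a\ne b)\land(b\ne c)\land(c\ne a)$ that appears at the top of \eqref{con:c2smaller}, I would simply conjoin those three inequalities onto $C_4$: by the preceding Lemma they hold whenever $C_4$ does, so the conjunction is tautologically equivalent to $C_4$ itself. The author already flags these added conjuncts as ``redundant'', so no further content is required: they are present only to make the shape of the constraint more transparent to the reader.

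There is no real obstacle in this proof; the work is entirely bookkeeping, and the only thing one has to be careful about is checking that the four surviving disjuncts on each row of $\timesavingsandwichfilling$ are indeed the same four (up to ordering) as those of \eqref{con:blockOfFours} after deletion of the $(a=b)$-type alternative. A quick side-by-side comparison (for the first row: $\{f^{ab}\ne 0,\ g^{a-b}_{p-q}\ne 0,\ g^{a-b}_{a+b}\ne 0,\ g^{a-b}_{a+b+c}\ne 0\}$ on both sides, and cyclically for the others) confirms the match, concluding the proof.
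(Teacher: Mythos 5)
Your proposal is correct and is essentially the argument the paper itself intends: the corollary's ``therefore'' points to the immediately preceding lemma (incompatibility of $C_4$ with $a=b$, $b=c$, $c=a$), from which the false disjuncts in each row of \eqref{con:blockOfFours} are dropped and the redundant inequalities $(a\ne b)\land(b\ne c)\land(c\ne a)$ are conjoined, exactly as you describe. Your side-by-side check that the four surviving disjuncts per row coincide (up to ordering) with those packaged in \eqref{con:blockOfFourssmaller} is the only bookkeeping needed, and it is done correctly.
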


\begin{remark}
The present goal is to find a set of parity-odd variables, at least one of which is provably non-zero for any event satisfying $C_4$.
One way to do so is to find a set of parity-odd variables which has that same desired property, but on a wider (i.e.~less restrictive) class of events.  We therefore now define a less restrictive class of events, $C_{4A}$, with a view to finding Parity-odd variables which work for any event satisfying its requirements:
\end{remark}
\begin{definition}[ $C_4=C_{4A}\land C_{4B}$]
Noting that $C_4$ is a chain of `anded' requirements, we split the constraint into two parts $C_{4A}$ and $C_{4B}$ such that 
$C_4=C_{4A}\land C_{4B}$:
\begin{gather*}
C_{4A} \equivdef \numberthis \label{con:C2Asmaller} \left[\begin{array}{c}
\toparotor  {0.60\textwidth}
\\
(
  \POne
  \ne0
)
\lor
(
\PTwo
  \ne0
)
\\ 
\separotor \land {0.28\textwidth}
\\ 
\timesavingsandwichfilling
\\
\separotor \land {0.28\textwidth}
\\ 
(
 \FZer
  \ne0
)
\lor
(
  \FOne
  \ne0
)
\lor
(
\FTwo
  \ne0
)
\\ 
\toparotor  {0.60\textwidth}
\end{array}
\right]
\end{gather*}
and
\begin{gather*}
C_{4B} \equivdef 
\\
\toparotor  {0.60\textwidth}
\\
  (a\ne b)
  \land (b\ne c)
  \land (c\ne a)
  \\
  \separotor \land {0.28\textwidth}
  \\
        (\FThr
        \equivdef
  g^{a-b}_{p-q} 
  g^{b-c}_{p-q} 
  g^{c-a}_{p-q}=0) \numberthis \label{eq:theawkwardfconstraintghtzdfsd}
\land
(
\PThr
  =0
)
  \\
  \separotor \land {0.28\textwidth}
  \\
 (
 \FZer
  \ne0
)
\lor
(
 \FOne
  \ne0
)
\lor
\left\{
\begin{array}{c}
    (f^{ab}\ne0) \lor (g^{a+b}_{p-q}\ne0) \lor (\Delta_3(a-b,p,q)\ne0) \\
    \tinyseparatorlor \land \\
    (f^{bc}\ne0) \lor (g^{b+c}_{p-q}\ne0) \lor (\Delta_3(b-c,p,q)\ne0) \\
    \tinyseparatorlor \land \\
    (f^{ca}\ne0) \lor (g^{c+a}_{p-q}\ne0) \lor (\Delta_3(c-a,p,q)\ne0) 
\end{array}
\right\}
\\ 
\toparotor  {0.60\textwidth}
.
\end{gather*}
\end{definition}
\begin{lemma}
\label{lem:allticksinonecolumnresult}
If $f(x,y)$ is a real or complex valued function which changes sign under exchange of its arguments (i.e.~$f(x,y)=-f(y,x)$) then the function
\begin{align}
    \pi(a,b,c) \equivdef 
      f(a,b) f(b,c) f(c,a)
\end{align}
 is totally antisymmetric in $a$, $b$ and $c$.    If, furthermore, $f(a,b)\ne0$, $f(b,c)\ne0$, $f(c,a)\ne0$, then $\pi(a,b,c)$ is non-zero.
\begin{proof}
Since the function $\pi(a,b,c)$ is evidently invariant under cyclic permutations of three labels ($a$, $b$ and $c$) it will be totally antisymmetric if it changes sign under the transposition of two of those labels.  Without loss of generality, therefore, consider  only 
$\pi(b,a,c) = 
      f(b,a) f(a,c) f(c,b)
    =
      -f(a,b) f(c,a) f(b,c)=-\pi(a,b,c)$ proving anti-symmetry.
Under the additional conditions stated, the function is evidently non-zero as it is a product of non-zero quantities required. [\textit{NB: Lemma~\ref{lem:allticksinonecolumnresult} is just a special case of Lemma~\ref{lem:someticksinonecolumnresult}. One may therefore prove Lemma~\ref{lem:allticksinonecolumnresult} as a corollary to Lemma~\ref{lem:someticksinonecolumnresult} by simply setting $g(x,y)$ equal to $f(x,y)$ therein.}]
\end{proof}
\end{lemma}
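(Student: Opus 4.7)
The plan is short because the statement is almost tautological: the only content is (i) that an antisymmetric $f$ makes the specific product $\pi(a,b,c)=f(a,b)f(b,c)f(c,a)$ totally antisymmetric under the $S_3$ action on the labels, and (ii) that a product of three nonzero complex numbers is nonzero.

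First I would prove total antisymmetry. Since $S_3$ is generated by a single 3-cycle together with a single transposition, it suffices to check the behaviour of $\pi$ under those two generators. Under the 3-cycle $(a,b,c)\mapsto(b,c,a)$ we get
\begin{equation*}
\pi(b,c,a)=f(b,c)f(c,a)f(a,b)=f(a,b)f(b,c)f(c,a)=\pi(a,b,c),
\end{equation*}
so $\pi$ is invariant under cyclic permutations (even permutations). For the transposition $(a,b,c)\mapsto(b,a,c)$ I would use the hypothesis $f(x,y)=-f(y,x)$ three times:
\begin{equation*}
\pi(b,a,c)=f(b,a)f(a,c)f(c,b)=\bigl(-f(a,b)\bigr)\bigl(-f(c,a)\bigr)\bigl(-f(b,c)\bigr)=-\pi(a,b,c).
\end{equation*}
Hence $\pi$ picks up the sign of the permutation, which is the definition of total antisymmetry in $a,b,c$.

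Second, under the additional hypothesis that $f(a,b),f(b,c),f(c,a)$ are all nonzero, the value $\pi(a,b,c)$ is a product of three nonzero elements of $\mathbb{R}$ or $\mathbb{C}$ and therefore nonzero, since both fields are integral domains.

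There is no real obstacle here; the only thing to be careful about is the bookkeeping of the three minus signs in the transposition step, which must combine to a single overall minus sign rather than to $+1$. As the remark in the excerpt already notes, this lemma is a specialization of Lemma~\ref{lem:someticksinonecolumnresult} obtained by setting $g(x,y)=f(x,y)$, so an alternative (and equally valid) presentation would simply invoke that more general result; I would however prefer the direct two-line computation above because it is self-contained and makes the sign accounting transparent.
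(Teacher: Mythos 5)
Your proof is correct and follows essentially the same route as the paper's: cyclic invariance plus a sign check under a single transposition (with the three minus signs from $f(x,y)=-f(y,x)$ combining to an overall $-1$), followed by the observation that a product of nonzero real or complex numbers is nonzero; the paper even makes the same remark about deducing the lemma from Lemma~\ref{lem:someticksinonecolumnresult} with $g=f$. No gaps.
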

\begin{lemma}
\label{lem:someticksinonecolumnresult}
If $f(x,y)$ and $g(x,y)$ are real or complex valued functions which change sign under exchange of their arguments (i.e.~$f(x,y)=-f(y,x)$ and     $g(x,y)=-g(y,x)$) then the function
\begin{align}
    \sigma(a,b,c) \equivdef 
      f(a,b) f(b,c) g(c,a)
    + f(b,c) f(c,a) g(a,b)
    + f(c,a) f(a,b) g(b,c) 
\end{align}
 is totally antisymmetric in $a$, $b$ and $c$.    If, furthermore, $f(a,b)\ne0$, $f(b,c)\ne0$, $f(c,a)=0$ and $g(c,a)\ne0$, then $\sigma(a,b,c)$ is non-zero.
\begin{proof}
Since the function $\sigma(a,b,c)$ is a sum of terms related by cyclic permutations of three labels ($a$, $b$ and $c$) it is totally antisymmetric if it changes sign under the transposition of two of those labels.  Without loss of generality, therefore, consider  only 
\begin{alignat}{3}
\sigma(b,a,c) &= \hphantom{+}
      f(b,a) f(a,c) g(c,b)
    + f(a,c) f(c,b) g(b,a)
    + f(c,b) f(b,a) g(a,c) 
    \nonumber \\
    &=
     - f(a,b) f(c,a) g(b,c)
    - f(c,a) f(b,c) g(a,b)
    - f(b,c) f(a,b) g(c,a)  && \ \text{(antisymmetry: $f,g$)} 
    \nonumber \\
  &=
      -(f(a,b) f(c,a) g(b,c)
    + f(c,a) f(b,c) g(a,b)
    + f(b,c) f(a,b) g(c,a) ) && \ \text{(factorizing)} 
    \nonumber \\
  &=
      -(
       f(b,c) f(a,b) g(c,a)
     + f(c,a) f(b,c) g(a,b)
     + f(a,b) f(c,a) g(b,c)
    ) && \ \text{(reordering sum)} 
    \nonumber \\
  &=
     -(
        f(a,b) f(b,c)g(c,a)
     + f(b,c) f(c,a) g(a,b)
     + f(c,a) f(a,b) g(b,c)
    )   && \ \text{(commuting products)} \nonumber
     \\
    &=-\sigma(a,b,c) 
\end{alignat}
proving anti-symmetry.
Under the additional conditions stated, $\sigma(a,b,c)= f(a,b) f(b,c) g(c,a)
    + 0
    + 0=f(a,b) f(b,c) g(c,a)\ne0$ as required. 
    \begin{remark}
    The proof would fail if $f$ and $g$ were quaternion-valued functions.  This is because quaternions do not always commute, and so the last step in the proof of anti-symmetry would not be valid.
    \end{remark}
\end{proof}
\end{lemma}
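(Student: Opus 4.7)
The plan is to handle the two claims in turn, mirroring the structure of the preceding Lemma~\ref{lem:allticksinonecolumnresult}. First I would prove total antisymmetry by exploiting the manifest invariance of $\sigma(a,b,c)$ under the cyclic permutation $(a,b,c)\mapsto(b,c,a)$: this sends each of the three summands to the next one in the cycle, so the sum is unchanged. Since the symmetric group $S_3$ is generated by the 3-cycle together with any single transposition, to establish total antisymmetry it suffices to check that $\sigma$ changes sign under one transposition, say $a\leftrightarrow b$. Substituting and using $f(y,x)=-f(x,y)$ and $g(y,x)=-g(x,y)$ flips the sign of every single factor in each of the three terms; since each summand is a product of three sign-flipping factors, each term picks up an overall minus sign, and then reordering the summands (and the factors within each product, using commutativity) returns the expression to $-\sigma(a,b,c)$.

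For the non-vanishing claim, the hypothesis $f(c,a)=0$ annihilates both the second summand $f(b,c)f(c,a)g(a,b)$ and the third summand $f(c,a)f(a,b)g(b,c)$, leaving
\begin{align*}
\sigma(a,b,c) \;=\; f(a,b)\,f(b,c)\,g(c,a).
\end{align*}
The remaining three factors are each non-zero by hypothesis, and the ambient field ($\mathbb{R}$ or $\mathbb{C}$) has no zero divisors, so $\sigma(a,b,c)\ne 0$ as required.

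The only real obstacle is bookkeeping in the antisymmetry step: tracking the three sign flips per term and confirming that the rearrangement of factors within each product is legitimate. This rearrangement is the essential use of commutativity and is exactly why the remark at the end of the statement excludes quaternion-valued $f$ and $g$; over a non-commutative ring the reordering step would introduce residual ordering terms that do not cancel, and the anti-symmetry would fail. I would include a brief remark to that effect in the write-up, echoing the observation already flagged in the lemma statement.
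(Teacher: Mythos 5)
Your proof is correct and follows essentially the same route as the paper's: you verify the sign change under a single transposition (using the antisymmetry of $f$ and $g$ together with commutativity of the scalars, exactly the step that fails for quaternions), invoke the cyclic structure to cover the remaining transpositions, and then observe that $f(c,a)=0$ kills two summands while the surviving product of non-zero field elements is non-zero. Your explicit appeal to $S_3$ being generated by the $3$-cycle and one transposition merely spells out the paper's ``without loss of generality'' step, so there is nothing further to add.
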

\begin{definition}\label{def:fourfunctionswhichhelpus}
Define six functions $f_i(x)$ as follows:
\begin{align}
   f_{1}(x)&\equivdef g^x_x = \Delta_2(x,p+q),\\
    f_{2}(x)&\equivdef g^x_{a+b+c}\label{eq:f4shorthandvardef}\\
    f_{3}(x)&\equivdef x^2,\\
    f_{4}(x)&\equivdef  g^x_{p-q},\label{eq:f2shorthandvardef},\\ 
    f_{5}(x)&\equivdef\SYMGRAMTWO x \Sigma\equiv\SYMGRAMTWO x {a+b+c},
    \label{eq:f5shorthanddef}\\
    f_{6}(x)&\equivdef\GRAMTWO x \Sigma {p+q} \Sigma \equiv \GRAMTWO x {a+b+c} {p+q} {a+b+c} \label{eq:f6shorthanddef}
\end{align}
and let $\delta f_{i}^{xy}\equivdef f_{i}(x)-f_{i}(y)$ denote their differences.

Furthermore, 
define two complex functions $f_{1\mathbb C}(x)$ and $f_{2\mathbb C}(x)$:
\begin{align}
    f_{1\mathbb C}(x) &\equivdef  f_{1}(x) + i f_{2}(x), \\
    f_{2\mathbb C}(x) &\equivdef  f_{3}(x) + i f_{4}(x),
\end{align}
and again let $\delta f_{i\mathbb C}(x,y)\equivdef f_{i\mathbb C}(x)-f_{i\mathbb C}(y)$ denote their differences.
\end{definition}
\begin{lemma}
Using the functions of Definition~\ref{def:fourfunctionswhichhelpus} we may write the curly-brace-enclosed constraints of \eqref{con:C2Asmaller} more symmetrically as either three sets of four real constraints (\eqref{eq:conasfourreals}) or as three sets of two complex constraints (\eqref{eq:conastwocomplex}):
\begin{align}
    C_{4A\{\}}&\equivdef\timesavingsandwichfilling \nonumber
    \\
    \nonumber
    \\
    &\equiv
    \left\{
    \begin{array}{c}
        (\delta f_1^{ab}\ne 0) \lor
        (\delta f_2^{ab}\ne 0) \lor
        (\delta f_3^{ab}\ne 0) \lor
        (\delta f_4^{ab}\ne 0)\\
        \separotor \land {3.5cm}\\
        (\delta f_1^{bc}\ne 0) \lor
        (\delta f_2^{bc}\ne 0) \lor
        (\delta f_3^{bc}\ne 0) \lor
        (\delta f_4^{bc}\ne 0)\\
        \separotor \land {3.5cm}\\
        (\delta f_1^{ca}\ne 0) \lor
        (\delta f_2^{ca}\ne 0) \lor
        (\delta f_3^{ca}\ne 0) \lor
        (\delta f_4^{ca}\ne 0)
    \end{array}
    \right\} \label{eq:conasfourreals}
    \\ 
    \nonumber \\
&\equiv
    \left\{
    \begin{array}{c}
        (\delta f_{1\mathbb C}(a,b)\ne 0) \lor (\delta f_{2\mathbb C}(a,b)\ne 0)\\
        \separotor \land {2cm}\\
        (\delta f_{1\mathbb C}(b,c)\ne 0) \lor (\delta f_{2\mathbb C}(b,c)\ne 0)\\
        \separotor \land {2cm}\\
        (\delta f_{1\mathbb C}(c,a)\ne 0) \lor (\delta f_{2\mathbb C}(c,a)\ne 0)
    \end{array}
    \right\}.\label{eq:conastwocomplex}
\end{align}
\begin{proof}
The proof of equivalence for column three of \eqref{eq:conasfourreals} is trivial; the proof for columns two and four follows from the linearity of Gram Determinants; while the proof for column one requires Lemma~\ref{lem:momdifferencesidentity}.  \end{proof}
\end{lemma}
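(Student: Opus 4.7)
The plan is to show equivalence of the two reformulations to the original constraint \texttt{timesavingsandwichfilling} column by column (four real columns), and then to combine columns in pairs to obtain the complex form. By the $S_3$ symmetry in $(a,b,c)$ it suffices to establish the equivalences for the first row (involving $a$ and $b$); the remaining two rows follow by relabelling.

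For column 3, the equivalence is immediate from the definitions: $\delta f_3^{ab} = a^2-b^2 = f^{ab}$. For columns 2 and 4 the key observation is that Gram determinants are linear (and in particular additive) in each of the four entries. Applied to the first column of the Gram determinant this gives $g^{a-b}_{a+b+c} = g^a_{a+b+c} - g^b_{a+b+c} = f_2(a)-f_2(b) = \delta f_2^{ab}$ and $g^{a-b}_{p-q} = g^a_{p-q}-g^b_{p-q} = f_4(a)-f_4(b) = \delta f_4^{ab}$.

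The only subtle column is column 1, which asks us to show that $g^{a-b}_{a+b} = \Delta_2(a,p+q) - \Delta_2(b,p+q) = f_1(a)-f_1(b)$. Expanding linearly in both arguments of $G$ gives
\begin{align*}
g^{a-b}_{a+b} = \GRAMTWO{a}{p+q}{a}{p+q} - \GRAMTWO{b}{p+q}{b}{p+q} + \GRAMTWO{a}{p+q}{b}{p+q} - \GRAMTWO{b}{p+q}{a}{p+q},
\end{align*}
and the last two terms cancel by the symmetry of the $2\times 2$ Gram determinant under exchange of its two rows of momenta. This is precisely the identity recorded in Lemma~\ref{lem:momdifferencesidentity}, so we obtain the desired $\delta f_1^{ab}$. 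This is the step that I would expect to be the main (mild) obstacle, only because one must be careful that the relevant Gram-determinant symmetry is indeed a sign-preserving swap rather than a sign-flipping one.

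Finally, having established the four columnwise equivalences, the passage from \eqref{eq:conasfourreals} to \eqref{eq:conastwocomplex} is a pure logical rewriting: a complex number $z = u + iv$ with $u,v\in\mathbb{R}$ is non-zero if and only if $u\ne 0$ or $v\ne 0$. Hence $\delta f_{1\mathbb C}(a,b) \ne 0 \iff (\delta f_1^{ab}\ne 0) \lor (\delta f_2^{ab}\ne 0)$, and similarly $\delta f_{2\mathbb C}(a,b) \ne 0 \iff (\delta f_3^{ab}\ne 0) \lor (\delta f_4^{ab}\ne 0)$, which pairs the four atomic disjuncts in each row of \eqref{eq:conasfourreals} into the two complex disjuncts in the corresponding row of \eqref{eq:conastwocomplex}. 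Applying the same argument mutatis mutandis to the $(b,c)$ and $(c,a)$ rows completes the proof.
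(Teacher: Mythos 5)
Your proof is correct and follows essentially the same route as the paper: column three is trivial, columns two and four use bilinearity of the Gram determinant, column one is exactly the identity of Lemma~\ref{lem:momdifferencesidentity} (your bilinear expansion plus the row-swap symmetry simply re-derives it), and the passage to the complex form is the standard observation that $z\ne 0$ iff $\Re z\ne 0$ or $\Im z\ne 0$. Nothing is missing; you have merely spelled out steps the paper leaves implicit.
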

\begin{definition}
In a similar manner to Definition~\ref{def:braceshorthand} we use shapes like:
\begin{gather}
\label{eq:examplefourthree}
\fourthreebox {\tmark & \circ & \circ & \circ}
              {\circ & \circ & \xmark & \circ}
              {\circ & \circ & \circ & \circ}
              \qquad\text{and}\qquad
\twothreebox {\tmark & \circ }
              {\circ & \xmark}
              {\circ & \circ }
\end{gather}
to represent, respectively, the status of the twelve real constraints in \eqref{eq:conasfourreals} and the six complex constraints in \eqref{eq:conastwocomplex}.
\end{definition}
\begin{lemma}
Self-consistency dictates that no column of constraints in \eqref{eq:examplefourthree} can contain one tick and two crosses. 
\begin{proof}
Consider, for example, a configuration like:
\begin{align*}
\fourthreebox {\circ & \circ & \tmark & \circ}
              {\circ & \circ & \xmark & \circ}
              {\circ & \circ & \xmark & \circ}
              .\end{align*}
It asserts that $a^2\ne b^2$ and $b^2=c^2$ and $c^2=a^2$.  This is not self consistent. The same argument works for any column, because in each column every constraint takes the form $\delta f(a,b)\equiv f(a)-f(b)\ne 0$ for some $f$. 
\end{proof}
\end{lemma}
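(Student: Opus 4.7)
The plan is to observe that within any single column of \eqref{eq:examplefourthree}, the three entries are of the form $\delta f_i^{ab}$, $\delta f_i^{bc}$, $\delta f_i^{ca}$ for a common choice of function $f_i$ from Definition~\ref{def:fourfunctionswhichhelpus}. By the definition of these differences,
\begin{align*}
\delta f_i^{ab} + \delta f_i^{bc} + \delta f_i^{ca}
&= \bigl(f_i(a)-f_i(b)\bigr)+\bigl(f_i(b)-f_i(c)\bigr)+\bigl(f_i(c)-f_i(a)\bigr) = 0,
\end{align*}
a simple telescoping identity that holds irrespective of which of the six functions $f_i$ is under consideration. This means the three entries of a column are never algebraically independent: they satisfy a linear constraint which forces them to sum to zero.

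From this identity the claim follows immediately. If a column were to display exactly one tick and two crosses, then two of the three quantities $\delta f_i^{ab}$, $\delta f_i^{bc}$, $\delta f_i^{ca}$ would vanish while the third would be non-zero. Their sum would then be non-zero, contradicting the telescoping identity above. Hence no column can contain the pattern (one tick, two crosses), as claimed.

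The argument applies uniformly to every column since the six functions $f_1,\ldots,f_6$ all appear in the columns with the same cyclic $(ab),(bc),(ca)$ structure. No obstacle is anticipated here: the step is purely formal once the telescoping observation is made. The only minor care needed is to verify that for columns involving Gram-determinant shorthands such as $g^{a-b}_{a+b}$, $g^{a-b}_{a+b+c}$, and so on, the reduction to genuine telescoping differences $f_i(a)-f_i(b)$ has indeed been carried out in the passage from \eqref{con:blockOfFours} to \eqref{eq:conasfourreals}; this is precisely the content of the previous lemma, whose proof cites the linearity of Gram determinants together with Lemma~\ref{lem:momdifferencesidentity}, so no additional work is required.
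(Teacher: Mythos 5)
Your proof is correct and uses essentially the same idea as the paper: every entry in a given column is a difference $f_i(a)-f_i(b)$, $f_i(b)-f_i(c)$, $f_i(c)-f_i(a)$ of a single function, so the pattern of one tick and two crosses is impossible — the paper phrases this via transitivity of equality, you via the equivalent telescoping identity that the three differences sum to zero. Your closing remark that the reduction to genuine differences is guaranteed by the preceding lemma (linearity of Gram determinants plus Lemma~\ref{lem:momdifferencesidentity}) matches the paper's justification as well.
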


\begin{corollary}
Any event satisfying $C_{4A}$ matches at least one of the following ten cases (or row permutations thereof):
\begin{align}
\begin{array}{cccc}
    \fourthreebox
    {\tmark & \circ & \circ & \circ}
    {\tmark & \circ & \circ & \circ}
    {\tmark & \circ & \circ & \circ}_1 &    
            \fourthreebox
            {\tmark & \circ & \circ & \circ}
            {\tmark & \circ & \circ & \circ}
            {\xmark & \tmark & \circ & \circ}_2 &    
                      \fourthreebox
                      {\tmark & \circ & \circ & \circ}
                      {\tmark & \circ & \circ & \circ}
                      {\xmark & \xmark & \tmark & \circ}_3 &    
                            \fourthreebox
                            {\tmark & \circ & \circ & \circ}
                            {\tmark & \circ & \circ & \circ}
                            {\xmark & \xmark & \xmark & \tmark}_4
    \\ \\
    \fourthreebox
    {\xmark & \tmark & \circ & \circ}
    {\xmark & \tmark & \circ & \circ}
    {\xmark & \tmark & \circ & \circ}_5 &    
            \fourthreebox
            {\xmark & \tmark & \circ & \circ}
            {\xmark & \tmark & \circ & \circ}
            {\xmark & \xmark & \tmark & \circ}_6 &    
                      \fourthreebox
                      {\xmark & \tmark & \circ & \circ}
                      {\xmark & \tmark & \circ & \circ}
                      {\xmark & \xmark & \xmark & \tmark}_7  
    & 
    \fourthreebox
    {\xmark & \xmark & \tmark & \circ}
    {\xmark & \xmark & \tmark & \circ}
    {\xmark & \xmark & \tmark & \circ}_8 \\
    \\    
            \fourthreebox
            {\xmark & \xmark & \tmark & \circ}
            {\xmark & \xmark & \tmark & \circ}
            {\xmark & \xmark & \xmark & \tmark}_9   
  &
    \fourthreebox
    {\xmark & \xmark & \xmark & \tmark}
    {\xmark & \xmark & \xmark & \tmark}
    {\xmark & \xmark & \xmark & \tmark}_{10},  
\end{array}\label{eq:tencasesofticks}
\end{align}
or equivalently such an event matches at least one of the following three complexly-constrained cases (or row permutations thereof):
\begin{align}
    \twothreebox
    {\tmark & \circ}
    {\tmark & \circ}
    {\tmark & \circ}_{1\mathbb C}, \quad
                        \twothreebox
                        {\tmark & \circ}
                        {\tmark & \circ}
                        {\xmark & \tmark}_{2\mathbb C},  \quad
    \twothreebox
    {\xmark & \tmark}
    {\xmark & \tmark}
    {\xmark & \tmark}_{3\mathbb C}\label{eq:threecasesofcomplexticks}.
\end{align}
Note that case $\{\phantom{i}\}_{10}$, though needed for $C_{4A}$, is not  needed for $C_{4}$ since it is in conflict with the $F$ constraint shown to the left of \eqref{eq:theawkwardfconstraintghtzdfsd}. \label{cor:somesmalldiscussionofthings}
\end{corollary}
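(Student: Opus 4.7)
The plan is to associate to each event satisfying $C_{4A}$ a triple $(r_1,r_2,r_3)\in\{1,2,3,4\}^3$, where $r_k$ denotes the smallest column of row $k$ of $C_{4A\{\}}$ in which a tick appears. Such an $r_k$ exists because $C_{4A\{\}}$ forces every row to carry at least one non-zero $\delta f_j^{xy}$; by the minimality of $r_k$ the entries at columns $1,\ldots,r_k-1$ of row $k$ are crosses, while entries at later columns are undetermined and are drawn as circles in the pictogram. Because the three rows of $C_{4A\{\}}$ correspond to the pairs $(ab),(bc),(ca)$ and the transpositions of $(a,b,c)$ permute these three rows, the ``row permutations thereof'' clause of the corollary lets me reduce without loss of generality to a sorted triple $r_1\le r_2\le r_3$.

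I then invoke the self-consistency lemma stated just above the corollary: no column may exhibit the pattern one-tick-and-two-crosses. For a sorted triple $(r_1,r_2,r_3)$, column $j$ contributes a tick in row $k$ when $r_k=j$, a cross when $r_k>j$, and a circle when $r_k<j$. Scanning column by column reveals that the only way to produce a forbidden tick-cross-cross pattern is to take $j=r_1$ with $r_1<r_2$; in that event row 1 supplies the tick while rows 2 and 3 supply crosses. Consistency is therefore equivalent to $r_1=r_2$. Enumerating the sorted triples $(r,r,s)$ with $1\le r\le s\le 4$ gives $4+3+2+1=10$ configurations, which I would then match directly against $\{\cdot\}_1,\ldots,\{\cdot\}_{10}$ in \eqref{eq:tencasesofticks}. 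The complex version repeats the analysis with only two columns, leaving only $(1,1,1)$, $(1,1,2)$ and $(2,2,2)$, in agreement with $\{\cdot\}_{1\mathbb C},\{\cdot\}_{2\mathbb C},\{\cdot\}_{3\mathbb C}$ of \eqref{eq:threecasesofcomplexticks}.

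The concluding remark about case $\{\cdot\}_{10}$ is then immediate: that case asserts $\delta f_4^{ab},\delta f_4^{bc},\delta f_4^{ca}$ all non-zero, i.e.\ $g^{a-b}_{p-q},g^{b-c}_{p-q},g^{c-a}_{p-q}$ are all non-zero, whence $F^{abc}_{pq}(3)=g^{a-b}_{p-q}g^{b-c}_{p-q}g^{c-a}_{p-q}\ne 0$, directly contradicting the explicit $C_{4B}$ requirement $F^{abc}_{pq}(3)=0$ appearing just above \eqref{eq:theawkwardfconstraintghtzdfsd}. Hence case $\{\cdot\}_{10}$ cannot occur within $C_4=C_{4A}\land C_{4B}$, even though it is needed for the weaker $C_{4A}$.

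The main obstacle I anticipate is purely a matter of care: keeping the tick/cross/circle semantics straight when scanning columns, and ensuring that the sorted-triple reduction really captures all $3!$ row permutations rather than silently collapsing distinct configurations. Once those bookkeeping points are pinned down, the rest is a finite combinatorial enumeration plus one direct contradiction for the final remark.
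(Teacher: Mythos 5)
Your argument is correct and follows the route the paper intends: each row of $C_{4A\{\}}$ must contain at least one tick, the no-one-tick-two-crosses lemma applied at the smallest first-tick column forces (after using the permitted row permutations to sort) the two smallest first-tick indices to coincide, and the resulting sorted triples $(r,r,s)$ with $1\le r\le s\le 4$ (resp.\ $s\le 2$) enumerate exactly the ten real (resp.\ three complex) patterns, while your case-$10$ remark via $\delta f_4^{ab}\delta f_4^{bc}\delta f_4^{ca}=F^{abc}_{pq}(3)\ne0$ contradicting \eqref{eq:theawkwardfconstraintghtzdfsd} is also exactly right. The paper states the corollary without an explicit proof, so your first-tick-column bookkeeping is simply a rigorous write-up of the same enumeration.
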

\begin{lemma}
\label{lem:antisymfuncsindivisionalgebras}
For any of the ten cases, $\{\phantom{i}\}_i$, shown in \eqref{eq:tencasesofticks} (or for any of the three cases, $\{\phantom{i}\}_{i\mathbb C}$,  shown in  \eqref{eq:threecasesofcomplexticks}) it is possible to construct a real or complex valued function $f_{i}(a,b,c)$ (respectively $f_{i\mathbb C}(a,b,c)$) which is totally antisymmetric in $a$, $b$ and $c$, and which is never zero when the associated constraint ($\{\phantom{i}\}_i$ or   $\{\phantom{i}\}_{i\mathbb C}$) is true.
\begin{proof}
Every case $i$ 
has one of two properties.  Either: (i) it has three ticks in a single column - call it column $j$; or (ii) it has two ticks and one cross in one column (call it $j$) and a third tick in a different column (call it $s$) in the same row as the cross in column $j$.   Lemmas~\ref{lem:allticksinonecolumnresult} is sufficient to guarantee that for real-constrained cases with `property (i)' then
\begin{align}
f_{i}(a,b,c)
\equivdef
3 \delta f_{j}^{ab}  \delta f_{j}^{bc}  \delta f_{j}^{ca}
\end{align}
is an antisymmetric function with the desired properties, while Lemma~\ref{lem:someticksinonecolumnresult} is  sufficient to guarantee that for real-constrained cases having `property (ii)' that 
\begin{align}
f_{i}(a,b,c)
\equivdef
      \delta f_{j}^{ab} \delta f_{j}^{bc} \delta f_{s}^{ca} +
      \delta f_{j}^{ab} \delta f_{s}^{bc} \delta f_{j}^{ca} +
      \delta f_{s}^{ab} \delta f_{j}^{bc} \delta f_{j}^{ca}
\end{align}
will work.  Similarly, for the complex cases one has correspondingly success with
\begin{align}
f_{i\mathbb C}(a,b,c)
\equivdef
      3  \delta f_{j\mathbb C}^{ab} \delta f_{j\mathbb C}^{bc} \delta f_{j\mathbb C}^{ca}
\end{align}
and
\begin{align}
f_{i\mathbb C}(a,b,c)
\equivdef
      \delta f_{j\mathbb C}^{ab} \delta f_{j\mathbb C}^{bc} \delta f_{s\mathbb C}^{ca} +
      \delta f_{j\mathbb C}^{ab} \delta f_{s\mathbb C}^{bc} \delta f_{j\mathbb C}^{ca} +
      \delta f_{s\mathbb C}^{ab} \delta f_{j\mathbb C}^{bc} \delta f_{j\mathbb C}^{ca}.
\end{align}
\end{proof}
\end{lemma}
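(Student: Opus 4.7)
The plan is to exhibit an explicit antisymmetric invariant for each tick pattern and then verify the two required properties (total antisymmetry and non-vanishing on the constraint) by direct appeal to Lemmas~\ref{lem:allticksinonecolumnresult} and \ref{lem:someticksinonecolumnresult}. The first step is to partition the ten real cases (and the three complex cases) of Corollary~\ref{cor:somesmalldiscussionofthings} into two structural families:
\begin{itemize}
\item \textbf{Property (i):} all three ticks lie in a common column $j$ (cases $\{\phantom{i}\}_1$, $\{\phantom{i}\}_5$, $\{\phantom{i}\}_8$, $\{\phantom{i}\}_{10}$ for the real patterns, and $\{\phantom{i}\}_{1\mathbb C}$, $\{\phantom{i}\}_{3\mathbb C}$ for the complex ones);
\item \textbf{Property (ii):} two ticks sit in a common column $j$ and, in the row whose column-$j$ entry is a cross, a further tick sits in some other column $s$ (the remaining cases, including $\{\phantom{i}\}_{2\mathbb C}$).
\end{itemize}
That this partition is exhaustive is a matter of inspecting \eqref{eq:tencasesofticks} and \eqref{eq:threecasesofcomplexticks}; it rests on the earlier observation that no column can contain one tick and two crosses.

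Next I would write down the candidate $f_i$ (or $f_{i\mathbb C}$) in each family. For property-(i) cases I set
\begin{equation*}
f_i(a,b,c) \equivdef 3\,\delta f_j^{ab}\,\delta f_j^{bc}\,\delta f_j^{ca},
\end{equation*}
and for property-(ii) cases
\begin{equation*}
f_i(a,b,c) \equivdef \delta f_j^{ab}\delta f_j^{bc}\delta f_s^{ca}+\delta f_j^{ab}\delta f_s^{bc}\delta f_j^{ca}+\delta f_s^{ab}\delta f_j^{bc}\delta f_j^{ca},
\end{equation*}
with the obvious complex analogues $f_{i\mathbb C}$ obtained by swapping the real differences $\delta f_j^{\cdot\cdot}$ for the complex ones $\delta f_{j\mathbb C}^{\cdot\cdot}$. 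Since each $\delta f_j^{xy}$ is manifestly antisymmetric in $x,y$ by construction, Lemma~\ref{lem:allticksinonecolumnresult} immediately yields total antisymmetry (and non-vanishing) for property-(i) candidates, while Lemma~\ref{lem:someticksinonecolumnresult} handles the property-(ii) candidates in the same sweep.

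The main obstacle, and the reason the lemma is worth stating as a standalone result, is bookkeeping rather than deep mathematics: one must verify that the assignment of column indices $j$ (and, where appropriate, $s$) really is consistent with the positions of the ticks and crosses in each of the ten (or three) patterns listed, and in particular that the single tick in the third row of a property-(ii) pattern is guaranteed to sit in some column not equal to $j$. This is readily checked pattern-by-pattern; since the tick configurations are given up to row permutation, once a representative of each equivalence class is handled, the others follow by applying the same construction with the rows relabelled (which leaves the totally-antisymmetric sums invariant). A secondary subtlety, flagged in the remark after Lemma~\ref{lem:someticksinonecolumnresult}, is that the argument for complex $f_{j\mathbb C}$ uses commutativity of multiplication in $\mathbb C$; this is harmless here because $\mathbb R$ and $\mathbb C$ are both commutative, but it is worth stating explicitly to justify the complex-valued constructions in the proof.
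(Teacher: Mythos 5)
Your proposal is correct and follows essentially the same route as the paper: the same split into the all-ticks-in-one-column and two-ticks-plus-offset-tick families, the same candidate functions, and the same appeals to Lemmas~\ref{lem:allticksinonecolumnresult} and \ref{lem:someticksinonecolumnresult}. The extra bookkeeping remarks (exhaustiveness of the partition and commutativity for the complex case) are fine but do not change the argument.
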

\begin{definition}
\label{def:Uabc}
To aid in emphasising commonalities in the above, we can define a universal functional $U^{abc}_{pq}[f(x,y),g_{pq}(x,y)]$, which takes two antisymmetric\footnote{By antisymmetric we mean that $f(x,y)=-f(y,x)$ and $g_{pq}(x,y)=-g_{pq}(y,x)$.} functions $f(x,y)$ and $g_{pq}(x,y)$ as inputs, and which returns a totally antisymmetric function of $a$, $b$ and $c$.  Specifically:
\hide{
\begin{align}
U^{abc}_{pq}[f(x,y),g_{pq}(x,y)] : (a,b,c) &\mapsto U^{abc}_{pq}[f(x,y),g_{pq}(x,y)](a,b,c) \nonumber
\end{align}
with
}
\begin{align*}
   & U^{abc}_{pq}[f(x,y),g_{pq}(x,y)]
   = 
f(a,b)f(b,c)g_{pq}(c,a) +
    f(a,b)g_{pq}(b,c)f(c,a) +
    g(a,b)_{pq}f(b,c)f(c,a).
\end{align*}
Note that because $U^{abc}_{pq}$ is linear in its second argument, $g_{pq}(x,y)$, it inherits whatever $(pq)$-evenness or $(pq)$-oddness is carried by that second argument.  Because we have presented the definition of $U^{abc}_{pq}$ using an implicitly $(pq)$-odd function $g_{pq}(x,y)$ we have given $U^{abc}_{pq}$ a subscript which indicates that inherited symmetry. However $U$ could just as easily be used with a $(pq)$-even function $g(x,y)$, in which case it would be denoted $U^{abc}$ rather than $U^{abc}_{pq}$.
\end{definition}

\begin{definition}
\begin{align}
\label{eq:usingAThirdPowerTrick}
\PComUS u
&\equivdef\PComU{u},\qquad\text{and}\\
\label{eq:notUsingAThirdPowerTrick}
\FComUS u
&\equivdef\FComU{u}.
\end{align}
\end{definition}
\begin{definition}
\label{def:oursixcomplexfs}
The following six complex-valued functions:
\begin{alignat*}{9}
v[1\mathbb{C}] 
&=
\PComUS {u_{2\phantom{1}}}
\cdot {}
&&U^{abc}_{pq}[
\delta f_1^{xy} + i u_1 \delta f_2^{xy}
,\ \ (
&&\delta f_1^{xy} + i u_1 \delta f_2^{xy})\cdot \FComUS {u_{3\phantom{1}}}
&&]
,
\\
v[2\mathbb{C}] 
&=
\PComUS {u_{5\phantom{1}}} 
\cdot {}
&&U^{abc}_{pq}[
\delta f_1^{xy} + i u_4 \delta f_2^{xy}
,\ \ (
&&\delta f_1^{xy} + i u_4 \delta f_2^{xy})\cdot \FZer
&&]
,
\\
v[3\mathbb{C}] 
&=
\PComUS {u_{8\phantom{1}}}
\cdot  {}
&&U^{abc}_{pq}[
\delta f_1^{xy} + i u_6 \delta f_2^{xy}
,
&&\delta f_3^{xy}\cdot \FComUS {u_{7\phantom{1}}}
&&]
,
\\
v[4\mathbb{C}] 
&=
\PComUS {u_{11}}
\cdot  {}
&&U^{abc}_{pq}[
\delta f_1^{xy} + i u_{9} \delta f_2^{xy}
,
&& \delta f_3^{xy} \cdot \FZer + i 
 u_{10} \cdot ( \delta f_4^{xy})_{pq}
&&],
\\
v[5\mathbb{C}] 
&=
\PComUS {u_{12}}
\cdot {}
&&U^{abc}_{pq}[
\delta f_3^{xy},
&&\delta f_3^{xy}\cdot  \FComUS {u_{13}}
&&]\qquad\text{and}
\\
v[6\mathbb{C}] 
&=
\PComUS {u_{15}}
\cdot  {}
&&U^{abc}_{pq}[
\delta f_3^{xy} 
,
&&  \delta f_3^{xy}\cdot \FZer + i 
 u_{14} \cdot(\delta f_4^{xy})_{pq}
&&]
\end{alignat*}
are defined in terms of fifteen real constants,  $\{u_1,\ldots,u_{15}\}$, whose only constraint is that each be non-zero. 
In the definition above, the two occurrences of the expression $\delta f_4^{xy}$ have each been wrapped in subscripted brackets: $\delta f_4^{xy}\rightarrow(\delta f_4^{xy})_{pq}$. This wrapping is purely to emphasise the $(pq)$-oddness which $\delta f_4^{xy}$ has and which is not shared by $\delta f_1^{xy}$, $\delta f_2^{xy}$ or $\delta f_4^{xy}$; no operation is performed by this bracketing.

\begin{remark}[Purpose of the $u_i$]
Later on we will be using the twelve real and imaginary parts of $v[1\mathbb{C}]$ to $v[6\mathbb{C}]$ as real parity-odd event variables which cover $C_4$.
The reason for using intermediate complex functions here, even though the ultimate goal is real variables is to permit the use of an efficient `trick' that is exploited in the second bullet point of the proof of Lemma~\ref{lem:howthecomplexesbreakdown}. As the remarks at the end of Lemma~\ref{lem:lemaboutproductsofthreecomplexnumbers} explain in more detail, the `magic' of that trick is that the the algebraic structure of the complex numbers allows a chain of anded pairwise ors to be expressed as no more than two real constraints, and so their use tends to result in fewer real variables at the end.  Nonetheless, it is little more than a `trick' and so it should not be a surprise if later a better approach is found that results in a smaller set of event variables.  As soon as complex numbers are introduced, the $u_i$ appear as a `gauge freedom' of sorts -- either out of necessity if unit consistency is required -- or to permit secondary optimisation.  The next remark discusses both of these.
\end{remark}

\begin{remark}[Choosing values for $u_i$]Where this document requires each $u_i$ to given a concrete value, we shall set each of them to 1. However, in practical applications it would be sensible to set each $u_i$ so that it balances the scales and units of the real and imaginary parts of the complex number it is used to build. 

For example, $u_3$ is used to create the complex number
$\FComUS {u_3}=\FComU {u_3}.$
Since $\FTwo$ has the same units as the \textbf{cube} of $\FOne$, it is likely that setting $u_3=1$ will lead to $\FComU {u_3}$ being dominated by either its real or its imaginary part.  This is not desirable. Ideally $u_3$ would be better chosen to be something similar to the ratio of the root mean square values of $\FOne$ and $\FTwo$, averaged on the data for which usage is expected. Alternatively the $u_i$ could simply be treated as hyperparameters for arbitrary optimisation. Although there is nothing wrong with this latter approach, it may  still be advisable to initialise any hyperparameter optimisation process  with values of the $u_i$ that set sensible scales (as described above) in order to reduce the chance of that process getting stuck in a local optimum rather than the global optimum.

Yet another approach which can reduce (but not remove) some of the difficulty in finding good values for the $u_i$ quantities is illustrated by the use of the power three seen in the definition of $\PComUS {u}$ in  \eqref{eq:usingAThirdPowerTrick}.  This power at least ensures that $u$ is dimensionless, thereby removing a reason for its optimal scale to vary in response to unit-changes in inputs.   A similar trick could be applied in \eqref{eq:notUsingAThirdPowerTrick}, however inserting a third power there makes it much harder to find the example events needed in  Lemma~\ref{lem:independenceLemmaForS19} when proving that the variables in $S_{19}$ are all \textbf{necessary}. For that reason we omit an extra power of three in \eqref{eq:notUsingAThirdPowerTrick}, however this does not mean that such a definition must be avoided by any end user.  
\end{remark}
\begin{remark}[Using fewer $u_i$]
If a user decides to optimise the $u_i$ for a particular purpose but is concerned about the potential for over training, then it seems likely that the eighteen $u_i$ values above could perhaps be reduced to four degrees of freedom with little loss of performance by setting the following internal constraints:
\begin{gather}
u_1=u_4=u_6=u_9,\\
u_2=u_5=u_8=u_{11}=u_{12}=u_{15},\\
u_3=u_7=u_{13},\\
u_{10}=u_{14}.
\end{gather}
This reduction may work since each of $u_1$, $u_4$, $u_6$ and $u_9$ exist to scale $\delta f^{xy}_1$ with respect to $\delta f^{xy}_2$ and so it seems likely that a common value might work for all of them.  Similar arguments may be advanced for the other groups.
\end{remark}
\end{definition}

\begin{corollary}
\label{cor:thesefunctionsarenice}
Each of the six functions  in Definition~\ref{def:oursixcomplexfs} is Lorentz-invariant, parity-odd, and invariant with respect to permutations of $(abc)$ or $(pq)$.
\begin{proof}
The Lorentz-invariant is trivially inherited from each function's ingredients.  The parity-oddness results from each variable having one and only one parity odd factor --- which in all cases has the form $(\PComU u)$ for some non-zero $u\in\mathbb R$. 
The invariance with respect to permutations of $(abc)$ or $(pq)$ may be seen by checking that every product contains both an even number of $(abc)$-odd terms and an even number of $(pq)$-odd terms as indicated by their superscripts or subscripts.
\end{proof}
\end{corollary}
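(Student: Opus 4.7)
The proof is essentially a bookkeeping exercise, tracking how each of the four symmetries (Lorentz, parity, $(abc)$, $(pq)$) propagates through the constructions of Definition~\ref{def:oursixcomplexfs}. My plan is to establish each symmetry for the atomic ingredients, then use multiplicativity/additivity to lift the conclusion to each $v[i\mathbb{C}]$.

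First I would dispatch \textbf{Lorentz invariance}, which follows immediately: every ingredient (the square-bracket contractions $[\cdot,\cdot,\cdot,\cdot]$, the Gram determinants $G(\cdots)$, the invariant masses $x^2$, and plain Lorentz dot products) is a Lorentz scalar, and sums and products of Lorentz scalars are Lorentz scalars.

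For \textbf{parity-oddness}, I would observe that each $v[i\mathbb{C}]$ has the shape $\PComUS{u}\cdot U^{abc}_{pq}[\,\cdot\,,\cdot\,]$. The factor $\PComUS{u}=(\POne)^3+iu\PTwo$ is parity-odd, because each $\epsShort^{xy}_{pq}=[x,y,p,q]$ is a pseudoscalar that flips sign under $\mathscr{P}$, so $(\POne)^3$ (odd power of a parity-odd quantity) and $\PTwo$ (product of three parity-odd quantities) are both parity-odd. The $U^{abc}_{pq}$ factor is built only from $\delta f_i^{xy}$ for $i=1,2,3,4$ and from $\FZer, \FComUS{u}$; all of these are polynomials in Gram determinants and mass-squares (true scalars, not pseudoscalars) and are therefore parity-even. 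A parity-odd complex-valued factor multiplied by a parity-even complex-valued factor is parity-odd.

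For \textbf{$(abc)$-invariance}, I would show both factors are separately totally antisymmetric under permutations of $a,b,c$; their product is then totally symmetric. The functional $U^{abc}_{pq}[f,g_{pq}]$ is totally antisymmetric in $(a,b,c)$ by the direct generalisation of Lemma~\ref{lem:someticksinonecolumnresult} (just relabel the three arguments of $\sigma$). For the $\mathscr{P}$-factor, a short check shows $\POne$ and $\PTwo$ are both invariant under the cyclic permutation $(a,b,c)\to(b,c,a)$ and flip sign under the transposition $(ab)$ (since each $\epsShort^{xy}_{pq}$ is antisymmetric in $x,y$), hence both are totally antisymmetric; an odd power preserves this, so $(\POne)^3+iu\PTwo$ is totally antisymmetric as a complex-valued function.

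For \textbf{$(pq)$-invariance}, I would classify the ingredients by their $(pq)$-parity, then verify case by case that each $v[i\mathbb{C}]$ is a product of an even number of $(pq)$-odd factors. The classifications are: $\PComUS{u}$ is $(pq)$-odd (each $\epsShort^{xy}_{pq}$ flips sign under $p\leftrightarrow q$); $\FZer=p^2-q^2$, $\FOne=g^{a+b+c}_{p-q}$, $\FTwo$, $\FComUS{u}$, and $\delta f_4^{xy}=g^{x-y}_{p-q}$ are all $(pq)$-odd (they contain exactly one $p-q$); whereas $\delta f_1^{xy}$, $\delta f_2^{xy}$, $\delta f_3^{xy}$ use only $p+q$ and masses and are $(pq)$-even. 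Because $U^{abc}_{pq}$ is \emph{linear} in its second argument, it inherits the $(pq)$-parity of that argument. Inspecting Definition~\ref{def:oursixcomplexfs} shows that in every one of the six cases the second argument contains exactly one $(pq)$-odd factor, making $U$ itself $(pq)$-odd; multiplying by the $(pq)$-odd $\PComUS{u}$ yields a $(pq)$-even product, as required.

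No step is genuinely hard; the only place one must be careful is the $(pq)$ check for $v[4\mathbb{C}]$ and $v[6\mathbb{C}]$, where the second argument of $U$ is a \emph{sum} of two pieces rather than a single product, and one has to confirm that both summands share the same $(pq)$-parity so that their sum has a well-defined $(pq)$-parity (they do: $\delta f_3^{xy}\cdot\FZer$ is $(pq)$-odd, and $iu_{10}(\delta f_4^{xy})_{pq}$ is also $(pq)$-odd).
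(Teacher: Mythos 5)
Your proposal is correct and follows essentially the same route as the paper's own (much terser) proof: Lorentz invariance inherited from the scalar ingredients, parity-oddness from the single $\left(\POne\right)^3 + i u \PTwo$ factor, and permutation invariance by pairing the $(abc)$-odd and $(pq)$-odd factors so that each product contains an even number of each. Your extra detail (total antisymmetry of both factors via Lemma~\ref{lem:someticksinonecolumnresult}, and the check that both summands in the second argument of $U$ for $v[4\mathbb C]$ and $v[6\mathbb C]$ share the same $(pq)$-parity) simply makes explicit what the paper leaves to inspection.
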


\begin{lemma}
\label{lem:howthecomplexesbreakdown}
With reference to functions written in Definition~\ref{def:oursixcomplexfs} and the brace notation of \eqref{eq:tencasesofticks}, the following claims regarding \textbf{sufficiency} coverage are made:
\begin{itemize}
\item
$v[1\mathbb{C}]$ and $v[2\mathbb{C}]$ collectively cover $\{\phantom{i}\}_{1}$, $\{\phantom{i}\}_{2}$ and $\{\phantom{i}\}_{5}$,
\item
$v[3\mathbb{C}]$ and $v[4\mathbb{C}]$ collectively cover $\{\phantom{i}\}_{3}$, $\{\phantom{i}\}_{6}$, and $\{\phantom{i}\}_{7}$,
\item
$v[4\mathbb{C}]$ alone covers $\{\phantom{i}\}_{4}$, and
\item
$v[5\mathbb{C}]$ and
$v[6\mathbb{C}]$ 
collectively cover $\{\phantom{i}\}_{8}$ and $\{\phantom{i}\}_{9}$.
\end{itemize}
\begin{proof}
The proof appeals to:
\begin{itemize}
\item
Lemma~\ref{lem:antisymfuncsindivisionalgebras};
\item
the fact that any $n$ real or complex numbers are \textit{all} non-zero if and only if their product is non-zero;
\item
the fact that $C_4$ guarantees that $\PComUS{u}\ne 0$ for any non-zero $u\in\mathbb{R}$;
\item
the fact that $C_4$ guarantees that $\left((\FZer\ne0)\lor(\FComUS{u}\ne 0)\right)$ for any non-zero $u\in\mathbb{R}$; and 
\item
the locations within each brace $\{\phantom{i}\}_{i}$ from which the ingredients of each function is drawn.
\end{itemize}
Regarding the last of those bullet points, we present below a visual guide which uses round brackets to show which pairs of real constraints  have been combined into a single complex constraint, and which (by their absence) which have not:
\begin{itemize}
\item
$v[1\mathbb{C}]$ and $v[2\mathbb{C}]$ both draw from $
\fourthreebox
    {(\tmark & \tmark) & \circ & \circ}
    {(\tmark & \tmark) & \circ & \circ}
    {(\tmark & \tmark) & \circ & \circ}
    $,
\item
$v[3\mathbb{C}]$ draws from $
\fourthreebox
    {(\tmark & \tmark) & \circ & \circ}
    {(\tmark & \tmark) & \circ & \circ}
    {(\xmark & \xmark) & \tmark & \circ}
    $,
\item
$v[4\mathbb{C}]$ draws from $
\fourthreebox
    {(\tmark & \tmark) & \circ & \circ}
    {(\tmark & \tmark) & \circ & \circ}
    {(\xmark & \xmark) & (\tmark & \tmark)}
    $,
\item
$v[5\mathbb{C}]$ draws from $
\fourthreebox
    {\xmark & \xmark & \tmark & \circ}
    {\xmark & \xmark & \tmark & \circ}
    {\xmark & \xmark & \tmark & \circ}
    $,
\item
$v[6\mathbb{C}]$ draws separately from $
\fourthreebox
    {\xmark & \xmark & \tmark & \circ}
    {\xmark & \xmark & \tmark & \circ}
    {\xmark & \xmark & \tmark & \circ}
    $ and from $
\fourthreebox
    {\xmark & \xmark & \tmark & \circ}
    {\xmark & \xmark & \tmark & \circ}
    {\xmark & \xmark & \xmark & \tmark}
    $.
\end{itemize}
\end{proof}
\end{lemma}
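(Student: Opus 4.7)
The plan is to reduce each of the four coverage claims to showing that the purely antisymmetric factor of the relevant $v[k\mathbb{C}]$ is nonzero, and then to appeal to Lemma~\ref{lem:antisymfuncsindivisionalgebras} (equivalently to its constituents, Lemmas~\ref{lem:allticksinonecolumnresult} and \ref{lem:someticksinonecolumnresult}). First observe that every $v[k\mathbb{C}]$ factorises as
$$v[k\mathbb{C}] \;=\; \mathscr{P}^{abc}_{pq}\bigl(\mathbb{C}(u_\star)\bigr)\cdot U^{abc}_{pq}\bigl[f_k(x,y),\,g_k(x,y)\bigr].$$
Since $\mathscr{P}^{abc}_{pq}(\mathbb{C}(u)) = (\POne)^3 + i u\,\PTwo$, the $C_4$-clause $(\POne\ne 0)\lor(\PTwo\ne 0)$ makes this prefactor nonzero for any nonzero real $u_\star$. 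Hence each claim reduces to showing that the corresponding $U^{abc}_{pq}[\cdot,\cdot]$ is nonzero, with the freedom to pick whichever of the listed $v[k\mathbb{C}]$'s makes that easiest.

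Next I would dispatch the eight non-exceptional cases via one of two templates. For cases $\{1\}, \{2\}, \{5\}$ the first $U$-argument of $v[1\mathbb{C}]$ and $v[2\mathbb{C}]$ is the complex combination $\delta f_1^{xy} + i u\,\delta f_2^{xy}$, whose real and imaginary parts together absorb the ticks of columns $1$ and $2$; a case-by-case glance at the diagrams shows this combination is nonzero in all three pair-slots $\{(a,b),(b,c),(c,a)\}$, so Lemma~\ref{lem:allticksinonecolumnresult} applies and gives $U = 3\cdot C\cdot \prod f$, where $C$ is the scalar $\FComUS{u_\star}$ (for $v[1\mathbb{C}]$) or $\FZer$ (for $v[2\mathbb{C}]$); the $C_4$-disjunction $(\FZer\ne 0)\lor(\FComUS{u}\ne 0)$ forces at least one of these to be nonzero. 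The same template with $\delta f_3$ in place of the complex combination handles $\{8\}$ via the pair $v[5\mathbb{C}], v[6\mathbb{C}]$ (reading off the real part of $U$ in the $\FZer$-branch for $v[6\mathbb{C}]$, which is possible because $\delta f_3$ is real). For cases $\{3\}, \{6\}, \{7\}, \{9\}$ the first argument vanishes in precisely one pair-slot (the one whose row is entirely crossed in the columns the first argument covers), and the tick responsible for that row lies in a column that instead feeds the second argument — $\delta f_3$ for $\{3\}, \{6\}$ in $v[3\mathbb{C}],v[4\mathbb{C}]$ and for $\{9\}$ in $v[6\mathbb{C}]$, and $\delta f_4$ for $\{7\}$ in $v[4\mathbb{C}]$. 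In each sub-case one checks that whichever of $\FComUS{u_\star}$ or $\FZer$ is currently nonzero suffices to make the second argument nonzero in the distinguished slot, after which Lemma~\ref{lem:someticksinonecolumnresult} delivers $U\ne 0$.

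The main obstacle is case $\{4\}$, which has two ticks in column $1$ and the remaining tick in column $4$, and which is claimed to be covered by $v[4\mathbb{C}]$ alone. Here the first argument $\delta f_1^{xy} + i u_9\,\delta f_2^{xy}$ is nonzero in the two pair-slots whose rows are ticked in column $1$ and vanishes in the third slot (whose row has crosses in columns $1, 2$ and $3$). The design of the second argument, $\delta f_3^{xy}\cdot\FZer + i u_{10}(\delta f_4^{xy})_{pq}$, is exactly tailored to this situation: in the distinguished slot its real part $\delta f_3\cdot\FZer$ vanishes, but its imaginary part $u_{10}\delta f_4$ does not, because $\delta f_4$ is ticked in that row. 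Lemma~\ref{lem:someticksinonecolumnresult} then gives $U^{abc}_{pq}\ne 0$, and combining with the always-nonzero $\mathscr{P}$-prefactor yields $v[4\mathbb{C}]\ne 0$. The remaining bookkeeping is routine: for every case one must also allow for the row permutations mentioned in Corollary~\ref{cor:somesmalldiscussionofthings}, but the total antisymmetry of $U^{abc}_{pq}$ in $a,b,c$ makes the argument invariant under relabelling, so the row-permuted variants are handled automatically.
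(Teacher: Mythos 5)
Your proposal is correct and takes essentially the same route as the paper's own (much terser) proof: factor out the $\PComUS{u}$ prefactor (nonzero by the $\mathscr{P}$-clause of $C_4$), then use Lemmas~\ref{lem:allticksinonecolumnresult} and \ref{lem:someticksinonecolumnresult} together with the $(\FZer\ne0)\lor(\FComUS{u}\ne 0)$ disjunction and the column bookkeeping of \eqref{eq:tencasesofticks}. One small slip: in case $\{\phantom{i}\}_{9}$ the exceptional-row tick sits in column 4 ($\delta f_4$), not column 3, so $v[6\mathbb{C}]$ covers it through the imaginary-part mechanism you spelled out for $\{\phantom{i}\}_{4}$ (and which also handles $\{\phantom{i}\}_{7}$), not through the ``whichever of $\FZer$ or $\FComUS{u}$ is nonzero'' check; the conclusion is unaffected.
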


\begin{corollary}
\label{cor:s19FOU}
If
\begin{alignat}{6}
V_{\phantom{1}8} &= \Re[v[1\mathbb C]], & \phantom{M} &
V_{\phantom{1}9} &&= \Im[v[1\mathbb C]],
\\
V_{10} &= \Re[v[2\mathbb C]], & \phantom{M} &
V_{11} &&= \Im[v[2\mathbb C]],
\\
V_{12} &= \Re[v[3\mathbb C]], & \phantom{M} &
V_{13} &&= \Im[v[3\mathbb C]],
\\
V_{14} &= \Re[v[4\mathbb C]], & \phantom{M} &
V_{15} &&= \Im[v[4\mathbb C]],
\\
V_{16} &= \Re[v[5\mathbb C]], & \phantom{M} &
V_{17} &&= \Im[v[5\mathbb C]],
\\
V_{18} &= \Re[v[6\mathbb C]], & \phantom{M} &
V_{19} &&= \Im[v[6\mathbb C]]
\end{alignat}
then the set 
\begin{align}
S_{19}^{(4)} = \{
V_{8}
,
V_{9},
V_{10},
V_{11},
V_{12},
V_{13},
V_{14},
V_{15},
V_{16},
V_{17},
V_{18},
V_{19}
\} \label{eq:s19FOU}
\end{align}
will ascribe at least one non-zero real parity to every \textbf{collision event} $e$ satisfying $C_4$, as required.
\begin{proof}
The proof is follows directly from Corollary~\ref{cor:thesefunctionsarenice} and Lemma~\ref{lem:howthecomplexesbreakdown}.
\end{proof}
\end{corollary}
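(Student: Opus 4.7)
The plan is to chain together the results already established in the preceding development, with the main content being a careful case analysis against Corollary~\ref{cor:somesmalldiscussionofthings}. First, fix a \textbf{collision event} $e\in\pqabcColEvent$ satisfying $C_4$, and recall that $C_4 = C_{4A}\land C_{4B}$, so in particular $C_{4A}$ is satisfied. By Corollary~\ref{cor:somesmalldiscussionofthings}, the pattern of ticks and crosses in the twelve real constraints from \eqref{eq:conasfourreals} must match (up to row permutation) at least one of the ten cases $\{\phantom{i}\}_1,\ldots,\{\phantom{i}\}_{10}$ shown in \eqref{eq:tencasesofticks}. However, the $C_{4B}$ half of $C_4$ contains the constraint $\FThr = g^{a-b}_{p-q}g^{b-c}_{p-q}g^{c-a}_{p-q} = 0$ (i.e.\ the $\delta f_4$ column cannot be all ticks), which rules out $\{\phantom{i}\}_{10}$, exactly as noted in Corollary~\ref{cor:somesmalldiscussionofthings}. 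Hence $e$ must fall into one of the nine cases $\{\phantom{i}\}_1,\ldots,\{\phantom{i}\}_9$.

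Next, I would invoke Lemma~\ref{lem:howthecomplexesbreakdown}, which is the technical heart of the argument: for each of those nine cases, at least one of the six complex-valued functions $v[1\mathbb{C}],\ldots,v[6\mathbb{C}]$ of Definition~\ref{def:oursixcomplexfs} is guaranteed to be non-zero on $e$. Call that non-zero complex value $v[k\mathbb{C}](e)\ne 0$. Since a complex number is non-zero if and only if at least one of its real or imaginary parts is non-zero, we conclude that $\Re[v[k\mathbb{C}](e)]\ne 0$ or $\Im[v[k\mathbb{C}](e)]\ne 0$. These real quantities are precisely two of the variables $V_8,\ldots,V_{19}$ listed in the definition of $S_{19}^{(4)}$ in \eqref{eq:s19FOU}, so at least one member of $S_{19}^{(4)}$ is non-zero on $e$, establishing the \textbf{sufficiency} claim.

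It remains to verify that each $V_i$ is a bona fide real, parity-odd, Lorentz-invariant, $(pq)$- and $(abc)$-permutation-invariant event variable. Each $V_i$ is obtained as either the real or imaginary part of one of the $v[k\mathbb{C}]$, and Corollary~\ref{cor:thesefunctionsarenice} already confirms those properties for the complex functions themselves. Since taking real and imaginary parts is an $\mathbb{R}$-linear operation on $\mathbb{C}$, and since Lorentz transformations, parity reversal, and the permutations of $(pq)$ and $(abc)$ act $\mathbb{R}$-linearly and preserve the decomposition into real and imaginary parts of these polynomial expressions in Lorentz invariants, all the required symmetries descend from $v[k\mathbb{C}]$ to $\Re[v[k\mathbb{C}]]$ and $\Im[v[k\mathbb{C}]]$.

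There is essentially no hard step here: the only thing requiring genuine care is to make sure that every event satisfying $C_4$ (as opposed to just $C_{4A}$) is correctly routed into exactly one of the nine admissible brace patterns, which is why the exclusion of $\{\phantom{i}\}_{10}$ via the $\FThr=0$ condition in \eqref{eq:theawkwardfconstraintghtzdfsd} must be made explicit. Once that housekeeping is done, the corollary is a direct consequence of Corollary~\ref{cor:thesefunctionsarenice} and Lemma~\ref{lem:howthecomplexesbreakdown} together with the elementary observation that $\Re[z]$ and $\Im[z]$ cannot both vanish for $z\ne 0$.
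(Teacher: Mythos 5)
Your argument is correct and is essentially the paper's own proof, which simply cites Corollary~\ref{cor:thesefunctionsarenice} and Lemma~\ref{lem:howthecomplexesbreakdown}; you have merely spelled out the implicit routing through the brace cases of Corollary~\ref{cor:somesmalldiscussionofthings} (including the exclusion of $\{\phantom{i}\}_{10}$ via $\FThr=0$) and the elementary fact that a non-zero complex number has a non-zero real or imaginary part. No gap; the only cosmetic nit is that an event need only match \emph{at least} one of the nine admissible patterns, not ``exactly one''.
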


\subsubsection{The sufficiency and necessity of the variables in $S_{19}$}

\label{sec:forfinallywedefines19properl}

\begin{corollary}[\textbf{Sufficiency}]

\label{cor:finallywedefines19properl}
   If the set of parity-odd invariant event variables $S_{19}$ is defined as follows:
\begin{align}
\label{eq:s19ALL}
S_{19} &= 
S_{19}^{(1)} \cup
S_{19}^{(2)} \cup
S_{19}^{(3)} \cup
S_{19}^{(4)}
\end{align}
in terms of subsets defined in  \eqref{eq:s19ONE},
\eqref{eq:s19TWO},
\eqref{eq:s19THR} and
\eqref{eq:s19FOU}, then Lemmas~\ref{lem:s19ONE}, \ref{lem:s19TWO} and \ref{lem:s19THR} together with Corollary \ref{cor:s19FOU} prove that the nineteen variables contained in $S_{19}$  will assign at least one non-zero parity to every \textbf{chiral} \textbf{collision event} $e\in\pqabcColEvent$.
\end{corollary}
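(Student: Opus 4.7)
The plan is to observe that Corollary~\ref{cor:finallywedefines19properl} is essentially a bookkeeping synthesis of earlier results, so the proof should be very short. First I would invoke Lemma~\ref{lem:pqabcchiralcond} (equivalently Theorem~\ref{thm:chiralpqabcEvents}), which tells us that a \textbf{collision event} $e \in \pqabcColEvent$ is \textbf{chiral} if and only if condition $C$ (as written in \eqref{eq:secondtimeCisdefined}) holds. Thus it suffices to prove that the variables in $S_{19}$ assign at least one non-zero value to every event satisfying $C$.

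Next I would recall the case split illustrated in Figure~\ref{fig:C_split_big}, which partitions $C$ into four mutually exhaustive sub-cases $C_1, C_2, C_3, C_4$. Exhaustiveness is immediate from inspection of the figure (since each branching is a dichotomy of the form $X \lor \lnot X$), so every event $e$ satisfying $C$ must satisfy at least one $C_i$. The goal is then reduced to handling each $C_i$ separately.

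I would then, for each $i \in \{1,2,3,4\}$, cite the corresponding \textbf{sufficiency} result already established: Lemma~\ref{lem:s19ONE} handles $C_1$ via $S_{19}^{(1)} = \{V_3, V_4, V_5\}$, Lemma~\ref{lem:s19TWO} handles $C_2$ via $S_{19}^{(2)} = \{V_1, V_2\}$, Lemma~\ref{lem:s19THR} (the main theorem of Section~\ref{sec:wemushtgivethissecaname}) handles $C_3$ via $S_{19}^{(3)} = \{V_6, V_7\}$, and Corollary~\ref{cor:s19FOU} handles $C_4$ via the twelve-element set $S_{19}^{(4)} = \{V_8, \ldots, V_{19}\}$. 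Since $S_{19}$ is by construction the union of these four subsets, at least one variable in $S_{19}$ is guaranteed to evaluate to a non-zero real number on $e$, as required.

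The only conceivable obstacle is checking that the four sub-cases really do cover $C$ exhaustively, but this is routine: the splitting in Figure~\ref{fig:C_split_big} is a decision tree whose internal nodes each test a condition $X$ against its negation, and the leaves are labelled $C_1, \ldots, C_4$. All genuinely non-trivial work --- the construction of the $V_i$, the arguments based on Gram-determinant identities, and the case-by-case analysis of \eqref{eq:tencasesofticks} --- has already been carried out in the cited lemmas, so no further calculation is needed here.
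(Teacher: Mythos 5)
Your proposal is correct and follows exactly the paper's own route: the paper partitions $C$ into the disjoint, exhaustive sub-cases $C_1,\ldots,C_4$ (Figure~\ref{fig:C_split_big}) and then simply combines Lemmas~\ref{lem:s19ONE}, \ref{lem:s19TWO}, \ref{lem:s19THR} and Corollary~\ref{cor:s19FOU}, defining $S_{19}$ as the union so that sufficiency on $C$ (equivalently, by Theorem~\ref{thm:chiralpqabcEvents}, on every \textbf{chiral} \textbf{collision event}) holds by construction. Nothing further is needed.
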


\begin{lemma}[\textbf{Necessity}]Each \label{lem:idependencelemfors19} of the 19 event variables in $S_{19}$ can be shown to be \textbf{necessary} (at least in the case in which each of the constants $u_1,\ldots,u_{18}$ in
Definition \ref{def:oursixcomplexfs} are set equal to 1 in whatever unit system is used) by the explicit construction of a \textbf{collision event} $e\in\pqabcColEvent$ which results in a non-zero parity for that variable, and a zero-parity for every other variable in $S_{19}$.\label{lem:independenceLemmaForS19}
\begin{proof}
	Below are listed nineteen special \textbf{collision events} within $\pqabcColEvent$ which are named $\text{Event}[1]$ to $\text{Event}[19]$. The numbering is such that $\text{Event}[n]$ has the property that each of the variables in $\{V_1,\ldots,V_{19}\}\setminus\{V_n\}$ evaluates to zero while $V_n$ alone is non-zero.  $\text{Event}[n]$ thus serves to  demonstrate the necessity of retaining $V_n$ within $S_{19}$.  Note that although the aforementioned properties are independent of the choice of the non-zero real (but otherwise free) constants $u_1,\ldots,u_{18}$ in
Definition \ref{def:oursixcomplexfs},  some valid values of those constants cause the example events to have complex momenta.  Nonetheless, it may be observed that the example events have been chosen such that no event has complex momenta when all those $u_i$ constants take the value of 1. This is demonstrated (for any events which retain dependence on any $u_i$) by showing what form the momenta in those events take for unit $u_i$ values.  It may thus be seen that all example events may, simultaneously, be built of real and timelike momenta in $\mathbb{V}$, as required. Lorentz Vectors in the following list use the display conventions of Appendix~\ref{sec:lorentzvectornotation}.

\newcommand
\input{examples/independence/pqabc/collisionEvents/var[.tex}
\text{Event}[[]&=\left[
\begin{array}{l}
p^\mu=
\left(\ugrfguijoooopE,\left(\ugrfguijoooopX,\ugrfguijoooopY,\ugrfguijoooopZ\right)\right),\\
q^\mu=
\left(\ugrfguijooooqE,\left(\ugrfguijooooqX,\ugrfguijooooqY,\ugrfguijooooqZ\right)\right),\\
a^\mu=
\left(\ugrfguijooooaE,\left(\ugrfguijooooaX,\ugrfguijooooaY,\ugrfguijooooaZ\right)\right),\\
b^\mu=
\left(\ugrfguijoooobE,\left(\ugrfguijoooobX,\ugrfguijoooobY,\ugrfguijoooobZ\right)\right),\\
c^\mu=
\left(\ugrfguijoooocE,\left(\ugrfguijoooocX,\ugrfguijoooocY,\ugrfguijoooocZ\right)\right)
\end{array}
\right]
1]{
\input{examples/independence/pqabc/collisionEvents/var#1.tex}
\text{Event}[#1]&=\left[
\begin{array}{l}
p^\mu=
\left(\ugrfguijoooopE,\left(\ugrfguijoooopX,\ugrfguijoooopY,\ugrfguijoooopZ\right)\right),\\
q^\mu=
\left(\ugrfguijooooqE,\left(\ugrfguijooooqX,\ugrfguijooooqY,\ugrfguijooooqZ\right)\right),\\
a^\mu=
\left(\ugrfguijooooaE,\left(\ugrfguijooooaX,\ugrfguijooooaY,\ugrfguijooooaZ\right)\right),\\
b^\mu=
\left(\ugrfguijoooobE,\left(\ugrfguijoooobX,\ugrfguijoooobY,\ugrfguijoooobZ\right)\right),\\
c^\mu=
\left(\ugrfguijoooocE,\left(\ugrfguijoooocX,\ugrfguijoooocY,\ugrfguijoooocZ\right)\right)
\end{array}
\right]
}

\newcommand\pullInExampleWDU[1]{

\input{examples/independence/pqabc/collisionEvents/var#1.tex}
\text{Event}[#1]&=\left[
\begin{array}{l}
p^\mu=
\left(\ugrfguijoooopE,\left(\ugrfguijoooopX,\ugrfguijoooopY,\ugrfguijoooopZ\right)\right),\\
q^\mu=
\left(\ugrfguijooooqE,\left(\ugrfguijooooqX,\ugrfguijooooqY,\ugrfguijooooqZ\right)\right),\\
a^\mu=
\left(\ugrfguijooooaE,\left(\ugrfguijooooaX,\ugrfguijooooaY,\ugrfguijooooaZ\right)\right),\\
b^\mu=
\left(\ugrfguijoooobE,\left(\ugrfguijoooobX,\ugrfguijoooobY,\ugrfguijoooobZ\right)\right),\\
c^\mu=
\left(\ugrfguijoooocE,\left(\ugrfguijoooocX,\ugrfguijoooocY,\ugrfguijoooocZ\right)\right)
\end{array}
\right]
 \nonumber \\
&=\left[
\begin{array}{l}
p^\mu=
\left(\ugrfguijoooopEdu,\left(\ugrfguijoooopXdu,\ugrfguijoooopYdu,\ugrfguijoooopZdu\right)\right),\\
q^\mu=
\left(\ugrfguijooooqEdu,\left(\ugrfguijooooqXdu,\ugrfguijooooqYdu,\ugrfguijooooqZdu\right)\right),\\
a^\mu=
\left(\ugrfguijooooaEdu,\left(\ugrfguijooooaXdu,\ugrfguijooooaYdu,\ugrfguijooooaZdu\right)\right),\\
b^\mu=
\left(\ugrfguijoooobEdu,\left(\ugrfguijoooobXdu,\ugrfguijoooobYdu,\ugrfguijoooobZdu\right)\right),\\
c^\mu=
\left(\ugrfguijoooocEdu,\left(\ugrfguijoooocXdu,\ugrfguijoooocYdu,\ugrfguijoooocZdu\right)\right)
\end{array}
\right]\qquad\text{(when $u_i\rightarrow 1$)}
}

\allowdisplaybreaks
\begin{align}

\gdef\ugrfguijoooopE{1}
\gdef\ugrfguijoooopX{0}
\gdef\ugrfguijoooopY{0}
\gdef\ugrfguijoooopZ{1}
\gdef\ugrfguijoooopEdu{1}
\gdef\ugrfguijoooopXdu{0}
\gdef\ugrfguijoooopYdu{0}
\gdef\ugrfguijoooopZdu{1}
\gdef\ugrfguijooooqE{1}
\gdef\ugrfguijooooqX{0}
\gdef\ugrfguijooooqY{0}
\gdef\ugrfguijooooqZ{-1}
\gdef\ugrfguijooooqEdu{1}
\gdef\ugrfguijooooqXdu{0}
\gdef\ugrfguijooooqYdu{0}
\gdef\ugrfguijooooqZdu{-1}
\gdef\ugrfguijooooaE{\frac{3 \sqrt{17}}{4}}
\gdef\ugrfguijooooaX{-\frac{1}{2}}
\gdef\ugrfguijooooaY{0}
\gdef\ugrfguijooooaZ{-\frac{\sqrt{5}}{4}}
\gdef\ugrfguijooooaEdu{\frac{3 \sqrt{17}}{4}}
\gdef\ugrfguijooooaXdu{-\frac{1}{2}}
\gdef\ugrfguijooooaYdu{0}
\gdef\ugrfguijooooaZdu{-\frac{\sqrt{5}}{4}}
\gdef\ugrfguijoooobE{\frac{1}{4} \sqrt{113+40 \sqrt{3}}}
\gdef\ugrfguijoooobX{\frac{3 \sqrt{3}}{8}}
\gdef\ugrfguijoooobY{\frac{3}{8}}
\gdef\ugrfguijoooobZ{0}
\gdef\ugrfguijoooobEdu{\frac{1}{4} \sqrt{113+40 \sqrt{3}}}
\gdef\ugrfguijoooobXdu{\frac{3 \sqrt{3}}{8}}
\gdef\ugrfguijoooobYdu{\frac{3}{8}}
\gdef\ugrfguijoooobZdu{0}
\gdef\ugrfguijoooocE{\frac{\sqrt{73}}{4}}
\gdef\ugrfguijoooocX{\frac{1}{2}}
\gdef\ugrfguijoooocY{0}
\gdef\ugrfguijoooocZ{\frac{\sqrt{5}}{4}}
\gdef\ugrfguijoooocEdu{\frac{\sqrt{73}}{4}}
\gdef\ugrfguijoooocXdu{\frac{1}{2}}
\gdef\ugrfguijoooocYdu{0}
\gdef\ugrfguijoooocZdu{\frac{\sqrt{5}}{4}}

\text{Event}[1]&=\left[
\begin{array}{l}
p^\mu=
\left(\ugrfguijoooopE,\left(\ugrfguijoooopX,\ugrfguijoooopY,\ugrfguijoooopZ\right)\right),\\
q^\mu=
\left(\ugrfguijooooqE,\left(\ugrfguijooooqX,\ugrfguijooooqY,\ugrfguijooooqZ\right)\right),\\
a^\mu=
\left(\ugrfguijooooaE,\left(\ugrfguijooooaX,\ugrfguijooooaY,\ugrfguijooooaZ\right)\right),\\
b^\mu=
\left(\ugrfguijoooobE,\left(\ugrfguijoooobX,\ugrfguijoooobY,\ugrfguijoooobZ\right)\right),\\
c^\mu=
\left(\ugrfguijoooocE,\left(\ugrfguijoooocX,\ugrfguijoooocY,\ugrfguijoooocZ\right)\right)
\end{array}
\right]
,\\

\gdef\ugrfguijoooopE{2}
\gdef\ugrfguijoooopX{0}
\gdef\ugrfguijoooopY{0}
\gdef\ugrfguijoooopZ{1}
\gdef\ugrfguijoooopEdu{2}
\gdef\ugrfguijoooopXdu{0}
\gdef\ugrfguijoooopYdu{0}
\gdef\ugrfguijoooopZdu{1}
\gdef\ugrfguijooooqE{2}
\gdef\ugrfguijooooqX{0}
\gdef\ugrfguijooooqY{0}
\gdef\ugrfguijooooqZ{-1}
\gdef\ugrfguijooooqEdu{2}
\gdef\ugrfguijooooqXdu{0}
\gdef\ugrfguijooooqYdu{0}
\gdef\ugrfguijooooqZdu{-1}
\gdef\ugrfguijooooaE{\frac{1}{4} \sqrt{\frac{3}{2} \left(9+\sqrt{3}\right)}}
\gdef\ugrfguijooooaX{-\frac{\sqrt{3}}{2}}
\gdef\ugrfguijooooaY{0}
\gdef\ugrfguijooooaZ{\frac{1}{4} \sqrt{\frac{3}{2} \left(1+\sqrt{3}\right)}}
\gdef\ugrfguijooooaEdu{\frac{1}{4} \sqrt{\frac{3}{2} \left(9+\sqrt{3}\right)}}
\gdef\ugrfguijooooaXdu{-\frac{\sqrt{3}}{2}}
\gdef\ugrfguijooooaYdu{0}
\gdef\ugrfguijooooaZdu{\frac{1}{4} \sqrt{\frac{3}{2} \left(1+\sqrt{3}\right)}}
\gdef\ugrfguijoooobE{\frac{1}{4} \sqrt{\frac{3}{2} \left(7+\sqrt{3}\right)}}
\gdef\ugrfguijoooobX{-\frac{3 \sqrt{3}}{8}}
\gdef\ugrfguijoooobY{-\frac{3}{8}}
\gdef\ugrfguijoooobZ{-\frac{1}{4} \sqrt{\frac{3}{2} \left(1+\sqrt{3}\right)}}
\gdef\ugrfguijoooobEdu{\frac{1}{4} \sqrt{\frac{3}{2} \left(7+\sqrt{3}\right)}}
\gdef\ugrfguijoooobXdu{-\frac{3 \sqrt{3}}{8}}
\gdef\ugrfguijoooobYdu{-\frac{3}{8}}
\gdef\ugrfguijoooobZdu{-\frac{1}{4} \sqrt{\frac{3}{2} \left(1+\sqrt{3}\right)}}
\gdef\ugrfguijoooocE{\frac{\sqrt{3}}{2}}
\gdef\ugrfguijoooocX{-\frac{\sqrt{3}}{4}}
\gdef\ugrfguijoooocY{-\frac{3}{4}}
\gdef\ugrfguijoooocZ{0}
\gdef\ugrfguijoooocEdu{\frac{\sqrt{3}}{2}}
\gdef\ugrfguijoooocXdu{-\frac{\sqrt{3}}{4}}
\gdef\ugrfguijoooocYdu{-\frac{3}{4}}
\gdef\ugrfguijoooocZdu{0}

\text{Event}[2]&=\left[
\begin{array}{l}
p^\mu=
\left(\ugrfguijoooopE,\left(\ugrfguijoooopX,\ugrfguijoooopY,\ugrfguijoooopZ\right)\right),\\
q^\mu=
\left(\ugrfguijooooqE,\left(\ugrfguijooooqX,\ugrfguijooooqY,\ugrfguijooooqZ\right)\right),\\
a^\mu=
\left(\ugrfguijooooaE,\left(\ugrfguijooooaX,\ugrfguijooooaY,\ugrfguijooooaZ\right)\right),\\
b^\mu=
\left(\ugrfguijoooobE,\left(\ugrfguijoooobX,\ugrfguijoooobY,\ugrfguijoooobZ\right)\right),\\
c^\mu=
\left(\ugrfguijoooocE,\left(\ugrfguijoooocX,\ugrfguijoooocY,\ugrfguijoooocZ\right)\right)
\end{array}
\right]
,\\

\gdef\ugrfguijoooopE{2}
\gdef\ugrfguijoooopX{0}
\gdef\ugrfguijoooopY{0}
\gdef\ugrfguijoooopZ{1}
\gdef\ugrfguijoooopEdu{2}
\gdef\ugrfguijoooopXdu{0}
\gdef\ugrfguijoooopYdu{0}
\gdef\ugrfguijoooopZdu{1}
\gdef\ugrfguijooooqE{3}
\gdef\ugrfguijooooqX{0}
\gdef\ugrfguijooooqY{0}
\gdef\ugrfguijooooqZ{-1}
\gdef\ugrfguijooooqEdu{3}
\gdef\ugrfguijooooqXdu{0}
\gdef\ugrfguijooooqYdu{0}
\gdef\ugrfguijooooqZdu{-1}
\gdef\ugrfguijooooaE{\sqrt{3}}
\gdef\ugrfguijooooaX{1}
\gdef\ugrfguijooooaY{0}
\gdef\ugrfguijooooaZ{1}
\gdef\ugrfguijooooaEdu{\sqrt{3}}
\gdef\ugrfguijooooaXdu{1}
\gdef\ugrfguijooooaYdu{0}
\gdef\ugrfguijooooaZdu{1}
\gdef\ugrfguijoooobE{\sqrt{3}}
\gdef\ugrfguijoooobX{1}
\gdef\ugrfguijoooobY{0}
\gdef\ugrfguijoooobZ{-1}
\gdef\ugrfguijoooobEdu{\sqrt{3}}
\gdef\ugrfguijoooobXdu{1}
\gdef\ugrfguijoooobYdu{0}
\gdef\ugrfguijoooobZdu{-1}
\gdef\ugrfguijoooocE{\sqrt{11}}
\gdef\ugrfguijoooocX{1}
\gdef\ugrfguijoooocY{1}
\gdef\ugrfguijoooocZ{0}
\gdef\ugrfguijoooocEdu{\sqrt{11}}
\gdef\ugrfguijoooocXdu{1}
\gdef\ugrfguijoooocYdu{1}
\gdef\ugrfguijoooocZdu{0}

\text{Event}[3]&=\left[
\begin{array}{l}
p^\mu=
\left(\ugrfguijoooopE,\left(\ugrfguijoooopX,\ugrfguijoooopY,\ugrfguijoooopZ\right)\right),\\
q^\mu=
\left(\ugrfguijooooqE,\left(\ugrfguijooooqX,\ugrfguijooooqY,\ugrfguijooooqZ\right)\right),\\
a^\mu=
\left(\ugrfguijooooaE,\left(\ugrfguijooooaX,\ugrfguijooooaY,\ugrfguijooooaZ\right)\right),\\
b^\mu=
\left(\ugrfguijoooobE,\left(\ugrfguijoooobX,\ugrfguijoooobY,\ugrfguijoooobZ\right)\right),\\
c^\mu=
\left(\ugrfguijoooocE,\left(\ugrfguijoooocX,\ugrfguijoooocY,\ugrfguijoooocZ\right)\right)
\end{array}
\right]
,\\

\gdef\ugrfguijoooopE{\frac{5}{4}}
\gdef\ugrfguijoooopX{0}
\gdef\ugrfguijoooopY{0}
\gdef\ugrfguijoooopZ{1}
\gdef\ugrfguijoooopEdu{\frac{5}{4}}
\gdef\ugrfguijoooopXdu{0}
\gdef\ugrfguijoooopYdu{0}
\gdef\ugrfguijoooopZdu{1}
\gdef\ugrfguijooooqE{\frac{5}{4}}
\gdef\ugrfguijooooqX{0}
\gdef\ugrfguijooooqY{0}
\gdef\ugrfguijooooqZ{-1}
\gdef\ugrfguijooooqEdu{\frac{5}{4}}
\gdef\ugrfguijooooqXdu{0}
\gdef\ugrfguijooooqYdu{0}
\gdef\ugrfguijooooqZdu{-1}
\gdef\ugrfguijooooaE{\sqrt{2}}
\gdef\ugrfguijooooaX{0}
\gdef\ugrfguijooooaY{1}
\gdef\ugrfguijooooaZ{0}
\gdef\ugrfguijooooaEdu{\sqrt{2}}
\gdef\ugrfguijooooaXdu{0}
\gdef\ugrfguijooooaYdu{1}
\gdef\ugrfguijooooaZdu{0}
\gdef\ugrfguijoooobE{\sqrt{2}}
\gdef\ugrfguijoooobX{1}
\gdef\ugrfguijoooobY{0}
\gdef\ugrfguijoooobZ{0}
\gdef\ugrfguijoooobEdu{\sqrt{2}}
\gdef\ugrfguijoooobXdu{1}
\gdef\ugrfguijoooobYdu{0}
\gdef\ugrfguijoooobZdu{0}
\gdef\ugrfguijoooocE{\sqrt{3}}
\gdef\ugrfguijoooocX{1}
\gdef\ugrfguijoooocY{0}
\gdef\ugrfguijoooocZ{1}
\gdef\ugrfguijoooocEdu{\sqrt{3}}
\gdef\ugrfguijoooocXdu{1}
\gdef\ugrfguijoooocYdu{0}
\gdef\ugrfguijoooocZdu{1}

\text{Event}[4]&=\left[
\begin{array}{l}
p^\mu=
\left(\ugrfguijoooopE,\left(\ugrfguijoooopX,\ugrfguijoooopY,\ugrfguijoooopZ\right)\right),\\
q^\mu=
\left(\ugrfguijooooqE,\left(\ugrfguijooooqX,\ugrfguijooooqY,\ugrfguijooooqZ\right)\right),\\
a^\mu=
\left(\ugrfguijooooaE,\left(\ugrfguijooooaX,\ugrfguijooooaY,\ugrfguijooooaZ\right)\right),\\
b^\mu=
\left(\ugrfguijoooobE,\left(\ugrfguijoooobX,\ugrfguijoooobY,\ugrfguijoooobZ\right)\right),\\
c^\mu=
\left(\ugrfguijoooocE,\left(\ugrfguijoooocX,\ugrfguijoooocY,\ugrfguijoooocZ\right)\right)
\end{array}
\right]
,\\

\gdef\ugrfguijoooopE{\frac{5}{4}}
\gdef\ugrfguijoooopX{0}
\gdef\ugrfguijoooopY{0}
\gdef\ugrfguijoooopZ{1}
\gdef\ugrfguijoooopEdu{\frac{5}{4}}
\gdef\ugrfguijoooopXdu{0}
\gdef\ugrfguijoooopYdu{0}
\gdef\ugrfguijoooopZdu{1}
\gdef\ugrfguijooooqE{\frac{5}{4}}
\gdef\ugrfguijooooqX{0}
\gdef\ugrfguijooooqY{0}
\gdef\ugrfguijooooqZ{-1}
\gdef\ugrfguijooooqEdu{\frac{5}{4}}
\gdef\ugrfguijooooqXdu{0}
\gdef\ugrfguijooooqYdu{0}
\gdef\ugrfguijooooqZdu{-1}
\gdef\ugrfguijooooaE{\frac{3}{2}}
\gdef\ugrfguijooooaX{1}
\gdef\ugrfguijooooaY{0}
\gdef\ugrfguijooooaZ{\frac{1}{2}}
\gdef\ugrfguijooooaEdu{\frac{3}{2}}
\gdef\ugrfguijooooaXdu{1}
\gdef\ugrfguijooooaYdu{0}
\gdef\ugrfguijooooaZdu{\frac{1}{2}}
\gdef\ugrfguijoooobE{\frac{3}{2}}
\gdef\ugrfguijoooobX{1}
\gdef\ugrfguijoooobY{0}
\gdef\ugrfguijoooobZ{\frac{1}{2}}
\gdef\ugrfguijoooobEdu{\frac{3}{2}}
\gdef\ugrfguijoooobXdu{1}
\gdef\ugrfguijoooobYdu{0}
\gdef\ugrfguijoooobZdu{\frac{1}{2}}
\gdef\ugrfguijoooocE{2}
\gdef\ugrfguijoooocX{1}
\gdef\ugrfguijoooocY{1}
\gdef\ugrfguijoooocZ{-1}
\gdef\ugrfguijoooocEdu{2}
\gdef\ugrfguijoooocXdu{1}
\gdef\ugrfguijoooocYdu{1}
\gdef\ugrfguijoooocZdu{-1}

\text{Event}[5]&=\left[
\begin{array}{l}
p^\mu=
\left(\ugrfguijoooopE,\left(\ugrfguijoooopX,\ugrfguijoooopY,\ugrfguijoooopZ\right)\right),\\
q^\mu=
\left(\ugrfguijooooqE,\left(\ugrfguijooooqX,\ugrfguijooooqY,\ugrfguijooooqZ\right)\right),\\
a^\mu=
\left(\ugrfguijooooaE,\left(\ugrfguijooooaX,\ugrfguijooooaY,\ugrfguijooooaZ\right)\right),\\
b^\mu=
\left(\ugrfguijoooobE,\left(\ugrfguijoooobX,\ugrfguijoooobY,\ugrfguijoooobZ\right)\right),\\
c^\mu=
\left(\ugrfguijoooocE,\left(\ugrfguijoooocX,\ugrfguijoooocY,\ugrfguijoooocZ\right)\right)
\end{array}
\right]
,\\

\gdef\ugrfguijoooopE{\frac{5}{4}}
\gdef\ugrfguijoooopX{0}
\gdef\ugrfguijoooopY{0}
\gdef\ugrfguijoooopZ{1}
\gdef\ugrfguijoooopEdu{\frac{5}{4}}
\gdef\ugrfguijoooopXdu{0}
\gdef\ugrfguijoooopYdu{0}
\gdef\ugrfguijoooopZdu{1}
\gdef\ugrfguijooooqE{\frac{5}{4}}
\gdef\ugrfguijooooqX{0}
\gdef\ugrfguijooooqY{0}
\gdef\ugrfguijooooqZ{-1}
\gdef\ugrfguijooooqEdu{\frac{5}{4}}
\gdef\ugrfguijooooqXdu{0}
\gdef\ugrfguijooooqYdu{0}
\gdef\ugrfguijooooqZdu{-1}
\gdef\ugrfguijooooaE{\sqrt{3}}
\gdef\ugrfguijooooaX{1}
\gdef\ugrfguijooooaY{0}
\gdef\ugrfguijooooaZ{1}
\gdef\ugrfguijooooaEdu{\sqrt{3}}
\gdef\ugrfguijooooaXdu{1}
\gdef\ugrfguijooooaYdu{0}
\gdef\ugrfguijooooaZdu{1}
\gdef\ugrfguijoooobE{\sqrt{5}}
\gdef\ugrfguijoooobX{1}
\gdef\ugrfguijoooobY{0}
\gdef\ugrfguijoooobZ{0}
\gdef\ugrfguijoooobEdu{\sqrt{5}}
\gdef\ugrfguijoooobXdu{1}
\gdef\ugrfguijoooobYdu{0}
\gdef\ugrfguijoooobZdu{0}
\gdef\ugrfguijoooocE{2}
\gdef\ugrfguijoooocX{1}
\gdef\ugrfguijoooocY{1}
\gdef\ugrfguijoooocZ{-1}
\gdef\ugrfguijoooocEdu{2}
\gdef\ugrfguijoooocXdu{1}
\gdef\ugrfguijoooocYdu{1}
\gdef\ugrfguijoooocZdu{-1}

\text{Event}[6]&=\left[
\begin{array}{l}
p^\mu=
\left(\ugrfguijoooopE,\left(\ugrfguijoooopX,\ugrfguijoooopY,\ugrfguijoooopZ\right)\right),\\
q^\mu=
\left(\ugrfguijooooqE,\left(\ugrfguijooooqX,\ugrfguijooooqY,\ugrfguijooooqZ\right)\right),\\
a^\mu=
\left(\ugrfguijooooaE,\left(\ugrfguijooooaX,\ugrfguijooooaY,\ugrfguijooooaZ\right)\right),\\
b^\mu=
\left(\ugrfguijoooobE,\left(\ugrfguijoooobX,\ugrfguijoooobY,\ugrfguijoooobZ\right)\right),\\
c^\mu=
\left(\ugrfguijoooocE,\left(\ugrfguijoooocX,\ugrfguijoooocY,\ugrfguijoooocZ\right)\right)
\end{array}
\right]
,\\

\gdef\ugrfguijoooopE{\frac{5}{4}}
\gdef\ugrfguijoooopX{0}
\gdef\ugrfguijoooopY{0}
\gdef\ugrfguijoooopZ{1}
\gdef\ugrfguijoooopEdu{\frac{5}{4}}
\gdef\ugrfguijoooopXdu{0}
\gdef\ugrfguijoooopYdu{0}
\gdef\ugrfguijoooopZdu{1}
\gdef\ugrfguijooooqE{\frac{5}{4}}
\gdef\ugrfguijooooqX{0}
\gdef\ugrfguijooooqY{0}
\gdef\ugrfguijooooqZ{-1}
\gdef\ugrfguijooooqEdu{\frac{5}{4}}
\gdef\ugrfguijooooqXdu{0}
\gdef\ugrfguijooooqYdu{0}
\gdef\ugrfguijooooqZdu{-1}
\gdef\ugrfguijooooaE{\sqrt{3}}
\gdef\ugrfguijooooaX{1}
\gdef\ugrfguijooooaY{0}
\gdef\ugrfguijooooaZ{1}
\gdef\ugrfguijooooaEdu{\sqrt{3}}
\gdef\ugrfguijooooaXdu{1}
\gdef\ugrfguijooooaYdu{0}
\gdef\ugrfguijooooaZdu{1}
\gdef\ugrfguijoooobE{\frac{\sqrt{21}}{2}}
\gdef\ugrfguijoooobX{\frac{1}{2}}
\gdef\ugrfguijoooobY{1}
\gdef\ugrfguijoooobZ{0}
\gdef\ugrfguijoooobEdu{\frac{\sqrt{21}}{2}}
\gdef\ugrfguijoooobXdu{\frac{1}{2}}
\gdef\ugrfguijoooobYdu{1}
\gdef\ugrfguijoooobZdu{0}
\gdef\ugrfguijoooocE{\sqrt{11}}
\gdef\ugrfguijoooocX{1}
\gdef\ugrfguijoooocY{0}
\gdef\ugrfguijoooocZ{-1}
\gdef\ugrfguijoooocEdu{\sqrt{11}}
\gdef\ugrfguijoooocXdu{1}
\gdef\ugrfguijoooocYdu{0}
\gdef\ugrfguijoooocZdu{-1}

\text{Event}[7]&=\left[
\begin{array}{l}
p^\mu=
\left(\ugrfguijoooopE,\left(\ugrfguijoooopX,\ugrfguijoooopY,\ugrfguijoooopZ\right)\right),\\
q^\mu=
\left(\ugrfguijooooqE,\left(\ugrfguijooooqX,\ugrfguijooooqY,\ugrfguijooooqZ\right)\right),\\
a^\mu=
\left(\ugrfguijooooaE,\left(\ugrfguijooooaX,\ugrfguijooooaY,\ugrfguijooooaZ\right)\right),\\
b^\mu=
\left(\ugrfguijoooobE,\left(\ugrfguijoooobX,\ugrfguijoooobY,\ugrfguijoooobZ\right)\right),\\
c^\mu=
\left(\ugrfguijoooocE,\left(\ugrfguijoooocX,\ugrfguijoooocY,\ugrfguijoooocZ\right)\right)
\end{array}
\right]
,\\
\pullInExampleWDU{8},\\
\pullInExampleWDU{9},\\
\pullInExampleWDU{10},\\

\gdef\ugrfguijoooopE{\frac{5}{4}}
\gdef\ugrfguijoooopX{0}
\gdef\ugrfguijoooopY{0}
\gdef\ugrfguijoooopZ{1}
\gdef\ugrfguijoooopEdu{\frac{5}{4}}
\gdef\ugrfguijoooopXdu{0}
\gdef\ugrfguijoooopYdu{0}
\gdef\ugrfguijoooopZdu{1}
\gdef\ugrfguijooooqE{2}
\gdef\ugrfguijooooqX{0}
\gdef\ugrfguijooooqY{0}
\gdef\ugrfguijooooqZ{-1}
\gdef\ugrfguijooooqEdu{2}
\gdef\ugrfguijooooqXdu{0}
\gdef\ugrfguijooooqYdu{0}
\gdef\ugrfguijooooqZdu{-1}
\gdef\ugrfguijooooaE{\sqrt{3}}
\gdef\ugrfguijooooaX{1}
\gdef\ugrfguijooooaY{1}
\gdef\ugrfguijooooaZ{0}
\gdef\ugrfguijooooaEdu{\sqrt{3}}
\gdef\ugrfguijooooaXdu{1}
\gdef\ugrfguijooooaYdu{1}
\gdef\ugrfguijooooaZdu{0}
\gdef\ugrfguijoooobE{\sqrt{3}}
\gdef\ugrfguijoooobX{\sqrt{2}}
\gdef\ugrfguijoooobY{0}
\gdef\ugrfguijoooobZ{0}
\gdef\ugrfguijoooobEdu{\sqrt{3}}
\gdef\ugrfguijoooobXdu{\sqrt{2}}
\gdef\ugrfguijoooobYdu{0}
\gdef\ugrfguijoooobZdu{0}
\gdef\ugrfguijoooocE{\sqrt{3}}
\gdef\ugrfguijoooocX{-1}
\gdef\ugrfguijoooocY{-1}
\gdef\ugrfguijoooocZ{0}
\gdef\ugrfguijoooocEdu{\sqrt{3}}
\gdef\ugrfguijoooocXdu{-1}
\gdef\ugrfguijoooocYdu{-1}
\gdef\ugrfguijoooocZdu{0}

\text{Event}[11]&=\left[
\begin{array}{l}
p^\mu=
\left(\ugrfguijoooopE,\left(\ugrfguijoooopX,\ugrfguijoooopY,\ugrfguijoooopZ\right)\right),\\
q^\mu=
\left(\ugrfguijooooqE,\left(\ugrfguijooooqX,\ugrfguijooooqY,\ugrfguijooooqZ\right)\right),\\
a^\mu=
\left(\ugrfguijooooaE,\left(\ugrfguijooooaX,\ugrfguijooooaY,\ugrfguijooooaZ\right)\right),\\
b^\mu=
\left(\ugrfguijoooobE,\left(\ugrfguijoooobX,\ugrfguijoooobY,\ugrfguijoooobZ\right)\right),\\
c^\mu=
\left(\ugrfguijoooocE,\left(\ugrfguijoooocX,\ugrfguijoooocY,\ugrfguijoooocZ\right)\right)
\end{array}
\right]
,\\
\pullInExampleWDU{12},\\
\pullInExampleWDU{13},\\

\gdef\ugrfguijoooopE{2}
\gdef\ugrfguijoooopX{0}
\gdef\ugrfguijoooopY{0}
\gdef\ugrfguijoooopZ{1}
\gdef\ugrfguijoooopEdu{2}
\gdef\ugrfguijoooopXdu{0}
\gdef\ugrfguijoooopYdu{0}
\gdef\ugrfguijoooopZdu{1}
\gdef\ugrfguijooooqE{1}
\gdef\ugrfguijooooqX{0}
\gdef\ugrfguijooooqY{0}
\gdef\ugrfguijooooqZ{-1}
\gdef\ugrfguijooooqEdu{1}
\gdef\ugrfguijooooqXdu{0}
\gdef\ugrfguijooooqYdu{0}
\gdef\ugrfguijooooqZdu{-1}
\gdef\ugrfguijooooaE{\sqrt{2}}
\gdef\ugrfguijooooaX{1}
\gdef\ugrfguijooooaY{0}
\gdef\ugrfguijooooaZ{0}
\gdef\ugrfguijooooaEdu{\sqrt{2}}
\gdef\ugrfguijooooaXdu{1}
\gdef\ugrfguijooooaYdu{0}
\gdef\ugrfguijooooaZdu{0}
\gdef\ugrfguijoooobE{\sqrt{2}}
\gdef\ugrfguijoooobX{0}
\gdef\ugrfguijoooobY{1}
\gdef\ugrfguijoooobZ{0}
\gdef\ugrfguijoooobEdu{\sqrt{2}}
\gdef\ugrfguijoooobXdu{0}
\gdef\ugrfguijoooobYdu{1}
\gdef\ugrfguijoooobZdu{0}
\gdef\ugrfguijoooocE{\frac{\sqrt{13}}{2}}
\gdef\ugrfguijoooocX{-1}
\gdef\ugrfguijoooocY{0}
\gdef\ugrfguijoooocZ{0}
\gdef\ugrfguijoooocEdu{\frac{\sqrt{13}}{2}}
\gdef\ugrfguijoooocXdu{-1}
\gdef\ugrfguijoooocYdu{0}
\gdef\ugrfguijoooocZdu{0}

\text{Event}[14]&=\left[
\begin{array}{l}
p^\mu=
\left(\ugrfguijoooopE,\left(\ugrfguijoooopX,\ugrfguijoooopY,\ugrfguijoooopZ\right)\right),\\
q^\mu=
\left(\ugrfguijooooqE,\left(\ugrfguijooooqX,\ugrfguijooooqY,\ugrfguijooooqZ\right)\right),\\
a^\mu=
\left(\ugrfguijooooaE,\left(\ugrfguijooooaX,\ugrfguijooooaY,\ugrfguijooooaZ\right)\right),\\
b^\mu=
\left(\ugrfguijoooobE,\left(\ugrfguijoooobX,\ugrfguijoooobY,\ugrfguijoooobZ\right)\right),\\
c^\mu=
\left(\ugrfguijoooocE,\left(\ugrfguijoooocX,\ugrfguijoooocY,\ugrfguijoooocZ\right)\right)
\end{array}
\right]
,\\

\gdef\ugrfguijoooopE{\frac{5}{4}}
\gdef\ugrfguijoooopX{0}
\gdef\ugrfguijoooopY{0}
\gdef\ugrfguijoooopZ{1}
\gdef\ugrfguijoooopEdu{\frac{5}{4}}
\gdef\ugrfguijoooopXdu{0}
\gdef\ugrfguijoooopYdu{0}
\gdef\ugrfguijoooopZdu{1}
\gdef\ugrfguijooooqE{2}
\gdef\ugrfguijooooqX{0}
\gdef\ugrfguijooooqY{0}
\gdef\ugrfguijooooqZ{-1}
\gdef\ugrfguijooooqEdu{2}
\gdef\ugrfguijooooqXdu{0}
\gdef\ugrfguijooooqYdu{0}
\gdef\ugrfguijooooqZdu{-1}
\gdef\ugrfguijooooaE{\sqrt{3}}
\gdef\ugrfguijooooaX{1}
\gdef\ugrfguijooooaY{1}
\gdef\ugrfguijooooaZ{0}
\gdef\ugrfguijooooaEdu{\sqrt{3}}
\gdef\ugrfguijooooaXdu{1}
\gdef\ugrfguijooooaYdu{1}
\gdef\ugrfguijooooaZdu{0}
\gdef\ugrfguijoooobE{\sqrt{2}}
\gdef\ugrfguijoooobX{1}
\gdef\ugrfguijoooobY{0}
\gdef\ugrfguijoooobZ{0}
\gdef\ugrfguijoooobEdu{\sqrt{2}}
\gdef\ugrfguijoooobXdu{1}
\gdef\ugrfguijoooobYdu{0}
\gdef\ugrfguijoooobZdu{0}
\gdef\ugrfguijoooocE{\sqrt{11}}
\gdef\ugrfguijoooocX{1}
\gdef\ugrfguijoooocY{-1}
\gdef\ugrfguijoooocZ{0}
\gdef\ugrfguijoooocEdu{\sqrt{11}}
\gdef\ugrfguijoooocXdu{1}
\gdef\ugrfguijoooocYdu{-1}
\gdef\ugrfguijoooocZdu{0}

\text{Event}[15]&=\left[
\begin{array}{l}
p^\mu=
\left(\ugrfguijoooopE,\left(\ugrfguijoooopX,\ugrfguijoooopY,\ugrfguijoooopZ\right)\right),\\
q^\mu=
\left(\ugrfguijooooqE,\left(\ugrfguijooooqX,\ugrfguijooooqY,\ugrfguijooooqZ\right)\right),\\
a^\mu=
\left(\ugrfguijooooaE,\left(\ugrfguijooooaX,\ugrfguijooooaY,\ugrfguijooooaZ\right)\right),\\
b^\mu=
\left(\ugrfguijoooobE,\left(\ugrfguijoooobX,\ugrfguijoooobY,\ugrfguijoooobZ\right)\right),\\
c^\mu=
\left(\ugrfguijoooocE,\left(\ugrfguijoooocX,\ugrfguijoooocY,\ugrfguijoooocZ\right)\right)
\end{array}
\right]
,\\

\gdef\ugrfguijoooopE{\frac{3}{2}}
\gdef\ugrfguijoooopX{0}
\gdef\ugrfguijoooopY{0}
\gdef\ugrfguijoooopZ{1}
\gdef\ugrfguijoooopEdu{\frac{3}{2}}
\gdef\ugrfguijoooopXdu{0}
\gdef\ugrfguijoooopYdu{0}
\gdef\ugrfguijoooopZdu{1}
\gdef\ugrfguijooooqE{\frac{3}{2}}
\gdef\ugrfguijooooqX{0}
\gdef\ugrfguijooooqY{0}
\gdef\ugrfguijooooqZ{-1}
\gdef\ugrfguijooooqEdu{\frac{3}{2}}
\gdef\ugrfguijooooqXdu{0}
\gdef\ugrfguijooooqYdu{0}
\gdef\ugrfguijooooqZdu{-1}
\gdef\ugrfguijooooaE{\sqrt{\frac{29}{6}}}
\gdef\ugrfguijooooaX{1}
\gdef\ugrfguijooooaY{0}
\gdef\ugrfguijooooaZ{\frac{2}{\sqrt{3}}}
\gdef\ugrfguijooooaEdu{\sqrt{\frac{29}{6}}}
\gdef\ugrfguijooooaXdu{1}
\gdef\ugrfguijooooaYdu{0}
\gdef\ugrfguijooooaZdu{\frac{2}{\sqrt{3}}}
\gdef\ugrfguijoooobE{\sqrt{\frac{19}{3}}}
\gdef\ugrfguijoooobX{1}
\gdef\ugrfguijoooobY{1}
\gdef\ugrfguijoooobZ{-\frac{1}{\sqrt{3}}}
\gdef\ugrfguijoooobEdu{\sqrt{\frac{19}{3}}}
\gdef\ugrfguijoooobXdu{1}
\gdef\ugrfguijoooobYdu{1}
\gdef\ugrfguijoooobZdu{-\frac{1}{\sqrt{3}}}
\gdef\ugrfguijoooocE{\sqrt{\frac{10}{3}}}
\gdef\ugrfguijoooocX{1}
\gdef\ugrfguijoooocY{-1}
\gdef\ugrfguijoooocZ{-\frac{1}{\sqrt{3}}}
\gdef\ugrfguijoooocEdu{\sqrt{\frac{10}{3}}}
\gdef\ugrfguijoooocXdu{1}
\gdef\ugrfguijoooocYdu{-1}
\gdef\ugrfguijoooocZdu{-\frac{1}{\sqrt{3}}}

\text{Event}[16]&=\left[
\begin{array}{l}
p^\mu=
\left(\ugrfguijoooopE,\left(\ugrfguijoooopX,\ugrfguijoooopY,\ugrfguijoooopZ\right)\right),\\
q^\mu=
\left(\ugrfguijooooqE,\left(\ugrfguijooooqX,\ugrfguijooooqY,\ugrfguijooooqZ\right)\right),\\
a^\mu=
\left(\ugrfguijooooaE,\left(\ugrfguijooooaX,\ugrfguijooooaY,\ugrfguijooooaZ\right)\right),\\
b^\mu=
\left(\ugrfguijoooobE,\left(\ugrfguijoooobX,\ugrfguijoooobY,\ugrfguijoooobZ\right)\right),\\
c^\mu=
\left(\ugrfguijoooocE,\left(\ugrfguijoooocX,\ugrfguijoooocY,\ugrfguijoooocZ\right)\right)
\end{array}
\right]
,\\
\pullInExampleWDU{17},\\

\gdef\ugrfguijoooopE{\frac{3}{2}}
\gdef\ugrfguijoooopX{0}
\gdef\ugrfguijoooopY{0}
\gdef\ugrfguijoooopZ{1}
\gdef\ugrfguijoooopEdu{\frac{3}{2}}
\gdef\ugrfguijoooopXdu{0}
\gdef\ugrfguijoooopYdu{0}
\gdef\ugrfguijoooopZdu{1}
\gdef\ugrfguijooooqE{\frac{3}{2}}
\gdef\ugrfguijooooqX{0}
\gdef\ugrfguijooooqY{0}
\gdef\ugrfguijooooqZ{-1}
\gdef\ugrfguijooooqEdu{\frac{3}{2}}
\gdef\ugrfguijooooqXdu{0}
\gdef\ugrfguijooooqYdu{0}
\gdef\ugrfguijooooqZdu{-1}
\gdef\ugrfguijooooaE{\sqrt{5}}
\gdef\ugrfguijooooaX{1}
\gdef\ugrfguijooooaY{\sqrt{3}}
\gdef\ugrfguijooooaZ{1}
\gdef\ugrfguijooooaEdu{\sqrt{5}}
\gdef\ugrfguijooooaXdu{1}
\gdef\ugrfguijooooaYdu{\sqrt{3}}
\gdef\ugrfguijooooaZdu{1}
\gdef\ugrfguijoooobE{\sqrt{6}}
\gdef\ugrfguijoooobX{1}
\gdef\ugrfguijoooobY{-\sqrt{3}}
\gdef\ugrfguijoooobZ{1}
\gdef\ugrfguijoooobEdu{\sqrt{6}}
\gdef\ugrfguijoooobXdu{1}
\gdef\ugrfguijoooobYdu{-\sqrt{3}}
\gdef\ugrfguijoooobZdu{1}
\gdef\ugrfguijoooocE{\sqrt{6}}
\gdef\ugrfguijoooocX{1}
\gdef\ugrfguijoooocY{0}
\gdef\ugrfguijoooocZ{-2}
\gdef\ugrfguijoooocEdu{\sqrt{6}}
\gdef\ugrfguijoooocXdu{1}
\gdef\ugrfguijoooocYdu{0}
\gdef\ugrfguijoooocZdu{-2}

\text{Event}[18]&=\left[
\begin{array}{l}
p^\mu=
\left(\ugrfguijoooopE,\left(\ugrfguijoooopX,\ugrfguijoooopY,\ugrfguijoooopZ\right)\right),\\
q^\mu=
\left(\ugrfguijooooqE,\left(\ugrfguijooooqX,\ugrfguijooooqY,\ugrfguijooooqZ\right)\right),\\
a^\mu=
\left(\ugrfguijooooaE,\left(\ugrfguijooooaX,\ugrfguijooooaY,\ugrfguijooooaZ\right)\right),\\
b^\mu=
\left(\ugrfguijoooobE,\left(\ugrfguijoooobX,\ugrfguijoooobY,\ugrfguijoooobZ\right)\right),\\
c^\mu=
\left(\ugrfguijoooocE,\left(\ugrfguijoooocX,\ugrfguijoooocY,\ugrfguijoooocZ\right)\right)
\end{array}
\right]
,\\

\gdef\ugrfguijoooopE{\frac{4}{3}}
\gdef\ugrfguijoooopX{0}
\gdef\ugrfguijoooopY{0}
\gdef\ugrfguijoooopZ{1}
\gdef\ugrfguijoooopEdu{\frac{4}{3}}
\gdef\ugrfguijoooopXdu{0}
\gdef\ugrfguijoooopYdu{0}
\gdef\ugrfguijoooopZdu{1}
\gdef\ugrfguijooooqE{\frac{4}{3}}
\gdef\ugrfguijooooqX{0}
\gdef\ugrfguijooooqY{0}
\gdef\ugrfguijooooqZ{-1}
\gdef\ugrfguijooooqEdu{\frac{4}{3}}
\gdef\ugrfguijooooqXdu{0}
\gdef\ugrfguijooooqYdu{0}
\gdef\ugrfguijooooqZdu{-1}
\gdef\ugrfguijooooaE{\sqrt{3}}
\gdef\ugrfguijooooaX{\sqrt{2}}
\gdef\ugrfguijooooaY{0}
\gdef\ugrfguijooooaZ{0}
\gdef\ugrfguijooooaEdu{\sqrt{3}}
\gdef\ugrfguijooooaXdu{\sqrt{2}}
\gdef\ugrfguijooooaYdu{0}
\gdef\ugrfguijooooaZdu{0}
\gdef\ugrfguijoooobE{\sqrt{3}}
\gdef\ugrfguijoooobX{0}
\gdef\ugrfguijoooobY{1}
\gdef\ugrfguijoooobZ{1}
\gdef\ugrfguijoooobEdu{\sqrt{3}}
\gdef\ugrfguijoooobXdu{0}
\gdef\ugrfguijoooobYdu{1}
\gdef\ugrfguijoooobZdu{1}
\gdef\ugrfguijoooocE{\frac{\sqrt{17}}{2}}
\gdef\ugrfguijoooocX{0}
\gdef\ugrfguijoooocY{-1}
\gdef\ugrfguijoooocZ{1}
\gdef\ugrfguijoooocEdu{\frac{\sqrt{17}}{2}}
\gdef\ugrfguijoooocXdu{0}
\gdef\ugrfguijoooocYdu{-1}
\gdef\ugrfguijoooocZdu{1}

\text{Event}[19]&=\left[
\begin{array}{l}
p^\mu=
\left(\ugrfguijoooopE,\left(\ugrfguijoooopX,\ugrfguijoooopY,\ugrfguijoooopZ\right)\right),\\
q^\mu=
\left(\ugrfguijooooqE,\left(\ugrfguijooooqX,\ugrfguijooooqY,\ugrfguijooooqZ\right)\right),\\
a^\mu=
\left(\ugrfguijooooaE,\left(\ugrfguijooooaX,\ugrfguijooooaY,\ugrfguijooooaZ\right)\right),\\
b^\mu=
\left(\ugrfguijoooobE,\left(\ugrfguijoooobX,\ugrfguijoooobY,\ugrfguijoooobZ\right)\right),\\
c^\mu=
\left(\ugrfguijoooocE,\left(\ugrfguijoooocX,\ugrfguijoooocY,\ugrfguijoooocZ\right)\right)
\end{array}
\right]
.
\end{align}

\end{proof}
\end{lemma}

\begin{remark}
	Note that since we have shown \textbf{necessity} only in the case that the  $u_i$ values are chosen in appropriate ranges (those which keep the event momenta of these particular example events real) then it remains possible that for certain cleverly chosen values of the $u_i$ some of the variables $V_k$ might no longer perform tasks not covered by other variables, and so the set may then be \textbf{reducible}.   Whether this possibility can be used to shrink $S_{19}$ to a smaller and more \textbf{minimal} set of event variables has not been fully investigated. Nonetheless, preliminary tests suggest that there is nothing to be gained by such a procedure.  For example: the only $u_i$ for which the $u_i=1$ step of the proof was needed were these:
\begin{centering}
\begin{tabular}{|l|l|}
\hline
$u_1$ and $u_3$ & for  $V_8$ and $V_9$ \\
\hline
$u_4$  & for  $V_{10}$ \\
\hline
$u_7$ and $u_8$ & for  $V_{12}$ and $V_{13}$ \\
\hline
$u_{12}$ and $u_{13}$ & for  $V_{17}$\\
\hline
\end{tabular}.
\end{centering}
	A quick investigation focusing only on the \textbf{necessity} of $V_9$ has shown that for any $(u_1,u_3)\ne0$ it is possible to construct an event for which $V_9\ne0$ and every other $V_k=0$.\footnote{No proof of the statement just made is presented in this document.  However a demonstration of how one can construct events which make $V_9$ \textbf{necessary} for any $(u_1,u_3)$ may be found in the python reference implementation \texttt{reference-implementation.py} supplied with this paper, and a mathematica notebook \texttt{parity-variables.nb} which accompanies it provides some background to explain how these reference events were constructed. For more information see Section~\ref{sec:supportmaterials}.}   The process for constructing such events is tiresome and results in no tangible benefit (beyond confirming that a given $V_k$ is universally \textbf{necessary}) and so the same task has not been re-attempted for the remaining variables $V_8$, $V_{10}$. $V_{12}$, $V_{13}$ and $V_{17}$. However, the authors do not think it would be impossible to construct such events if the need were strong enough. We therefore make the claim (without proof) that for any $\{u_k\}$ all the set $S_{19}=\{V_1,\ldots,V_{19}\}$ is \textbf{irreducible} on \textbf{collision events} $e\in\pqabcColEvent$.
\followUpInFuture[One should get a student to find events which prove that $V_1$ to $V_{19}$ really are necessary for any $u_i$.]
\end{remark}

\subsubsection{Simplifications to the set $S_{19}$ valid for LHC three photon events}

We note, without proof, that of the nineteen variables in $S_{19}$, all but nine of them are non-zero when $m_p=m_q$ (as it is at the LHC) and $m_a=m_b=m_c=0$ (as it is if none of the reconstructed particles carry mass information).

An LHC collaboration looking at massless three-jet or three-photon events could therefore use just the nine variables in:
\begin{align}
S_9=\{ 
V_1,
V_2,
V_4,
V_5,
V_6,
V_8,
V_9,
V_{14},
V_{15}
\}.\label{eq:shortersnineformasslessLHC3j}
\end{align}

\subsubsection{Discussion concerning the \textbf{Minimality} of the set $S_{19}$}
\label{sec:discOfMinimialityOfS19}
\begin{remark}[\textbf{Minimality}]
In Corollary~\ref{cor:finallywedefines19properl} and Lemma~\ref{lem:idependencelemfors19}
we saw that the variables of $S_{19}$ are \textbf{necessary} and \textbf{sufficient} for the purposes of ascribing parities to \textbf{chiral} \textbf{collision events} in $\pqabcColEvent$. But are they \textbf{minimal}?  Is there a smaller set of variables that can accomplish the same goals?

The simple answer to the above question is that we do not know. However, despite this, there are a number of pieces of circumstantial evidence which suggest that they may not comprise a \textbf{minimal} set:
\begin{itemize}
\item
	In the first remark following the proof of Lemma~\ref{lem:idependencelemfors19} it has already been noted that the possibility has not been excluded that there exist values of $u_i$ not all equal to 1 (in appropriate units) such that some of the variables in $S_{19}$ might be un-\textbf{necessary} and the set itself \textbf{reducible}.
\item
The fact that a \textbf{sufficient} set of parity-odd event variables might ever need to contain more than one variable could be said to stem from the fact that any given parity-odd variable can have roots (i.e.~can evaluate to zero) on events which are \textbf{chiral} as well as \textbf{non-chiral}.\footnote{Parity-odd variables necessarily evaluate to zero on \textbf{non-chiral} events, but this does not preclude them to having zeros on other events.}  The greater the order of a polynomial the larger is the number of roots it contains, and correspondingly the greater is the chance that some of these additional unrequested roots will occur on \textbf{chiral} \textbf{events} in $\pqabcColEvent$ rather than in unphysical places. All of the variables contained in $S_{19}$ are polynomial functions of momentum components, and many of them have high orders, and so have enlarged chances of having undesirable `chiral roots' -- which in turn necessitate the presence of yet more more parity-odd variables to  assign parities to those roots.  At no point in our argument have we demonstrated that the constructions of our variables represent polynomials with the least possible order -- or that they have related beneficial properties.  There is therefore no reason to believe that the solution found need be \textbf{minimal}.
\item
A  specific example of the argument just made was already alluded to in the remark following the proof of Lemma~\ref{lem:s19THR} where it was noted that it had not been possible to demonstrate \textbf{\independence} for a variant $S_{19}$ in which the (lower order) quantity $[a,b,c,(p+q)]$ had been used in place of the (much higher order) ingredient $
\PThr
\cdot
\FThr$
 in the parity-odd variables  covering events in $C_3$. This suggests that variables $V_6$ and $V_7$ are more complicated than they need to be, or (equivalently) that there is a better partition of $C$ into than the $C_1\lor C_2 \lor C_3 \lor C_4$ split used here.  The reasons that $C_3$ does not fit well with $[a,b,c,(p+q)]$ is that Corollary~\ref{cor:notcurrentlyusedbutcouldbeuseful} shows only that $C_3 \implies ([a,b,c,(p+q)]\ne0)$. It does not show that $([a,b,c,(p+q)]\ne0)\implies C_3$.  In fact, $[a,b,c,(p+q)]$ can even be non-zero on \textbf{non-collision} events that are not even in $C$, which is a fact used by all of the variables in \eqref{eq:snoncolextranoncollvarset}!
\item 
    The variables in $S_{19}$ cover a broader class of events  than was originally required, since no use was made of some of the constraints in $C_{4B}$ when covering $C_4$.  It is possible that exploitation of the unused constraints in $C_{4B}$. could allow a smaller set of variables to be created.
\item
   The complex `tricks' used in Definition~\ref{def:oursixcomplexfs} and discussed in the remarks following it and in Lemma~\ref{lem:lemaboutproductsofthreecomplexnumbers}, while helpful, seem rather arbitrary.    Were there a four-dimensional commutative division algebra (e.g.~were the quaternions, $\mathbb H$,  commutative in multiplication) then we could have used an even more powerful trick -- and would have ended up with a smaller set than $S_{19}$. Alas the quaternions are not commutative in multiplication, and we know of no four dimensional alternatives which are.\footnote{Frobenius's Theorem relating to real division algebras presumably prevents any progress in this direction.}
Nonetheless, the reduction provided by this `trick' feels wholly accidental and unrelated to the nature of the problem being solved, so it seems unlikely it is the best `trick' that could have been used.
\item
	Finally (and this is a very subjective point!) the `gauge freedom' presented by the $u_i$ seems ugly, undesirable and unnecessarily complicated.  The \textbf{\independence} and \textbf{sufficiency} properties of the \textit{set} of all variables is unaffected by the $u_i$ values, and yet the individual parities are strongly influenced by the $u_i$. This does not seem to be a welcome state of affairs.
\end{itemize}
In summary: it would not be surprising to discover that a set of event variables  could be constructed which was smaller than $S_{19}$ and yet shared all its other desired properties.
\end{remark}

\subsection{\textbf{Non-collision events}
$e\in\pqabcAllEvent$}
\label{sec:bringbothnoncolltypestogeterinpqabc}

\begin{definition}
\label{def:noncollmassivefunctions}
The following three functions
\begin{alignat}{9}
v[1,\mathbb{C},\text{non-coll}, m_p+m_q>0] 
&=
[a,b,c,p+q]
\cdot {}
&&U^{abc}_{\phantom{pq}}[
\delta f_1^{xy} + i u^{ncm}_1 \delta f_2^{xy}
,\ 
&&\delta f_1^{xy} + i u^{ncm}_1 \delta f_2^{xy}
&&]
,
\\
v[2,\mathbb{C},\text{non-coll},m_p+m_q>0] 
&=
[a,b,c,p+q]
\cdot  {}
&&U^{abc}_{\phantom{pq}}[
\delta f_1^{xy} + i u^{ncm}_2 \delta f_2^{xy}
,
&&\delta f_3^{xy}
&&],
\\
v[1,\mathbb{R},\text{non-coll},m_p+m_q>0] 
&=
[a,b,c,p+q]
\cdot {}
&&U^{abc}_{\phantom{pq}}[
\delta f_3^{xy}
,
&&\delta f_3^{xy} && ]
\end{alignat}
are defined in terms of two real constants:  $u^{ncm}_1\ne0$ and $u^{ncm}_2\ne0$.
\end{definition}

\begin{definition}
\label{def:noncollmasslessfunctions}
The following three functions
\begin{alignat}{9}
v[1,\mathbb{C},\text{non-coll}, m_p=m_q=0] 
&=
[a,b,c,p+q]
\cdot {}
&&U^{abc}_{\phantom{pq}}[
\delta f_5^{xy} + i u^{ncn}_1 \delta f_6^{xy}
,\ 
&&\delta f_5^{xy} + i u^{ncn}_1 \delta f_6^{xy}
&&]
,
\\
v[2,\mathbb{C},\text{non-coll},m_p=m_q=0] 
&=
[a,b,c,p+q]
\cdot  {}
&&U^{abc}_{\phantom{pq}}[
\delta f_5^{xy} + i u^{ncn}_2 \delta f_6^{xy}
,
&&\delta f_3^{xy}
&&],
\\
v[1,\mathbb{R},\text{non-coll},m_p=m_q=0] 
&=
[a,b,c,p+q]
\cdot {}
&&U^{abc}_{\phantom{pq}}[
\delta f_3^{xy}
,
&&\delta f_3^{xy} && ]
\end{alignat}
are defined in terms of two real constants:  $u^{ncn}_1\ne0$ and $u^{ncn}_2\ne0$ and the functions $\delta f_5^{xy}$ and $\delta f_6^{xy}$ defined just after  \eqref{eq:f5shorthanddef} and \eqref{eq:f6shorthanddef}.
\end{definition}

\begin{definition}
Since $v[1,\mathbb{R},\text{non-coll},m_p+m_q>0]$ and $v[1,\mathbb{R},\text{non-coll},m_p=m_q=0]$ have identical function definitions, we may remove the parts of their name listing constraints on $m_p$ and $m_q$ and instead refer both of them as 
\begin{align}
v[1,\mathbb{R},\text{non-coll}]
&\equivdef
v[1,\mathbb{R},\text{non-coll},m_p+m_q\ge 0]
\\
&\equiv
v[1,\mathbb{R},\text{non-coll},m_p=m_q=0]\nonumber
.
\end{align}
\end{definition}

\begin{lemma}
\label{lem:somevarsworkdfornoncollmassive}
By putting Corollary~\ref{cor:whennoncollpqabnoncollevisnonchiral} together arguments similar to those used in Corollary~\ref{cor:thesefunctionsarenice} and Lemma~\ref{lem:howthecomplexesbreakdown}, the three functions of Definition~\ref{def:noncollmassivefunctions} may be shown to be parity-odd event variables, and that at least one of them will be non-zero when evaluated on a any \textbf{chiral non-collision event} for which at least one of $p$ or $q$ is massive.
\end{lemma}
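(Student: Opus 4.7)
The plan is to mirror the structure of Corollary~\ref{cor:thesefunctionsarenice} together with Lemma~\ref{lem:howthecomplexesbreakdown}, but applied to the simpler $3\times 3$ grid of constraints extracted from Corollary~\ref{cor:whennoncollpqabnoncollevisnonchiral}. First I would check parity-oddness and permutation invariance: each of the three event variables has the form $[a,b,c,p+q]\cdot U^{abc}[f,g]$, where $[a,b,c,p+q]$ is a Lorentz pseudoscalar that is parity-odd, antisymmetric in $(abc)$, and $(pq)$-even (it depends on $p,q$ only through $p+q$), while the arguments $f,g$ of the universal functional of Definition~\ref{def:Uabc} are drawn from the parity-even, $(pq)$-even, $(x,y)$-antisymmetric real building blocks $\delta f_1^{xy}$, $\delta f_2^{xy}$, $\delta f_3^{xy}$. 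Since $U^{abc}[f,g]$ is totally $(abc)$-antisymmetric by construction, its product with $[a,b,c,p+q]$ is $(abc)$-symmetric, with parity-oddness and $(pq)$-symmetry inherited; real and imaginary parts preserve all of these symmetries.

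Next the chirality hypothesis would be translated through Corollary~\ref{cor:whennoncollpqabnoncollevisnonchiral}. The three real ingredients appearing inside the $v$'s, namely $\delta f_1^{xy}=g^x_x-g^y_y$, $\delta f_2^{xy}=g^{x-y}_{a+b+c}$, and $\delta f_3^{xy}=x^2-y^2$, are precisely the three conjuncts of each cyclic clause of the non-chirality condition. Hence a chiral non-collision event with at least one of $p,q$ massive satisfies (i) $[a,b,c,p+q]\ne 0$, so the leading factor of every $v$ is non-vanishing, and (ii) for each pair $(x,y)\in\{(a,b),(b,c),(c,a)\}$ at least one of $\delta f_1^{xy}$, $\delta f_2^{xy}$, $\delta f_3^{xy}$ is non-zero. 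It then suffices to show that for any such event the $U^{abc}[\,\cdot\,,\,\cdot\,]$ factor of at least one of the three variables is non-zero.

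Finally I would run a short case analysis driven by the telescoping identity $\delta f_i^{ab}+\delta f_i^{bc}+\delta f_i^{ca}=0$, which forces the zero-pattern of any fixed $\delta f_i$ across the three pairs to be either all three non-vanishing, all three vanishing, or exactly one vanishing (with the other two equal and opposite). Writing $f^{xy}\equiv\delta f_1^{xy}+iu^{ncm}\delta f_2^{xy}$ with $u^{ncm}\in\mathbb R\setminus\{0\}$, one has $f^{xy}=0$ iff $\delta f_1^{xy}=\delta f_2^{xy}=0$. If $\delta f_3^{xy}\ne 0$ on all three pairs, then $v[1,\mathbb R,\text{non-coll}]=3[a,b,c,p+q]\,\delta f_3^{ab}\delta f_3^{bc}\delta f_3^{ca}\ne 0$. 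Otherwise there is some pair $(x_0,y_0)$ at which $\delta f_3^{x_0y_0}=0$, and then (ii) forces $f^{x_0y_0}\ne 0$; if in addition $f^{xy}\ne 0$ at all three pairs, then $v[1,\mathbb C,\text{non-coll},m_p+m_q>0]=3[a,b,c,p+q]\,f^{ab}f^{bc}f^{ca}\ne 0$; and in the remaining sub-case $f^{xy}$ vanishes at exactly one pair $(x_1,y_1)\ne(x_0,y_0)$ (two zeros would force a third by telescoping, reducing to $\delta f_1\equiv\delta f_2\equiv 0$, which via (ii) lands us back in the first sub-case). There, direct expansion of $U^{abc}[f,\delta f_3]$ collapses to a single surviving term proportional to the two non-zero $f$-factors times $\delta f_3^{x_1y_1}$, and the latter is non-zero because $\delta f_3$ vanishes only at $(x_0,y_0)\ne(x_1,y_1)$; hence $v[2,\mathbb C,\text{non-coll},m_p+m_q>0]\ne 0$.

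The main obstacle will be the book-keeping in this last sub-case: verifying that the telescoping identity together with condition (ii) genuinely forbid any simultaneous vanishing of $f^{xy}$ and $\delta f_3^{xy}$ at a common pair, and that the cyclic alignment of the ``exactly one zero'' of the $f$-column against the ``exactly one zero'' of the $\delta f_3$-column is handled correctly so that the surviving term in $U^{abc}[f,\delta f_3]$ is guaranteed non-zero. Once this small piece of combinatorics is discharged, the lemma follows by combining the three cases with the parity-oddness and symmetry arguments from the first paragraph.
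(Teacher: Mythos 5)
Your proposal is correct and is precisely the argument the paper intends: negate Corollary~\ref{cor:whennoncollpqabnoncollevisnonchiral} to get $[a,b,c,p+q]\ne0$ plus, for each pair, the non-vanishing of at least one of $\delta f_1^{xy},\delta f_2^{xy},\delta f_3^{xy}$, then run the tick/cross-style case analysis of Lemma~\ref{lem:howthecomplexesbreakdown} (via Lemmas~\ref{lem:allticksinonecolumnresult} and \ref{lem:someticksinonecolumnresult}) on the resulting $3\times3$ grid, with the symmetry and parity-oddness checks exactly as in Corollary~\ref{cor:thesefunctionsarenice}. The "main obstacle" you flag is in fact immediate: simultaneous vanishing of $f^{xy}$ and $\delta f_3^{xy}$ at a common pair is exactly the simultaneous vanishing of $\delta f_1^{xy},\delta f_2^{xy},\delta f_3^{xy}$, which the chirality condition forbids, so the surviving term of $U^{abc}[f,\delta f_3]$ is automatically non-zero.
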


\begin{lemma}
\label{lem:somevarsworkdfornoncollmassless}
By putting Corollary~\ref{cor:whennoncollpqmasslessabnoncollevisnonchiral} together arguments similar to those used in Corollary~\ref{cor:thesefunctionsarenice} and Lemma~\ref{lem:howthecomplexesbreakdown}, the three functions of Definition~\ref{def:noncollmasslessfunctions} may be shown to be parity-odd event variables, and that at least one of them will be non-zero when evaluated on a any \textbf{chiral non-collision event} for which both $p$ and $q$ are massless.
\end{lemma}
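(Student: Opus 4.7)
The plan is to run the same machinery used in Corollary~\ref{cor:thesefunctionsarenice} and Lemma~\ref{lem:howthecomplexesbreakdown}, but applied to the simpler constraint set produced by negating Corollary~\ref{cor:whennoncollpqmasslessabnoncollevisnonchiral}. First I would dispatch the routine algebraic properties: each of the three functions in Definition~\ref{def:noncollmasslessfunctions} is manifestly Lorentz invariant, the overall prefactor $[a,b,c,p+q]$ is $(pq)$-even (it depends on $p$ and $q$ only through $p+q$) and totally antisymmetric in $(abc)$, while the inputs $\delta f_3^{xy}$, $\delta f_5^{xy}$ and $\delta f_6^{xy}$ are all $(pq)$-even and antisymmetric in $x\leftrightarrow y$. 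Hence by the construction of $U^{abc}$ in Definition~\ref{def:Uabc} each function is $(pq)$-even and $(abc)$-symmetric, and the lone pseudoscalar factor $[a,b,c,p+q]$ makes the whole expression parity-odd.

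Next I would translate the chirality hypothesis into combinatorial form. Negating Corollary~\ref{cor:whennoncollpqmasslessabnoncollevisnonchiral}, a \textbf{chiral non-collision event} with $m_p=m_q=0$ satisfies $[a,b,c,p+q]\ne 0$ together with the constraint that for every pair $(x,y)\in\{(a,b),(b,c),(c,a)\}$ at least one of $\delta f_3^{xy}$, $\delta f_5^{xy}$, $\delta f_6^{xy}$ is non-zero (since $\delta f_5^{xy}= \SYMGRAMTWO x \Sigma -\SYMGRAMTWO y \Sigma$ and, by linearity of Gram determinants, $\delta f_6^{xy}=\GRAMTWO {x-y}\Sigma{p+q}\Sigma$). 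Packaging $\delta f_5^{xy}+iu\,\delta f_6^{xy}$ as a single complex ingredient, this is exactly the two-column analogue of Corollary~\ref{cor:somesmalldiscussionofthings}: up to row permutations, only the three patterns $\twothreebox{\tmark & \circ}{\tmark & \circ}{\tmark & \circ}$, $\twothreebox{\tmark & \circ}{\tmark & \circ}{\xmark & \tmark}$ and $\twothreebox{\xmark & \tmark}{\xmark & \tmark}{\xmark & \tmark}$ can occur.

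Then I would invoke Lemma~\ref{lem:antisymfuncsindivisionalgebras} (through its complex variant) to see that the three patterns are covered respectively by $v[1,\mathbb{C},\text{non-coll},m_p=m_q=0]$, $v[2,\mathbb{C},\text{non-coll},m_p=m_q=0]$ and $v[1,\mathbb{R},\text{non-coll})$. Concretely, in the first pattern all three complex `$\delta f_5+iu\delta f_6$' ingredients are non-zero, so the $U^{abc}$-factor of $v[1,\mathbb{C}]$ is a product of three non-zero complex numbers and is therefore non-zero; in the second pattern two of these are non-zero while the remaining row is ticked only in the $\delta f_3^{xy}$ column, which is exactly the mixed case of Lemma~\ref{lem:someticksinonecolumnresult} handled by $v[2,\mathbb{C}]$; in the third pattern $\delta f_3^{xy}$ is non-zero on every pair so Lemma~\ref{lem:allticksinonecolumnresult} yields a non-zero $v[1,\mathbb{R}]$. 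Multiplying through by $[a,b,c,p+q]\ne 0$ keeps each non-zero result non-zero, so at least one of the three functions fails to vanish on every such chiral event.

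The only step requiring genuine care is the last one: one must check that the complex trick of combining $\delta f_5$ and $\delta f_6$ really does cover the `mixed' pattern via $v[2,\mathbb{C}]$, i.e.\ that taking real and imaginary parts of the complex $U^{abc}$-output does not accidentally produce two zeros while the product itself is non-zero. As in Lemma~\ref{lem:lemaboutproductsofthreecomplexnumbers}, this follows because a product of complex numbers vanishes if and only if one of its factors does, so the complex output $U^{abc}$-value is non-zero, and hence at least one of its real or imaginary parts is non-zero. This is the one place where the two-real-output packaging into a single complex ingredient is doing work, and it is what makes the above count of three functions sufficient.
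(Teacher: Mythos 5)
Your proposal is correct and follows essentially the same route the paper intends: the paper only sketches this lemma by pointing to Corollary~\ref{cor:whennoncollpqmasslessabnoncollevisnonchiral} and the machinery of Corollary~\ref{cor:thesefunctionsarenice} and Lemma~\ref{lem:howthecomplexesbreakdown}, and your write-up fills in exactly that argument (negate the non-chirality condition, note $\delta f_6^{xy}=\GRAMTWO{x-y}{\Sigma}{p+q}{\Sigma}$ by linearity, classify the two-column tick patterns via the no-one-tick-two-crosses observation, and apply Lemmas~\ref{lem:allticksinonecolumnresult} and \ref{lem:someticksinonecolumnresult} with the non-vanishing prefactor $[a,b,c,p+q]$). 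No gaps; the symmetry and parity-oddness checks are also handled correctly.
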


\begin{definition}
The variables
\begin{alignat}{6}
V_{20} &= \Re[v[1,\mathbb{C},\text{non-coll}, m_p+m_q>0]], & \phantom{M} &
V_{21} &&= \Im[v[1,\mathbb{C},\text{non-coll}, m_p+m_q>0]],
\\
V_{22} &= \Re[v[2,\mathbb{C},\text{non-coll}, m_p+m_q>0]], & \phantom{M} &
V_{23} &&= \Im[v[2,\mathbb{C},\text{non-coll}, m_p+m_q>0]],
\\
V_{25} &= \Re[v[1,\mathbb{C},\text{non-coll}, m_p=m_q=0]], & \phantom{M} &
V_{26} &&= \Im[v[1,\mathbb{C},\text{non-coll}, m_p=m_q=0]],
\\
V_{27} &= \Re[v[2,\mathbb{C},\text{non-coll}, m_p=m_q=0]], & \phantom{M} &
V_{28} &&= \Im[v[2,\mathbb{C},\text{non-coll}, m_p=m_q=0]],
\end{alignat}
together with
\begin{align}
V_{24} = v[1,\mathbb{R},\text{non-coll}]
\end{align}
permit the definition of the set
\begin{align}
S^{(\text{non-coll})} = \{
V_{20},
V_{21},
V_{22},
V_{23},
V_{24},
V_{25},
V_{26},
V_{27},
V_{28}
\}.\label{eq:snoncolextranoncollvarset}
\end{align}
\end{definition}

\begin{corollary}
\label{cor:noncolsufficiency}
The nine variables in the set $S^{(\text{non-coll})}$ of \eqref{eq:snoncolextranoncollvarset} 
 assign at least one non-zero real parity to every \textbf{non-collision event} $e\in\pqabcAllEvent$.
 \begin{proof}
 Lemma~\ref{lem:somevarsworkdfornoncollmassive} shows that variables $V_{20}$, 
$V_{21}$, 
$V_{22}$, 
$V_{23}$ and 
$V_{24}$ achieve the stated aim for \textbf{chiral non-collision events} for which at least one of $p$ and $q$ is massive, while Lemma~\ref{lem:somevarsworkdfornoncollmassless} shows that variables $V_{24}$, 
$V_{25}$, 
$V_{26}$, 
$V_{27}$ and
$V_{28}$ achieve the stated aim for \textbf{chiral non-collision events} for which both $p$ and $q$ are massless.  There are no other types of  \textbf{chiral non-collision event} to be considered.
 \end{proof}
\end{corollary}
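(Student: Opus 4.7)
The plan is to reduce Corollary~\ref{cor:noncolsufficiency} to a routine case split feeding directly into the two preceding lemmas, so that the real content sits elsewhere. First I would note that every $V_k$ in $S^{(\text{non-coll})}$ is built as a parity-odd, Lorentz-invariant, $(pq)$- and $(abc)$-symmetric polynomial (this follows by inspection of Definitions~\ref{def:noncollmassivefunctions} and~\ref{def:noncollmasslessfunctions}, in exactly the manner recorded for the collision-event case in Corollary~\ref{cor:thesefunctionsarenice}). In particular, every $V_k\in S^{(\text{non-coll})}$ vanishes automatically on any \textbf{non-chiral} event. So it is enough to show that for every \textbf{chiral} \textbf{non-collision event} $e\in\pqabcAllEvent$ at least one element of $S^{(\text{non-coll})}$ is non-zero on $e$.

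Next I would split the \textbf{non-chiral} \textbf{non-collision events} into two disjoint and exhaustive classes according to the masses of $p$ and $q$: class~(i) at least one of $m_p, m_q$ is strictly positive, and class~(ii) $m_p=m_q=0$. This case split is exhaustive because every non-collision event (per Definition~\ref{def:noncollevents}) carries a well-defined pair of non-negative masses $(m_p, m_q)$. For class~(i), Lemma~\ref{lem:somevarsworkdfornoncollmassive}—whose content is a direct application of the chirality characterisation in Corollary~\ref{cor:whennoncollpqabnoncollevisnonchiral}—yields that at least one of $V_{20}, V_{21}, V_{22}, V_{23}, V_{24}$ is non-zero whenever $e$ is \textbf{chiral}. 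For class~(ii), the analogous role is played by Lemma~\ref{lem:somevarsworkdfornoncollmassless}, backed by Corollary~\ref{cor:whennoncollpqmasslessabnoncollevisnonchiral}, guaranteeing non-vanishing of at least one of $V_{24}, V_{25}, V_{26}, V_{27}, V_{28}$. Since all nine variables appearing in either list belong to $S^{(\text{non-coll})}$, the combined conclusion is the corollary.

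Essentially no obstacle remains at the corollary level: the only step that would deserve a sentence of justification is verifying that the case split is indeed exhaustive and that the two variable-subsets listed in the two lemmas genuinely sit inside $S^{(\text{non-coll})}$ (which is immediate from the definition of the set). The hard work has been absorbed into Lemmas~\ref{lem:somevarsworkdfornoncollmassive} and~\ref{lem:somevarsworkdfornoncollmassless}, whose burden in turn is to verify that the universal functional $U^{abc}$ built from the $\delta f_i^{xy}$ differences is non-zero under precisely the logical complements of the non-chirality conditions of the corresponding Corollary—by the same antisymmetrisation mechanics already exploited in Lemmas~\ref{lem:allticksinonecolumnresult}, \ref{lem:someticksinonecolumnresult}, \ref{lem:antisymfuncsindivisionalgebras} and~\ref{lem:howthecomplexesbreakdown}, with the pseudoscalar prefactor $[a,b,c,p+q]$ playing the role that $\mathscr{P}^{abc}_{pq}(\mathbb{C})\cdot F^{abc}_{pq}(3)$-type prefactors played in the collision-event treatment.
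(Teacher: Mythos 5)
Your proposal is correct and follows essentially the same route as the paper: split the chiral non-collision events by whether at least one of $m_p,m_q$ is positive or both vanish, and invoke Lemmas~\ref{lem:somevarsworkdfornoncollmassive} and~\ref{lem:somevarsworkdfornoncollmassless} respectively, noting the split is exhaustive. (Only a slip of the pen: where you write ``split the \textbf{non-chiral} non-collision events'' you clearly mean the \textbf{chiral} ones, as the rest of your argument shows.)
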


\subsection{\textbf{Events} $e\in\pqabcAllEvent$ 
}
\label{sec:pqabcbothcollandnoncollvars}

\subsubsection{Sufficiency of the variables in $S_{28}$}

\begin{lemma}
If the set of parity-odd invariant event variables $S_{28}$ is defined as follows:
\begin{align}
\label{eq:s28ALL}
S_{28} &= 
S_{19}^{(1)} \cup
S_{19}^{(2)} \cup
S_{19}^{(3)} \cup
S_{19}^{(4)} \cup
S^{(\text{non-coll})}
\end{align}
in terms of subsets defined in  \eqref{eq:s19ONE},
\eqref{eq:s19TWO},
\eqref{eq:s19THR},
\eqref{eq:s19FOU} and
\eqref{eq:snoncolextranoncollvarset}, then Lemmas~\ref{lem:s19ONE}, \ref{lem:s19TWO} and \ref{lem:s19THR} together with Corollaries \ref{cor:s19FOU} and \ref{cor:noncolsufficiency} prove that the twenty-eight variables in $S_{28}$ will assign at least one non-zero parity to every \textbf{chiral} \textbf{event} $e\in\pqabcAllEvent$.
\end{lemma}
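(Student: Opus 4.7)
The plan is a straightforward partition argument that piggybacks entirely on the results already assembled earlier in the paper. First I would recall, from Definitions~\ref{def:collevent} and \ref{def:noncollevents}, that every event $e\in\pqabcAllEvent$ is either a \textbf{collision event} (in which case $e\in\pqabcColEvent$) or a \textbf{non-collision event}, and that these two classes are disjoint and exhaust $\pqabcAllEvent$. Consequently, every \textbf{chiral} event in $\pqabcAllEvent$ is either a \textbf{chiral collision event} or a \textbf{chiral non-collision event}, and it suffices to produce a non-zero variable in $S_{28}$ for each of these two cases separately.

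Next I would handle the \textbf{collision} case by invoking Corollary~\ref{cor:finallywedefines19properl}: its conclusion is precisely that for any \textbf{chiral collision event} $e\in\pqabcColEvent$ at least one element of $S_{19}=S_{19}^{(1)}\cup S_{19}^{(2)}\cup S_{19}^{(3)}\cup S_{19}^{(4)}$ evaluates to a non-zero real number. Since $S_{19}\subseteq S_{28}$, the same variable also belongs to $S_{28}$, and parity-oddness together with invariance under the symmetry group $SO^+(1,3)\times S_2(pq)\times S_3(abc)$ is inherited from the constructions in Definition~\ref{def:CPARTBEFOREC1C2} and Sections~\ref{sec:wemushtgivethissecaname} and the definition of $v[k\mathbb{C}]$ in Definition~\ref{def:oursixcomplexfs} via Corollary~\ref{cor:thesefunctionsarenice}.

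For the \textbf{non-collision} case I would appeal directly to Corollary~\ref{cor:noncolsufficiency}, which states that the nine variables in $S^{(\text{non-coll})}\subseteq S_{28}$ already assign at least one non-zero real parity to every \textbf{chiral non-collision event} $e\in\pqabcAllEvent$, having handled both the sub-case where at least one of $p,q$ is massive (via Corollary~\ref{cor:whennoncollpqabnoncollevisnonchiral} and Lemma~\ref{lem:somevarsworkdfornoncollmassive}) and the sub-case where both are massless (via Corollary~\ref{cor:whennoncollpqmasslessabnoncollevisnonchiral} and Lemma~\ref{lem:somevarsworkdfornoncollmassless}). Combining the two cases completes the argument.

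The only thing one needs to check beyond this bookkeeping is that every variable in $S_{28}$ is indeed parity-odd, Lorentz-invariant, and invariant under the permutation group, so that the quoted ``non-zero real parity'' is genuinely a value of a parity-odd event variable in the sense of Section~\ref{sec:ourobjectives}. Each ingredient of $S_{28}$ was already certified to have these properties at its point of definition, so no new verification is required. There is no hard step here: the whole content of the lemma is the observation that the four sub-case covers for $C_1,\ldots,C_4$ together with the non-collision cover exhaust $\pqabcAllEvent$. I do not claim (and the lemma does not claim) \textbf{\independence} or \textbf{minimality} of $S_{28}$; those would be the genuinely hard questions, but they are explicitly set aside here.
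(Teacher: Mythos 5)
Your proposal is correct and follows essentially the same route as the paper: the paper's own proof is exactly this bookkeeping step, partitioning $\pqabcAllEvent$ into \textbf{collision} and \textbf{non-collision} events and then citing the sufficiency results for $S_{19}$ (equivalently Lemmas~\ref{lem:s19ONE}, \ref{lem:s19TWO}, \ref{lem:s19THR} and Corollary~\ref{cor:s19FOU}) and for $S^{(\text{non-coll})}$ (Corollary~\ref{cor:noncolsufficiency}). Your additional remarks about inherited invariance properties and the absence of any \independence\ claim are consistent with the paper and add nothing that conflicts with it.
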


\subsubsection{Necessity of the variables in $S_{28}$ }

\textbf{Necessity} has not been demonstrated for each of the variables in  $S_{28}$ over \textbf{events}  $e\in\pqabcAllEvent$.  Equivalently, the set $S_{28}$ has not been shown to be \textbf{\independent}.

\subsection{Support materials and a reference implementation for the discussed event variables}
\label{sec:supportmaterials}

A python script \texttt{reference-implementation.py} has been placed on the arXiv in both (i) the ancillary support files location belonging to this paper \cite{Lester:2020jrg}, and (ii) independently with other support materials at \url{https://www.hep.phy.cam.ac.uk/~lester/parity/index.html}.  The implementation provided  can calculate the variables $X_1$ to $X_3$ and $V_1$ to $V_{28}$ defined in this paper, and can do so with arbitrary precision when the input four momenta have integer values for $E$, $p_x$, $p_y$ and $p_z$.  The implementation is primarily intended to allow users to validate other implementations which they may build for their own purposes, rather than to provide a library for direct use in analyses.  It is therefore optimised for readability rather than speed or ability to be vectorised or compiled.  If there is sufficient interest, the ancillary files may later be modified to contain other implementations (or links to implementations elsewhere) which are better optimised for other purposes.

The ancillary files of \cite{Lester:2020jrg} also contain a \textit{Mathematica} \cite{Mathematica} notebook which implements definitions of the event variables $V_1$ to $V_{28}$ as well as many test cases. It may be possible for some readers to add code to that notebook which could export implementations of the event variables in other formats of their choice.


\section{Discussion}
\label{sec:discussion}

\hide{\color{purple} [\textbf{Note added November 2021:}  The original text of this `Discussion' section was  written shortly before the paper's first upload to the arXiv in August 2020. Other papers made public in Nov 2021 
\cite{Lester:2021kur,Lester:2021aks,Tombs:2021wae} have since provided concrete answers to some of the matters which were discussed herein.  To aid future readers, forward references to those works have been added here in purple notes like this one.]
}

\subsection{Are multiple geometric parities unavoidable?}

Arguably, the most surprising of our results is that for none of the non-trival \textbf{event classes} which we have considered have we managed to construct a continuous parity-odd variable without `unwanted' zeros  on \textbf{chiral} events.   This is equivalent to noting that for none of the non-trivial \textbf{event classes} have we managed to find a \textbf{minimal} set containing only a \textit{single} parity-odd event variable.
This observation is noteworthy in that it \textit{suggests} that for at least some \textbf{classes of event} (even rather simple ones!) one cannot talk about:
\begin{quote}
`the geometric \textit{parity} of an event'
\end{quote}
but must instead talk about:
\begin{quote}
`the geometric \textit{parities} of an event'.
\end{quote}
In short: the geometry of events cannot always be boiled down to a single continuous parity.\footnote{Footnote added April 2022: Approximately one year after this paper was first put on the arXiv, but before this paper was accepted for publication, another paper of ours, \cite{Lester:2021kur},   provided a rigorous demonstration that there are many circumstances in which multiple real parities are an unavoidable consequence of demanding \textbf{sufficiency} and \textbf{continuity} in measures of chirality.  Indeed, the circumstances in which such multiple parities are seen to arise in \cite{Lester:2021kur} are so wide ranging that it is not overblown to interpret the result therein as indicating that \textit{almost any object of sufficiently complexity} (an event in a collider is just a simple example) will require a plurality of real and continuous parities to fully classify its chirality.} 

\vspace{2mm}
\noindent
A skeptic might counter the last statement with the following objection:
\begin{quote}
In no case have the authors proved that any of their sets of derived parity-odd event variables are \textbf{minimal}. Perhaps, despite their best efforts, the authors have simply failed to spot a single event variable $V$ whose sign always represents \textit{the} geometric parity of any event in the given class.
\end{quote}
This first concern is worth taking seriously, in part for the reasons already given in Section~\ref{sec:discOfMinimialityOfS19}.  Even though we do not ourselves believe that \textbf{events} in all \textbf{classes} \textit{can} be given a single all-encompassing and continuous parity, it may still be the case that we have simply not been creative enough in the \textbf{event classes} we have considered in this document.  That possibility cannot be excluded.
Emboldened, such a skeptic may continue with a second claim\footnote{In the first versions of this paper sent to the arXiv this claim contained a mistake which has now been corrected.} as follows:
\begin{quote}
Furthermore, I contend that the authors bring the proliferation of parities entirely upon themselves via their introduction of an unnecessary and self-imposed requirement to consider only \textbf{continuous} event variables. Without that burdensome requirement, for any event class, no matter how complex, there will always exist a parity `bit' which ascribes a +1 or a -1 to any genuinely \textbf{chiral} event and ascribes a zero to all others.
\end{quote}
The above claim is also true as we will now show.
\subsubsection{Multiple parities are a consequence of continuity!}

The proof of the above statement is a very simple. From any finite set of \textbf{continuous} parity-odd variables  which is \textbf{sufficient} for a given class of events one may generate a \textit{single} (albeit \textbf{discontinuous}) parity-odd variable which is also \textbf{sufficient} for that same class of events. One example of such a construction simply requires the first non-zero element of the set to be returned. For example: the three \textbf{continuous} variables  $X_1$, $X_2$ and $X_3$ which defined in  \eqref{eq:unhattedx1} to \eqref{eq:unhattedx3} and which are \textbf{sufficient} for events in $\pqabAllEvent$ can be replaced by a single (albeit \textbf{discontinuous}) parity-odd variable $X$ defined by:
\begin{align}
X=\begin{cases}
X_1 & \text{if $X_1\ne 0$,} \\
X_2 & \text{if $X_2\ne 0$,} \\
X_3 & \text{if $X_3\ne 0$} \\
0 & \text{otherwise}
\end{cases}
\end{align}
which is \textbf{sufficient} for the same class of events.   The jury is currently out on how important the \textbf{continuity} requirement is. More studies will have to show whether we are right or wrong to believe that the benefits of continuity in event variables will outweigh the complexity of dealing with them!

\subsubsection{Is this new?}

If multiple continuous event parities are indeed unavoidable for some \textbf{event classes}, then almost certainly there will already exist a theorem proving it (or an equivalent result) somewhere within the mathematical literature.   But even if such a result is well known to the mathematical  community, its existence currently remains unknown to both the authors of this note and the community of particle physicist colleagues whom they have canvassed.  As such, disseminating awareness of the issue and promoting further discussions in the area of tests of non-standard parity violation is presumably worthwhile.\footnote{Footnote added April 2022:
Approximately one year after this paper was submitted to the arXiv, but around a year before it was accepted for publication, Ref.~\cite{Lester:2021kur} found and documented the existence  of papers in the Chemical/Chemistry literature which may be regarded containing proofs that there exist molecules whose whose parities (if defined analogously to the parities used for particle collisions herein) would necessarily require more than one real variable to describe them.  Ref.~{Lester:2021kur} contains many more results on continuity in parities than we can report here. }

\subsection{Could $\pqabColEvent$ results have been derived from $\pqabcColEvent$ results?} 

In principle we could have derived parity-odd event variables to cover events in $\pqabColEvent$ (or respectively in $\pqabAllEvent$) by setting $c^\mu \rightarrow 0$ in variables suited to events in $\pqabcColEvent$ (or respectively in $\pqabcAllEvent$) since the zero Lorentz-vector is unique in being the only Lorentz-vector which is constant in all frames. However, a set of variables which is \textbf{minimal} or \textbf{\independent} for events in $\pqabcColEvent$ need not necessarily be \textbf{minimal} or \textbf{\independent} for events in $\pqabColEvent$, after the $c^\mu \rightarrow 0$ replacement. This, and the fact that the resulting event variables (and their derivation) is much simpler for  $\pqabColEvent$ than for events in $\pqabcColEvent$, motivated their separate treatment.

\subsection{Illustrations of potential utility of the proposed framework}

\subsubsection{$pp\rightarrow jj+X$ or $pp\rightarrow \gamma\gamma+X$ at the LHC}

An experimental collaboration such as ATLAS \cite{Aad:2008zzm} or CMS \cite{Chatrchyan:2008aa} seeking to find evidence of non-standard sources of parity violation \cite{Lester:2019bso} in two-jet or two-photon events could, in principle:
\begin{itemize}
\item
decide upon some kind of parity-blind selection\footnote{E.g.~this could be done by using parity-even variables from our sister papers \cite{Gripaios:2020ori} and \cite{Gripaios:2020hya} as explained in Corollary~\ref{cor:theneedforparityeveneventselectionsaswellaspoddvars}.} (more on this later),
\item
and could then pick two real constants $u_1\ne 0$ and $u_2\ne 0$, 
\item
and then in terms of $u_1$ and $u_2$ it could fix upon either (i) the set $S$ of three un-hatted variables $S=\{X_1,X_2,X_3\}$ of \eqref{eq:unhattedx1} to \eqref{eq:unhattedx3}, or (ii) the set $\hat S$ of three hatted variables $\hat S=\{\hat X_1,\hat X_2,\hat X_3\}$ 
 of \eqref{eq:hattedx1} to \eqref{eq:hattedx3}, 
 \item
 and then could look for a data-data asymmetry between the positive and negative halves of each of the $X_i$ (or $\hat X_i$) distributions resulting from application of the chosen selection.
\end{itemize}
It is likely that such a collaboration would quickly realise that their $X_3$ (or $\hat X_3$) values were always be identically zero and so might omit it from the above programme\footnote{This would occur if inputs are supplied for which $m_p=m_q$ (which should be the case in a proton-proton collider) and if $m_a$ and $m_b$ are set equal to zero (or are the same as each other -- which is highly likely for reconstructed jets and/or photons at the LHC).}
which need only use the two remaining variables. In such a situation  $X_2$ (and $\hat X_2$) turns out to have a pre-factor of $u_1 u_2$ (and no other $u_1$ or $u_2$ dependence). Consequently the choice $u_1=u_2=1$ is then (literally!) as good as any other and would probably be adopted.

[Aside: In the remainder of this section the discussion will be written as if the un-hatted set of variables had been chosen, however every statement may be interpreted as applying to both hatted and un-hatted choices.]

The collaboration would know that, despite confining itself to just the two variables in $\{X_1,X_2\}$, it would not be reducing the number of sources of non-standard parity violation to which it might be sensitive. On the contrary, one of the results of this paper is the proof that any source of non-standard parity violation discoverable in two jet or two photon events would \textit{have} to show up in at least one of these two variables given a clever enough event-selection.

Of course, this does not mean the task of discovery is simple.  There is no such thing as a free lunch!   Importantly: nothing in our analysis has  shown what event selection(s) will be required to enable those sources to be seen with the strongest significance above any backgrounds.  Furthermore, making a parity blind selection can itself be a tricky thing to do even if one has access to the parity-even variables defined in our sister papers \cite{Gripaios:2020ori} and \cite{Gripaios:2020hya}.  The first asymmetries seen would likely be evidence of parity-violating mis-calibrations of the detector. There is therefore (just as one would expect) no `magic button' that can make a non-standard parity violation stand out effortlessly.  Physicists will not be made redundant!  The only benefit of our calculation, in this specific example, is that it has reduced the scope of the problem that the physicist needs to tackle from one which is unimaginably large to one of finite scope.  

Prior to our work the number of variables one might have needed to test for non-standard parity violation in $pp\rightarrow \gamma\gamma$ events could have been unbounded in length.  Our paper shows that, on the contrary, the experimentalist need focus on creating innovative selections for (in principle) no more than two test distributions.

\subsubsection{$pp\rightarrow jjj+X$ or $pp\rightarrow \gamma\gamma\gamma+X$ at the LHC} 

Consider the wrangling and unhappy discussions contained within \cite{Lester:2019bso} concerning the choice of event variable to use in a three-jet test of non-standard parity violation in CMS open data.  Instead of settling on the arbitrary and poorly motivated choice made therein, the authors of that study could, using the results of our paper, now choose to concentrate only on the nine  variables contained within $S_{9}$ of \eqref{eq:shortersnineformasslessLHC3j} safe in the knowledge that (in principle) doing so would not limit their discovery potential.  [Recall that $S_9 \subset S_{19}$.]

\subsubsection{$pp\rightarrow jjjj+e^+e^-$ at the LHC} 

An analysis team wishing to look for sources of non-standard parity violation in \textbf{events} with  permutation (or other) symmetries we have not yet considered would have to  derive the parity-odd event variables appropriate to their \textbf{event class}.  They would need to 
follow a similar set of steps to those presented in this paper, hopefully benefiting from components of the framework presented here.  \footnote{Footnote added April 2022: Approximately one year after this paper was first put on the arXiv, but about one year before it was accepted for publication, an alternative method for testing for parity violation (a method which avoids asking for  provably \textbf{sufficient} or \textbf{necessary} sets of variables) was proposed in Ref.~\cite{Lester:2021aks}. The approach described in \cite{Lester:2021aks} projects outgoing-momenta onto a cylinder which can be unrolled into something resembling an image with a non-trivial topology. The image can then be turned into a parity by supplying it to any sufficiently general parity-odd function whose form and parameters may well need to be machine learned to increase its usefulness.  The functions considered are required to respect any desired symmetries (e.g.~rotational invariance) by construction.  Each class of particles  can be represented as a different `colour' layer in such an image. For example, a  $pp\rightarrow jjjj+e^+e^-$ event might be represented in a three-colour image with the jet momenta imprinted in one colour-channel, the $e^+$ momenta in a second channel, and the $e^-$ momenta in a third.  Hit pixels record energy deposits but do not come with labels saying `which jet' hit each pixel. Images are therefore naturally invariant with respect to changes to particle-orderings, and so the resulting parities computed from them have the same property. A second paper, \cite{Tombs:2021wae}, shows how such techniques can be applied to detectors which are imperfect and have efficiencies which vary from one location to another or which have parts of the detector which are inoperable or lack acceptance altogether.} 

\section{Conclusion}
\label{sec:conclusion}

The first two papers in this series \cite{Gripaios:2020ori,Gripaios:2020hya} set out general methods for constructing sets of continuous \textit{parity-even} event variables (with given permutation, Lorentz and other symmetries) having the property that those sets could be proven to contain all physically relevant information contained within the momenta of an event \textit{barring any information associated with the handedness of the event}.

This paper has shown how those those results could be extended via the creation of additional sets of continuous but this time \textit{parity-odd} event variables (with given permutation, Lorentz or other symmetries) which may be proven to have sensitivity to \textit{all potentially discoverable sources of non-standard parity violation lurking in the data.}

It is somewhat unusual (and beneficial!) that the natural way in which such variables might be used would be see physicists searching for asymmetries between the  positive and negative halves of their distributions.  This is in contrast to the more frequent approach of having to compare events recorded in data with those simulated by Monte-Carlo event generators.  Such first-order data-data sensitivity (rather than data-MC sensitivity) is particularly rare in LHC Beyond Standard Model searches.

To illustrate the process of constructing such variables, particular symmetry groups were chosen corresponding to simple processes, and sets of variables satisfying the required properties were generated.

In one example, it was noted that collider events having three important jets in the final state ($pp\rightarrow jjj+X$) could be characterised by an $S_2\times S_3$ symmetry where the $S_2$ is a permutation symmetry over the two colliding particles, and $S_3$ is a permutation symmetry over the final-state three jets.  For such events a set of \textbf{nineteen} event variables was constructed and was proved to be able to parameterize \textit{any} Lorentz-invariant source of non-standard parity violation discoverable from within them.  Furthermore, the variables in this set were shown to be \independent\ in the sense that none of them could be omitted without rendering one or more forms of non-standard parity violation invisible.

Smaller sets of variables were shown to be able to accomplish similar objectives in cases where, say, the masses of the colliding particles and/or final state jets could be neglected.  Indeed, in the case of the symmetric colliders like the LHC (colliding identical protons) it was noted that only \textbf{nine} of the above nineteen variables would be needed if the jets were recorded as massless particles (or if they were photons).

Appropriate sets of variables were also computed for processes in which  there were fewer particles (e.g.~only an $S_2\times S_2$ symmetry).  In the simplest case of all -- entirely massless  $pp\rightarrow jj+X$ --  just \textbf{two} variables were shown to be sufficient for sensitivity to all non-standard parity violating processes.

Larger sets of variables (\textbf{twenty-eight} in the case of $S_2\times S_3$) were shown to be able to accomplish a similar objective and a more general case in which no pairs of the momenta were assumed (necessarily) to represent colliding particles.  Such general cases are relevant to, for example, direct dark-matter searches in which permutation symmetry might apply to non-colliding final-state momenta (e.g.~$X\rightarrow \gamma \gamma + j j j + Y$).
 
The authors are not aware of any similar attempts to produce sets of parity-odd variables having the provable coverage/sensitivity properties of those reported here.
It is hoped, therefore, that the sorts of variables proposed here  will facilitate the growth of a new generation of exhaustive yet model independent searches for sources of non-standard parity violation -- searches which hitherto were only possible to approach in model-dependent ways.

\section*{Acknowledgements}
We thank Scott Melville, Tom Gillam and other members of the Cambridge Pheno Working Group for helpful advice and comments.   This work has been partially supported by STFC consolidated grants ST/P000681/1 and ST/S505316/1. WH is supported by the Cambridge Trust.

\appendix
 \appendixpage
 \addappheadtotoc

\section{Notation 
}

\label{appendix:notation}
This Appendix describes notation which we have used (mostly for core concepts like Lorentz vectors) and for which an earlier description would have  interrupted the flow of text.    
\subsection{Logical connectives (`\textbf{OR}' and `\textbf{AND}').}
The logical connectives \textbf{OR} and \textbf{AND} are denoted with the symbols $\lor$ and $\land$ respectively. For example the solution to the equation $$x^2=4$$ could be written as $$(x=2)\lor(x=-2)$$ while the solution to $$x^2+y^2=0\text{\qquad with\qquad}x,y\in\mathbb{R}$$ could be written as  $$(x=0) \land(y=0).$$
Formally $\land$ takes precedence over $\lor$, so that $a\lor b \land c$ means $a\lor (b \land c)$ rather than $(a\lor b) \land c$, however this article aims to use explicit parentheses wherever doubt might arise. 
\subsection{Lorentz vector notation}
\label{sec:lorentzvectornotation}
A Lorentz-vector $p^\mu$ having energy $E$ and three-momentum $(p_x,p_y,p_z)$ may be denoted component-wise either as
\begin{align}
p^\mu=\fourvec E {p_x} {p_y} {p_z}\qquad\text{or}\qquad p^\mu=\fourvecT E {p_x} {p_y} {p_z}
\end{align}
depending on the available space.
Sometimes it is more convenient to parameterize the four-vector by specifying its invariant mass $m$ together with its three momentum. In those cases the following alternative notation is used:
\begin{align}
p^\mu=\fourvecMassForm m {p_x} {p_y} {p_z}
\end{align}
and the energy is therefore taken to be give by $E=\sqrt{m^2+p_x^2 +p_y^2 +p_z^2}$.

\subsection{Epsilon notation}
\label{sec:epsilonNotation}

There are two alternating tensors we may come across which we will distinguish by using $\epsEuc$ with roman indices for one, and $\epsLor$ with Greek indices for the other.  Here is a summary:
\subsubsection{The Euclidean epsilon: $\epsEuc$}
\begin{itemize}
\item
\textbf{Symbol}: $\epsEuc$;
\item
\textbf{Indices}: always Roman;
\item
\textbf{Rank}: Can have any rank.  All these are valid: 
$\epsEuc_{ab}$,  
$\epsEuc_{abc}$,
$\epsEuc_{abcd}$,
$\epsEuc_{abcde}$,
  \ldots;
\item
\textbf{Index Range}: Indices will be integers in the set $\{1,2,\ldots,n\}$ where $n$ is the rank;
\item
\textbf{Raising/Lowering}: means nothing: $\epsEuc^{abcde}=\epsEuc_{a\phantom{b}cde}^{\phantom{a}b\phantom{cde}}=\epsEuc_{abcde}$;
\item
\textbf{Sign convention}: $\epsEuc_{12345\ldots }=+1$;
\item
\textbf{Complete antisymmetry}: $\epsEuc_{\ldots i \ldots j \ldots }=-\epsEuc_{\ldots j \ldots i \ldots }$;
\item
\textbf{Relationship to determinants}
\begin{align}\left|\left(\begin{array}{cccc}
a_1 & b_1 & c_1 & \cdots \\
a_2 & b_2 & c_2 & \cdots \\
a_3 & b_3 & c_3 & \cdots \\
\vdots & \vdots & \vdots & \ddots 
\end{array}\right)\right|
&=\epsEuc_{ijk\cdots}\  a_i b_j c_j \cdots
\end{align}
\end{itemize}

\subsubsection{The Lorentz epsilon: $\epsLor$}
\begin{itemize}
\item
\textbf{Symbol}: $\epsLor$;
\item
\textbf{Indices}: always Greek, (however see the `\textbf{Lorentz vector contraction shorthand}' bullet point further down);
\item
\textbf{Rank}: Always four, as in:
$\epsLor_{\mu\nu\sigma\tau}$;
\item
\textbf{Index Range}: Indices are taken to be integers in $\{0,1,2,3\}$ with $0$ representing the time component;
\item
\textbf{Raising/Lowering}: is meaningful and done with the metric, e.g. : $\epsLor^{\mu}_{\phantom{\mu}\nu\sigma\tau} =\epsLor_{\alpha\nu\sigma\tau} g^{\mu\alpha}$;
\item
\textbf{Complete antisymmetry}: $\epsLor_{\ldots \mu \ldots \nu \ldots }=-\epsLor_{\ldots \nu \ldots \mu \ldots }$;
\item
\textbf{Sign convention}: Because the metric has an odd number of negative components, $\epsLor^{1234}=-\epsLor_{1234}$ so $\epsLor^{1234}$ and $\epsLor_{1234}$ cannot both be $+1$ although one of them is defined to be so.  We never reveal which of them is plus one and which is minus one.  Instead we leave an explicit $\epsLor_{0123}$ in any expression (e.g.~\eqref{eq:somethingdependingoneps0123}) which depends on the convention.
\item
\textbf{Lorentz vector contraction shorthand}:
Contraction with Lorentz vectors having Roman labels is sometimes written in a short form which places the Lorentz vectors themselves where the indices would be.  For example i.e.~$\epsLor_{abc(p+q)}$ is sometimes used as a shorthand notation for $\epsLor_{\mu\nu\sigma\tau} a^\mu b^\nu c^\sigma (p+q)^\tau$.  Where such notation might lead to confusion, the same quantity would instead be represented by the expression $[a,b,c,p+q]$ which makes use of square brackets which we define to represent contractions between epsilons and Lorentz vectors as follows:
\begin{align}
[a,b,c,d]\equiv
\epsLor_{\mu\nu\sigma\tau} a^\mu b^\nu c^\sigma d^\tau
\equiv \epsLor_{abcd}.\label{eq:epscontractionnotation}
\end{align}
\end{itemize}

\subsubsection{Summary of epsilon-related notation}
\begin{align}
\epsEuc_{ijklmnopq} &\qquad \text{could be valid notation}
\\
\epsLor_{\alpha\beta\gamma\delta\zeta\eta\theta\iota\kappa} &\qquad \text{cannot be valid notation,}
\\
\epsLor_{\mu\nu\sigma\tau} &= - \epsLor^{\mu\nu\sigma\tau},
\\
[a,b,c,d]
&\equiv 
\epsLor_{\mu\nu\sigma\tau} a^\mu b^\nu c^\sigma d^\tau 
\equiv 
\epsLor_{abcd} 
, \label{eq:lorentzcontractionsepsilonhorthand}
\\
\gamma^5
&=
i
\gamma^0
\gamma^1
\gamma^2
\gamma^3
=
\gamma_5 ,
\\
(g^{00} ,g^{11} ,g^{22} ,g^{33})
&=
(1,-1,-1,-1).
\hide{,
\\
\gamma^\mu
\gamma^\nu
\gamma^\sigma
&=
g^{\mu\nu} \gamma^\sigma
-
g^{\mu\sigma} \gamma^\nu
+
g^{\sigma\nu} \gamma^\mu
+
i \gamma^5 \epsLor^{\mu\nu\sigma\tau} \gamma_\tau \epsLor_{0123}
\\
\Tr\left[
\gamma^\mu \gamma^\nu \gamma^\sigma \gamma^\tau
\right]
&=
4 g^{\mu\nu} g^{\sigma\tau}
-
4 g^{\mu\sigma} g^{\nu\tau}
+
4 g^{\mu\tau} g^{\nu\sigma},
\\
\Tr \left[
\gamma^\mu
\gamma^\nu
\gamma^\sigma
\gamma^\tau
\gamma^5
\right]
&=
4 i \epsLor^{\mu\nu\sigma\tau} \epsLor_{0123}
\\
\epsilon^{\alpha\beta\gamma\delta}
\epsilon_{\mu\nu\sigma\tau}
&= 
\sum_{\text{perms $\pi$ of $(\mu,\nu,\sigma,\tau)$}}
\sgn(\pi) 
\delta^\alpha_\mu
\delta^\beta_\nu
\delta^\gamma_\sigma
\delta^\delta_\tau \text{\comCGL{Fix signs of euc $\delta$ vs lor $\delta$}}
.}
\end{align}
Note that we do not choose which of
$\epsLor^{0123}$
or
$\epsLor_{0123}$
is $+1$ but instead simply leave $\epsLor_{0123}$ in any expression which depends on the convention.

\hide{
\subsection{Other notation}
Use the notation $$\{a,b,c\}^\mu = \epsLor^{\mu\nu\sigma\tau} a_\nu b_\sigma c_\tau.$$
For the following pseudoscalar use the notation \begin{align}[a,b,c,d] = \epsLor^{\mu\nu\sigma\tau} a_\mu b_\nu c_\sigma d_\tau
\end{align} already defined in \eqref{eq:lorentzcontractionsepsilonhorthand}.
Both $\{a,b,c\}^\mu$ and $[a,b,c,d]$ change sign on interchange of any two arguments.  Since the former has three arguments it therefore has cyclic symmetry:
$$\{a,b,c\}^\mu = \{b,c,a\}^\mu$$ while the latter has four arguments and so is anti-cyclic:
$$[a,b,c,d]=-[b,c,d,a].$$
Note therefore that:
\begin{align}
a.\{b,c,d\}\phantom{.d} &= + [a,b,c,d]\qquad\text{but} \\
\{a,b,c\}.d &= - [a,b,c,d].
\end{align}
}

\subsection{Gram Determinants}
\label{app:gramnotation}

\begin{definition}
\label{def:gram}
Following Byckling and Kajantie \cite{nla.cat-vn1953978} we define the {\it Gram determinant} of the vectors $p_1,\ldots,p_n;q_1,\ldots,p_n$ to be the determinant of the matrix $M$ containing the scalar products of the vectors, $(M)_{ij} = p_i.q_j$. Specifically we define the {\it Gram determinant} $G$ of those vectors as follows:
\begin{align}
G\left(\begin{array}{cc}p_1,\ldots,p_n\\q_1,\ldots,q_n\end{array}\right) 
&=
\left| 
\begin{array}{cccc}
p_1.q_1 & p_1.q_2 & \cdots & p_1.q_n \\
\vdots & \vdots  &  & \vdots \\
p_n.q_1 & p_n.q_2 & \cdots & p_n.q_n 
\end{array}
\right|. \label{eq:gram}
\end{align}
\end{definition}

\begin{definition}
\label{def:symgram}
Also following Byckling and Kajantie \cite{nla.cat-vn1953978} we denote the {\it symmetric Gram determinant}, $\Delta_n$, of the vectors $p_1,\ldots,p_n$ as follows:
\begin{align}
\Delta_n(p_1,\ldots,p_n) = G\left(
\begin{array}{cc}p_1,\ldots,p_n\\p_1,\ldots,p_n\end{array}\right).\label{eq:symmgram}
\end{align}
\end{definition}

\section{Mathematical Identities, \textit{etc.}
}

\label{app:mathematicalidentitissdf}

\begin{lemma}
\label{lem:momdifferencesidentity}
The following is an identity between Gram determinants:
\begin{align}
\SYMGRAMTWO a \Lambda - \SYMGRAMTWO b \Lambda  \equiv \GRAMTWO {a-b} \Lambda {a+b} \Lambda\end{align}
which, in the notation of \eqref{eq:defofgramshorthand}, reads as follows: \begin{align}
g^a_a - g^b_b \equiv g^{a-b}_{a+b}.\label{eq:gaagbbdifferencelemma}
\end{align}
\begin{proof}
\begin{align}
\GRAMTWO {a-b} \Lambda {a+b} \Lambda
&=\left|\begin{array}{cc}
     (a-b).(a+b) & (a-b).\Lambda  \\
     \Lambda.(a+b)  & \Lambda.\Lambda
\end{array}
\right|
\\
&= (a-b)\cdot(a+b)\Lambda^2 - (a-b).\Lambda (a+b).\Lambda
\\
&= a^2 \Lambda^2-b^2\Lambda^2 - (a.\Lambda)^2+(b.\Lambda)^2 
\\
&= \SYMGRAMTWO a \Lambda - \SYMGRAMTWO b \Lambda.
\end{align}
\end{proof}
\end{lemma}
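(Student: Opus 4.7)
The plan is to expand both sides directly from the definitions in Appendix~\ref{app:gramnotation} and verify they coincide by elementary algebra, exploiting the difference-of-squares factorisation at the crucial step.

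First I would write out the left-hand side using Definition~\ref{def:symgram}, giving
\[
\SYMGRAMTWO a \Lambda - \SYMGRAMTWO b \Lambda
= \left(a^2 \Lambda^2 - (a.\Lambda)^2\right) - \left(b^2 \Lambda^2 - (b.\Lambda)^2\right)
= (a^2 - b^2)\Lambda^2 - \bigl((a.\Lambda)^2 - (b.\Lambda)^2\bigr).
\]
Next I would factor each of the two bracketed terms as a difference of squares: $a^2 - b^2 = (a-b)\cdot(a+b)$, and $(a.\Lambda)^2 - (b.\Lambda)^2 = (a.\Lambda - b.\Lambda)(a.\Lambda + b.\Lambda) = (a-b).\Lambda\,(a+b).\Lambda$. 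This rewrites the left-hand side as $(a-b).(a+b)\,\Lambda^2 - (a-b).\Lambda\,(a+b).\Lambda$.

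Then I would expand the right-hand side using Definition~\ref{def:gram}:
\[
\GRAMTWO{a-b}{\Lambda}{a+b}{\Lambda}
= \left|\begin{array}{cc} (a-b).(a+b) & (a-b).\Lambda \\ \Lambda.(a+b) & \Lambda.\Lambda \end{array}\right|
= (a-b).(a+b)\,\Lambda^2 - (a-b).\Lambda\,(a+b).\Lambda,
\]
which matches the expression obtained for the left-hand side, completing the identity.

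There is really no substantive obstacle here: both sides are quadratic polynomials in the four-vector components, and the identity follows from bilinearity of the dot product together with a single difference-of-squares factorisation. The only thing to be slightly careful about is the use of $\Lambda.(a+b) = (a+b).\Lambda$ (symmetry of the scalar product) when expanding the $2 \times 2$ determinant, and noting that the identity holds for arbitrary $a, b, \Lambda \in \mathbb{A}$ with no positivity or non-spacelike assumption required.
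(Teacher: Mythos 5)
Your proof is correct and is essentially the same elementary computation as the paper's: the paper expands the $2\times2$ Gram determinant on the right-hand side and simplifies it to $a^2\Lambda^2 - b^2\Lambda^2 - (a.\Lambda)^2 + (b.\Lambda)^2$, which is exactly the expression you reach from the left-hand side via the difference-of-squares factorisation. The two arguments differ only in the direction of expansion, so there is nothing of substance to add.
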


\begin{lemma}
\label{lem:tetanotherlemmaaboutgramdetsfughfjdn}
The following is an identity among Gram determinants:
\begin{gather}
\left[
\left(
\SYMGRAMTWO a s =\SYMGRAMTWO b s 
\right)
\land
\left(
\GRAMTWO a s c s = \GRAMTWO b s c s
\right)
\right]\nonumber
\\
 \iff 
\\
\left[
\left(
\SYMGRAMTWO a s =\SYMGRAMTWO b s 
\right)
\land
\left(
\GRAMTWO {a-b} s {a+b+c} s = 0
\right)
\right]\nonumber.
\end{gather}
[Aside: when $s\in\mathbb{M}$ and $a$ and $b$ are both in $\mathbb{V}$ then this lemma has the interpretation that, in the rest frame of $s$, the following is true: $[(|\vec a|=|\vec b|)\land((\vec a-\vec b)\cdot \vec c=0) ]\iff[(|\vec a|=|\vec b|)\land((\vec a-\vec b)\cdot (\vec a+\vec b+\vec c)=0)]$.]
\begin{proof}
When $\SYMGRAMTWO a s =\SYMGRAMTWO b s $ we know that \begin{align}
\GRAMTWO a s a s =\GRAMTWO b s b s.\end{align}
In this case 
\begin{align*}
\left[
\GRAMTWO a s c s =\GRAMTWO b s c s
\right]
& \iff\left[
\GRAMTWO a s c s + \GRAMTWO a s {a+b} s =\GRAMTWO b s c s+ \GRAMTWO a s {a+b} s
\right]\\
&\iff\left[
\GRAMTWO a s {a+b+c} s  =\GRAMTWO b s c s+ \GRAMTWO a s {a}s + \GRAMTWO a s {b} s
\right]\\
&\iff\left[
\GRAMTWO a s {a+b+c} s  =\GRAMTWO b s c s+ \GRAMTWO b s {b}s + \GRAMTWO b s {a} s
\right]\\
& \iff\left[
\GRAMTWO a s {a+b+c} s  =\GRAMTWO b s {a+b+c} s
\right]\\
& \iff\left[
\GRAMTWO {a-b} s {a+b+c} s  =0
\right].
\end{align*}
\end{proof}
\end{lemma}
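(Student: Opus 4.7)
The plan is to exploit the multilinearity of the Gram determinant $G$ in its column/row slots and show that, modulo the shared hypothesis $\SYMGRAMTWO a s = \SYMGRAMTWO b s$, the two expressions $\GRAMTWO a s c s - \GRAMTWO b s c s$ and $\GRAMTWO {a-b} s {a+b+c} s$ are in fact identically equal. Once that is established, the biconditional drops out immediately because each side of the ``$\iff$'' consists of the same first clause $\SYMGRAMTWO a s = \SYMGRAMTWO b s$ conjoined with a condition that is visibly equivalent to the other.

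The concrete steps are as follows. First I would use linearity in the first argument to write
\[
\GRAMTWO {a-b} s {a+b+c} s \;=\; \GRAMTWO a s {a+b+c} s \;-\; \GRAMTWO b s {a+b+c} s,
\]
and then use linearity in the second argument to expand each of those into three pieces:
\[
\GRAMTWO a s {a+b+c} s = \GRAMTWO a s a s + \GRAMTWO a s b s + \GRAMTWO a s c s,
\]
\[
\GRAMTWO b s {a+b+c} s = \GRAMTWO b s a s + \GRAMTWO b s b s + \GRAMTWO b s c s.
\]
Next I would note the symmetry $\GRAMTWO a s b s = \GRAMTWO b s a s$, which follows directly from the definition of $G$ and the commutativity of the dot product, so that these ``cross'' terms cancel when one subtracts the two expansions. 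The ``diagonal'' terms become $\GRAMTWO a s a s - \GRAMTWO b s b s = \SYMGRAMTWO a s - \SYMGRAMTWO b s$, which is zero precisely under the hypothesis that is common to both sides of the biconditional.

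Combining the two observations gives, under the hypothesis $\SYMGRAMTWO a s = \SYMGRAMTWO b s$, the clean identity
\[
\GRAMTWO {a-b} s {a+b+c} s \;=\; \GRAMTWO a s c s - \GRAMTWO b s c s,
\]
from which the equivalence of the second clauses on each side of the ``$\iff$'' is immediate. There is no real obstacle here: the proof is a pure bookkeeping exercise and the only thing that requires a moment's thought is recognising that the $a$-$b$ cross terms are symmetric and so drop out on subtraction, and that the diagonal $a$-$a$ and $b$-$b$ terms are exactly the symmetric Gram determinants whose equality has been assumed.
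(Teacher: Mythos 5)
Your proof is correct and uses essentially the same ingredients as the paper's: bilinearity of the Gram determinant, the symmetry of the mixed term $G(a,s;b,s)=G(b,s;a,s)$, and the hypothesis $\Delta_2(a,s)=\Delta_2(b,s)$ to cancel the diagonal terms. The paper merely presents this as a chain of equivalent equalities starting from $G(a,s;c,s)=G(b,s;c,s)$, whereas you compress it into the single identity $G(a-b,s;a+b+c,s)=G(a,s;c,s)-G(b,s;c,s)$ valid under the shared hypothesis; the content is the same.
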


\begin{lemma}
\label{lem:notusedbutused}
$((x\ne0) \lor (y\ne0) \lor (z\ne0))\iff((x+y+z\ne0) \lor (xy+yz+zx\ne0) \lor (xyz\ne0)).$
\begin{proof}
Negating both sides, may be seen that it is sufficient to prove that $((x=0) \land (y=0) \land (z=0))\iff((x+y+z=0) \land (xy+yz+zx=0) \land (xyz=0)).$  The `$\implies$' direction is evidently trivial. We consider therefore only the `$\impliedby$' direction.  We begin therefore with $((x+y+z=0) \land (xy+yz+zx=0) \land (xyz=0))$.
The $xyz=0$ part tells us (w.l.o.g.) that $z=0$.  Putting this in the remaining two equations gives $(x+y=0)\land(xy=0)$.  This in turn shows that (w.l.o.g.) $y=0$, and one more iteration shows us that  $x=0$.  Clearly a similar argument works for any complete set of \textbf{elementary symmetric polynomials}, however big the order. 
\end{proof}
\end{lemma}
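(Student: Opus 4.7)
The plan is to work with the contrapositive: by De Morgan's laws, the biconditional
\[ ((x\ne0) \lor (y\ne0) \lor (z\ne0)) \iff ((e_1\ne0) \lor (e_2\ne0) \lor (e_3\ne0)) \]
(where $e_1 = x+y+z$, $e_2 = xy+yz+zx$, $e_3 = xyz$ are the elementary symmetric polynomials) is equivalent to
\[ ((x=0) \land (y=0) \land (z=0)) \iff ((e_1=0) \land (e_2=0) \land (e_3=0)). \]
So it suffices to prove this second biconditional.

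The $(\Rightarrow)$ direction is immediate by substitution. For the $(\Leftarrow)$ direction, I would appeal to Vieta's formulas: the numbers $x$, $y$, $z$ are exactly the three roots (with multiplicity) of the monic cubic
\[ t^3 - e_1 t^2 + e_2 t - e_3. \]
If all three $e_i$ vanish then this polynomial collapses to $t^3$, whose only root is $0$ (with multiplicity three), forcing $x = y = z = 0$.

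A more elementary alternative (which avoids invoking Vieta) is an iterative reduction: from $e_3 = xyz = 0$ at least one of $x$, $y$, $z$ vanishes; after relabelling take $z = 0$, whereupon $e_1 = 0$ and $e_2 = 0$ reduce to $x + y = 0$ and $xy = 0$; the latter then forces another variable (say $y$) to zero, and substituting back into $x + y = 0$ yields $x = 0$. Either route finishes the argument. There is no real obstacle here: the result is a special case of the standard fact that the power-sum or root-vanishing condition is equivalent to the vanishing of a complete system of elementary symmetric polynomials, and the same argument generalises verbatim to $n$ variables (as the lemma statement remarks).
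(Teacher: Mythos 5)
Your proof is correct, and your ``elementary alternative'' is in fact word-for-word the paper's own argument: negate both sides, use $xyz=0$ to kill one variable w.l.o.g., reduce to $x+y=0$ and $xy=0$, and iterate. Your primary route via Vieta is a genuinely different (and arguably cleaner) justification of the backward direction: since $(t-x)(t-y)(t-z)=t^3-e_1t^2+e_2t-e_3$, the vanishing of all three elementary symmetric polynomials forces the product to equal $t^3$, and each of $x,y,z$, being a root of $t^3$, must be zero. What Vieta buys you is an argument that generalises to $n$ variables in one line without the ``clearly a similar argument works'' hand-wave the paper resorts to; what the paper's iterative route buys is complete elementarity, using nothing beyond case analysis on products being zero, which is in keeping with the purely computational style of the surrounding appendix. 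Either route is a valid proof of the lemma as stated.
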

\begin{remark}
Lemma~\ref{lem:notusedbutused} turns the statement ``$(x\ne0) \lor (y\ne0) \lor (z\ne0)$'' into another one involving the properties of two polynomials of odd order ($x+y+z$ and $xyz$) and one polynomial of even order ($xy+yz+zx$). The next lemma (Lemma~\ref{lem:tripledecomp}) is a variant which turns the same statement into another which uses \textbf{only} polynomials of odd order.  The oddness of the polynomials will make Lemma~\ref{lem:tripledecomp} much more useful to us than Lemma~\ref{lem:notusedbutused}.
\end{remark}
\begin{lemma}
\label{lem:tripledecomp}
$((x\ne0) \lor (y\ne0) \lor (z\ne0))\iff((x+y+z\ne0) \lor (x-y)(y-z)(z-x)\ne0) \lor (xyz\ne0)).$
\begin{proof}
Negating both sides, may be seen that it is sufficient to prove that $((x=0) \land (y=0) \land (z=0))\iff((x+y+z=0) \land (x-y)(y-z)(z-x)=0) \land (xyz=0)).$ The `$\implies$' direction is evidently trivial. We consider therefore only the `$\impliedby$' direction.  The $xyz=0$ part tells us (w.l.o.g.) that $z=0$.  Putting this in the remaining two equations gives $(x+y=0)\land(xy(x-y)=0)$.  The first of those tells us that $y=-x$. Substituting this into the $xy(x-y)=0$ gives $-2x^3=0$ which shows us that $x=0$ and thence that $x=y=z=0$.
\end{proof}
\end{lemma}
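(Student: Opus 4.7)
The plan is to prove this by taking the contrapositive on both sides: after negation, what needs to be shown is
\begin{align*}
\left((x=0) \land (y=0) \land (z=0)\right) \iff \left((x+y+z=0) \land (x-y)(y-z)(z-x)=0 \land xyz=0\right).
\end{align*}
The forward direction ($\Rightarrow$) is completely trivial since all three polynomial expressions on the right vanish whenever $x=y=z=0$, so I would dispose of it in one line.

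For the reverse direction ($\Leftarrow$), the plan is to exploit the fact that each of the three polynomial expressions is symmetric (or at least invariant up to sign) under permutations of $\{x,y,z\}$, so I can reduce to a canonical case using a without-loss-of-generality argument. The factor $xyz=0$ immediately forces at least one of the variables to vanish; by permutation symmetry I may assume $z=0$. Substituting $z=0$ into the remaining two vanishing conditions collapses the system to the two simpler equations $x+y=0$ and $(x-y)(y)(-x)=0$, i.e.\ $-xy(x-y)=0$. From the first, $y=-x$, and substituting into the second yields $-x(-x)(2x) = 2x^3 = 0$, so $x=0$, and hence $y=0$.

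The main subtlety (rather than obstacle) is just to be careful that the WLOG step is legitimate: the three polynomial expressions $x+y+z$, $(x-y)(y-z)(z-x)$ and $xyz$ are all either invariant or change sign under permutations of $\{x,y,z\}$, and since each appears only through its vanishing, the truth of the conjunction is permutation-invariant. This justifies relabelling so that the vanishing coordinate is $z$. After that, the computation is the short substitution described above, and the chain of implications closes, completing the proof of Lemma~\ref{lem:tripledecomp}.
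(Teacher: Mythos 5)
Your proof is correct and follows essentially the same route as the paper's: negate both sides, dispose of the trivial direction, use $xyz=0$ with a permutation-invariance WLOG to set $z=0$, and then the short substitution $y=-x$ forces $x=y=z=0$. The extra remark justifying the WLOG via the (anti)symmetry of the three polynomials is a small, harmless addition beyond what the paper states explicitly.
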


\begin{lemma}
\label{lem:centralcolumnlem}\begin{gather}
 (x+y+z=0)\land(xyz=0)\land((x-y)(y-z)(z-x)\ne0) \label{expr:tophalf}\\
 \iff \nonumber \\
 ((x=-y\ne0) \land (z=0))\lor((y=-z\ne0) \land (x=0))\lor((z=-x\ne0 )\land (y=0)).
 ) \label{expr:bothalf} 
 \end{gather}
\begin{proof}
For the `$(\ref{expr:tophalf})\impliedby(\ref{expr:bothalf})$' direction, and given the cyclic $(x,y,z)$-symmetry, suppose w.l.o.g.~that  $((x=-y\ne0) \land (z=0))$.  This supposition would make (\ref{expr:tophalf}) read $(0=0)\land(0=0)\land((2x)(-x)(-x)\ne0)$ which is evidently  true under the supposition.
For the `$(\ref{expr:tophalf})\implies(\ref{expr:bothalf})$' direction, we can see that the middle requirement of (\ref{expr:tophalf}) forces at least one of $x$, $y$ and $z$ to be zero, which therefore makes (\ref{expr:bothalf}) trivially true.
\end{proof}
\end{lemma}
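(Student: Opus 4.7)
The plan is to prove the biconditional by handling each direction separately, in both cases exploiting the fact that $x$, $y$ and $z$ appear symmetrically (cyclically) on both sides so that a single representative sub-case suffices.

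For the easy direction ($\Leftarrow$), I would assume the RHS disjunction holds and, without loss of generality (by the cyclic symmetry $(x,y,z)\to(y,z,x)$ preserving both sides), take the first disjunct: $x=-y\ne 0$ and $z=0$. Direct substitution gives $x+y+z = x-x+0 = 0$ and $xyz = x\cdot(-x)\cdot 0 = 0$, confirming the first two conjuncts of the LHS. For the third, $(x-y)(y-z)(z-x) = (2x)(-x)(-x) = 2x^3$, which is non-zero because $x\ne 0$. So the LHS holds.

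For the harder direction ($\Rightarrow$), I would begin by using the conjunct $xyz=0$ to deduce that at least one of the three variables vanishes; by cyclic symmetry, assume WLOG $z=0$. Plugging into $x+y+z=0$ gives $y=-x$. Substituting these two pieces of information into the third conjunct $(x-y)(y-z)(z-x)\ne 0$ yields $(2x)(-x)(-x)=2x^3\ne 0$, which forces $x\ne 0$. Hence $(x=-y\ne 0)\land(z=0)$, which is precisely the first disjunct on the RHS. The two remaining cases (when the vanishing variable is $x$ or $y$ instead of $z$) yield the other two disjuncts by the same argument under cyclic relabelling.

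There is really no main obstacle here: the lemma is a purely algebraic triviality that follows from the observation that, once one of the three variables is forced to be zero by $xyz=0$, the linear constraint $x+y+z=0$ pins down the remaining two up to sign, and the discriminant-like conjunct $(x-y)(y-z)(z-x)\ne 0$ merely rules out the degenerate case where the surviving pair vanishes. The only thing worth being careful about is that the three conjuncts on the LHS are jointly needed: without $x+y+z=0$ the surviving variables would not be forced to be negatives of each other, and without $(x-y)(y-z)(z-x)\ne 0$ we could not exclude the all-zero solution. I would therefore keep the presentation to just a few lines, invoking the cyclic symmetry explicitly so as not to have to write out three essentially identical case analyses.
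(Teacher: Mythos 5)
Your proof is correct and takes essentially the same route as the paper's: the backward direction by direct substitution into one representative disjunct (invoking the cyclic $(x,y,z)$-symmetry of both sides), and the forward direction by using $xyz=0$ to make one variable vanish, $x+y+z=0$ to force the remaining pair to be opposite, and $(x-y)(y-z)(z-x)\ne0$ to exclude the degenerate all-zero case. If anything, your forward direction explicitly spells out the steps that the paper compresses into the phrase ``trivially true,'' which is a harmless (indeed welcome) elaboration rather than a different argument.
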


\begin{lemma}
(This is a weaker version of Lemma~\ref{lem:centralcolumnlem}.)
\label{lem:weakercentralcolumnlem}\begin{gather}
 (xyz=0)\land((x-y)(y-z)(z-x)\ne0)\label{expr:weakertophalf}\\
 \iff \nonumber \\
 ((0\ne x\ne y\ne0) \land (z=0))\lor
 ((0\ne y\ne z\ne0) \land (x=0))\lor
 ((0\ne z\ne x\ne0 )\land (y=0))
 ). \label{expr:weakerbothalf} 
 \end{gather}
\begin{proof}
For the `$(\ref{expr:weakertophalf})\implies(\ref{expr:weakerbothalf})$' direction, we can see that the middle requirement of (\ref{expr:weakertophalf}) forces at least one of $x$, $y$ and $z$ to be zero.  Suppose, without loss of generality, that it forces $x=0$.  This leaves the other condition of (\ref{expr:weakertophalf}) reading $-y(y-z)z \ne 0$, which implies that neither $y$ nor $z$ can be zero, and that $y\ne z$. This makes (\ref{expr:weakerbothalf}) trivially true.
For the `$(\ref{expr:weakertophalf})\impliedby(\ref{expr:weakerbothalf})$' direction,  and given the cyclic $(x,y,z)$-symmetry, suppose w.l.o.g.~that $((0\ne y\ne z\ne0) \land (x=0))$.  This supposition would make (\ref{expr:weakertophalf}) read $(0=0)\land((-y)(y-z)(z)\ne0)$ which is evidently  true under the supposition.
\end{proof}
\end{lemma}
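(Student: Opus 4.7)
The plan is to prove the biconditional by treating each direction separately, exploiting the manifest cyclic symmetry of both sides in the three variables $x, y, z$. The right-hand side is a disjunction of three claims obtained from one another by cyclic permutation, and the left-hand side is invariant under the full symmetric group on $(x,y,z)$, so one representative case suffices in each direction.

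For the forward direction ($\eqref{expr:weakertophalf}\Rightarrow\eqref{expr:weakerbothalf}$), I would start with the observation that $xyz = 0$ forces at least one of $x, y, z$ to vanish. By the cyclic symmetry just noted, I may assume without loss of generality that $z = 0$. Substituting this into the second conjunct of \eqref{expr:weakertophalf} gives $(x - y)\cdot y \cdot (-x) \ne 0$, i.e., $xy(x - y) \ne 0$, which is equivalent to the three separate statements $x \ne 0$, $y \ne 0$, and $x \ne y$. Together with $z = 0$ these are exactly the first disjunct of \eqref{expr:weakerbothalf}. The cases $x = 0$ and $y = 0$ are handled analogously and deliver the other two disjuncts.

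For the converse ($\eqref{expr:weakertophalf}\Leftarrow\eqref{expr:weakerbothalf}$), I would suppose without loss of generality that the first disjunct of \eqref{expr:weakerbothalf} holds: $z = 0$ and $x, y$ are both nonzero and distinct. Then $xyz = 0$ is immediate, and $(x-y)(y-z)(z-x) = -xy(x-y)$, which is a product of three nonzero factors and is therefore nonzero, giving \eqref{expr:weakertophalf}. The other disjuncts yield the same conclusion by cyclic permutation.

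There is no real obstacle; the only points requiring care are parsing the compressed notation ``$0 \ne x \ne y \ne 0$'' as the conjunction $(x\ne 0)\land(x\ne y)\land(y\ne 0)$ rather than as a transitive chain, and verifying that the ``without loss of generality'' reductions in both directions are warranted by the cyclic symmetry shared by \eqref{expr:weakertophalf} and \eqref{expr:weakerbothalf}.
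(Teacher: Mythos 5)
Your proposal is correct and follows essentially the same route as the paper's proof: use $xyz=0$ to force one variable to vanish (handled via a symmetry-justified WLOG), substitute into $(x-y)(y-z)(z-x)\ne0$ to conclude the other two are nonzero and distinct, and verify the converse by direct substitution. The only differences are which variable you take as the WLOG representative, which is immaterial.
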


\begin{lemma}
If $R$ is an orthogonal matrix for which $R\vec z=\vec z$,  if $n$ is a positive integer, and if $\vec a=-R^n \vec a$ then $\vec a\cdot \vec z=0$.  \label{lem:whenonplaneperttorot}[Remark: We will use this lemma mostly when $R$ is a rotation matrix with axis $\vec z$.]
\begin{proof}
As $R$ is an orthogonal matrix ($R^{-1} = R^T$) we may deduce from the supplied conditions  that $\vec z=R^T \vec z$ and so $\vec z=(R^T)^n \vec z = (R^n)^T \vec z$.   Therefore $
\vec a\cdot \vec z
=
(-R^n \vec a)\cdot\vec z 
=
-(R^n \vec a)\cdot\vec z 
=
-(R^n \vec a)^T \vec z 
=
-(\vec a^T (R^n)^T)\vec z
=
-\vec a^T ((R^n)^T \vec z)
=
-\vec a^T \vec z
=
-\vec a\cdot \vec z.
$
\end{proof}
\end{lemma}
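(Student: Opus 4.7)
The plan is to take the dot product of the hypothesis $\vec a = -R^n \vec a$ with $\vec z$ and exploit the fact that orthogonal matrices preserve the dot product while fixing $\vec z$. Concretely, I would start by observing that since $R$ is orthogonal and $R\vec z = \vec z$, iteration gives $R^n \vec z = \vec z$ for every positive integer $n$, and since products of orthogonal matrices are orthogonal, $R^n$ is itself orthogonal, so $(R^n)^T = (R^n)^{-1}$, which in turn also fixes $\vec z$.

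Next, I would apply the orthogonality identity $(M\vec u)\cdot \vec v = \vec u \cdot (M^T \vec v)$ with $M = R^n$, $\vec u = \vec a$, $\vec v = \vec z$. This gives $(R^n \vec a)\cdot \vec z = \vec a \cdot ((R^n)^T \vec z) = \vec a \cdot \vec z$, where the last equality uses the observation from the first paragraph that $(R^n)^T$ fixes $\vec z$. Substituting this into the identity obtained from the hypothesis by dotting with $\vec z$, namely $\vec a \cdot \vec z = -(R^n \vec a)\cdot \vec z$, yields $\vec a \cdot \vec z = -\vec a \cdot \vec z$, whence $2(\vec a \cdot \vec z) = 0$ and therefore $\vec a \cdot \vec z = 0$.

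I do not anticipate a serious obstacle: the result is essentially the standard observation that any orthogonal transformation fixing $\vec z$ must map the orthogonal complement of $\vec z$ to itself, combined with the fact that $-1$ is not in the spectrum of such a transformation when restricted to the line $\mathbb R \vec z$. The only mild care needed is to justify the step $R^n \vec z = \vec z$ (a one-line induction) and to be explicit that the hypothesis $\vec a = -R^n \vec a$ is what forces the sign flip producing the contradiction $x = -x$. The alternative and essentially equivalent route would be a spectral argument: since $R^n$ is orthogonal with $\vec z$ a $+1$-eigenvector, the eigenspace of $R^n$ for the eigenvalue $-1$ is contained in $\vec z^{\perp}$, and $\vec a$ lies in that eigenspace by the hypothesis; but the direct dot-product calculation above is shorter and avoids any appeal to spectral theory.
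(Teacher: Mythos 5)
Your proof is correct and follows essentially the same route as the paper's: dot the hypothesis $\vec a=-R^n\vec a$ with $\vec z$, transfer $R^n$ to the other side of the dot product via orthogonality, and use that $(R^n)^T$ fixes $\vec z$ to obtain $\vec a\cdot\vec z=-\vec a\cdot\vec z$. The only cosmetic difference is that you justify $R^n\vec z=\vec z$ and the orthogonality of $R^n$ explicitly before the computation, whereas the paper folds this into the chain of equalities.
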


\begin{lemma}
\label{lem:lemaboutproductsofthreecomplexnumbers}
For $(x_a,y_a),(x_b,y_b),(x_c,y_c)\in\mathbb{R}^2$ the following is true:
 \begin{gather}
 \left[
     (x_a,y_a)\ne0 \land (x_b,y_b)\ne0 \land (x_c,y_c)\ne0
     \right] \nonumber
     \\
     \iff
     \\
\hide{     \left[(
  y_a y_b x_c 
+ y_a x_b y_c 
+ x_a y_b y_c 
- x_a x_b x_c 
\ne 0 )\land(
   y_a x_b x_c
+  x_a y_b x_c 
+  x_a x_b y_c 
-  y_a y_b y_c \ne 0)
     \right]}
      \left[
     \begin{array}{c}
     \toparotor{8cm} \\
  y_a y_b x_c 
+ y_a x_b y_c 
+ x_a y_b y_c
- x_a x_b x_c 
\ne 0
\\
\separotor \lor {3.5cm}
\\
   y_a x_b x_c
+  x_a y_b x_c 
+  x_a x_b y_c
-
 y_a y_b y_c 
 \ne 0
\\
\toparotor{8cm}
\end{array}
     \right]
     .
 \end{gather}\label{eq:endofsecondsandwwich}
 \begin{proof}
Any $n$ complex numbers are all non-zero if and only if their product is non-zero.  Therefore three two-vectors $(x_a,y_a)$, $(x_b,y_b)$ and $(x_c,y_c)$ in $\mathbb{R}^2$ are all non-zero vectors if and only if $\Re(p)\ne0$ or $\Im(p)\ne0$ where \begin{align}
p&=(x_a+i y_a)(x_b + i y_b)(x_c+i y_c)
\\
&=
(
  x_a x_b x_c 
- x_c y_a y_b 
- x_b y_a y_c 
- x_a y_b y_c 
)
+i(
  x_b x_c y_a
+  x_a x_c y_b 
+  x_a x_b y_c 
-  y_a y_b y_c
)
 .\label{eq:reimpartsoftriplecomplexprod}\end{align} 
\end{proof}
\begin{remark}
Note that although this lemma has been stated with \textit{three} anded constraints, it is evident that any number of anded constraints could have been tied together in a similar way, and that independently of the length of that product, only two real constraints would have appeared in the equivalent of \eqref{eq:endofsecondsandwwich}.  Note that one could have avoided complex numbers altogether if one had chosen to prove the following less useful (though still valid) result:
\begin{gather}
 \left[
     (x_a,y_a)\ne0 \land (x_b,y_b)\ne0 \land (x_c,y_c)\ne0
     \right] \nonumber
     \\
     \iff
     \\
     \left[
    ( x_a x_b x_c \ne 0) \lor
    ( x_a x_b y_c \ne 0) \lor
    ( x_a y_b x_c \ne 0) \lor
    ( x_a y_b y_c \ne 0) \lor \qquad
    \right.
    \\
    \left.
  \qquad  ( y_a x_b x_c \ne 0) \lor
    ( y_a x_b y_c \ne 0) \lor
    ( y_a y_b x_c \ne 0) \lor
    ( y_a y_b y_c \ne 0) 
     \right].
 \end{gather}
The reason this latter result is often less useful for the construction of event variables, however, is that the number of real variables obtained (eight in this example)  grows as two to the power of the number of anded constraints!  It is for this reason (a desire to reduce the final number  of event variables) that complex products and complex numbers have been used in the second bullet point in the proofs of Lemmas~\ref{lem:threevarsforpqabevents} and \ref{lem:howthecomplexesbreakdown} or in Definition~\ref{def:oursixcomplexfs} and in similar places.
\end{remark}
\end{lemma}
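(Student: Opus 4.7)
The plan is to exploit the natural identification of $\mathbb{R}^2$ with $\mathbb{C}$ and reduce the claim to the well-known fact that a product of complex numbers is non-zero if and only if each factor is non-zero.

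First I would introduce complex numbers $z_a = x_a + i y_a$, $z_b = x_b + i y_b$, $z_c = x_c + i y_c$ and observe the obvious equivalence $(x_k,y_k)\ne 0 \iff z_k\ne 0$ for each $k\in\{a,b,c\}$. Hence the left-hand side of the stated biconditional is equivalent to $z_a z_b z_c \ne 0$.

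Next I would expand the product $p = z_a z_b z_c$ directly. A single line of algebra gives
\begin{align*}
p &= (x_a + iy_a)(x_b + iy_b)(x_c + iy_c) \\
  &= \bigl(x_a x_b x_c - x_c y_a y_b - x_b y_a y_c - x_a y_b y_c\bigr) + i\bigl(y_a x_b x_c + x_a y_b x_c + x_a x_b y_c - y_a y_b y_c\bigr).
\end{align*}
Then $p\ne 0$ if and only if $\Re(p)\ne 0$ or $\Im(p)\ne 0$. Noting that $-\Re(p) = y_a y_b x_c + y_a x_b y_c + x_a y_b y_c - x_a x_b x_c$, the non-vanishing of $-\Re(p)$ is clearly the same as the non-vanishing of $\Re(p)$; this matches the first disjunct in the statement. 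The second disjunct is exactly $\Im(p)\ne 0$. The two disjuncts together are therefore equivalent to $p\ne 0$, which we already identified with the left-hand side, completing the chain of equivalences.

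There is no substantive obstacle here: the whole argument is essentially a bookkeeping step wrapped around the elementary fact that $\mathbb{C}$ is an integral domain. The only place requiring the slightest care is making sure that the signs in the expanded product exactly match the expressions written in the statement (in particular that the first disjunct corresponds to $-\Re(p)$, not to $\Re(p)$, since both vanish under the same conditions). Once that sign-check is done, the proof is complete.
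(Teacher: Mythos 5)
Your proposal is correct and follows essentially the same route as the paper's own proof: identify each pair with a complex number, invoke the fact that a product of complex numbers is non-zero iff every factor is, expand $p=(x_a+iy_a)(x_b+iy_b)(x_c+iy_c)$, and observe that the two stated disjuncts are exactly $-\Re(p)\ne0$ and $\Im(p)\ne0$. The explicit sign-check you flag is the only point of care, and you handle it correctly.
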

\begin{lemma}

If $(x_a,y_a),(x_b,y_b),(x_c,y_c)\in\mathbb{R}^2$, and if $\sgn(x)$ evaluates to: $+1$ if $x>0$, to $0$ if $x=0$, and to $-1$ if $x<0$, then the following is true:
 \begin{gather*}
 \left[
     (x_a,y_a)\ne0 \land (x_b,y_b)\ne0 \land (x_c,y_c)\ne0
     \right] \nonumber
     \\
     \impliedby
     \\
     \left[
     \begin{array}{c}
     \toparotor{7cm} \\
     \sgn(
  y_a y_b x_c 
+ y_a x_b y_c 
+ x_a y_b y_c)
\ne \sgn( x_a x_b x_c) 
\\
\separotor \lor {3cm}
\\
  \sgn( y_a x_b x_c
+  x_a y_b x_c 
+  x_a x_b y_c) 
\ne \sgn(  y_a y_b y_c)
\\
\toparotor{7cm}
\end{array}
     \right]\numberthis\label{eq:destinationapp1}
     .
 \end{gather*}
 \begin{proof}
 Let $\lambda,\nu\in\mathbb{R}$ be any two  non-zero real numbers, $\lambda\ne0\ne\mu$, and let $R$ be the square of their ratio, $R=(\lambda/\mu)^2>0.$  Then:
 \begin{gather}
 \left[
     (x_a,y_a)\ne0 \land (x_b,y_b)\ne0 \land (x_c,y_c)\ne0
     \right] \nonumber
     \\
     \iff \nonumber
     \\
     \mirrorcondleft{
     \left[
     (\lambda x_a,\mu y_a)\ne0 \land (\lambda x_b,\mu y_b)\ne0 \land (\lambda x_c,\mu y_c)\ne0
     \right]}{\quad}{\forall \lambda\ne0,\mu\ne 0,} \nonumber
     \\
     \mirrorcondright{\iff}{\qquad\qquad\qquad\qquad}{\text{(by Lemma~\ref{lem:lemaboutproductsofthreecomplexnumbers})}}
    \nonumber \\
    \mirrorcondleft{
      \left[
     \begin{array}{c}
     \toparotor{7cm} \\
  \lambda \mu^2 \left( 
  y_a y_b x_c 
+ y_a x_b y_c 
+ x_a y_b y_c \right)
\ne \lambda^3 x_a x_b x_c 
\\
\separotor \lor {3cm}
\\
  \lambda^2\mu( y_a x_b x_c
+  x_a y_b x_c 
+  x_a x_b y_c)
\ne \mu^3
 y_a y_b y_c 
\\
\toparotor{7cm}
\end{array}
     \right] }{\quad}{\forall \lambda\ne 0,\mu\ne0,}\nonumber
\\
\iff \nonumber
\\
     \mirrorcondleft{ \left[
     \begin{array}{c}
     \toparotor{7cm} \\
   \left( 
  y_a y_b x_c 
+ y_a x_b y_c 
+ x_a y_b y_c \right)
\ne R x_a x_b x_c 
\\
\separotor \lor {3cm}
\\
  R( y_a x_b x_c
+  x_a y_b x_c 
+  x_a x_b y_c)
\ne 
 y_a y_b y_c 
\\
\toparotor{7cm}
\end{array}
     \right]}{\quad}{\forall R>0, }.\label{eq:sourceapp1}
\end{gather}
It remains, therefore, to show that \eqref{eq:sourceapp1} is implied by \eqref{eq:destinationapp1}.  We will do so by proving the contrapositive, namely that `not \eqref{eq:sourceapp1}' implies `not \eqref{eq:destinationapp1}'
. Using the abbreviations:
\begin{align*}
y_a y_b x_c 
+ y_a x_b y_c 
+ x_a y_b y_c &\rightarrow A,  \\
 x_a x_b x_c  &\rightarrow B,  \\
  y_a x_b x_c
+  x_a y_b x_c 
+  x_a x_b y_c &\rightarrow C,\quad\text{and}\\
 y_a y_b y_c & \rightarrow D,
 \end{align*} we may write: 
\begin{alignat}{1}
 \text{(not \eqref{eq:sourceapp1})}
 \iff& \left\{
     \exists R>0 \text{\ s.t.\ } \left[
     (A= R B )\land( R C = D)
         \right]\right\}\nonumber 
         \\
 \implies &
 \left[(\sgn A=\sgn B) \land (\sgn C = \sgn D)\right]\nonumber
 \\
\iff&\text{(not \eqref{eq:destinationapp1})}.\nonumber
\end{alignat}
\end{proof}
\end{lemma}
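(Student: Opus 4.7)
The statement to be proved is a one-way implication
\[
\text{(sign condition on four polynomials)} \implies \text{(all three pairs are non-zero)},
\]
so the cleanest line of attack is the contrapositive: I will assume that at least one of the three pairs $(x_a,y_a)$, $(x_b,y_b)$, $(x_c,y_c)$ is the zero vector and show that both sign-inequalities on the right-hand side become equalities simultaneously, thereby falsifying the disjunction.

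To carry this out, name the four polynomials appearing in the lemma:
\begin{align*}
A &= y_a y_b x_c + y_a x_b y_c + x_a y_b y_c, \qquad B = x_a x_b x_c,\\
C &= y_a x_b x_c + x_a y_b x_c + x_a x_b y_c, \qquad D = y_a y_b y_c.
\end{align*}
Every monomial appearing in any of $A$, $B$, $C$, $D$ carries exactly one factor taken from each of the three pairs (one of $\{x_a,y_a\}$, one of $\{x_b,y_b\}$, one of $\{x_c,y_c\}$). Consequently, if any single pair equals $(0,0)$, then every monomial in all four polynomials vanishes, forcing $A=B=C=D=0$. Since $\sgn(0)=\sgn(0)$, the two conditions ``$\sgn A \ne \sgn B$'' and ``$\sgn C \ne \sgn D$'' both fail.

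First I would verify the monomial-structure claim explicitly in the case $(x_a,y_a)=(0,0)$ (one line each for $A,B,C,D$), after which the cases $(x_b,y_b)=(0,0)$ and $(x_c,y_c)=(0,0)$ follow by the manifest cyclic $(a,b,c)$-symmetry of all four polynomials. This completes the contrapositive and hence the lemma.

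The anticipated main obstacle is essentially bookkeeping rather than mathematics: one has to read off carefully that each monomial of $A$ and $C$ really does contain one factor from each of the three pairs, since it would be easy to miss a term. Beyond that there is nothing deep to do, because the target is only the $\Longleftarrow$-direction; no rescaling argument and no appeal to the preceding Lemma~\ref{lem:lemaboutproductsofthreecomplexnumbers} are required here. (Were one to attempt the reverse implication, a genuine scaling/complex-product argument of the kind used in Lemma~\ref{lem:lemaboutproductsofthreecomplexnumbers} would become necessary, but the present statement asks for nothing so strong.)
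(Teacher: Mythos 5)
Your proof is correct, but it follows a genuinely different and more elementary route than the paper's. You prove the contrapositive of the stated implication directly: if any one of the three pairs vanishes then, since every monomial in each of $A$, $B$, $C$, $D$ contains exactly one factor drawn from each index $a$, $b$, $c$, all four polynomials vanish, so $\sgn A=\sgn B=0$ and $\sgn C=\sgn D=0$ and both disjuncts fail. That is a complete and watertight argument for the $\impliedby$ direction as stated, and the cyclic-symmetry reduction to the case $(x_a,y_a)=(0,0)$ is legitimate. The paper instead reaches the result indirectly: it rescales $(x_i,y_i)\mapsto(\lambda x_i,\mu y_i)$, invokes Lemma~\ref{lem:lemaboutproductsofthreecomplexnumbers} for every $\lambda,\mu\ne 0$, and thereby shows that ``all three pairs nonzero'' is \emph{equivalent} to the one-parameter family of conditions \eqref{eq:sourceapp1} (for all $R>0$, $A\ne RB$ or $RC\ne D$); it then shows the sign condition \eqref{eq:destinationapp1} implies that whole family, again by a contrapositive using $R>0$. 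What the paper's longer detour buys is conceptual rather than logical: it exhibits the sign condition as the rescaling-invariant (``for all $R>0$'') strengthening of the product condition of the preceding lemma, which explains where the statement comes from and why it is robust against the scale/``gauge'' freedoms ($u_i$) used elsewhere in the paper. What your argument buys is brevity and independence from the previous lemma. Either proof is acceptable; there is no gap in yours.
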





\printbibliography

\end{document}